\theoremstyle{plain}
\newtheorem{observation}[theorem]{Observation}
\newcommand{\introducefakevertex}{{\bf fake introduce}}
\newcommand{\org}{{\sf Original}}
\newcommand{\fake}{{\sf Fake}}
\newcommand{\cliques}{{\sf Cliques}}
\newcommand{\NSTD}[1]{{\sf #1-NFewCliTD}}
\newcommand{\NSTDlong}[1]{nice #1-few cliques tree decomposition}
\newcommand{\STD}[1]{{\sf #1-FewCliTD}}
\newcommand{\STDlong}[1]{{#1-few cliques tree decomposition}}
\newcommand{\tw}{{\sf tw}}
\newcommand{\TT}{{\mathcal T}}
\newcommand{\sTT}{\scriptscriptstyle {\mathcal T}}
\newcommand{\QQ}{{\cal Q}}
\newcommand{\ZZ}{{\cal Z}}
\newcommand{\D}{{\cal D}}
\newcommand{\Yes}{{\sc Yes}}
\newcommand{\No}{{\sc No}}
\newcommand{\cO}{\mathcal{O}}
\newcommand{\OO}{\mathcal{O}}
\newcommand{\probKPath}{\textsc{Longest Path}\xspace}
\newcommand{\probKCycle}{\textsc{Longest Cycle}\xspace}
\newcommand{\probKexactCycle}{\textsc{Exact $k$-Cycle}\xspace}
\newcommand{\probFVS}{\textsc{Feedback Vertex Set}\xspace}
\newcommand{\probCycPacking}{\textsc{Cycle Packing}\xspace}
\newcommand{\ProblemFormat}[1]{{\sc #1}}
\newcommand{\ProblemIndex}[1]{\index{problem!\ProblemFormat{#1}}}
\newcommand{\ProblemName}[1]{\ProblemFormat{#1}\ProblemIndex{#1}\xspace}
\newcommand{\probTWdel}{\ProblemName{Treewidth-$\eta$ Vertex Deletion}}
\newcommand{\probPWdel}{\ProblemName{Pathwidth-$\eta$ Vertex Deletion}}
\newcommand{\probTDepthdel}{\ProblemName{Treedepth-$\eta$ Vertex Deletion}}
\newcommand{\defproblemu}[3]{
  \vspace{1mm}
\noindent\fbox{
  \begin{minipage}{0.95\textwidth}
  #1 \\
  {\bf{Input:}} #2  \\
  {\bf{Question:}} #3
  \end{minipage}
  }
  \vspace{1mm}
}
\title{Decomposition of Map Graphs with Applications}
\author{Fedor V. Fomin}{University of Bergen, {Norway}, fomin@ii.uib.no}{}{}{}
\author{Daniel Lokshtanov}{University of California, Santa Barbara, {USA}, daniello@ucsb.edu}{}{}{}
\author{Fahad Panolan}{University of Bergen, {Norway}, fahad.panolan@ii.uib.no}{}{}{}
\author{Saket Saurabh}{The Institute of Mathematical Sciences, HBNI,  {Chennai, India}, saket@imsc.res.in}{}{}{}
\author{Meirav Zehavi}{Ben-Gurion University of the Negev, {Beer-Sheva, Israel}, meiravze@bgu.ac.il}{}{}{}
\authorrunning{F.V. Fomin, D. Lokshatnov, F. Panolan, S. Saurabh and M. Zehavi}
\keywords{Longest Cycle, Cycle Packing, Feedback Vertex Set, Map Graphs}% mandatory: Please provide 1-5 keywords
\begin{document}
%\date{}
\maketitle

%!TEX root = main_map.tex
\begin{abstract}

  Bidimensionality is the most common technique to design subexponential-time parameterized algorithms on special classes of graphs, particularly planar graphs. The core engine behind it is a combinatorial lemma of Robertson, Seymour and Thomas that states that every planar graph either has a $\sqrt{k}\times \sqrt{k}$-grid as a minor, or its treewidth is $\OO(\sqrt{k})$. 
 However, bidimensionality theory cannot be extended directly to several well-known classes of geometric graphs. 
 The reason is very simple: a clique on $k-1$ vertices has no $\sqrt{k}\times \sqrt{k}$-grid as a minor and its treewidth is $k-2$, while classes of geometric graphs such as unit disk graphs or  map graphs can have arbitrarily large cliques. Thus, the combinatorial lemma of Robertson, Seymour and Thomas is inapplicable to these classes of geometric graphs.   Nevertheless, a relaxation of this lemma has been proven useful for unit disk graphs. Inspired by this, we prove a new decomposition lemma for map graphs, the intersection graphs of finitely many simply-connected and interior-disjoint regions of the Euclidean plane. 
 Informally, our lemma  states the following.  For any map graph $G$, there exists a  collection  $(U_1,\ldots,U_t)$ of cliques of $G$ with  the following property:  
{\em 
$G$ either contains a $\sqrt{k}\times \sqrt{k}$-grid as a minor, or it admits a tree decomposition where every bag is the union of $\OO(\sqrt{k})$ of the cliques in the above collection. 
}

The new lemma appears to be a handy tool in the design of subexponential parameterized algorithms on map graphs. We demonstrate its usability  by designing algorithms on map graphs with running time $2^{\cO({\sqrt{k}\log{k}})} \cdot n^{\cO(1)}$ for {\sc Connected Planar $\cal F$-Deletion} (that encompasses problems such as {\sc Feedback Vertex Set} and {\sc Vertex Cover}). Obtaining subexponential algorithms for \probKCycle/{\sc Path}
and \probCycPacking is more challenging. We have to construct   tree  decompositions with more powerful properties and to prove sublinear bounds on the number of ways an optimum solution could  ``cross''   bags in these decompositions.

For 
\probKCycle/{\sc Path}, these are  the first subexponential-time parameterized algorithms on map graphs.
For \probFVS and \probCycPacking, we improve upon known $2^{\cO({k^{0.75}\log{k}})} \cdot n^{\cO(1)}$-time algorithms on map graphs.

\end{abstract}
%\probKCycle,  \probFVS, and     \probCycPacking

\newpage
%!TEX root = main_map.tex
\section{Introduction}\label{sec:intro}
In this paper, we develop new proof techniques to design parameterized subexponential-time algorithms for problems on map graphs, particularly problems that involve hitting or connectivity constraints. The class of map graphs was introduced by Chen, Grigni, and Papadimitriou \cite{ChenGP98,ChenGP02} as a modification of the class of planar graphs. Roughly speaking, map graphs are graphs whose vertices represent countries in a map, where two countries are considered adjacent if and only if their boundaries have at least one point in common; this common point can be a single common point rather than necessarily an edge as standard planarity requires. Formally,  a {\em map} $\cal M$ is a pair  $(\mathscr{E},\omega)$ defined as follows (see
Figure~\ref{map_graph}): $\mathscr{E}$ is a plane graph\footnote{That is, a planar graph with a drawing in the plane.} where each connected component of $\mathscr{E}$ is  biconnected, and $\omega$ is a function that maps each face $f$ of $\mathscr{E}$ to $0$ or $1$.  A face $f$ of $\mathscr{E}$ is called {\em nation} if $\omega(f)=1$ and {\em lake} otherwise. The graph associated with $\cal M$ is the simple graph $G$ where $V(G)$ consists of the nations of $\cal M$, and $E(G)$ contains $\{f_1,f_2\}$ for every pair of faces $f_1$ and $f_2$ that are adjacent (that is, share at least one vertex). Accordingly, a graph $G$ is called a map graph if there exists a map $\cal M$ such that $G$ is the graph associated with~$\cal M$.

Every planar graph is a map graph~\cite{ChenGP98,ChenGP02}, but the converse does not hold true. Moreover, map graphs can have cliques of any size and thus  they can be ``highly non-planar''. These two properties of map graphs can be contrasted with those of  $H$-minor free graphs and unit disk graphs: the class of $H$-minor free graphs generalizes the class of planar graphs, but can only have cliques of constant size (where the constant depends on $H$), while the class of unit disk graphs does not generalize the class of planar graphs, but can have cliques of any size. At least in this sense, map graphs offer the best of both worlds. Nevertheless, this comes at the cost of substantial difficulties in the design of efficient algorithms on them.

%, and denote the set of nations of $\mathscr{E}$ by $N(\mathscr{E})$.

Arguably, the two most natural and central algorithmic questions concerning map graphs are as follows. First, we would like to efficiently recognize map graphs, that is, determine whether a given graph is a map graph. In 1998, Thorup \cite{Thorup98a} announced the existence of a polynomial-time algorithm for map graph recognition. Although this algorithm is complicated and its running time is about $\OO(n^{120})$, where $n$ is the number of vertices of the input graph, no improvement has yet been found; the existence of a simpler or faster algorithm for map graph recognition has so far remained an important open question in the area (see, e.g., \cite{ChenGP06}).
 
The second algorithmic question---or rather family of algorithmic questions---concerns the design of efficient algorithms for various optimization problems on map graphs. Most well-known problems that are NP-complete on general graphs remain NP-complete when restricted to planar (and hence on map) graphs. Nevertheless, a large number of these problems can be solved faster or ``better'' when restricted to planar graphs.  For example, nowadays we know of many problems that are APX-hard on general graphs, but which admit polynomial time approximation schemes (PTASes) or even efficient PTASes (EPTASes) on planar graphs (see, e.g., \cite{Baker94,DemaineHaj05,Demaine:2008mi,FominLS17}).  
Similarly,  many parameterized problems that on general graphs  cannot be solved in time  $2^{o(k)} \cdot n^{\cO(1)}$
 unless the Exponential Time Hypothesis (ETH) of  Impagliazzo, Paturi and Zane 
\cite{ImpagliazzoPZ01} fails, admit parameterized algorithms with running times subexponential in $k$ on planar graphs (see, e.g., \cite{AlberBFKN02,AlberFN04,DemaineHaj05,DBLP:conf/icalp/Marx13}). It is compelling to ask whether the algorithmic results and techniques for planar graphs can be extended to~map~graphs. 
 
 \begin{figure} 
\begin{center}
\includegraphics[scale=.21]{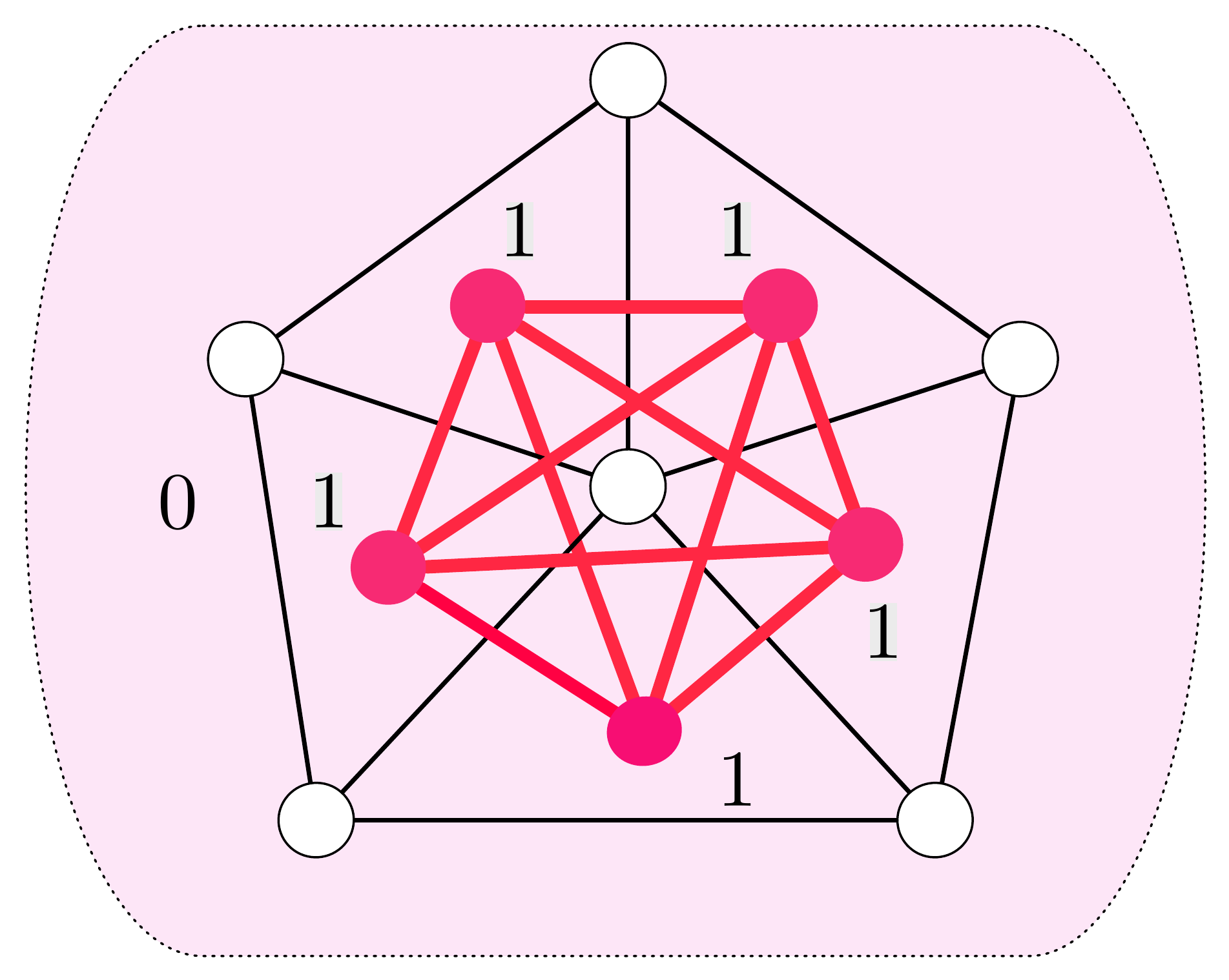} 
\caption{ A map $\cal M=(\mathscr{E},\omega)$ where the internal $1$-faces are nations and the $0$-exterior face is a lake. The corresponding map graph is a complete graph on five vertices.}\label{map_graph}
\end{center}
\end{figure}

 For approximation algorithms, Chen~\cite{Chen01} and Demaine et al.~\cite{DFHT05talg} developed PTASes for the \textsc{Maximum Independent Set} and \textsc{Minimum $r$-Dominating Set} problems on map graphs. Moreover, Fomin et al.~\cite{FominLS12,FominLS17} developed an EPTAS for \textsc{Treewidth-$\eta$ Modulator} for any fixed constant $\eta\geq 0$, which encompasses {\sc Feedback Vertex Set} and {\sc Vertex Cover}. 
 %One of the directions in the study of map graphs is what algorithmic properties of  
For parameterized subexponential-time algorithms on map graphs, the situation is less explored.   While on planar graphs there are general algorithmic methods---in particular, the powerful theory of bidimensionality \cite{Demaine:2008mi,DemaineFHT05jacm}---to design parameterized subexponential-time algorithms, we are not aware of any general algorithmic method that can be easily adapted to map graphs. 
Demaine et al. 
 \cite{DFHT05talg} gave a  parameterized algorithm for \textsc{Dominating Set}, and more generally for $(k,r)$-\textsc{Center}, with running time  $2^{\OO(r\log{r}\sqrt{k})}n^{\cO(1)}$ on map graphs. 
 Moreover,  Fomin et al.~\cite{FominLS12,FominLS17} gave $2^{\cO(k^{0.75}\log k)} n^{\cO(1)}$-time parameterized algorithms for 
\probFVS and \probCycPacking on map graphs. Additionally, Fomin et al.~\cite{FominLS12,FominLS17} noted that the same approach yields  $2^{\cO(k^{0.75}\log k)} n^{\cO(1)}$-time parameterized algorithms for 
{\sc Vertex Cover} and {\sc Connected Vertex Cover} on map graphs.  However, the existence of a parameterized subexponential-time algorithm for  \probKPath/\textsc{Cycle}\footnote{In the \probKPath/\textsc{Cycle}  problem, we ask whether a given graph $G$ contains a path/cycle on at least $k$ vertices. Here, the parameter is $k$.} on map graphs was left open. Furthermore, time complexities of $2^{\cO(k^{0.75}\log k)} n^{\cO(1)}$, although having subexponential dependency on $k$, remain far from time complexities of $2^{\cO(\sqrt{k}\log k)} n^{\cO(1)}$ and $2^{\cO(\sqrt{k})} n^{\cO(1)}$ that commonly arise for planar graphs~\cite{DBLP:conf/icalp/Marx13}. We remark that time complexities of $2^{\cO(\sqrt{k}\log k)} n^{\cO(1)}$ and $2^{\cO(\sqrt{k})} n^{\cO(1)}$ are particularly important since they are often known to be essentially optimal under the aforementioned ETH~\cite{DBLP:conf/icalp/Marx13}.

In the field of Parameterized Complexity,  \probKPath/\textsc{Cycle} ,  \probFVS and     \probCycPacking serve  as   testbeds for development of  fundamental algorithmic techniques such as color-coding~\cite{AlonYZ},  methods based on polynomial identity testing~\cite{Koutis08,KoutisW16,Williams09,BjHuKK10}, cut-and-count~\cite{Cygan11}, and methods based on matroids \cite{FominLPS16}. 
We refer to  % \cite{FominK13},  \cite{KoutisW16} and 
\cite{cygan2015parameterized}
  for an extensive overview of the literature on parameterized algorithms for  these three problems on general graphs.  
  By combining  the bidimensionality theory  of Demaine et al. \cite{DemaineFHT05jacm} with efficient   algorithms on graphs of bounded treewidth \cite{DornPBF10,cygan2015parameterized}, 
 \probKPath/\textsc{Cycle}, \probCycPacking\ and  \probFVS
 %, \probKCycle,  \probFVS{} and  \probCycPacking 
  are solvable in time $2^{\cO(\sqrt{k})}n^{\cO(1)}$ on planar graphs.
 Furthermore, the parameterized subexponential-time  ``tractability'' of these problems can be extended to  graphs excluding some fixed graph as a minor~\cite{Demaine:2008mi}.

\subsection*{Our results and methods} 
\noindent\textbf{Our results.}
We design parameterized subexponential-time algorithms with running time $2^{\cO({\sqrt{k}\log{k}})} \cdot n^{\cO(1)}$ for 
a number of natural and well-studied problems on map graphs.

Let $\cal F$ be a family of connected graphs that contains  at least one planar graph. Then {\sc Connected Planar $\cal F$-Deletion} (or just  {\sc  $\cal F$-Deletion}) is defined as follows. 

\defproblemu{ {\sc  $\cal F$-Deletion}}%
{A graph $G$ and a non-negative integer $k$.}%
{ Is there  a set $S$ of at most $k$ vertices such that $G - S$ does not contain any of the graphs in $\cal F$ as a minor?}

 {\sc  $\cal F$-Deletion} is a general problem and several 
 problems such as {\sc Vertex Cover}, {\sc Feedback Vertex Set}, {\sc \probTWdel}, \probPWdel, \probTDepthdel, {\sc Diamond Hitting Set} and {\sc Outerplanar Vertex Deletion} are its special cases.
 We give the first parameterized subexponential algorithm for this problem on map graphs, which runs in time 
 $2^{\cO({\sqrt{k}\log{k}})} \cdot n^{\cO(1)}$.
% 
%
%
%{\sc Connected Planar $\cal F$-Deletion} (that encompasses problems such as {\sc Feedback Vertex Set} and {\sc Vertex Cover}), \probKPath/\textsc{Cycle}   and  \probCycPacking on map graphs.
  Our approach for  {\sc  $\cal F$-Deletion} also directly extends to yield $2^{\cO({\sqrt{k}\log{k}})} \cdot n^{\cO(1)}$-time parameterized algorithms for  {\sc Connected Vertex Cover} and {\sc Connected Feedback Vertex Set} on map graphs. (In this versions we are asked if there is a \emph{connected} vertex cover or a  feedback vertex set of size at most $k$.)
  
  With additional ideas,  we derive the first subexponential-time parameterized algorithm on map graphs for \probKPath/\textsc{Cycle}. 
  %(In these problems we are asked whether a map graph contains a path/cycle of length at least $k$.) 
  Our technique also allows to improve the running time for  \probCycPacking (does a map graph contains at least $k$ vertex-disjoint cycles) from $2^{\cO({k^{0.75}\log{k}})} \cdot n^{\cO(1)}$ to $2^{\cO({\sqrt{k}\log{k}})} \cdot n^{\cO(1)}$.
  
%  For \probCycPacking, {\sc Vertex Cover}, {\sc Connected Vertex Cover} and \probFVS, we improve upon known $2^{\cO({k^{0.75}\log{k}})} \cdot n^{\cO(1)}$-time algorithms on map graphs.

Our results are summarized in Table~\ref{tabl:compl}.

\begin{table}[ht]
\begin{center}
{\small
\begin{tabular}{|l|c|c|}
\hline
& Our results & Previous work \\
\hline
{\sc Vertex Cover} &   $2^{\cO({\sqrt{k}\log{k}})} \cdot n^{\cO(1)}$ [Thm~\ref{thm:fdeletion}] &  $2^{\cO({k^{0.75}\log{k}})} \cdot n^{\cO(1)}$ \cite{FominLS17} \\
\hline
{\sc Connected Vertex Cover} &   $2^{\cO({\sqrt{k}\log{k}})} \cdot n^{\cO(1)}$ [Thm~\ref{thm:confvsvc}] &  $2^{\cO({k^{0.75}\log{k}})} \cdot n^{\cO(1)}$  \cite{FominLS17} \\
\hline
\probFVS &   $2^{\cO({\sqrt{k}\log{k}})} \cdot n^{\cO(1)}$  [Thm~\ref{thm:fvs}] &  $2^{\cO({k^{0.75}\log{k}})} \cdot n^{\cO(1)}$  \cite{FominLS17} \\
\hline
{\sc Connected Feedback Vertex Set} &   $2^{\cO({\sqrt{k}\log{k}})} \cdot n^{\cO(1)}$ [Thm~\ref{thm:confvsvc}] &  $2^{\cO({k^{0.75}\log{k}})} \cdot n^{\cO(1)}$ \cite{FominLS17}  \\
\hline
 {\sc  $\cal F$-Deletion} &   $2^{\cO({\sqrt{k}\log{k}})} \cdot n^{\cO(1)}$ [Thm~\ref{thm:fdeletion}] &  $2^{\cO(k)}n^{\cO(1)}$ \cite{FominLMS12}   \\
\hline
 \probKPath &   $2^{\cO({\sqrt{k}\log{k}})} \cdot n^{\cO(1)}$ [Thm~\ref{thm:path}] &  $2^{\cO(k)}n^{\cO(1)}$   \cite{cygan2015parameterized} \\
\hline
{\sc Longest Cycle}  &   $2^{\cO({\sqrt{k}\log{k}})} \cdot n^{\cO(1)}$ [Thm~\ref{thm:cycle}] &  $2^{\cO(k)}n^{\cO(1)}$   \cite{cygan2015parameterized}  \\
\hline
 \probCycPacking &   $2^{\cO({\sqrt{k}\log{k}})} \cdot n^{\cO(1)}$ [Thm~\ref{thm:cycPack}] &$2^{\cO({k^{0.75}\log{k}})} \cdot n^{\cO(1)}$  \cite{FominLS17}\\
\hline

                         \end{tabular}
 }
\caption{Parameterized complexity of problems on map graphs. For   {\sc  $\cal F$-Deletion}, {\sc Longest Cycle}, and  \probKPath no faster (than on general graphs) algorithms were known for map graphs.}  \label{tabl:compl}
\end{center}
\end{table}

%\todo[inline]{ {\sc Connected Vertex Cover} and {\sc Connected Feedback Vertex Set} it is better to state  the results about them in the form of a theorem. Also add a theorem for longest path.}

%Our algorithm for \probFVS can be generlized to a large class of problems, namely, the class of 
%{\sc Connected Planar $\cal F$-Deletion} problems. In this class, each problem is defined by family $\cal F$ of connected graphs that contains  at least one planar graph.
%Here, the input is a graph $G$ and an integer parameter $k$. The goal is to find a set $S$ of size at most $k$ such that $G - S$ does not contain any of the graphs in $\cal F$ as a minor. 

\medskip
\noindent\textbf{Methods.}
The starting point of our study is the technique of bidimensionality~\cite{Demaine:2008mi,DemaineFHT05jacm}. The core engine behind this technique is a combinatorial lemma of Robertson, Seymour and Thomas~\cite{ROBERTSON1994323} that states that every planar graph either has a $\sqrt{k}\times \sqrt{k}$-grid as a minor, or its treewidth is $\OO(\sqrt{k})$. Unfortunately, a clique on $k-1$ vertices has no $\sqrt{k}\times \sqrt{k}$-grid as a minor and its treewidth is $k-2$. Because classes of geometric graphs such as unit disk graphs and map graphs can have arbitrarily large cliques, the combinatorial lemma is inapplicable to them. Nevertheless, a relaxation of this lemma has been proven useful for unit disk graphs. Specifically, every unit disk graph $G$ has a natural partition $(U_1,\ldots,U_t)$ of $V(G)$ such that each part induces a clique with ``nice'' properties---in particular, it has neighbors only in a {\em constant number} (to be precise, this constant is at most 24) of other parts; it was shown that $G$ either has a $\sqrt{k}\times \sqrt{k}$-grid as a minor, or it has a tree decomposition where every bag is the union of $\OO(\sqrt{k})$ of these cliques~\cite{subexpudg}. In particular, given a parameterized problem where any two cliques have constant-sized ``interaction'' in a solution, it is implied that any bag has $\OO(\sqrt{k})$-sized ``interaction'' with all other bags in a solution.
For any map graph $G$, there also exists a natural collection of subsets of $V(G)$ that induce cliques with ``nice'' properties. However, not only are these cliques not vertex disjoint, but each of these cliques can have neighbors in {\em arbitrarily many} other cliques.

In this paper, we first prove that every map graph either has a $\sqrt{k}\times \sqrt{k}$-grid as a minor, or it has a tree decomposition where every bag is the union of $\OO(\sqrt{k})$ of the cliques in the above collection. For {\sc $\cal F$-Deletion}, {\sc Connected Vertex Cover}, and {\sc Connected Feedback Vertex Set}, this combinatorial lemma alone already suffices to design $2^{\cO({\sqrt{k}\log{k}})} \cdot n^{\cO(1)}$-time algorithms on map graphs. Indeed, we can choose a fixed constant $c>0$ so that in case we have a $c\sqrt{k}\times c\sqrt{k}$-grid as a minor, there does not exist a solution, and otherwise we can solve the problem by using dynamic programming over the given tree decomposition. Specifically, since every bag is the union of $\OO(\sqrt{k})$ cliques, and the size of each clique is upper bounded by $\OO(k)$ (once we know that no $c\sqrt{k}\times c\sqrt{k}$-grid exists), only $\OO(\sqrt{k})$ vertices in the bag are not to be taken into a solution---there are only $2^{\cO({\sqrt{k}\log{k}})}$ choices to select these vertices, and once they are selected, the information stored about the remaining vertices is the same as in normal dynamic programming over a tree decomposition of $\OO(\sqrt{k})$ width. 

This  approach already substantially improves upon the previously best known algorithms for {\sc Feedback Vertex Set},  {\sc Vertex Cover} and {\sc Connected Vertex Cover} of Fomin et al.~\cite{FominLS12,FominLS17}. However, 
  $2^{\cO({\sqrt{k}\log{k}})} \cdot n^{\cO(1)}$-time algorithms for \probKPath/\textsc{Cycle}  and \probCycPacking on map graphs require more efforts. 
%
%most of the efforts in the current paper are directed towards the design of $2^{\cO({\sqrt{k}\log{k}})} \cdot n^{\cO(1)}$-time algorithms for \probKPath/\textsc{Cycle}  and \probCycPacking on map graphs.
 The main reason why we cannot apply the same arguments as for unit disk graphs is the following.  Recall that for unit disk graphs, given a parameterized problem where any two cliques have constant-sized ``interaction'' in a {\em solution} (in our case, this means a path/cycle on at least $k$ vertices, or a cycle packing of $k$ cycles), it is implied that any bag has $\OO(\sqrt{k})$-sized ``interaction'' with all other bags in a solution.  Here, interaction between two cliques refers to the number of edges in a solution ``passing'' between these two cliques; similarly, interaction between a bag $B$ and a collection of other bags refers to the number of edges in a solution that have one endpoint in $B$ and the other endpoint in some bag in the collection.  In this context, dealing with map graphs is substantially more difficult than dealing with unit disk graphs. In map graphs vertices in a clique can have neighbors in arbitrarily many other cliques in the collection rather than only in a constant number as in unit disk graphs. This is why
%at first glance 
it is difficult to obtain an $\OO(\sqrt{k})$-sized ``interaction'' as before.
% even once we already have the decomposition at hand. 

%To handle the difficulty above, 
This is the reason why we are forced to take a different approach for map graphs by  bounding  ``the interaction within  a clique across all the bags of a decomposition''.   
Towards this, 
we first need to strengthen our tree decomposition. To explain the new properties required, we note that every clique in the aforementioned collection of cliques, say $\cal K$, is either a single vertex or the neighborhood of some ``special vertex'' in an exterior bipartite graph (see Section \ref{sec:prelim}). Further, every vertex of $G$ occurs as a singleton in $\cal K$. 
%\todo{Explain in prelims.} 
We construct our decomposition in a way such that every bag is not necessarily a union of $\OO(\sqrt{k})$ cliques in $\cal K$, but a union of carefully chosen subcliques of $\OO(\sqrt{k})$ cliques in $\cal K$ (with one subclique for each of these $\OO(\sqrt{k})$ cliques); subcliques of the same clique chosen in different bags may be different. We then prove properties that roughly state that, if we look at the collection of bags that include some vertex $v$ of $G$, then this collection induces a subtree and a path as follows: $(\clubsuit)$ {\em the subtree consists of the bags that correspond to the singleton clique $v$, and the path goes ``upwards'' (in the tree decomposition) from the root of this subtree}. We thereby implicitly derive that in every bag $B$, every subclique of size larger than $1$ can only have as neighbors vertices that are {\em (i)} in the bag $B$ itself or in one of its descendants, or {\em (ii)} in cliques that have a subclique in the bag $B$. In particular, this means that if we prove that there exists a solution such that 
%the interaction it induces between every two cliques in $\cal C$ is bounded by some constant, then we obtain a bound of $\OO(\sqrt{k})$ on the interaction between any bag $B$ and the collection of all bags that are not descendants of $B$.
for any clique $K$ in $\cal K$, the number of edges in $E(K)$ that ``cross any bag $B$'' (i.e., the edges in $E(K)$ with one endpoint in  $B$ and the other in the collection of all bags that are not descendants of $B$) is a constant,  then we obtain a bound of $\OO(\sqrt{k})$ on the interaction between any bag $B$ and the collection of all bags that are not descendants of $B$. We prove the mentioned statement using property $(\clubsuit)$. %The main technical difficulty we have to overcome is 
The  proof that  such a property simultaneously holds for all cliques and all bags is the most challenging part of the proof. 

We remark that we discussed  above two types of tree decompositions, first the special one and then its stronger variant which is used for \probKPath/\textsc{Cycle} and  \probCycPacking. Since the stronger variant of the decomposition can be used to work with 
 {\sc ${\cal F}$-Deletion} too, in the technical part of this paper we  derive only the stronger variant of the  decomposition.  In Section~\ref{sec:prelim}, we give definitions, notations and some known results which we use throughout the paper. In Section~\ref{sec:fewclde}, we design a tree decomposition of map graphs which we call as few clqiues tree decomposition, and in Section~\ref{sec:fvs} we explain its direct applicability for \probFVS and {\sc ${\cal F}$-Deletion}.  In  Sections~\ref{sec:exactCyc} and \ref{sec:cycPack} we design subexponential-time parameterized algorithms for \probKPath/\textsc{Cycle} and  \probCycPacking on map graphs, respectively. For  these two problems we need  additional, somehow technically involved, combinatorial  ``sublinear crossing'' lemmata.

 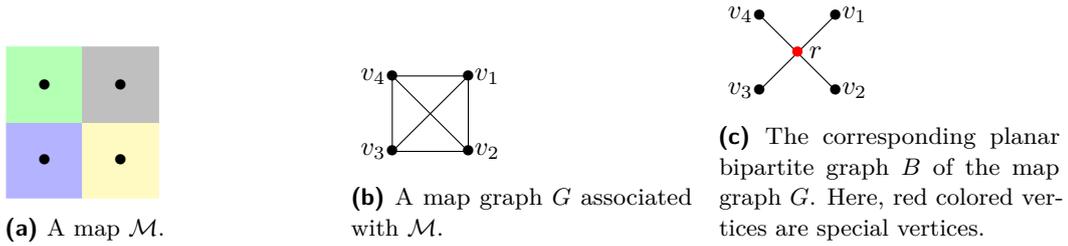
\begin{figure}
    \centering
    \begin{subfigure}[b]{0.3\textwidth}
       \begin{tikzpicture}
         \filldraw[draw=black,color=blue!30!white] (0,0) rectangle (-1,-1);
         \filldraw[draw=black,color=green!30!white] (0,0) rectangle (-1,1);
         \filldraw[draw=black,color=yellow!30!white] (0,0) rectangle (1,-1);
         \filldraw[draw=black,color=lightgray] (0,0) rectangle (1,1);
         \node[]  at (0.5,0.5) (a) {$\bullet$};
         \node[]  at (-0.5,-0.5) (a) {$\bullet$};
         \node[]  at (-0.5,0.5) (a) {$\bullet$};
         \node[]  at (0.5,-0.5) (a) {$\bullet$};
       \end{tikzpicture}
        \caption{A map $\cal M$.}
        \label{fig:map}
    \end{subfigure}
    ~ %add desired spacing between images, e. g. ~, \quad, \qquad, \hfill etc. 
      %(or a blank line to force the subfigure onto a new line)
    \begin{subfigure}[b]{0.32\textwidth}
       \begin{tikzpicture}
         \node[]  at (0.5,0.5) (a) {$\bullet$};
         \node[]  at (-0.5,-0.5) (a) {$\bullet$};
         \node[]  at (-0.5,0.5) (a) {$\bullet$};
         \node[]  at (0.5,-0.5) (a) {$\bullet$};
         \node[]  at (0.75,0.5) (a) {$v_1$};
         \node[]  at (-0.75,-0.5) (a) {$v_3$};
         \node[]  at (-0.75,0.5) (a) {$v_4$};
         \node[]  at (0.75,-0.5) (a) {$v_2$};
         \draw (0.5,0.5)--(0.5,-0.5)--(-0.5,-0.5)--(-0.5,0.5)--(0.5,0.5)--(-0.5,-0.5);
         \draw (-0.5,0.5)--(0.5,-0.5);
       \end{tikzpicture}
        \caption{A map graph $G$ associated with $\cal M$.}
        \label{fig:mapgraph}
    \end{subfigure}
    ~ %add desired spacing between images, e. g. ~, \quad, \qquad, \hfill etc. 
    %(or a blank line to force the subfigure onto a new line)
    \begin{subfigure}[b]{0.32\textwidth}
       \begin{tikzpicture}
               \draw (0.5,0.5)--(-0.5,-0.5);
         \draw (-0.5,0.5)--(0.5,-0.5);
                  \node[]  at (0.75,0.5) (a) {$v_1$};
         \node[]  at (-0.75,-0.5) (a) {$v_3$};
         \node[]  at (-0.75,0.5) (a) {$v_4$};
         \node[]  at (0.75,-0.5) (a) {$v_2$};
      \node[]  at (0.24,0) (a) {$r$};
         \node[]  at (0.5,0.5) (a) {$\bullet$};
         \node[]  at (-0.5,-0.5) (a) {$\bullet$};
         \node[]  at (-0.5,0.5) (a) {$\bullet$};
         \node[]  at (0.5,-0.5) (a) {$\bullet$};
         \node[red] at (0,0) (c) {{$\bullet$}};
       \end{tikzpicture}

        \caption{The corresponding planar bipartite graph $B$ of the map graph $G$. Here, red colored vertices are special vertices.}
        \label{fig:corrplanargraph}
    \end{subfigure}
    \caption{Example of a map graph $G$ obtained from a corresponding planar bipartite graph $B$.}\label{fig:mapgraphderviation}
\end{figure}

%!TEX root = main_map.tex

\section{Preliminaries}\label{sec:prelim}

The set of natural numbers is denoted by ${\mathbb N}$. For any $t\in {\mathbb N}$, we use $[t]$ and $[t]_0$ as  shorthands for $\{1,2,\ldots,t\}$ and $\{0,1,\ldots,t\}$, respectively.  
%Given a function $f: A\rightarrow B$ and a subset $A'\subseteq A$, let $f|_{A'}$ denote the restriction of the function $f$ to the domain $A'$. For a  function $f: A\rightarrow B$ and $B'\subseteq B$, 
%$f^{-1}(B')$ denote the set $\{a\in A~:~f(a)\in B'\}$. For $t,t'\in {\mathbb N}$, a set $[t]\times [t']$, $i \in [t]$ and $j\in [t']$  we use 
%$(*,j)$ and $(i,*)$ to denote the sets $\{(i',j)~:~i'\in [t]\}$ and $\{(i,j')~:~j'\in [t']\}$, respectively.  
For a set $U$, we use $2^U$ to denote the power set of $U$. Two disjoint sets $A$ and $B$, we use $A\uplus B$ to denote the disjoint union of $A$ and $B$. 
For a sequence $\sigma=x_1x_2\ldots x_n$ and any $1\leq i\leq j\leq n$, the sequence 
$\sigma' =x_i\ldots x_j$  is called a {\em segment} of $\sigma$.  
For a sequence $\sigma=x_1x_2\ldots x_n$ and a subset $Z\subseteq \{x_1,\ldots,x_n\}$, the restriction 
of $\sigma$ on $Z$, denoted by $\sigma|_Z$, is the sequence 
obtained from $\sigma$ by deleting the elements of $\{x_1,\ldots,x_n\}\setminus Z$.

\subparagraph*{Standard graph notations.} We use standard notation and terminology from the book of Diestel~\cite{NewDiestel} for graph-related terms that are not explicitly defined here. 
 Given a graph $G$, let $V(G)$ and $E(G)$ denote its vertex-set and edge-set, respectively. When the graph $G$ is clear from context, we denote $n=|V(G)|$ and $m=|E(G)|$. 
 For a set ${\cal Q}$ of graphs we slightly abuse terminology and let $V({\cal Q})$ and $E({\cal Q})$  denote the union of the sets of vertices and edges of the graphs in $\cal Q$, respectively. 
A graph is {\em simple} if it contains neither loops nor multiple edges between pairs of vertices.  
Throughout the paper, when we use the term {\em graph} we refer to a simple graph. 
Given $U\subseteq V(G)$, let $G[U]$ denotes the subgraph of $G$ induced by $U$.
%and we let $G\setminus U$ denote the graph $G[V(G)\setminus U]$. 
For an edge subset $E\subseteq E(G)$, let $V(E)$ denotes the set of endpoints of the edges in $E$, and $G[E]$  denotes the graph with vertex set $V(E)$ and edge set $E$. 
%For $X,Y\subseteq V(G)$, we use $E(X)$ and $E(X,Y)$ to denote the 
%edge sets $\{\{u,v\}\in E(G)~:~u,v\in X\}$ and $\{\{u,v\}\in E(G)~:~u\in X, v\in Y\}$, respectively. 
Given $X\subseteq V(G)$, let $E(X)$  denotes the 
%edge set $\{\{u,v\}\in E(G)~:~u\in X, v\in Y\}$. 
edge set $\{\{u,v\}\in E(G)\colon u,v\in X\}$.
Moreover, let $N_G(U)$ denotes the open neighborhood of $U$ in $G$; we omit the subscript $G$ when the graph is clear from context. In case $U=\{v\}$, we slightly abuse terminology and use $N_G(v)=N_G(U)$. 
For a graph $G$ and a vertex $v\in V(G)$, let 
$d_G(v)=\vert N_G(v)\vert$. 
%Given an edge $e=\{u,v\}\in E(G)$, we use $G/e$ to denote the graph obtained from $G$ by contracting the edge $e$. 
%In other words, $G/e$ denotes the graph on the vertex-set $(V(G)\setminus\{u,v\})\cup\{x_{\{u,v\}}\}$, where $x_{\{u,v\}}$ is a new vertex, and the edge-set $E(G)=E(G[V(G)\setminus\{u,v\}])\cup\{\{x_{\{u,v\}},w\}~|~w\in N(\{u,v\})\}$. 
A graph $H$ is called a {\em minor} of $G$ if $H$ can be obtained from $G$ by a sequence of edge deletions, 
edge contractions, and vertex deletions. 
For a  graph $G$ and a degree-$2$ vertex $v\in V(G)$, by {\em contracting $v$}, we mean 
deleting  $v$ from $G$ and adding an edge between the two neighbors of $v$ in $G$.

In a  graph $G$, a sequence of vertices $[u_1u_2\ldots u_{\ell}]$ is  a path in 
$G$ if  for any  distinct $i,j\in [\ell]$, $u_i\neq u_j$, and for any $r\in [\ell-1]$, $\{u_r,u_{r+1}\}\in E(G)$.  
We also call the path $P=[u_1u_2\ldots u_{\ell}]$ as {\em $u_1$-$u_{\ell}$ path}, and its 
 internal vertices are $u_2,u_3,\ldots,u_{\ell-1}$. 
%For a path $P=[u_1u_2\ldots u_{\ell}]$, we use 
%$\overleftarrow{P}$ to denote the path $[u_{\ell}u_{\ell-1}\ldots u_{1}]$. 
For any two paths $P_1=[u_1\ldots u_{i}]$ and $P_2=[u_i\ldots u_{\ell}]$ with $\{u_1,\ldots,u_{i-1}\}\cap \{u_{i+1},\ldots,u_{\ell}\}=\emptyset$, let  
$P_1P_2$ denotes the path  $[u_1u_2\ldots u_{\ell}]$. 
A sequence of vertices $[u_1u_2\ldots u_{\ell}]$ is a cycle in 
$G$ if  $u_1=u_{\ell}$, $[u_1u_2\ldots u_{\ell-1}]$ is a path, and $\{u_{\ell-1},u_{\ell}\}\in E(G)$.
Since in a multi graph there can be more than one edges between a pair of vertices, we use sequence $[u_0e_0u_1e_1\ldots e_{\ell}u_{1}]$ to denote a cycle. In that context,
for each $i\in [\ell]_0$, $e_i$ is an edge between $u_i$ and $u_{(i+1) \mod \ell}$.  
%For a path or a cycle $Q$, we use $V(Q)$ to denote the set of vertices in $Q$. 
%For a set ${\cal Q}$ of paths or cycles we use  $V({\cal Q})$ to denote the union of the set of vertices of paths/cycles in $\cal Q$. 
%Given $k\in\mathbb{N}$, we let $K_k$ denote  the compete graph on $k$ vertices. 
%For a set $X$, we use $K[X]$ to denote the complete graph on $X$.
For a graph $G$, we say that $U\subseteq V(G)$ is a clique if $G[U]$ is a complete graph.  
Given $a,b\in\mathbb{N}$, an $a\times b$ grid is a graph on $a\cdot b$ vertices, $v_{i,j}$ for $(i,j)\in[a]\times[b]$, such that for all $i\in[a-1]$ and $j\in[b]$, it holds that $v_{i,j}$ and $v_{i+1,j}$ are neighbors, and for all $i\in[a]$ and $j\in[b-1]$, it holds that $v_{i,j}$ and $v_{i,j+1}$ are neighbors. 

%\medskip\noindent
%{\em Post order transversal.} 
A binary tree is a rooted tree where each node  has 
at most two children. In a labelled binary tree, for each node with two children one of the children is labelled as ``left child'' 
and the  other child is labelled as ``right child''. A {\em postorder transversal} of a labelled binary tree $T$ is the sequence $\sigma$ of $V(T)$ where for each node $t\in V(T)$, $t$ appears after all its descendants, and if $t$ has two children, then the nodes in the subtree rooted 
at the left child appear before  the nodes in the subtree rooted 
at the right child.  
For a binary tree $T$, we say that a sequence $\sigma$ of $V(T)$ is a postorder 
transversal if there is a labelling of $T$ such that $\sigma$ is its postorder transversal.

A tree decomposition of a graph $G$, which is defined as follows, measures how close the graph $G$ is to a tree like structure.

\begin{definition}[Treewidth]\label{def:treeDecomp}
A {\em tree decomposition} of a graph $G$ is a pair $\TT=(T_{\sTT},\beta_{\sTT})$, where $T$ is a rooted tree and $\beta_{\sTT}$ is a function from $V(T_{\sTT})$ to $2^{V(G)}$, that satisfies the following three conditions. (We use the term {\em nodes} to refer to the vertices 
of $T_{\sTT}$.) 
\begin{itemize}
\item[$(a)$] $\bigcup_{x\in V(T_{\sTT})}\beta_{\sTT}(x)=V(G)$.
\item[$(b)$] For every edge $\{u,v\}\in E(G)$, there exists $x\in V(T_{\sTT})$ such that  $\{u,v\}\subseteq \beta_{\sTT}(x)$.
\item[$(c)$] For every vertex $v\in V(G)$, the set of nodes $\{t\in V(T_{\sTT})~:~v\in \beta_{\sTT}(t)\}$ induces a (connected) subtree of $T_{\sTT}$.
\end{itemize}
The {\em width} of $\TT$ is $\max_{x\in V(T_{\sTT})} |\beta_{\sTT}(x)|-1$. Each set $\beta_{\sTT}(x)$ is called a {\em bag}. Moreover,  $\gamma_{\sTT}(x)$ denotes the union of the bags of $x$ and its descendants. The {\em treewidth} of $G$ is the minimum width among all possible tree decompositions of $G$, and it is denoted by $\tw(G)$.
\end{definition}

%\begin{proposition}[\cite{BodlaenderDDFLP16}]\label{prop:treewidth}
%Given a graph $G$ and an integer $k$, in time $2^{\OO(k)} \cdot n$, we can either decide that $\tw(G)>k$ or 
%output a tree decomposition of $G$ of width $5k$.
%\end{proposition}

A {\em nice tree decomposition} is a tree decomposition of a form that simplifies the design of dynamic programming (DP) algorithms. Formally,

\begin{definition}
A tree decomposition $\TT=(T_{\sTT},\beta_{\sTT})$ of a graph $G$ is {\em nice} if 
for the root $r$ of $T_{\sTT}$, it holds that $\beta_{\sTT}(r)=\emptyset$, and 
each node $v\in V(T_{\sTT})$ is of one of the following types.
\begin{itemize}
\item {\bf Leaf}: $v$ is a leaf in $T_{\sTT}$ and $\beta_{\sTT}(v)=\emptyset$. This bag is labelled with {\bf leaf}. 
\item {\bf Forget vertex}: $v$ has exactly one child $u$, and there exists a vertex $w\in\beta_{\sTT}(u)$ such that $\beta_{\sTT}(v)=\beta_{\sTT}(u)\setminus\{w\}$. This bag is labelled with {\bf  forget$(w)$}. 
\item {\bf Introduce vertex}: $v$ has exactly one child $u$, and there exists a vertex $w\in\beta_{\sTT}(v)$ such that $\beta_{\sTT}(v)\setminus\{w\}=\beta_{\sTT}(u)$. This bag is labelled with {\bf  introduce$(w)$}.
\item {\bf Join}: $v$ has exactly two children, $u$ and $w$, and $\beta_{\sTT}(v)=\beta_{\sTT}(u)=\beta_{\sTT}(w)$. 
This bag is labelled with {\bf join}.
\end{itemize}
\end{definition}

We will use the following two folklore observations in Section~\ref{sec:fewclde}. The correctness of these  observations follows from  Condition $(c)$ of a tree decomposition.
%, we get the following observation. 
\begin{observation}
\label{obs:onlyoneforget}
Let $\TT$ be a nice tree decomposition of a graph $G$. For any $v\in V(G)$, there is exactly one node 
$t\in V(T_{\sTT})$ such that $t$ is labelled with {\bf forget}$(v)$. 
\end{observation}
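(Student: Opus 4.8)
The plan is to locate the topmost bag that contains $v$ and show that the node directly above it is forced to be the $\mathbf{forget}(v)$ node, and that no other node can carry this label. Fix $v\in V(G)$ and set $S_v=\{t\in V(T_{\sTT}) : v\in\beta_{\sTT}(t)\}$. First I would record three facts: $S_v\neq\emptyset$ by Condition~$(a)$; the induced subgraph $T_{\sTT}[S_v]$ is a connected subtree by Condition~$(c)$; and $r\notin S_v$ because $\beta_{\sTT}(r)=\emptyset$. Since a connected subtree of a rooted tree that avoids the root has a unique node of minimum depth, there is a unique ``topmost'' node $t^\ast\in S_v$, and every other node of $S_v$ has its parent in $S_v$ (the unique path in $T_{\sTT}$ from such a node to $t^\ast$ goes strictly upward and stays inside $S_v$). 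Let $p$ be the parent of $t^\ast$; it exists because $t^\ast\neq r$, and $p\notin S_v$, so $v\in\beta_{\sTT}(t^\ast)\setminus\beta_{\sTT}(p)$.

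For existence, I would determine the type of $p$ in the nice tree decomposition. The node $p$ has a child, so it is not a leaf. Since $v\in\beta_{\sTT}(t^\ast)\setminus\beta_{\sTT}(p)$, we have $\beta_{\sTT}(t^\ast)\neq\beta_{\sTT}(p)$, which rules out $p$ being a join node (whose bag equals the bags of both children, one of which is $t^\ast$); and $\beta_{\sTT}(t^\ast)\not\subseteq\beta_{\sTT}(p)$, which rules out $p$ being an introduce-vertex node (whose bag contains the bag of its unique child). Hence $p$ is a forget-vertex node, so $\beta_{\sTT}(p)=\beta_{\sTT}(u)\setminus\{w\}$ for its unique child $u$ and some $w\in\beta_{\sTT}(u)$. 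As $t^\ast$ is a child of $p$ we get $u=t^\ast$, and as $v\in\beta_{\sTT}(t^\ast)\setminus\beta_{\sTT}(p)\subseteq\{w\}$ we get $w=v$. Thus $p$ is labelled $\mathbf{forget}(v)$.

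For uniqueness, let $q$ be any node labelled $\mathbf{forget}(v)$ with unique child $u$, so $\beta_{\sTT}(q)=\beta_{\sTT}(u)\setminus\{v\}$; then $u\in S_v$ while $q\notin S_v$. Since $q$ is the parent of a node of $S_v$ yet does not itself lie in $S_v$, and since every node of $S_v$ other than $t^\ast$ has its parent inside $S_v$, we conclude $u=t^\ast$ and hence $q=p$. Therefore $p$ is the only node labelled $\mathbf{forget}(v)$. The whole argument is routine; the only points that require a little care are the exhaustive case analysis on the type of $p$ and the observation that a connected subtree of a rooted tree missing the root has a unique topmost node, which is exactly where Condition~$(c)$ is used.
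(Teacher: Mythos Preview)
Your proof is correct. The paper does not actually prove this observation: it simply declares it folklore and remarks that its correctness follows from Condition~$(c)$ of a tree decomposition. Your argument is a careful, complete elaboration of precisely that remark---using connectivity of $S_v$ to isolate the unique topmost node $t^\ast$, then running a case analysis on the type of its parent---so it is entirely in line with what the paper intends, just fully spelled out.
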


\begin{observation}
\label{obs:forgetsubtree}
Let $\TT$ be a nice tree decomposition of a graph $G$, $v\in V(G)$, and $t\in V(T_{\sTT})$ be the node labelled with {\bf forget}$(v)$. For any node $t'$ in the subtree of $T_{\sTT}$ rooted at  $t$ and $t'\neq t$, either $v\in \beta_{\sTT}(t')$ or 
$v\notin \gamma_{\sTT}(t')$.  
\end{observation}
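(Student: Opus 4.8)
The plan is to prove the (slightly stronger) statement that for every node $t'$ in the subtree of $T_{\sTT}$ rooted at $t$ with $t'\neq t$, if $v\in\gamma_{\sTT}(t')$ then in fact $v\in\beta_{\sTT}(t')$; this is exactly the claimed dichotomy. The whole argument rests on Condition $(c)$ of Definition~\ref{def:treeDecomp} together with the structural fact that in a nice tree decomposition a forget node has exactly one child.

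First I would record the structure around $t$. Since $t$ is labelled \textbf{forget}$(v)$, by definition of a nice tree decomposition $t$ has exactly one child, call it $u$, and $\beta_{\sTT}(t)=\beta_{\sTT}(u)\setminus\{v\}$ with $v\in\beta_{\sTT}(u)$ and $v\notin\beta_{\sTT}(t)$. Let $S_v=\{x\in V(T_{\sTT}) : v\in\beta_{\sTT}(x)\}$; by Condition $(c)$ the set $S_v$ induces a connected subtree of $T_{\sTT}$, and we have just observed that $u\in S_v$ while $t\notin S_v$. Next I would locate $t'$: as $t'$ lies in the subtree rooted at $t$ and $t'\neq t$, it is a proper descendant of $t$, and since $u$ is the \emph{only} child of $t$, every proper descendant of $t$ lies in the subtree rooted at $u$. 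Hence $u$ is an ancestor of $t'$, or $u=t'$.

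Now the core step. Assume $v\in\gamma_{\sTT}(t')$. By definition of $\gamma_{\sTT}$ there is a node $x$ that equals $t'$ or is a descendant of $t'$ with $v\in\beta_{\sTT}(x)$, i.e.\ $x\in S_v$. If $x=t'$ we are immediately done, so suppose $x$ is a proper descendant of $t'$. The unique path in $T_{\sTT}$ between $x$ and $u$ proceeds upward from $x$ to $u$, passing through every node that is simultaneously an ancestor of $x$ and a descendant of $u$; in particular, since $t'$ is an ancestor of $x$ and $u$ is an ancestor of $t'$ (or $u=t'$), this path passes through $t'$. Both endpoints $x$ and $u$ belong to $S_v$, and $S_v$ induces a connected subtree of $T_{\sTT}$, so the entire $x$–$u$ path lies in $S_v$; in particular $t'\in S_v$, that is, $v\in\beta_{\sTT}(t')$, as desired.

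I do not expect a genuine obstacle here: the statement is a routine consequence of Condition $(c)$ and the single-child property of forget nodes. The only point requiring care is the bookkeeping of the ancestor/descendant relations, so that one can legitimately assert that the tree path from $x$ up to $u$ passes through $t'$; once that is pinned down, connectivity of $S_v$ closes the argument.
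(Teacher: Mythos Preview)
Your proof is correct and follows exactly the approach indicated in the paper, which simply states that the observation follows from Condition~$(c)$ of a tree decomposition without spelling out the details. You have filled in precisely the routine argument the paper omits: using that the forget node has a unique child $u\in S_v$ and that the connected set $S_v$ must contain the entire $x$--$u$ path through $t'$.
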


The following proposition concerns the computation of a nice tree decomposition. 

\begin{proposition}[\cite{Bodlaender96}]\label{prop:nice}
Given a graph $G$ and a tree decomposition $\TT$ of $G$, a nice tree decomposition $\TT'$ of the same width as $\TT$ can be computed in linear time.
\end{proposition}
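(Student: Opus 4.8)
The plan is to turn $\TT=(T_{\sTT},\beta_{\sTT})$ into a nice tree decomposition $\TT'$ by a short sequence of purely local modifications of the decomposition tree, none of which increases the maximum bag size, and each of which runs in time linear in the current size of the decomposition. Throughout I write $\beta$ for $\beta_{\sTT}$ and regard $T_{\sTT}$ as rooted at its given root.

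First I would cut the number of nodes down to $\OO(|V(G)|)$: as long as there is a tree edge $\{x,y\}$ with $\beta(x)\subseteq\beta(y)$, contract it, keeping the bag $\beta(y)$; this preserves all three tree-decomposition axioms and the width. After this reduction every non-root node $v$ carries a vertex $f(v)\in\beta(v)$ absent from the bag of its parent, and the map $f$ is injective: if $f(v)=f(v')=x$ then the nodes whose bag contains $x$ form, by Condition~$(c)$, a connected subtree that contains both $v$ and $v'$ but neither of their parents, whereas a connected subtree of a rooted tree has exactly one node whose parent lies outside it --- so $v=v'$. Hence the reduced tree has at most $|V(G)|+1$ nodes. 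Next I would binarize it: a node with $d\ge 3$ children is replaced by a path of $d-1$ copies of that node (all carrying the same bag), with the original children distributed so that each copy gets exactly two children; this again preserves the axioms and the width.

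Now I would enforce the nice-decomposition structure. For each node $v$ with two children $u_1,u_2$, insert new nodes $v_1,v_2$ with $\beta(v_1)=\beta(v_2)=\beta(v)$ as the children of $v$ and hang $u_i$ below $v_i$; then $v$ becomes a legal join node. For each remaining single-child edge $\{v,u\}$ (with $u$ below $v$) whose endpoints have different bags, replace it by a path interpolating between the bags: descending from $v$, first remove the vertices of $\beta(v)\setminus\beta(u)$ one at a time until the bag equals $\beta(v)\cap\beta(u)$ (these new nodes are introduce nodes, reading parent above child), and then add the vertices of $\beta(u)\setminus\beta(v)$ one at a time down to $\beta(u)$ (these are forget nodes). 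Finally, below every leaf $\ell$ append a path of introduce nodes that builds $\beta(\ell)$ up from $\emptyset$ (so the new leaf has an empty bag), above the root append a path of forget nodes that shrinks its bag down to $\emptyset$, and contract any leftover edge joining two equal-bag nodes that is not one of the two edges leaving a join. Every bag occurring in $\TT'$ is a subset of some bag of $\TT$, so the width is exactly the width of $\TT$; the number of inserted nodes is $\OO(|V(G)|)$ times the width, and all steps are linear-time, which yields the stated running time. Genuinely degenerate cases (an edgeless or empty graph, a root whose bag is already empty) are handled by direct inspection.

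The one point that needs care is Condition~$(c)$ for the interpolation paths. Consider the path inserted on an edge $\{v,u\}$ with $u$ below $v$. If $x\in\beta(u)\setminus\beta(v)$, then the subtree of $T_{\sTT}$ formed by the nodes whose bag contains $x$ meets $u$ but not $v$, hence contains no ancestor of $v$; therefore along the new path the vertex $x$ occurs exactly on the initial segment up to (and not including) its forget node, and --- together with its occurrences inside the old subtree below $u$ --- its total set of bags remains a connected subtree. The symmetric argument handles each $y\in\beta(v)\setminus\beta(u)$ and, analogously, the leaf and root extensions. Everything else is routine bookkeeping, and the proposition follows.
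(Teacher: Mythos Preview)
The paper does not prove this proposition at all: it is quoted as a known result from~\cite{Bodlaender96} and used as a black box. There is therefore no ``paper's own proof'' to compare your attempt against.

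That said, your reconstruction is the standard argument and is essentially correct. Two small points worth tightening: (i) after binarization you create adjacent nodes with equal bags, and after your step inserting $v_1,v_2$ below every two-child node you create further such edges; your final ``contract leftover equal-bag edges'' step does handle this, but it would be cleaner to binarize and join-normalize in one pass so that no cleanup is needed. (ii) The phrase ``linear time'' in this context is usually read as linear for fixed width (the output already has $\Theta(n\cdot w)$ nodes, and writing all bags costs $\Theta(n\cdot w^2)$); your accounting is consistent with that reading. Neither point affects correctness.
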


\subparagraph*{Planar Graphs and Map Graphs.}  
A graph $G$ is {\em planar} if there is a mapping of every vertex of $G$ to a point on the Euclidean plane, and of every edge $e$ of $G$ to a curve on the Euclidean plane where the extreme points of the curve are the points mapped to the endpoints of $e$, and all curves are disjoint except on their extreme points. 

\begin{lemma}[Theorem 7.23 in~\cite{cygan2015parameterized},\cite{Gu2012,ROBERTSON1994323}]
\label{lem:planargridtw}
For any $t\in {\mathbb N}$, every planar graph $G$ of treewidth at least $9t/2$
contains  a $t\times t$ grid minor. Furthermore, for every $\epsilon > 0$, there exists an 
$\OO(n^2)$ time algorithm that  given an $n$-vertex planar graph $G$ and $t\in {\mathbb N}$, either 
outputs a tree decomposition of $G$ of width at most $(9/2 + \epsilon)t$, or constructs a $t\times t$ grid minor 
 in $G$.
\end{lemma}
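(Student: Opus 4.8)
This is the planar excluded‑grid theorem with the sharp constant of Gu and Tamaki, so the plan is to route the treewidth statement through \emph{branchwidth} $\mathrm{bw}(G)$, prove the branchwidth form via tangles, and then read the algorithmic dichotomy off the known $\OO(n^2)$‑time (approximate) branchwidth machinery for plane graphs. First I would invoke the two classical inequalities of Robertson and Seymour, $\mathrm{bw}(G)\le\tw(G)+1$ and $\tw(G)+1\le\lfloor\tfrac32\,\mathrm{bw}(G)\rfloor$ (we may assume $\mathrm{bw}(G)\ge2$, as the lemma is trivial for small $t$). By contraposition, the first assertion of the lemma reduces to the core combinatorial claim: \emph{if a planar graph $G$ has no $t\times t$ grid minor then $\mathrm{bw}(G)\le 3t-1$}; plugging this into the second inequality yields $\tw(G)+1\le\lfloor\tfrac32(3t-1)\rfloor<\tfrac92 t$, i.e.\ $\tw(G)<\tfrac92 t$.

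For the core claim --- which I expect to be the main obstacle --- I would use the Robertson--Seymour duality that $G$ has branchwidth $\ge\theta$ if and only if $G$ admits a tangle of order $\theta$. Assume $G$ is planar with a fixed plane embedding and let $\mathcal{T}$ be a tangle of order $3t$, which exists whenever $\mathrm{bw}(G)\ge 3t$. The geometric picture is that a tangle of a plane graph is ``concentrated'' in some region of the sphere, and one can extract from it a highly regular minor model around that region: pairwise vertex‑disjoint cycles $C_1,\dots,C_s$ that are nested with respect to the embedding (each $C_{i+1}$ drawn strictly inside the disk bounded by $C_i$, with the inside always the ``small side'' selected by $\mathcal{T}$), together with $s$ pairwise vertex‑disjoint paths each meeting all of $C_1,\dots,C_s$ in cyclic order. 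Their union is a cylindrical grid of order $s$, and such a grid contains $\boxplus_t$ (the $t\times t$ grid) as a minor once $s$ exceeds a suitable linear function of $t$. The delicate part is both to carry out this extraction cleanly and to make the constants line up, so that a tangle of order as small as $\approx 3t$ already forces $s$ to be large enough; this tightening is precisely the content of Gu--Tamaki~\cite{Gu2012} (refining the coarser bound of Robertson--Seymour--Thomas~\cite{ROBERTSON1994323}), and it is where the numerics --- in particular the $9/2$ --- are won or lost.

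For the algorithmic statement I would run the $\OO(n^2)$‑time planar branchwidth procedure (the Seymour--Thomas ``ratcatcher'', in its constructive and $(1+\epsilon_0)$‑approximate form, as used in \cite{Gu2012,cygan2015parameterized}) with threshold $\approx 3t$. If it reports branchwidth below $3t$, it also yields a branch decomposition of width at most $(1+\epsilon_0)(3t-1)$, which the standard conversion turns into a tree decomposition of width at most $\lfloor\tfrac32(1+\epsilon_0)(3t-1)\rfloor-1\le(\tfrac92+\epsilon)t$ for $\epsilon_0$ small relative to $\epsilon$ --- this is where the $\epsilon$ slack is spent. Otherwise the procedure certifies branchwidth $\ge 3t$ by exhibiting a tangle of that order, and feeding this tangle into the constructive version of the extraction argument above produces an explicit $t\times t$ grid‑minor model, all in $\OO(n^2)$ time. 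The only genuinely non‑routine ingredient here is, once more, the grid extraction; the ratcatcher, the branch‑to‑tree conversion, and the running‑time bookkeeping on plane graphs are standard.
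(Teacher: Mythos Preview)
The paper does not prove this lemma at all: it is stated with attribution (Theorem~7.23 in~\cite{cygan2015parameterized}, together with~\cite{Gu2012,ROBERTSON1994323}) and then used as a black box, its only role being to yield Corollary~\ref{cor:planargridtws} by plugging in $\epsilon=1/3$. There is therefore no ``paper's own proof'' to compare your proposal against.

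That said, your outline is a faithful high-level sketch of how the cited references actually establish the result: the reduction to branchwidth via the Robertson--Seymour inequalities, the tangle-of-order-$3t$ to cylindrical-grid extraction that Gu--Tamaki sharpened to obtain the constant $9/2$, and the Seymour--Thomas ratcatcher for the $\OO(n^2)$ algorithmic half. One small remark on the algorithmic side: the Seymour--Thomas procedure computes planar branchwidth \emph{exactly} (not only up to a $(1+\epsilon_0)$ factor), so the $\epsilon$ in the lemma is absorbed not by an approximation of branchwidth but by the running-time trade-off in the constructive grid-minor extraction of~\cite{Gu2012}; this does not affect the correctness of your plan, only the narrative of where the slack is spent. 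In any case, for the purposes of the present paper none of this needs to be reproduced --- the lemma is simply quoted.
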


By substituting $\epsilon=1/3$ in Lemma~\ref{lem:planargridtw}, we get the following corollary. 

\begin{corollary}
\label{cor:planargridtws}
%Let $t\in {\mathbb N}$. 
%Then every planar graph $G$ of treewidth at least $9t/2$
%contains  a $t\times t$ grid minor. Furthermore, 
There exists an 
$\OO(n^2)$ time algorithm that given an $n$-vertex planar graph $G$ and $t\in {\mathbb N}$, either 
outputs a tree decomposition of $G$ of width less than $5t$, or constructs a $t\times t$ grid minor  
 in $G$.
\end{corollary}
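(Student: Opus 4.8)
The plan is simply to instantiate Lemma~\ref{lem:planargridtw} with the constant $\epsilon = 1/3$ and then simplify the resulting width bound. First I would substitute $\epsilon = 1/3$ into the guarantee of the lemma: the algorithm then runs in time $\OO(n^2)$ and, on input a planar graph $G$ together with $t \in {\mathbb N}$, either constructs a $t \times t$ grid minor of $G$ or outputs a tree decomposition of $G$ of width at most $(9/2 + 1/3)t$.

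Next I would simplify this bound. We have $(9/2 + 1/3)t = (27/6 + 2/6)t = 29t/6 < 30t/6 = 5t$, so the quantity $(9/2 + 1/3)t$ is strictly smaller than $5t$ for every positive integer $t$. (In fact, since the width of any tree decomposition is an integer, ``width at most $29t/6$'' already means ``width at most $\lfloor 29t/6 \rfloor$'', which is in particular less than $5t$.) Plugging this simplified bound into the dichotomy returned by the algorithm yields precisely the statement of the corollary, and the running time $\OO(n^2)$ is unaffected since $\epsilon = 1/3$ is a fixed constant, absorbed into the $\OO$-notation.

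There is no genuine obstacle here: the whole content of the corollary reduces to the elementary inequality $9/2 + 1/3 < 5$, together with the deliberate choice of a clean rounded constant ($5t$ in place of $29t/6$) that will be more convenient to carry through the later arguments of the paper.
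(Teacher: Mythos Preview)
Your proposal is correct and matches the paper's own approach exactly: the paper simply states that the corollary follows by substituting $\epsilon = 1/3$ in Lemma~\ref{lem:planargridtw}, which is precisely the computation you carry out.
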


Map graphs are the intersection graphs of finitely many connected and interior-disjoint regions of the Euclidean plane. 
Any number of regions can meet at a common corner
which results (in the map graph) in 
a clique on the vertices corresponding to these regions. 
Map graphs can be represented as the {\em half-squares of planar bipartite graphs}.  
For a bipartite graph $B$ with bipartition $V(B)=W\uplus U$,  the half-square of $B$ is the graph $G$ with vertex set  $W$ and  edge set  is defined as follows: two vertices in $W$ are adjacent in $G$ if they are at distance $2$ in $B$. 
It is known that the half-square of a planar bipartite graph is a map graph~\cite{ChenGP98,ChenGP02}. 
Moreover, for any map graph $G$, there exists a planar bipartite graph $B$ such that $G$ is a half-square of $B$~\cite{ChenGP98,ChenGP02};  
we refer to such $B$ as a planar bipartite graph {\em corresponding} to the map graph $G$ (see Figure~\ref{fig:mapgraphderviation}). 

Throughout this paper, we assume that any input map graph $G$ 
is  given with a corresponding planar bipartite graph $B$~\footnote{This assumption is made without loss of generality in the sense that if $G$ is given with an embedding instead to witness that it is a map graph, then $B$ is easily computable in linear time~\cite{ChenGP98,ChenGP02}.}. We remark that we consider map graphs as simple 
graphs, that is, there  are no multiple edges between two vertices $u$ and $v$, even if there are two or more  internally vertex disjoint paths of length $2$ between $u$ and $v$ in the corresponding planar bipartite graph.  
For a map graph $G$ with a corresponding planar bipartite graph $B$ having bipartition $V(B)=W\uplus U$, we refer to the vertices in $W=V(G)$ simply as {\em vertices} and the vertices in $U$ as  {\em special vertices}.  Moreover,  we denote the special vertices by $S(G)$. Notice that for any $s\in S(G)$, $N_B(s)$ forms a clique in $G$; we refer to these cliques as {\em special cliques} of $G$.  
We remark that the collection $\cal K$ of cliques mentioned in Section~\ref{sec:intro} refers to $\{N_B(s)\colon s\in S(G)\}\cup \{\{v\}\colon v\in V(G)\}$.

%!TEX root = main_map.tex

\section{Few Cliques Tree Decomposition of Map Graphs}
\label{sec:fewclde}

In this section, we define a special tree decomposition for map graphs. This decomposition will be derived from 
a tree decomposition of the bipartite planar graph corresponding to the given map graph. 
% of a map graph $G$, which is derived 
%from a tree decomposition of a corresponding bipartite planar graph of $G$. 
Once we have defined our new decomposition,  we will gather a few of its structural properties.
% about it and they are 
These properties will be useful in 
designing  {fast subexponential time} algorithms on map graphs.   
%which we will use in the later sections 
%to design {\em fast} dynamic programming algorithms on map graphs.   

\begin{definition}\label{def:specialTreeDecomp}
Let $G$ be a map graph with a corresponding planar bipartite graph $B$.  
Let ${\cal D}=(T_{{\cal D}},\beta_{{\cal D}})$ be a tree decomposition of $B$ of width less than $\ell$.  
A pair ${\cal D}'=(T_{{\cal D}'},\beta_{{\cal D}'})$  is called the {\em \STDlong{$\ell$} derived from ${\cal D}$}, or simply an {\em \STD{$(\ell,{\cal D})$}},  if it is constructed as follows (see Figure~\ref{fig:TDderivation}).
\begin{enumerate}
\item The tree $T_{{\cal D}'}$ is equal to $T_{{\cal D}}$. Whenever ${\cal D}'$ and ${\cal D}$  are clear from context, we denote both $T_{{\cal D}'}$ and $T_{{\cal D}}$ by $T$.
\item For each node $t\in V(T)$,  $\beta_{{\cal D}'}(t)=(\beta_{{\cal D}}(t)\cap V(G))\cup(\bigcup_{s\in \beta_{{\cal D}}(t)\cap S(G)}N_B(s)\cap \gamma_{{\cal D}}(t))$. That is, for each node $t\in V(T)$, we derive $\beta_{{\cal D}'}(t)$ from $\beta_{{\cal D}}(t)$ by replacing every special vertex $s\in \beta_{{\cal D}}(t)\cap S(G)$ by $N_B(s)\cap \gamma_{{\cal D}}(t)$.
\end{enumerate}
\end{definition}

In  words, the second item states that for every vertex $v\in V(G)$ and node $t\in V(T)$, we have that $v\in \beta_{{\cal D}'}(t)$ if and only if either $(i)$ $v\in \beta_{{\cal D}}(t)\cap V(G)$ or $(ii)$ $v\in N_B(s)$ for some $s\in S(G)\cap \beta_{{\cal D}}(t)$ and $v\in\beta_{\cal D}(t')$ for some node $t'$ in the subtree of $T$ rooted at $t$.

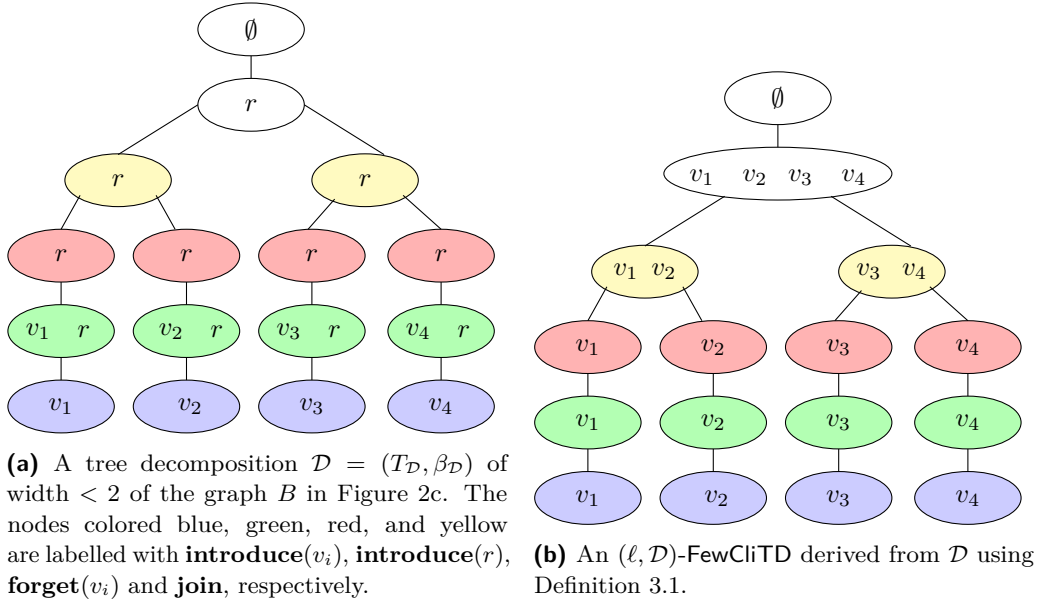
\begin{figure}
    \centering
    \begin{subfigure}[b]{0.47\textwidth}
       \begin{tikzpicture}
       \draw (0,1.35)--(0,1.65);
       \draw (0,2) ellipse (0.7cm and 0.35cm);
       \draw (0,1) ellipse (0.7cm and 0.35cm);
       \draw[fill=yellow!30] (-1.75,0) ellipse (0.7cm and 0.35cm);
       \draw[fill=yellow!30] (1.5,0) ellipse (0.7cm and 0.35cm);
       
       \draw[fill=red!30] (-2.5,-1) ellipse (0.7cm and 0.35cm);
       \draw[fill=red!30] (-0.85,-1) ellipse (0.7cm and 0.35cm);
       \draw[fill=red!30] (0.8,-1) ellipse (0.7cm and 0.35cm);
       \draw[fill=red!30] (2.5,-1) ellipse (0.7cm and 0.35cm);
%         \node[]  at (-2.8,-1) (a) {$v_1$};
%         \node[]  at (0.5,-1) (a) {$v_3$};
%         \node[]  at (2.2,-1) (a) {$v_4$};
%         \node[]  at (-1.05,-1) (a) {$v_2$};
         \node[]  at (-2.5,-1) (a) {$r$};
         \node[]  at (0.8,-1) (a) {$r$};
         \node[]  at (2.5,-1) (a) {$r$};
         \node[]  at (-.85,-1) (a) {$r$};
         \node[]  at (0,1) (a) {$r$};
         \node[]  at (-1.75,0) (a) {$r$};
         \node[]  at (1.5,0) (a) {$r$};
         \node[]  at (0,2) (a) {$\emptyset$};
 %     \draw (-2.5,-1) -- (0.0);
        \draw (-2.5,-0.65)--(-2.25,-0.2);
        \draw (2.5,-0.65)--(2,-0.2);
        \draw (-1,-0.65)--(-1.25,-0.2);
        \draw (0.75,-0.65)--(1.1,-0.25);       
        \draw (-1.75,0.34)--(-0.7,1);
        \draw (1.75,0.34)--(0.7,1);
        %--(-0.5,-0.5)--(-0.5,0.5)--(0.5,0.5)--(-0.5,-0.5);
%         \draw (-0.5,0.5)--(0.5,-0.5);
       \draw[fill=green!30] (-2.5,-2) ellipse (0.7cm and 0.35cm);
       \draw[fill=green!30] (-0.85,-2) ellipse (0.7cm and 0.35cm);
       \draw[fill=green!30] (0.8,-2) ellipse (0.7cm and 0.35cm);
       \draw[fill=green!30] (2.5,-2) ellipse (0.7cm and 0.35cm);
       \draw[fill=blue!20] (-2.5,-3) ellipse (0.7cm and 0.35cm);
       \draw[fill=blue!20] (-0.85,-3) ellipse (0.7cm and 0.35cm);
       \draw[fill=blue!20] (0.8,-3) ellipse (0.7cm and 0.35cm);
       \draw[fill=blue!20] (2.5,-3) ellipse (0.7cm and 0.35cm);
       \draw (-2.5,-2.35) -- (-2.5,-2.65);
        \draw (-0.85,-2.35) -- (-0.85,-2.65);
        \draw (0.8,-2.35) -- (0.8,-2.65);
        \draw (2.5,-2.35) -- (2.5,-2.65);        
        \draw (-2.5,-1.35) -- (-2.5,-1.65);
        \draw (-0.85,-1.35) -- (-0.85,-1.65);
        \draw (0.8,-1.35) -- (0.8,-1.65);
        \draw (2.5,-1.35) -- (2.5,-1.65);
 %     \node[]  at (-3.2,-2) (a) {$t$};
        \node[]  at (-2.8,-2) (a) {$v_1$};
         \node[]  at (0.5,-2) (a) {$v_3$};
         \node[]  at (2.2,-2) (a) {$v_4$};
         \node[]  at (-1.05,-2) (a) {$v_2$};
         \node[]  at (-2.2,-2) (a) {$r$};
         \node[]  at (1.1,-2) (a) {$r$};
         \node[]  at (2.8,-2) (a) {$r$};
         \node[]  at (-.45,-2) (a) {$r$};
         \node[]  at (-2.5,-3) (a) {$v_1$};
         \node[]  at (0.8,-3) (a) {$v_3$};
         \node[]  at (2.5,-3) (a) {$v_4$};
         \node[]  at (-.8,-3) (a) {$v_2$};
       \end{tikzpicture}
        \caption{A tree decomposition ${\cal D}=(T_{{\cal D}},\beta_{{\cal D}})$ of width $<2$ of the  graph $B$ in Figure~\ref{fig:corrplanargraph}. The nodes colored blue, green, red, and yellow are labelled with {\bf introduce}($v_i$), {\bf introduce}($r$), {\bf forget}($v_i$) and {\bf join}, respectively.}
        \label{fig:tdB}
    \end{subfigure}
    ~ %add desired spacing between images, e. g. ~, \quad, \qquad, \hfill etc. 
    %(or a blank line to force the subfigure onto a new line)
    \begin{subfigure}[b]{0.47\textwidth}
       \begin{tikzpicture}
       \draw (0,1.65)--(0,1.95);
       \draw (0,2.3) ellipse (0.7cm and 0.35cm);
       \draw (0,1.3) ellipse (1.5cm and 0.35cm);
       \draw[fill=red!30] (-2.5,-1) ellipse (0.7cm and 0.35cm);
       \draw[fill=red!30] (-0.85,-1) ellipse (0.7cm and 0.35cm);
       \draw[fill=red!30] (0.8,-1) ellipse (0.7cm and 0.35cm);
       \draw[fill=red!30] (2.5,-1) ellipse (0.7cm and 0.35cm);
       \draw[fill=yellow!30] (-1.75,0) ellipse (0.7cm and 0.35cm);
       \draw[fill=yellow!30] (1.5,0) ellipse (0.7cm and 0.35cm);
         \node[]  at (-2.5,-1) (a) {$v_1$};
         \node[]  at (0.8,-1) (a) {$v_3$};
         \node[]  at (2.5,-1) (a) {$v_4$};
         \node[]  at (-.85,-1) (a) {$v_2$};
%         \node[]  at (-2.2,-1) (a) {$r$};
%         \node[]  at (1.1,-1) (a) {$r$};
%         \node[]  at (2.8,-1) (a) {$r$};
%         \node[]  at (-.45,-1) (a) {$r$};
         \node[]  at (0,2.3) (a) {$\emptyset$};
         \node[]  at (-1,1.25) (a) {$v_1$};
         \node[]  at (0.3,1.25) (a) {$v_3$};
         \node[]  at (1,1.25) (a) {$v_4$};
         \node[]  at (-0.3,1.25) (a) {$v_2$};
 %        \draw (-2.5,-1) -- (0.0);
%        \draw (-2.5,-0.65)--(-0.75,0.2);
%         \draw (2.5,-0.65)--(0.75,0.2);
%         \draw (-1,-0.65)--(-0.2,0.2);
%         \draw (1,-0.65)--(0.2,0.2);
                 \draw (-2.5,-0.65)--(-2.25,-0.2);
        \draw (2.5,-0.65)--(2,-0.2);
        \draw (-1,-0.65)--(-1.25,-0.2);
        \draw (0.75,-0.65)--(1.1,-0.25);       
        \draw (-1.75,0.34)--(-0.7,1);
        \draw (1.75,0.34)--(0.7,1);                
                 \node[]  at (-2,0) (a) {$v_1$};
         \node[]  at (1.2,0) (a) {$v_3$};
                \node[]  at (-1.5,0) (a) {$v_2$};
         \node[]  at (1.8,0) (a) {$v_4$};

       \draw[fill=green!30] (-2.5,-2) ellipse (0.7cm and 0.35cm);
       \draw[fill=green!30] (-0.85,-2) ellipse (0.7cm and 0.35cm);
       \draw[fill=green!30] (0.8,-2) ellipse (0.7cm and 0.35cm);
       \draw[fill=green!30] (2.5,-2) ellipse (0.7cm and 0.35cm);
       \draw[fill=blue!20] (-2.5,-3) ellipse (0.7cm and 0.35cm);
       \draw[fill=blue!20] (-0.85,-3) ellipse (0.7cm and 0.35cm);
       \draw[fill=blue!20] (0.8,-3) ellipse (0.7cm and 0.35cm);
       \draw[fill=blue!20] (2.5,-3) ellipse (0.7cm and 0.35cm);
               \node[]  at (-2.5,-2) (a) {$v_1$};
         \node[]  at (0.8,-2) (a) {$v_3$};
         \node[]  at (2.5,-2) (a) {$v_4$};
         \node[]  at (-.85,-2) (a) {$v_2$};
         \node[]  at (-2.5,-3) (a) {$v_1$};
         \node[]  at (0.8,-3) (a) {$v_3$};
         \node[]  at (2.5,-3) (a) {$v_4$};
         \node[]  at (-.8,-3) (a) {$v_2$};
                \draw (-2.5,-2.35) -- (-2.5,-2.65);
        \draw (-0.85,-2.35) -- (-0.85,-2.65);
        \draw (0.8,-2.35) -- (0.8,-2.65);
        \draw (2.5,-2.35) -- (2.5,-2.65);        
        \draw (-2.5,-1.35) -- (-2.5,-1.65);
        \draw (-0.85,-1.35) -- (-0.85,-1.65);
        \draw (0.8,-1.35) -- (0.8,-1.65);
        \draw (2.5,-1.35) -- (2.5,-1.65);
       \end{tikzpicture}
        \caption{An {\em \STD{$(\ell,{\cal D})$}} derived from ${\cal D}$ using Definition~\ref{def:specialTreeDecomp}.}
        \label{fig:tdG}
    \end{subfigure}
    \caption{Figure~\ref{fig:tdB} represents a tree decomposition of $B$ (drawn in Figure~\ref{fig:corrplanargraph}). In fact it is a tree decomposition obtained after deleting the leaves of a nice tree decomposition.}\label{fig:TDderivation}
\end{figure}

Next, we prove that the \STD{$(\ell,{\cal D})$} $(T,\beta_{{\cal D}'})$ in Definition~\ref{def:specialTreeDecomp} is a tree decomposition of~$G$. We remark that if we replace the term $N_B(s)\cap \gamma_{{\cal D}}(t)$ by the term $N_B(s)$ in the second item of Definition~\ref{def:specialTreeDecomp}, then we  still derive a tree decomposition, but then some of the properties proved later do not hold true.

\begin{lemma}
Let $G$ be a map graph with a corresponding planar bipartite graph $B$.  
%Let ${\cal D}=(T_{{\cal D}},\beta_{{\cal D}})$ be a tree decomposition of $B$ of width $\ell$.  
Let ${\cal D}'=(T,\beta_{{\cal D}'})$ be an \STD{$(\ell,{\cal D})$} where ${\cal D}=(T,\beta_{{\cal D}})$ is a tree decomposition of $B$ of width less than $\ell$.
%the pair obtained from ${\cal D}$  by the construction given in Definition~\ref{def:specialTreeDecomp}.
Then, ${\cal D}'$ is  a tree decomposition of $G$. 
\end{lemma}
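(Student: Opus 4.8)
The plan is to check the three defining conditions of a tree decomposition for $\D'=(T,\beta_{\D'})$ directly from the fact that $\D=(T,\beta_{\D})$ is a tree decomposition of $B$ together with the half-square relationship between $G$ and $B$. Throughout I will use one elementary fact about rooted trees: a nonempty connected subtree $S$ of $T$ has a unique node of minimum depth, and every node of $S$ is a descendant of that node (if the least common ancestor of the minimum-depth node $r_S$ and some $x\in S$ were a proper ancestor of $r_S$, it would lie on the $r_S$--$x$ path, hence in $S$, contradicting minimality). Condition $(a)$ is essentially immediate: since $B$ is bipartite with parts $V(G)$ and $S(G)$, we have $N_B(s)\subseteq V(G)$ for every $s\in S(G)$, so every bag of $\D'$ is contained in $V(G)$; conversely each $v\in V(G)\subseteq V(B)$ lies in some bag $\beta_{\D}(t)$, hence $v\in \beta_{\D}(t)\cap V(G)\subseteq \beta_{\D'}(t)$.

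For condition $(b)$ I would argue as follows. Let $\{u,v\}\in E(G)$. By the definition of the half-square, $u$ and $v$ are at distance $2$ in $B$, so there is a special vertex $s\in S(G)$ with $u,v\in N_B(s)$. Let $t_s$ be the unique minimum-depth node of the connected subtree $\{t : s\in\beta_{\D}(t)\}$. Because $\{u,s\}$ and $\{v,s\}$ are edges of $B$, condition $(b)$ for $\D$ gives nodes $t_u\ni\{u,s\}$ and $t_v\ni\{v,s\}$; both contain $s$, hence both are descendants of $t_s$, so $u,v\in\gamma_{\D}(t_s)$. Since also $s\in\beta_{\D}(t_s)\cap S(G)$, the construction rule gives $u,v\in N_B(s)\cap\gamma_{\D}(t_s)\subseteq\beta_{\D'}(t_s)$, as needed.

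For condition $(c)$, fix $v\in V(G)$, let $A=\{t : v\in\beta_{\D}(t)\}$ (a connected subtree of $T$ by condition $(c)$ for $\D$) and let $r_v$ be its minimum-depth node. The first step is to show $\{t : v\in\gamma_{\D}(t)\}=A\cup\{\text{proper ancestors of }r_v\}$: indeed $v\in\gamma_{\D}(t)$ iff some node of $A$ is a descendant of $t$, and since all nodes of $A$ are descendants of $r_v$, such a node exists exactly when $t\in A$ or $t$ is a proper ancestor of $r_v$. Combined with the reformulation of the construction (``$v\in\beta_{\D'}(t)$ iff $v\in\beta_{\D}(t)$, or $\exists s\in S(G)\cap\beta_{\D}(t)$ with $v\in N_B(s)$ and $v\in\gamma_{\D}(t)$''), this yields $\{t : v\in\beta_{\D'}(t)\}=A\cup P$, where $P=\{r_v\}\cup\{\,t\text{ a proper ancestor of }r_v : \exists\, s\in S(G)\cap\beta_{\D}(t)\text{ with }v\in N_B(s)\,\}$. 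Since $A$ is a connected subtree rooted at $r_v$, it suffices to prove that $P$ is a contiguous upward path from $r_v$, i.e.\ that whenever $t\in P$ is a proper ancestor of $r_v$, the child $t^{\downarrow}$ of $t$ on the $t$--$r_v$ path also lies in $P$. Let $s$ witness $t\in P$, and pick (via condition $(b)$ for $\D$) a node $t_0$ with $\{v,s\}\subseteq\beta_{\D}(t_0)$; then $t_0\in A$, so $t_0$ is a descendant of $r_v$, hence of $t^{\downarrow}$, whereas $t$ is a proper ancestor of $t^{\downarrow}$. Thus $t^{\downarrow}$ lies on the $t$--$t_0$ path, and since $s$ is in the bags of both $t$ and $t_0$, connectivity of $\{t' : s\in\beta_{\D}(t')\}$ forces $s\in\beta_{\D}(t^{\downarrow})$, so $t^{\downarrow}\in P$. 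Therefore $A\cup P$ induces a connected subtree of $T$.

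The main obstacle is condition $(c)$: the bags of $\D'$ can be strictly larger than those of $\D$, and a single vertex $v$ may re-enter several ``higher'' bags through different special cliques, so it is not obvious a priori that the bags containing $v$ form a subtree. The resolution is the observation above that $v$ can only propagate upward along a path on which some fixed special neighbour $s$ of $v$ persists in the $\D$-bags — which connectivity of $\D$ controls — and this is exactly why the definition intersects $N_B(s)$ with $\gamma_{\D}(t)$ rather than using $N_B(s)$ itself (as the remark after Definition~\ref{def:specialTreeDecomp} warns). The rest of the verification is routine bookkeeping with the tree-decomposition axioms for $B$.
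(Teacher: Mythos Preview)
Your proof is correct and follows essentially the same approach as the paper: conditions $(a)$ and $(b)$ are verified identically, and for $(c)$ both proofs hinge on the fact that a node $t_0$ with $\{v,s\}\subseteq\beta_{\D}(t_0)$ lies in the subtree rooted at $r_v$, so connectivity of $\{t:s\in\beta_{\D}(t)\}$ forces $s$ into every bag between $t_0$ and any higher node containing $s$. The only organisational difference is that the paper decomposes $\{t:v\in\beta_{\D'}(t)\}$ as $O_v\cup\bigcup_{s\in N_B(v)} R'_s$ and shows each $R'_s$ is connected and meets $O_v$, whereas you collapse all special neighbours into a single upward set $P$ and argue directly that $P$ is a contiguous path above $r_v$; your formulation is slightly more streamlined but the underlying argument is the same.
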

\begin{proof}
We first prove that every vertex of $G$ is present in at least one bag. Towards this, notice that Property  $(a)$ of  Definition~\ref{def:treeDecomp} of the  tree decomposition ${\cal D}$ of $B$ implies that $\bigcup_{t\in V(T)}\beta_{{\cal D}}(t)=V(B)=V(G)\cup S(G)$. 
Therefore, since $\beta_{{\cal D}'}(t)\supseteq \beta_{{\cal D}}(t)\cap V(G)$ for any $t\in V(T)$, we conclude that $\bigcup_{t\in V(T)}\beta_{{\cal D}'}(t)=V(G)$.  Now, we prove that for any edge $\{u,v\}\in E(G)$, there exists a bag $\beta_{{\cal D}'}(t)$ for some $t\in V(T)$ 
such that $\{u,v\}\subseteq \beta_{{\cal D}'}(t)$. Because  $\{u,v\}\in E(G)$, there exists a special vertex $s\in S(G)$ such that $\{u,s\},\{v,s\}\in E(B)$. 
By Property $(c)$  of  ${\cal D}$, the set of nodes $Q=\{t\in V(T)~:~s\in \beta_{{\cal D}}(t)\}$ induces a (connected) subtree of $T$. Let $z\in Q$ be a node such that the distance from $z$ to the root of $T$ is minimized. 
Therefore, the choice of $z$ is unique. 
Since $\{u,s\},\{v,s\}\in E(B)$, 
by Property $(b)$ of ${\cal D}$ and the definition of $Q$, there exist $x,y\in Q$ such that $\{u,s\}\in \beta_{\D}(x)$ and $\{v,s\}\in \beta_{\D}(y)$. Then, because $x$ and $y$ must be descendants of $z$ in $T$, 
we have that $\{u,v\}\subseteq \gamma_{{\cal D}}(z)$. This implies that $\{u,v\}\subseteq \beta_{{\cal D}'}(z)$. So we have proved Properties $(a)$ and $(b)$ of Definition~\ref{def:treeDecomp}. 

To prove Property $(c)$ of Definition~\ref{def:treeDecomp} with respect to $\D'$, we pick an arbitrary vertex $u\in V(G)$, and prove that the set of nodes $R=\{t\in V(T)~:~u\in \beta_{{\cal D}'}(t)\}$ induces a (connected) subtree of $T$. 
%That is, we need to prove 
Observe 
that $R=O_u\cup (\bigcup_{s\in N_B(u)} R'_s)$
%induces a connected subtree of $T$, 
where 
$O_u=\{t\in V(T)~:~u\in \beta_{{\cal D}}(t)\}$, and $R'_s=\{t\in V(T)~:~s\in \beta_{{\cal D}}(t) \mbox{ and } u\in \gamma_{{\cal D}}(t)\}$ for each $s\in N_B(u)$. To prove that $T[R]$ is connected, it is enough to prove that 
$(i)$ $T[O_u]$ is connected, $(ii)$ $R_s'\cap O_u\neq \emptyset$ for all $s\in N_B(u)$, and $(iii)$ 
$T[R'_s]$ is connected for all $s\in N_B(u)$. Statement $(i)$ follows from Property $(c)$  of the  tree decomposition ${\cal D}$ of $B$.  For any $s\in N_B(u)$, since $\{u,s\}\in E(B)$ and by Property $(b)$ of $\D$, we have that $R'_s\cap O_t\neq \emptyset$, and hence Statement $(ii)$ follows. 

The proof of the lemma will be complete with the proof of Statement $(iii)$. Towards this, let $R_s=\{t\in V(T)~:~s\in \beta_{{\cal D}}(t)\}$ for all $s\in N_B(u)$. Clearly $R_s'\subseteq R_s$. By Property $(c)$ of 
%the tree decomposition 
${\cal D}$, we know that for any $s\in N_B(u)$, $R_s$ induces a (connected) subtree $T_s$ of $T$. We claim that  for any $s\in N_B(u)$, $T[R'_s]$ (which is a subgraph of $T_s$) is a (connected) subtree of $T_s$. Towards a contradiction, suppose that $T[R'_s]$ is not connected. Then, let $C$ and $C'$ be two connected components of $T[R_s']$ such that there exists a path $P$ in $T_s$ from a vertex in $C$ to a vertex in $C'$ whose internal vertices all belong to  $V(T_s)\setminus R_s'$. Then, there is an internal vertex $w$ of $P$ such that $w$ is an ancestor of one of the end-vertices of $P$ (because in a rooted tree any internal vertex of a path is an ancestor of an end-vertex of the path and $P$ has at least one internal vertex, else $C$ and $C'$ form one connected component). This implies that $u\in \gamma_{{\cal D}}(w)$, because $w\in V(T_s)=R_s$ and $w$ is an ancestor a vertex 
in $R'_s$ (where by the definition of $R_s'$, $u$ must belong to the bag of that vertex). This is a contradiction to the fact that $w\notin R_s'$.  Therefore, we conclude that $T[R'_s]$ is connected. This completes the proof of the lemma. 
\end{proof}

To simplify statements ahead, from now on, we have the following notation. 

\noindent\fbox{%
    \parbox{.98\textwidth}{%
Throughout the section, we fix a map graph $G$, a corresponding planar bipartite graph $B$ of $G$, an integer $\ell\in\mathbb{N}$,  
a nice tree decomposition ${\cal D}$ of $B$ of width less than $\ell$ and 
an {\em \STDlong{$\ell$}} ${\cal D}'$  of $G$ derived from ${\cal D}$ using Definition~\ref{def:specialTreeDecomp}
    }%
}

\smallskip

%Notice that we derived a tree decomposition ${\cal D}'$ for $G$ from a tree decomposition ${\cal D}$ of $B$. 
Recall that $T=T_{\cal D}=T_{{\cal D}'}$ and that for each node $t\in V(T)$, $\beta_{{\cal D}'}(t)$ was obtained from 
$\beta_{\D}(t)$ by replacing every special vertex $s\in S(G)$ with $N_B(s)\cap \gamma_{\cal D}(s)$. 

\begin{definition}
For a node $t\in V(T)$, we use \org($t$) to denote the set $\beta_{{\cal D}}(t)\cap \beta_{{\cal D}'}(t)$,  \fake$(t)$ to denote the set $\beta_{{\cal D}'}(t)\setminus \beta_{{\cal D}}(t)$, and  \cliques$(t)$ to denote 
the set $\{N_B(s)\colon s\in S(G)\cap \beta_{{\cal D}}(t) \}$ of special cliques of $G$.  
\end{definition}
Informally, for a node $t\in V(T)$, \org($t$) denotes the set of vertices of $V(G)$ present in the bag $\beta_{{\cal D}}(t)$, \fake$(t)$ denotes the set of ``new'' vertices added to $\beta_{{\cal D}'}(t)$ while replacing special vertices in $\beta_{{\cal D}}(t)$, and \cliques$(t)$ is the set of special cliques in $G$ that consist of one for each special vertex $s\in \beta_{{\cal D}}(t)$. For example,  let $t$ be the node in Figure~\ref{fig:TDderivation} that is labelled with {\bf forget}($v_1$) by ${\cal D}$. 
Then, \org$(t)=\emptyset$, \fake$(t)=\{v_1\}$ and \cliques$(t)=\{\{v_1,\ldots,v_4\}\}$.

In the remainder  of this section we prove properties related to ${\cal D}$ and ${\cal D}'$, which we use later in the paper to design some of our subexponential-time parameterized algorithms.  Towards the formulation of the first property,  
consider the tree decomposition ${\cal D'}$ in Figure~\ref{fig:TDderivation} and the set of its nodes whose bags contain  the vertex $v_1$  as a ``fake'' vertex.
% in the bags associated with them. 
This set of nodes forms a path with one end-vertex being the unique node $t_{v_1}$ of $T$ labelled with {\bf forget}$(v_1)$ by ${\cal D}$ and the other end-vertex being an ancestor of 
$t_{v_1}$. In fact, the set of nodes $Q=\{t \in V(T) \colon v_1\in \fake(t) \mbox{ and }r\in \beta_{{\cal D}}(t)\}$ forms  the unique path in $T$ from $t_{v_1}$ to $t_r$ where $t_r$ is the unique child of the node labelled with {\bf forget}$(r)$ by  ${\cal D}$. This observation is abstracted and formalized in the following lemma.

%\todo[inline]{to draw a picture and show these definitions}
%

%For a node $t\in V(T)$, we use \org($t$) to denote the set $\beta_{{\cal D}}(t)\cap \beta_{{\cal D}'}(t)$, \todo{Assume they are same; fine but then why diff letter}  
%\fake$(t)$ to denote the set $\beta_{{\cal D}'}(t)\setminus \beta_{{\cal D}}(t)$, and  \cliques$(t)$ to denote 
%the set $\{N_B(s)\colon s\in S(G)\cap \beta_{{\cal D}}(t) \}$ of special cliques of $G$.  \todo{say each set in words}
%In this section we prove properties related to ${\cal D}$ and ${\cal D}'$, which we use later in the paper.  
%{\bf are these definitions of fake, original also defined for tree ${\cal D}$}
%\todo[inline]{to draw a picture and show these definitions}

\begin{lemma}
\label{lem:twstruc1}
Let $v\in V(G)$ and $s\in S(G)$ such that $v\in N_B(s)$ and $Q=\{t \in V(T) \colon v\in \fake(t) \mbox{ and }s\in \beta_{{\cal D}}(t)\}\neq \emptyset$. Let $x$ be the  node in $T$ labelled with {\bf forget}$(v)$ by $\D$, and $y$ be the unique child of the node labelled with {\bf forget}$(s)$ by  ${\cal D}$. Then, 
$y$ is an ancestor of $x$, and  $Q$ induces a path in $T$ which is the unique path between $x$ and $y$ in $T$. 
%Then the following conditions are satisfied. 
%\begin{itemize}
%\item[$(i)$] The set of nodes $Q$ induces a path in $T_{{\cal D}}$, 
%\item[$(ii)$] for any $t,t'\in Q$, $t\neq t'$, either $t$ is a descendant of $t'$ or $t'$ is a descendant of $t$,  
%\item[$(iii)$] the node labelled  {\bf forget}$(v)$ is in $Q$ and it is  descendant of all nodes in $Q$, and \todo{maybe properties (i)--(iii) can be replaced by saying it is path from {\bf forget}$(v)$ that goes up the tree only to a child of {\bf forget}$(s)$.}
%\item[$(iv)$] the child of the node labelled  {\bf forget}$(s)$ is in $Q$ and it is an ancestor of all nodes in $Q$. 
%\todo{say in words before the lem what is stated informally; maybe even why we need this}
%\end{itemize} 
\end{lemma}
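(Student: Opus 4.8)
The plan is to work entirely inside the nice tree decomposition ${\cal D}$ of $B$ and to translate membership in $\fake(t)$ into statements about where $v$ and $s$ live in $T$. First I would recall the two folklore observations: by Observation~\ref{obs:onlyoneforget} the node $x$ labelled $\textbf{forget}(v)$ and the node labelled $\textbf{forget}(s)$ are unique, and by Observation~\ref{obs:forgetsubtree} once we pass below a forget node the forgotten vertex is gone for good from the subtree (it either stays in all bags along a branch or disappears entirely). The key reformulation is: for $t\in V(T)$, $v\in\fake(t)$ iff $v\in N_B(s')$ for some special vertex $s'\in S(G)\cap\beta_{\cal D}(t)$ and $v\in\gamma_{\cal D}(t)$ but $v\notin\beta_{\cal D}(t)$. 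Combined with $s\in\beta_{\cal D}(t)$, membership in $Q$ says exactly: $s\in\beta_{\cal D}(t)$, $v\in\gamma_{\cal D}(t)$, and $v\notin\beta_{\cal D}(t)$.

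Now I would pin down the geometry. Since $\{v,s\}\in E(B)$, by property $(b)$ of ${\cal D}$ there is a node containing both $v$ and $s$; by property $(c)$ the $v$-nodes $O_v=\{t: v\in\beta_{\cal D}(t)\}$ and the $s$-nodes $R_s=\{t: s\in\beta_{\cal D}(t)\}$ each induce subtrees, and $x$ is the root of $T[O_v]$ while $y$ is the root of $T[R_s]$ (here $y$ is the child of $\textbf{forget}(s)$, i.e.\ the topmost node still containing $s$). Because $O_v\cap R_s\neq\emptyset$ and both are subtrees with roots $x$ and $y$, these two roots are comparable: one is an ancestor of the other. To see that $y$ is the ancestor of $x$, note that if a node $t$ is in $Q$ then $v\in\gamma_{\cal D}(t)\setminus\beta_{\cal D}(t)$, so $v$ appears strictly below $t$, hence $t$ is a strict ancestor of the root $x$ of $O_v$; taking such a $t$ (which exists as $Q\neq\emptyset$) and noting $t\in R_s$ so $t$ is a descendant of $y$, we get that $x$ lies strictly below $t$ which lies at or below $y$; this already shows $y$ is an ancestor of $x$, and moreover $Q\subseteq R_s$ so every node of $Q$ lies on the path from $y$ down into the subtree below — I would then argue every node of $Q$ lies on the $x$–$y$ path itself.

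For the precise identification of $Q$ with the $x$–$y$ path, I would argue both inclusions. For ``$Q\subseteq$ path'': take $t\in Q$. Then $t\in R_s$, so $t$ is a descendant of $y$; and $v\in\gamma_{\cal D}(t)$ with $v\notin\beta_{\cal D}(t)$ forces $x$ (the root of $O_v$) to be a proper descendant of $t$ — because $O_v$ lies entirely in the subtree below $t$, and since $v$ is already present somewhere below $t$ and also present at $x$, connectivity of $O_v$ forces $x$ to be on the path from $t$ down to those occurrences, i.e.\ $x$ is below $t$. Hence $y\succeq t\succ x$, so $t$ is on the $x$–$y$ path. For ``path $\subseteq Q$'': take $t$ strictly between $x$ and $y$, or $t\in\{y\}$ (and $t=x$ is excluded since $v\in\beta_{\cal D}(x)$ there — wait, I must check the endpoints: at $x$ we have $v\in\beta_{\cal D}(x)$, so $x\notin Q$; the path from $x$ to $y$ as a vertex set includes $x$, so strictly $Q$ is the path minus the endpoint $x$, equivalently the path from the parent of $x$ up to $y$ — I would state this carefully). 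For such $t$: $t$ is an ancestor of $x$, so $\gamma_{\cal D}(t)\supseteq\beta_{\cal D}(x)\ni v$, giving $v\in\gamma_{\cal D}(t)$; $t\neq x$ and $t$ is an ancestor of $x$ so by connectivity of $O_v$ and the fact that $x$ is its root, $t\notin O_v$, i.e.\ $v\notin\beta_{\cal D}(t)$; and $t$ lies between $x$ and $y$ with $x\in R_s$ wait $x$ need not be in $R_s$ — rather $t$ is on the path from $y$ downward and $t$ is an ancestor of $x$; since $O_v\cap R_s\neq\emptyset$, that common node is in the subtree below $y$ and in $O_v$ hence below $x$ — so the common node is a descendant of $x$, hence of $t$; connectivity of $R_s$ (rooted at $y$, containing a node below $t$) then forces $t\in R_s$. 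Thus $v\in\fake(t)$ and $s\in\beta_{\cal D}(t)$, so $t\in Q$.

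**Main obstacle.**
I expect the delicate point to be the careful bookkeeping at the endpoints of the path and the repeated use of ``a subtree with a known root, meeting a given set, forces a node to be below $x$'' — in particular cleanly showing that $x$ is a \emph{proper} descendant of every node of $Q$ (so that the path is genuinely the $x$–$y$ path and not something shorter), and that $R_s$ actually contains the entire segment from $y$ down to $x$'s parent rather than branching off. All of this is pure tree-decomposition connectivity bookkeeping with no real combinatorial depth, so the proof should be short once the reformulation $t\in Q \iff s\in\beta_{\cal D}(t),\ v\in\gamma_{\cal D}(t),\ v\notin\beta_{\cal D}(t)$ is in place; the risk is merely in getting the ancestor/descendant directions and endpoint inclusions exactly right.
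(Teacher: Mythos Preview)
Your reformulation of $Q$ as $\{t: s\in\beta_{\cal D}(t),\ v\in\gamma_{\cal D}(t),\ v\notin\beta_{\cal D}(t)\}$ is correct and is exactly the key observation; the paper's proof runs on the same engine (connectivity of the subtrees $O_v$ and $R_s$, plus this characterisation). Your double-inclusion strategy is a legitimate alternative to the paper's ``connected first, then linearly ordered, then identify endpoints'' structure.

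However, there is a concrete error in your endpoint bookkeeping that, if carried through, gives the wrong set $Q$. You write that ``$x$ is the root of $T[O_v]$'' and later ``at $x$ we have $v\in\beta_{\cal D}(x)$, so $x\notin Q$''. This is the forget convention backwards: at the node $x$ labelled $\textbf{forget}(v)$ we have $v\notin\beta_{\cal D}(x)$ and $v\in\beta_{\cal D}(x')$ for the unique child $x'$ of $x$. So the root of $O_v$ is $x'$, not $x$, and in fact $x$ \emph{does} lie in $Q$ (since $v\in\gamma_{\cal D}(x)$, $v\notin\beta_{\cal D}(x)$, and---as the paper argues---$s\in\beta_{\cal D}(x)$). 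The lemma asserts that $Q$ is the full $x$--$y$ path including both endpoints, and the paper proves precisely that $x$ and $y$ are the endpoints of $T[Q]$. Your conclusion ``$Q$ is the path minus the endpoint $x$'' is therefore wrong.

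Once you correct this, your argument goes through with only cosmetic changes: for $t\in Q$ you get that $t$ is an ancestor of $x'$ with $t\neq x'$ (hence $t$ is $x$ or an ancestor of $x$), and together with $t\in R_s$ (so $t$ a descendant of $y$) this puts $t$ on the $x$--$y$ path. Conversely, for $t$ on that path, $t$ is an ancestor of $x'$, so $v\in\gamma_{\cal D}(t)$ via $\beta_{\cal D}(x')$ rather than via $\beta_{\cal D}(x)$; the rest of your ``path $\subseteq Q$'' direction (using $O_v\cap R_s\neq\emptyset$ below $x'$ to force $t\in R_s$) is fine.
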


\begin{proof}
%Let $T=T_{{\cal D}}=T_{{\cal D}'}$. \todo{if they are always same why use diff letters} 
First, we prove that $Q$ induces a (connected) subtree of $T$.  
Suppose not. Then, there exist two connected components $C_1$ and $C_2$ of $T[Q]$ such that there exists a path $P$  in $T$ from a vertex in $C_1$ to a vertex in $C_2$ whose internal vertices all belong to $V(T)\setminus Q$. By Property $(c)$ of the tree decomposition ${\cal D}$, 
we have that  $s\in \beta_{{\cal D}}(t)$ for any $t\in V(P)$. 
Moreover, there is an internal vertex $w$ of $P$ such that $w$ is an ancestor of one of the end-vertices of $P$. This implies that $v\in \gamma_{{\cal D}}(w)$, because $v$ belong to  the bags of the endpoints of $P$ (by the definition of $Q$ and $\fake$).  As we have also shown that $s\in \beta_{\D}(t)$ for all $t\in V(P)$, this implies that $w\in Q$, which is a contradiction. 
%which is a contradiction to that fact that $w\notin Q$. 
Hence, we have proved that $T[Q]$ is connected. 

Next, we prove that $T[Q]$ is a path such that one of its endpoints is a descendant of the other. Towards this, it is enough to prove that $(i)$ for any distinct $t,t'\in Q$, either $t$ is a descendant of $t'$ or $t'$ is a descendant of $t$. For the sake of contradiction, assume that there exist $t,t'\in Q$ such that neither $t$ is a descendent of $t'$ nor 
$t'$ is a descendent of $t$. By the definition of $Q$ and because $t,t'\in Q$, we have that $v\in \gamma_{{\cal D}}(t)$ and  $v\in \gamma_{{\cal D}}(t')$. Thus by Property $(c)$ of the tree decomposition ${\cal D}$, we have that  
$v\in \beta_{{\cal D}}(t)$ and $v\in \beta_{{\cal D}}(t')$. Because $v\in \fake(t)$ and $v\in \fake(t')$, this is a contradiction to the definition of $\fake$.  

It remains to prove that $y$ is an ancestor of $x$  and that $x$ and $y$ are endpoints of $T[Q]$. Towards this, recall that $x$ is the node in $T$ labelled with {\bf forget}$(v)$ by the  tree decomposition ${\cal D}$. 
First, we prove that $x$ is an end-vertex of the path $T[Q]$. Let $x'$ be the only child of $x$. 
To prove $x$ is an end-vertex of the path $T[Q]$, it is enough to show that $x\in Q$ and $x'\notin Q$. 
Since $x$ is labelled with {\bf forget}$(v)$ by $\D$, we have that $v\notin \beta_{{\cal D}}(x)$, $v\in \beta_{{\cal D}}(x')$,  and $v\in \gamma_{{\cal D}}(x)$. This implies that $v\in \org(x')$ and hence $x'\notin Q$. Now, we prove that $x\in Q$. For this purpose, let $R=\{t \in V(T) \colon s\in \beta_{{\cal D}}(t)\}$. Clearly, $Q\subseteq R$. By Property $(c)$ of the tree decomposition ${\cal D}$, we have that $T[R]$ is connected. We have already proved that $T[Q]$ is a path and since $Q\subseteq R$, $T[Q]$ is a path in $T[R]$. 
Since $x$ is labelled with {\bf forget}$(v)$ by ${\cal D}$, for any node $x''$ in the subtree rooted at $x$ and $x''\neq x$, by Observation~\ref{obs:forgetsubtree}, either $v\in \beta_{\D}(x'')$ or $v\notin \gamma_{\D}(x'')$. 
%Now we prove condition $(iii)$ and $(iv)$ of the lemma. Let $t$ be the node in $T$ labelled with {\bf forget}$(v)$ in the tree decomposition ${\cal D}$.  Since $t$ is labelled with {\bf forget}$(v)$, 
%By Observation~\ref{obs:forgetsubtree}, 
This implies that $Q$ contains no node in the subtree of $T$ rooted at $x$ and not equal to $x$. 
%This implies that $v\in \org(x')$, where $x'$ is the only child of $x$.  
%Next we show that $s\in \beta_{{\cal D}}(x)$. 
Moreover, observe that  there exists a node $x^\star$ in the subtree of $T$ rooted at $x$ such that $\{s,v\}\subseteq \beta_{{\cal D}}(x^\star)$ and hence $x^{\star}\in R$.
% and $x^\star$ is in the subtree $T_{t'}$. 
Now, since $Q$ is non-empty and $T[Q]$ is connected, we have that $s\in \beta_{{\cal D}}(x)$. Since $v\notin \beta_{{\cal D}}(x)$, $v\in \gamma_{{\cal D}}(x)$ and $s\in \beta_{{\cal D}}(x)$, we conclude that $x\in Q$. 
Thus, we have proved that $x$ is an end-vertex of the the path $T[Q]$.  

Next we prove that $y$ is the other end-vertex of the path $T[Q]$ and $y$ is an ancestor of $x$. Since $y$ is the only child of the node $y'$ labelled with {\bf forget}$(s)$, we have that $s\in \beta_{\cal D}(y)$ and $s\notin \beta_{\cal D}(y')$. 
This implies that $y'\notin Q$. Thus to prove that $y$ is an end-vertex of  the path $T[Q]$, it is enough to prove that $y\in Q$. Since $s\in \beta_{\cal D}(x)$, $s\in \beta_{\cal D}(y)$, $s\notin \beta_{\cal D}(y')$ and $y'$ is the parent of $y$, by Property $(c)$ of ${\cal D}$, 
%Since $s\in \beta_{\cal D}(x)$, $y'$ is labelled with {\bf forget}$(s)$ and $y$ is the only child of $y'$, 
we have that $y$ is an ancestor of $x$. This also implies that $v\in \gamma_{{\cal D}}(y)$ and $v\notin \beta_{{\cal D}}(y)$. 
Hence, $y\in Q$. This completes the proof of the lemma. 
%
%Hence, 
%we have that $v\in \fake(t)$. Thus, we conclude that 
%$t'\notin  Q$ and $t\in Q$. Now, by condition $(ii)$, we conclude condition $(iii)$.
%\todo[inline]{I think the condition $(iv)$ should be that forget(s) is not in Q.} \todo{in fact it is prove $q\notin Q$}
%Let $q$ be the node in $T$ labelled with {\bf forget}$(s)$ in the tree decomposition ${\cal D}$. 
%Since $Q\neq \emptyset$ 
%we have that $v\in \fake(q')$, where $q'$ is the only child of $q$. This implies that $q'\in Q$ and $q\notin Q$.   
%Thus, by condition $(ii)$, we conclude condition $(iv)$. This completes the proof of the lemma. 
\end{proof}

In the next lemma we show that for any special vertex $s\in S(G)$ and any node $t$  in $T$ labelled with {\bf introduce}$(s)$ by ${\cal D}$, it holds that $t$ and its child carry the ``same information''.

\begin{lemma}
\label{lem:twstruc2}
%Given a map graph $G$ with a corresponding planar bipartite graph $B$ and an integer $\ell\in\mathbb{N}$. 
%Let ${\cal D}$ be a nice tree decomposition of $B$ of width $\ell$. 
%Let ${\cal D}'$ be the tree decomposition, derived from ${\cal D}$ as mentioned in Definition~\ref{def:specialTreeDecomp}. 
Let $s\in S(G)$ and $t$ be a node in $T$ labelled with {\bf introduce}$(s)$ by ${\cal D}$. Let $t'$ be the only child of $t$. 
Then, $\org(t)=\org(t')$ and $\fake(t)=\fake(t')$. 
\end{lemma}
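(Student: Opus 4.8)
The plan is to reduce both equalities to statements about $\beta_{\mathcal D}$ and $\gamma_{\mathcal D}$ alone, using that every $\beta_{\mathcal D'}$-bag is contained in $V(G)$. First I would observe that for every node $x$ of $T$ we have $\beta_{\mathcal D'}(x)\subseteq V(G)$, since in Definition~\ref{def:specialTreeDecomp}.2 each set $N_B(s')$ is the neighbourhood in $B$ of a special vertex, hence a subset of $W=V(G)$. Combined with $\beta_{\mathcal D}(x)\cap V(G)\subseteq \beta_{\mathcal D'}(x)$, this gives $\org(x)=\beta_{\mathcal D}(x)\cap\beta_{\mathcal D'}(x)=\beta_{\mathcal D}(x)\cap V(G)$ for every $x$. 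Since $t$ is an \textbf{introduce}$(s)$ node with $s\in S(G)$, we have $\beta_{\mathcal D}(t)=\beta_{\mathcal D}(t')\cup\{s\}$, so $\beta_{\mathcal D}(t)\cap V(G)=\beta_{\mathcal D}(t')\cap V(G)$, and therefore $\org(t)=\org(t')$.

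For the $\fake$ equality I would work from the characterisation built into Definition~\ref{def:specialTreeDecomp}: a vertex $v$ lies in $\fake(x)$ iff $v\in V(G)$, $v\notin\beta_{\mathcal D}(x)$, $v\in\gamma_{\mathcal D}(x)$, and $v\in N_B(s')$ for some $s'\in S(G)\cap\beta_{\mathcal D}(x)$. Two structural facts about $t$ feed in: (1) the subtree rooted at $t$ is $t$ together with the subtree rooted at its unique child $t'$, so $\gamma_{\mathcal D}(t)=\gamma_{\mathcal D}(t')\cup\{s\}$; and (2) by Property $(c)$ of $\mathcal D$ together with $s\in\beta_{\mathcal D}(t)$, $s\notin\beta_{\mathcal D}(t')$, no proper descendant of $t$ has $s$ in its bag. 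The inclusion $\fake(t')\subseteq\fake(t)$ is then immediate: a witness $s'\in S(G)\cap\beta_{\mathcal D}(t')$ for $v\in\fake(t')$ is also a witness for $v\in\fake(t)$, using $\beta_{\mathcal D}(t')\subseteq\beta_{\mathcal D}(t)$, $\gamma_{\mathcal D}(t')\subseteq\gamma_{\mathcal D}(t)$, and $v\neq s$ (which gives $v\notin\beta_{\mathcal D}(t)$ from $v\notin\beta_{\mathcal D}(t')$).

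The substance is the reverse inclusion $\fake(t)\subseteq\fake(t')$. Given $v\in\fake(t)$ with a witness $s'\in S(G)\cap\beta_{\mathcal D}(t)$, the only obstruction is the possibility $s'=s$, which I would rule out. If $s'=s$, then $\{v,s\}\in E(B)$, so by Property $(b)$ of $\mathcal D$ some node $z$ has $\{v,s\}\subseteq\beta_{\mathcal D}(z)$; since $v\notin\beta_{\mathcal D}(t)$ we get $z\neq t$, and by fact (2) above $z$ is not a proper descendant of $t$ either, so $z$ lies outside the subtree rooted at $t'$. On the other hand, $v\in\gamma_{\mathcal D}(t)\setminus\beta_{\mathcal D}(t)$ together with $v\neq s$ forces (via fact (1)) $v\in\gamma_{\mathcal D}(t')$ while $v\notin\beta_{\mathcal D}(t')$, so $v$ lies in the bag of some proper descendant $d$ of $t'$. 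The $d$--$z$ path in $T$ then passes through $t'$ as an internal vertex, so Property $(c)$ of $\mathcal D$ puts $v\in\beta_{\mathcal D}(t')$, a contradiction. Hence $s'\neq s$, so $s'\in S(G)\cap\beta_{\mathcal D}(t')$ is a witness for $v\in\fake(t')$ (again using $v\notin\beta_{\mathcal D}(t')$ and $v\in\gamma_{\mathcal D}(t')$). I expect this contradiction argument — locating $z$ and $d$ relative to $t,t'$ and invoking connectivity of the $\mathcal D$-bags containing $v$ — to be the only delicate point; everything else is bookkeeping with the definitions of $\org$, $\fake$ and an \textbf{introduce} node.
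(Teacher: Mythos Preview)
Your proof is correct and follows essentially the same approach as the paper's: both arguments reduce the $\fake$ equality to showing that $s$ cannot serve as a witness for any $v\in\fake(t)$ (equivalently, $\fake(t)\cap N_B(s)=\emptyset$), and both derive a contradiction by locating a node containing $\{v,s\}$ outside the subtree rooted at $t'$, a descendant containing $v$ inside it, and invoking Property~$(c)$ along the connecting path. Your write-up is more explicit about the characterisation of $\fake$ and about which inclusion is routine versus substantive, but the underlying idea is the same.
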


\begin{proof}
We know that $\beta_{{\cal D}}(t)\setminus \{s\}=\beta_{{\cal D}}(t')$.  
This implies that $\org(t)=\org(t')$.  To prove that $\fake(t)=\fake(t')$,  
it is enough to show that $\fake(t)\cap N_B(s)=\emptyset$ (because any special vertex $s'$ in $\beta_{\D}(t')$ and not equal to $s$, is also belongs to $\beta_{\D}(t)$). 
% and so we only need to make sure that the ``newly'' introduced special vertex $s$ in $\beta_{\D}(t)$ do not bring a ``fake'' vertex). 
Suppose by way of contradiction that  $\fake(t)\cap N_B(s)\neq \emptyset$ and let $u\in \fake(t)\cap N_B(s)$. Then,  there is a descendent $t_1$ of $t$ such that 
$t_1\neq t$ and $u\in \beta_{{\cal D}}(t_1)$. Thus, since $\{u,s\}\in E(B)$ and by Properties $(b)$ and $(c)$ of the tree 
decomposition ${\cal D}$, we get that $\{u,s\}\subseteq \beta_{{\cal D}}(t)$. This  implies that $u\in \org(t)$, which is a contradiction to the 
assumption that   $u\in \fake(t)\cap N_B(s)$. 
\end{proof}

%--------------------------------

Next, we see a property of  nodes $t\in V(T)$ labelled with {\bf join}. 

\begin{lemma}
\label{lem:twstruc3}
%Given a map graph $G$ with a corresponding planar bipartite graph $B$ and an integer $\ell\in\mathbb{N}$. 
%Let ${\cal D}$ be a nice tree decomposition of $B$ of width $\ell$. 
%Let ${\cal D}'$ be the tree decomposition, derived from ${\cal D}$ as mentioned in Definition~\ref{def:specialTreeDecomp}. 
Let $t$ be a node in $T$ labelled with {\bf join} by {\cal D}, and $t_1$ and $t_2$ are its children.  
Then,  $\org(t)=\org(t_1)=\org(t_2)$, 
$\cliques(t)=\cliques(t_1)=\cliques(t_2)$, 
$\fake(t_1)\cap \fake(t_2)=\emptyset$,  and $\fake(t)=\fake(t_1)\cup\fake(t_2)$.
\end{lemma}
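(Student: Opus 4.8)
The plan is to prove each of the four equalities in Lemma~\ref{lem:twstruc3} by unwinding the definitions of \org, \cliques, and \fake together with the fact that \D\ is a nice tree decomposition with $\beta_{\D}(t)=\beta_{\D}(t_1)=\beta_{\D}(t_2)$ (the \textbf{join} condition). The first two claims are essentially immediate: since the three bags of \D\ are identical, $\beta_{\D}(t)\cap V(G)=\beta_{\D}(t_i)\cap V(G)$ and $\beta_{\D}(t)\cap S(G)=\beta_{\D}(t_i)\cap S(G)$ for $i\in\{1,2\}$; hence $\cliques(t)=\cliques(t_1)=\cliques(t_2)$ directly from the definition of \cliques, and $\org$ follows once we observe that a vertex of $V(G)$ lying in $\beta_{\D}(t)$ automatically lies in $\beta_{\D'}(t)$ (because $\beta_{\D'}(t)\supseteq \beta_{\D}(t)\cap V(G)$), so $\org(t)=\beta_{\D}(t)\cap V(G)$, and likewise for $t_1,t_2$.

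The substantive part is the analysis of \fake. First I would establish $\fake(t)=\fake(t_1)\cup\fake(t_2)$. Fix a vertex $u\in V(G)$; by Definition~\ref{def:specialTreeDecomp} (in the ``in words'' form), $u\in\fake(t)$ iff $u\notin\beta_{\D}(t)\cap V(G)$ but there is a special vertex $s\in S(G)\cap\beta_{\D}(t)$ with $u\in N_B(s)$ and $u\in\gamma_{\D}(t)$; similarly for $t_1$ and $t_2$, with $\gamma_{\D}(t)$ replaced by $\gamma_{\D}(t_1)$ or $\gamma_{\D}(t_2)$. The only moving part is the $\gamma$ condition: since $t_1,t_2$ are the children of the join node $t$ and their subtrees partition the subtree rooted at $t$ minus $t$ itself, we have $\gamma_{\D}(t)=\gamma_{\D}(t_1)\cup\gamma_{\D}(t_2)\cup\beta_{\D}(t)$, and since $u\notin\beta_{\D}(t)$ in the relevant case, $u\in\gamma_{\D}(t)$ iff $u\in\gamma_{\D}(t_1)\cup\gamma_{\D}(t_2)$. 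Combining this with the fact that the ``$u\notin\beta_{\D}(\cdot)\cap V(G)$'' and ``$\exists s$'' clauses are identical for all three nodes (as their \D-bags coincide) yields $\fake(t)=\fake(t_1)\cup\fake(t_2)$ directly.

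For the disjointness $\fake(t_1)\cap\fake(t_2)=\emptyset$: suppose $u\in\fake(t_1)\cap\fake(t_2)$. Then $u\in\gamma_{\D}(t_1)$ and $u\in\gamma_{\D}(t_2)$, so there are nodes $x_1$ in the subtree of $t_1$ and $x_2$ in the subtree of $t_2$ with $u\in\beta_{\D}(x_1)$ and $u\in\beta_{\D}(x_2)$. The unique path in $T$ between $x_1$ and $x_2$ passes through the join node $t$, so by Property~$(c)$ of \D\ (connectivity of the bags containing $u$) we get $u\in\beta_{\D}(t)$. But $u\in\fake(t_1)$ forces $u\notin\beta_{\D}(t_1)=\beta_{\D}(t)$, a contradiction. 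This closes that case and, with the three earlier equalities, completes the proof. The main obstacle, such as it is, is purely bookkeeping: making sure the $\gamma$-decomposition at a join node is stated correctly ($\gamma_{\D}(t)$ is the union of $\gamma_{\D}(t_1)$, $\gamma_{\D}(t_2)$, and $\beta_{\D}(t)$, and the last term is irrelevant for fake vertices) and invoking Property~$(c)$ at exactly the right point; there is no genuine combinatorial difficulty here, unlike in Lemma~\ref{lem:twstruc1}.
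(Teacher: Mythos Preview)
Your proof is correct and follows essentially the same approach as the paper: you use the join condition $\beta_{\D}(t)=\beta_{\D}(t_1)=\beta_{\D}(t_2)$ to get the first three equalities, and you invoke Property~$(c)$ of the tree decomposition~$\D$ to derive the disjointness of $\fake(t_1)$ and $\fake(t_2)$. The paper's version is terser (it asserts $\fake(t)=\fake(t_1)\cup\fake(t_2)$ as an immediate consequence and phrases the disjointness via ``$v\in\fake(t_i)\Rightarrow v\notin\gamma_{\D}(t_j)$'' rather than your contrapositive), but the substance is identical.
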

\begin{proof}
Since $t$ is a node in $T$ labelled with {\bf join} by ${\cal D}$ and $t_1$ and $t_2$ are its children, we have that $\beta_{{\cal D}}(t)=\beta_{{\cal D}}(t_1)=\beta_{{\cal D}}(t_2)$. 
This implies that $\org(t)=\org(t_1)=\org(t_2)$, 
$\cliques(t)=\cliques(t_1)=\cliques(t_2)$ and $\fake(t)=\fake(t_1)\cup\fake(t_2)$. For any $v\in \fake(t_i), i\in \{1,2\}$, we know that 
$v\notin \beta_{{\cal D}}(t_i)$, but there is a descendent $t'_i$ of $t_i$ such that $v\in \beta_{{\cal D}}(t_i')$. 
Thus, by Property $(c)$ of the tree decomposition ${\cal D}$, 
we have that $v\notin \gamma_{{\cal D}}(t_j)$ where $j\in \{1,2\}\setminus \{i\}$. 
This implies that $\fake(t_1)\cap \fake(t_2)=\emptyset$.  
\end{proof}

%Now we would like to define {\em \NSTDlong{$\ell$}} of map graph 
%$G$, which is derived from a nice tree decomposition ${\cal D}$ of $B$ of width at most $\ell$, where 
%$B$ is a corresponding planar bipartite graph of the map graph $G$. 
%That is, we first   derive a tree decomposition ${\cal D}'$ from ${\cal D}$  as mentioned in Definition~\ref{def:specialTreeDecomp}. 

    \begin{figure}[t]
\centering
       \begin{tikzpicture}
              \draw (0,1.65)--(0,1.95);
       \draw (0,2.3) ellipse (0.7cm and 0.35cm);
       \draw (0,1.3) ellipse (1.5cm and 0.35cm);
       \draw[fill=red!30] (-2.5,-1) ellipse (0.7cm and 0.35cm);
       \draw[fill=red!30] (-0.85,-1) ellipse (0.7cm and 0.35cm);
       \draw[fill=red!30] (0.8,-1) ellipse (0.7cm and 0.35cm);
       \draw[fill=red!30] (2.5,-1) ellipse (0.7cm and 0.35cm);
       
              \draw[fill=yellow!30] (-1.75,0) ellipse (0.7cm and 0.35cm);
       \draw[fill=yellow!30] (1.5,0) ellipse (0.7cm and 0.35cm);
                \node[]  at (0,2.3) (a) {$\emptyset$};
         \node[]  at (-2.5,-1) (a) {$v_1$};
         \node[]  at (0.8,-1) (a) {$v_3$};
         \node[]  at (2.5,-1) (a) {$v_4$};
         \node[]  at (-.85,-1) (a) {$v_2$};
         \node[]  at (-1,1.25) (a) {$v_1$};
         \node[]  at (0.3,1.25) (a) {$v_3$};
         \node[]  at (1,1.25) (a) {$v_4$};
         \node[]  at (-0.3,1.25) (a) {$v_2$};
                 \draw (-2.5,-0.65)--(-2.25,-0.2);
        \draw (2.5,-0.65)--(2,-0.2);
        \draw (-1,-0.65)--(-1.25,-0.2);
        \draw (0.75,-0.65)--(1.1,-0.25);       
        \draw (-1.75,0.34)--(-0.7,1);
        \draw (1.75,0.34)--(0.7,1);                
                 \node[]  at (-2,0) (a) {$v_1$};
         \node[]  at (1.2,0) (a) {$v_3$};
                \node[]  at (-1.5,0) (a) {$v_2$};
         \node[]  at (1.8,0) (a) {$v_4$};

        \draw (-2.5,-1.35) -- (-2.5,-2.65);
        \draw (-0.85,-1.35) -- (-0.85,-2.65);
        \draw (0.8,-1.35) -- (0.8,-2.65);
        \draw (2.5,-1.35) -- (2.5,-2.65);

       \draw[fill=green!30] (-2.5,-2) ellipse (0.7cm and 0.35cm);
       \draw[fill=green!30] (-0.85,-2) ellipse (0.7cm and 0.35cm);
       \draw[fill=green!30] (0.8,-2) ellipse (0.7cm and 0.35cm);
       \draw[fill=green!30] (2.5,-2) ellipse (0.7cm and 0.35cm);
  
       \draw[fill=blue!20] (-2.5,-3) ellipse (0.7cm and 0.35cm);
       \draw[fill=blue!20] (-0.85,-3) ellipse (0.7cm and 0.35cm);
       \draw[fill=blue!20] (0.8,-3) ellipse (0.7cm and 0.35cm);
       \draw[fill=blue!20] (2.5,-3) ellipse (0.7cm and 0.35cm);
               \node[]  at (-2.5,-2) (a) {$v_1$};
         \node[]  at (0.8,-2) (a) {$v_3$};
         \node[]  at (2.5,-2) (a) {$v_4$};
         \node[]  at (-.85,-2) (a) {$v_2$};
         \node[]  at (-2.5,-3) (a) {$v_1$};
         \node[]  at (0.8,-3) (a) {$v_3$};
         \node[]  at (2.5,-3) (a) {$v_4$};
         \node[]  at (-.8,-3) (a) {$v_2$};
%        \draw (-2.5,-1.35) -- (-2.5,-1.65);
%        \draw (-0.85,-1.35) -- (-0.85,-1.65);
%        \draw (0.8,-1.35) -- (0.8,-1.65);
%        \draw (2.5,-1.35) -- (2.5,-1.65);
         \node[]  at (5,-1) (a) {{\bf fake introduce}$(v_i)$};
         \node[]  at (4.4,-3) (a) {{\bf introduce}$(v_i)$};
         \node[]  at (2.7,0) (a) {{\bf join}};
         \node[]  at (4.4,-2) (a) {{\bf redundant}};
         \node[]  at (2.1,1.3) (a) {{\bf join}};
         \node[]  at (2.5,2.3) (a) {{\bf forget}$(\{v_1,v_2,v_3,v_4\})$};
       \end{tikzpicture}
        \caption{Labeling the nodes in the \NSTDlong{$2$} ${\cal D}'$ derived from the nice tree decomposition ${\cal D}$ in Figure~\ref{fig:TDderivation}.
   %     A {\em \NSTDlong{$2$}} ${\cal D}''$ of $G$ obtained from ${\cal D}'$ in Figure~\ref{fig:nicetdG} after contracting vertices labelled with {\bf introduce}$(r)$ by ${\cal D}$, where $\{r\}=S(G)$.
        }
        \label{fig:nicetdG}
        \end{figure}

Now, we define a notion of {\em \NSTDlong{$\ell$}} of  
$G$ as the tree decomposition of $G$ derived from a nice tree decomposition ${\cal D}$ of $B$ of width less than $\ell$ (see Definition~\ref{def:specialTreeDecomp}) with additional labeling of nodes. 
%That is,  the tree decomposition ${\cal D'}$ is \NSTDlong{$\ell$} of $G$, but its definition is complete only after we label its nodes as in the case of a nice tree decomposition. 
%Recall that ${\cal D}'$ is constructed from ${\cal D}$ using Definition~\ref{def:specialTreeDecomp}.
In what follows, we describe this additional labeling of nodes.  Towards this, observe that 
because of Lemma~\ref{lem:twstruc2},  for any special vertex $s\in S(G)$ and any node $t\in V(T)$ labelled with {\bf introduce}($s$) by the nice tree decomposition ${\cal D}$, the bags $\beta_{D'}(t)$ and $\beta_{D'}(t')$ carry the ``same information'' where $t'$ is the only child of $t$. Informally, one may choose to handle these nodes by contracting them.  However, to avoid redundant proofs ahead, instead of getting rid of such nodes, we label  them with {\bf redundant} in ${\cal D'}$. 
Next, we explain how to label  other nodes of $T$ in the decomposition ${{\cal D}'}$ (see Figure~\ref{fig:nicetdG}).  To this end, let $t\in V(T)$. 
\begin{itemize}
\item If $t$ is labelled with {\bf leaf} by ${{\cal D}}$, then we label $t$ with {\bf leaf}. Here, $\beta_{{\cal D}'}(t)=\emptyset$. 
\item If $t$ is labelled with {\bf introduce}$(v)$ by ${{\cal D}}$ for some $v\in V(G)$, then we label $t$ with  {\bf introduce}$(v)$. 
In this case, $t$ has only one child $t'$ in $T$ (because any node labelled with {\bf introduce} by ${\cal D}$ has only one child) and $\beta_{{\cal D}'}(t)\setminus \{v\}=\beta_{{\cal D}'}(t')$.  
\item If $t$ is labelled with {\bf forget}$(v)$ by ${{\cal D}}$ for some $v\in V(G)$ and $v\in \fake(t)$, then we label 
$t$ with {\introducefakevertex}$(v)$. 
In this case, $t$ has only one child $t'$  and $\beta_{{\cal D}'}(t)=\beta_{{\cal D}'}(t')$, but $\org(t)=\org(t')\setminus \{v\}$ and $\fake(t)=\fake(t')\cup\{v\}$. 
\item If $t$ is labelled with {\bf forget}$(v)$ by ${{\cal D}}$ for some $v\in V(G)$ and $v\notin \fake(t)$, then we label $t$ with {\bf forget}$(v)$. 
In this case, $t$ has only one child $t'$, $\beta_{{\cal D}'}(t)=\beta_{{\cal D}'}(t')\setminus \{v\}$, $\org(t)=\org(t')\setminus \{v\}$ and $\fake(t)=\fake(t')$. 
\item Suppose $t$ is labelled with {\bf forget}$(s)$ by ${{\cal D}}$ for some $s\in S(G)$. Then, $t$ has only one child $t'$. Here, we label $t$ with {\bf forget}$(\beta_{{\cal D}'}(t')\setminus \beta_{{\cal D}'}(t))$.  In this case, $\fake(t)\subseteq \fake(t')$ and $\org(t)=\org(t')$. 
\item If $t$ is labelled with {\bf join} by ${{\cal D}}$, then we label $t$ with {\bf join}. Let $t_1$ and $t_2$ be the children of $t$. 
Then, $\org(t)=\org(t_1)=\org(t_2)$, $\cliques(t)=\cliques(t_1)=\cliques(t_2)$, 
$\fake(t_1)\cap \fake(t_2)=\emptyset$,  and $\fake(t)=\fake(t_1)\cup\fake(t_2)$. (See Lemma~\ref{lem:twstruc3}).
\item If $t$ is labelled with {\bf introduce}$(s)$ for some $s\in S(G)$, then we label $t$ with 
{\bf redundant} in ${\cal D}'$  
\end{itemize}
%This completes the construction of a \NSTDlong{$\ell$} ${\cal D}''$ of $G$ and we say that \NSTDlong{$\ell$} ${\cal D}''$ 
%is {\em derived from} the nice tree decomposition ${\cal D}$.
%Notice that, for each node $t$ in $T_{{\cal D}''}$,  $\vert \org(t)\vert=i$ and $\vert \cliques(t)\vert=j$, where $i+j\leq\ell$. 
%That is for any node $t\in V(T_{{\cal D}''})$, there exist $i,j\in {\mathbb N}$ such that $i+j\leq \ell$, 
%cardinality of $\org(t)$ is at most $i$ and  $\beta_{{\cal D}''}(t)\setminus \org(t)$ contains vertices from at most 
%$j$ special cliques.  
%See Figure~\ref{fig:nicetdG} for illustration of labelling in ${\cal D}'$. 
This completes the definition of the \NSTDlong{$\ell$} of $G$ derived from $\D$, to which we simply call an  \NSTD{($\ell,{\cal D}$)}.  %${\cal D}'$ is {\em derived from} the nice tree decomposition ${\cal D}$.
Notice that for each node $t$ in $T$,  $\vert \org(t)\vert+\vert \cliques(t)\vert \leq\ell$. That is, for any node $t\in V(T)$, there exist $i,j\in {\mathbb N}$ such that $i+j\leq \ell$, the cardinality of $\org(t)$ is at most $i$, and  the vertices in $\beta_{{\cal D}'}(t)\setminus \org(t)$ were obtained from at most $j$ special cliques.

%\begin{lemma}
%Let $t$ be a node in $T_{{\cal D}''}$ labelled with {\bf introduce}$(v)$ for some $v\in V(G)$ by ${\cal D}''$. 
%Then, $\beta_{{\cal D}''}(t)\setminus \{v\}=\beta_{{\cal D}''}(t')$, where $t'$ is the only child of $t$. 
%\end{lemma}
%
%\begin{proof}
%\todo[inline]{add a proof}
%\end{proof}
%
%\begin{lemma}
%Let $t$ be a node in $T_{{\cal D}''}$ labelled with {\bf introduce fake vertex}$(v)$ for some $v\in V(G)$ by ${\cal D}''$ and let 
%$t'$ be the only child of $t$ in $T_{{\cal D}''}$. Then $\beta_{{\cal D}''}(t)=\beta_{{\cal D}''}(t')$,  $\org(t)=\org(t')\setminus \{v\}$ and $\fake(t)=\fake(t')\cup\{v\}$. 
%\end{lemma}
%
%\begin{proof}
%\todo[inline]{add a proof}
%\end{proof}
%
%
%\begin{lemma}
%Let $t$ be a node in $T_{{\cal D}''}$ labelled with {\bf forget}$(v)$ for some $v\in V(G)$ by ${\cal D}''$ and let $t'$ be the only child of $t$. Then, $\beta_{{\cal D}''}(t)=\beta_{{\cal D}''}(t')\setminus \{v\}$, $\org(t)=\org(t')\setminus \{v\}$ and $\fake(t)=\fake(t')$. 
%\end{lemma}
%
%\begin{proof}
%\todo[inline]{add a proof}
%\end{proof}

%\begin{lemma}
%Let $t$ be a node in $T_{{\cal D}''}$ labelled with {\bf forget}$(S)$ for some $S\subseteq V(G)$ by ${\cal D}''$ and let $t'$ be the only child of $t$ in $T_{{\cal D}''}$. Then, $\fake(t)\subseteq \fake(t')$ and $\org(t)=\org(t')$
%\end{lemma}
%
%\begin{proof}
%\todo[inline]{add a proof}
%\end{proof}

Since the number of nodes with label {\bf forget}$(v)$ in the tree 
decomposition ${\cal D}$ is exactly one for any $v\in V(B)$ (see Observation~\ref{obs:onlyoneforget}), {\em at most one} node in $T$ is labelled with  {\introducefakevertex}$(v)$ in ${{\cal D}'}$. This is formally stated in the following observation.

\begin{observation}
\label{obs:fakeprop}
Let ${\cal D}'=(T,\beta_{{\cal D}'})$ be an \NSTD{$(\ell, {\cal D})$} of a map graph $G$, for some $\ell\in {\mathbb N}$, 
derived from a nice tree decomposition ${\cal D}$ of a corresponding planar bipartite graph of $G$.  
 Let $t\in V(T)$ and $v\in \fake(t)$. Then, 
\begin{itemize}
\item[$(i)$] there is a unique node $t'\in V(T)$ such that $t'$ is labelled with  {\introducefakevertex}$(v)$ in ${{\cal D}'}$,
\item[$(ii)$] $t$ is an ancestor of $t'$ or $t=t'$, and 
\item[$(iii)$] for any node $t''$ in the unique path between $t$ and $t'$, we have that $v\in \fake(t'')$. 
\end{itemize}
\end{observation}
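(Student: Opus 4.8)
The plan is to prove the three parts of Observation~\ref{obs:fakeprop} by tracking how membership in $\fake$ propagates along the tree $T$, using the structural lemmas already established, in particular Lemma~\ref{lem:twstruc1} and the case analysis of node labels in the \NSTD{$(\ell,\D)$}. First, recall from Observation~\ref{obs:onlyoneforget} that for the given vertex $v\in V(G)$, there is exactly one node $x\in V(T)$ labelled with $\mathbf{forget}(v)$ by $\D$, and recall from the definition of the labeling of $\D'$ that a node gets the label \introducefakevertex$(v)$ precisely when it is labelled $\mathbf{forget}(v)$ by $\D$ \emph{and} $v\in\fake$ of that node. Hence part $(i)$ is almost immediate: if \emph{any} node is labelled \introducefakevertex$(v)$ in $\D'$, it must be this unique $x$; we only need to argue existence, i.e.\ that when $v\in\fake(t)$ for some $t$, indeed $v\in\fake(x)$, so that $x$ does receive the label \introducefakevertex$(v)$. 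This will follow from the path structure we establish for parts $(ii)$ and $(iii)$.

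The core of the argument is to understand the set $F_v=\{t\in V(T)\colon v\in\fake(t)\}$. I would first show that $F_v$ induces a connected subtree of $T$ whose unique "lowest" node (closest to a leaf, i.e.\ a descendant of all others in $F_v$) is exactly the node $x$ labelled $\mathbf{forget}(v)$ by $\D$, and moreover that $F_v$ is a path directed towards the root: for any $t,t'\in F_v$, one is an ancestor of the other. The connectivity and path-shape claims are proved exactly as the corresponding statements in the proof of Lemma~\ref{lem:twstruc1}: if $F_v$ were disconnected, a connecting path in $T$ would have an internal node $w$ that is an ancestor of one endpoint, forcing $v\in\gamma_\D(w)$ and hence (since $v\notin\fake(w)$ by assumption) $v\in\beta_\D(w)$, i.e.\ $v\in\org(w)$; but then, since $w$ is an ancestor of a node of $F_v$ whose $\beta_\D$-bag can be made to contain $v$ only below the $\mathbf{forget}(v)$-node $x$, one derives a contradiction with Property~$(c)$ of $\D$ and the uniqueness of $x$. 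The "ancestor-comparability" claim: if $t,t'\in F_v$ were incomparable, then $v\in\gamma_\D(t)$ and $v\in\gamma_\D(t')$ force $v\in\beta_\D(t)\cap\beta_\D(t')$ by Property~$(c)$, contradicting $v\in\fake(t)$.

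Next I would identify the endpoints of this path $T[F_v]$. For the lower endpoint: the child $x'$ of $x$ has $v\in\beta_\D(x')$, so $v\in\org(x')$ and $x'\notin F_v$; and $x$ itself has $v\notin\beta_\D(x)$ but $v\in\gamma_\D(x)$, so whenever $x$ lies in $F_v$ at all it is the lower endpoint; combined with connectivity this shows every node of $F_v$ is an ancestor of $x$ or equal to $x$, giving part $(ii)$ once we know $x$ is the unique \introducefakevertex$(v)$-node. To close the existence gap in part $(i)$, observe that $F_v\neq\emptyset$ by hypothesis; take the minimal node $t_0\in F_v$ (exists as $T[F_v]$ is a path with a lowest node); its child $t_0'$ satisfies $v\notin\fake(t_0')$, and going through the label cases of $\D'$ at $t_0$ (it cannot be $\mathbf{introduce}(v)$, $\mathbf{forget}(v)$ without the fake condition, $\mathbf{join}$, \introducefakevertex$(u)$ for $u\neq v$, or $\mathbf{redundant}$, each of which would either keep $v\in\fake$ of the child or put $v\in\org$), one is forced into the case $t_0$ is labelled $\mathbf{forget}(v)$ by $\D$ with $v\in\fake(t_0)$, i.e.\ $t_0=x$ and $x$ is labelled \introducefakevertex$(v)$. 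Finally, part $(iii)$ is just the connectivity of $T[F_v]$: the unique path in $T$ from $t$ to $t'=x$ lies inside the connected set $F_v$, so every node on it has $v\in\fake$. The main obstacle is the careful case analysis over node labels needed to pin down the lower endpoint of $F_v$ and to rule out all labels other than $\mathbf{forget}(v)$-with-fake at the minimal node; once the path structure of $F_v$ is in hand (mirroring Lemma~\ref{lem:twstruc1}), the rest is bookkeeping.
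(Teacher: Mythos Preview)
Your overall plan matches the paper's, which simply cites Observation~\ref{obs:onlyoneforget} and Lemma~\ref{lem:twstruc1}. However, your direct connectivity argument for $F_v=\{t:v\in\fake(t)\}$ has a gap. You claim that if $w$ lies on a connecting path between two components of $F_v$, then from $v\in\gamma_\D(w)$ and $v\notin\fake(w)$ one gets $v\in\beta_\D(w)$. This inference is invalid: it is possible that $v\in\gamma_\D(w)\setminus\beta_\D(w)$ while \emph{no} special vertex $s\in N_B(v)$ lies in $\beta_\D(w)$; in that case $v\notin\beta_{\D'}(w)$ altogether, so $v\notin\fake(w)$ and $v\notin\org(w)$ simultaneously. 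The proof of Lemma~\ref{lem:twstruc1} that you are imitating avoids this because it works with a \emph{fixed} special vertex $s$ and applies Property~$(c)$ of $\D$ to $s$; once you pass to the union $F_v$ over all $s$, different endpoints of your connecting path may be witnessed by different special vertices, and the argument no longer goes through.

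The clean fix is to use Lemma~\ref{lem:twstruc1} as a black box rather than re-deriving it. Write $F_v=\bigcup_{s\in N_B(v)} Q_s$ with $Q_s=\{t:v\in\fake(t)\text{ and }s\in\beta_\D(t)\}$. By Lemma~\ref{lem:twstruc1}, every nonempty $Q_s$ is the path in $T$ from the forget$(v)$-node $x$ to some ancestor $y_s$ of $x$; in particular each nonempty $Q_s$ contains $x$. Since $F_v\neq\emptyset$ by hypothesis, at least one $Q_s$ is nonempty, so $x\in F_v$ and $F_v$ is a union of upward paths all sharing the endpoint $x$, hence itself a path from $x$ towards the root. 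Parts $(i)$--$(iii)$ with $t'=x$ follow immediately, and the label case analysis you propose for pinning down the minimal node becomes unnecessary.
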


The correctness of Observation~\ref{obs:fakeprop} follows from  Observation~\ref{obs:onlyoneforget} 
and Lemma~\ref{lem:twstruc1}. The discussion  above, along with Corollary~\ref{cor:planargridtws} and Proposition~\ref{prop:nice}, implies the following lemma. 

\begin{lemma}\label{lem:maindecomp}
Given a map graph $G$ with a corresponding planar bipartite graph $B$ and an integer $\ell\in\mathbb{N}$, in time 
$\OO(n^{2})$, one can either correctly conclude that $B$ contains an $\ell \times\ell$ grid as a minor, or compute a 
nice tree decomposition ${\cal D}$ of $B$ of width less than $5\ell$ and a \NSTD{$(5\ell,{\cal D})$} 
%(derived from ${\cal D}$) 
of $G$.  
\end{lemma}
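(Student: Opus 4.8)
The plan is to assemble Lemma~\ref{lem:maindecomp} directly from the pieces already developed in this section, so the proof is a matter of chaining together three ingredients rather than proving anything genuinely new. First I would apply Corollary~\ref{cor:planargridtws} to the planar bipartite graph $B$ with $t=\ell$: in time $\OO(n^2)$ this either produces an $\ell\times\ell$ grid minor of $B$, in which case we are done with the first alternative, or it produces a tree decomposition $\widehat{\D}$ of $B$ of width less than $5\ell$. In the latter case, I would feed $\widehat{\D}$ into Proposition~\ref{prop:nice} to obtain, in linear time, a \emph{nice} tree decomposition $\D$ of $B$ of the same width, i.e.\ width less than $5\ell$.

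Next I would invoke Definition~\ref{def:specialTreeDecomp} to construct the \STDlong{$5\ell$} derived from $\D$, which by the first lemma of the section is a tree decomposition of $G$; and then apply the node-labelling procedure described after Lemma~\ref{lem:twstruc3} (leaf / introduce$(v)$ / \introducefakevertex$(v)$ / forget$(v)$ / forget$(\beta_{\D'}(t')\setminus\beta_{\D'}(t))$ / join / redundant) to turn this into an \NSTD{$(5\ell,\D)$} of $G$. This labelling is well-defined and consistent precisely because of Lemmas~\ref{lem:twstruc1}, \ref{lem:twstruc2}, and \ref{lem:twstruc3}, which were established earlier; in particular Lemma~\ref{lem:twstruc2} justifies the {\bf redundant} label on introduce$(s)$ nodes and Lemma~\ref{lem:twstruc3} justifies the description of {\bf join} nodes. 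I would note explicitly that, since $\D$ has width less than $5\ell$, each bag $\beta_{\D}(t)$ has at most $5\ell$ vertices, so $|\org(t)|+|\cliques(t)|\le 5\ell$ as already observed in the text, confirming this is a genuine \NSTD{$(5\ell,\D)$}.

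Finally I would address the running time and the output format. Corollary~\ref{cor:planargridtws} costs $\OO(n^2)$; Proposition~\ref{prop:nice} costs linear time; and the construction of $\D'$ together with the relabelling in Definition~\ref{def:specialTreeDecomp} is straightforward to carry out in time polynomial in $|V(B)|+|E(B)|$, which is $\OO(n)$ for a planar bipartite graph on $\OO(n)$ vertices (the number of special vertices and edges of $B$ is linear), so the total is $\OO(n^2)$. The lemma asks us to ``compute'' both $\D$ and the \NSTD{$(5\ell,\D)$}, so I would emphasize that both objects are produced by this pipeline. I do not anticipate a real obstacle here: the statement is a packaging result, and the only point that needs a line of care is verifying that the grid-minor alternative is stated for $B$ (not $G$) and that the width bound ``less than $5\ell$'' matches exactly what Corollary~\ref{cor:planargridtws} gives with $t=\ell$ — which it does — so the mild ``hard part'' is merely bookkeeping: making sure the parameter $\ell$ in Lemma~\ref{lem:maindecomp} is the grid parameter and $5\ell$ is the width parameter, and that all earlier lemmas of the section (which were stated for a fixed generic $\D,\D'$ via the boxed convention) apply verbatim to the concrete $\D$ just built.
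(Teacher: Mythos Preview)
Your proposal is correct and follows essentially the same approach as the paper, which simply states that the lemma follows from Corollary~\ref{cor:planargridtws}, Proposition~\ref{prop:nice}, and the preceding discussion defining the \NSTD{$(5\ell,\D)$}. Your write-up is more detailed than the paper's one-line justification, but the content is the same.
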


Lastly, we prove an important property of a tree decomposition ${\cal D'}$ of a map graph $G$ that is derived from a tree decomposition ${\cal D}$ of  a corresponding bipartite planar graph of $G$.   In particular, the edges considered in the following lemma are precisely those that connect the vertices ``already seen'' (when we use dynamic programming (DP)) with vertices to ``see in the future''.  

\begin{lemma}
\label{lem:edgetwokinds}
%Given a map graph $G$ with a corresponding planar bipartite graph $B$ and an integer $\ell\in\mathbb{N}$. 
%Let ${\cal D}$ be a tree decomposition of $B$ of width $\ell$
%and ${\cal D}'$ be the tree decomposition of $G$ derived from ${\cal D}$ as mentioned in Definition~\ref{def:specialTreeDecomp}. Then 
For 
any node $t\in V(T)$, the edges with one endpoint in $\gamma_{{\cal D}'}(t)$ and other in $V(G)\setminus \gamma_{{\cal D}'}(t)$ are of 
two kinds. 
\begin{itemize}
\item Edges incident with vertices in $\org(t)$.
\item Edges belonging to some special clique in $\cliques(t)$ (these edges are incident to vertices in $\fake(t)$). 
\end{itemize}
\end{lemma}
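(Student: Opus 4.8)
The plan is to fix an edge $\{u,w\} \in E(G)$ with $u \in \gamma_{{\cal D}'}(t)$ and $w \notin \gamma_{{\cal D}'}(t)$, and to show that either $u \in \org(t)$ (first kind) or the edge belongs to a special clique in $\cliques(t)$ and $u \in \fake(t)$ (second kind). Since $\{u,w\}\in E(G)$, there is a special vertex $s \in S(G)$ with $\{u,s\},\{w,s\}\in E(B)$, so $u,w \in N_B(s)$. First I would split on whether $u\in\org(t)$; if so we are immediately in the first case, so from now on assume $u\notin\org(t)$. Because $u\in\gamma_{{\cal D}'}(t)$ and $u\notin\beta_{{\cal D}}(t)\cap V(G)$, the definition of $\beta_{{\cal D}'}$ (specifically, the characterization right after Definition~\ref{def:specialTreeDecomp}) forces $u\in\fake(t)$: there is a node $t''$ in the subtree of $T$ rooted at $t$ with $u\in\beta_{\D}(t'')$, and some special vertex $s' \in S(G)\cap\beta_{\D}(t)$ with $u\in N_B(s')$.

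The heart of the argument is then to identify $s'$ with $s$, i.e.\ to show that the edge $\{u,w\}$ lives in the special clique $N_B(s)\in\cliques(t)$ and that $s\in\beta_{\D}(t)$. For this I would use the fact that $w \notin \gamma_{{\cal D}'}(t)$, hence in particular $w\notin\gamma_{\D}(t)$ (anything in $\beta_{\D}$ that is a vertex of $G$ also lies in $\beta_{{\cal D}'}$, and $\gamma$ is built from $\beta$), so the node labelled {\bf forget}$(w)$ by $\D$ — unique by Observation~\ref{obs:onlyoneforget} — is strictly above $t$, i.e.\ a proper ancestor of $t$ (or incomparable, but then $w$ could not reach $u$; I'd make this precise). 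Now consider the connected subtree $R_s = \{x\in V(T): s\in\beta_{\D}(x)\}$ guaranteed by Property $(c)$ of $\D$. By Property $(b)$ applied to the edges $\{u,s\}$ and $\{w,s\}$ of $B$, there are nodes $x_u,x_w\in R_s$ with $\{u,s\}\subseteq\beta_{\D}(x_u)$ and $\{w,s\}\subseteq\beta_{\D}(x_w)$. Since $u\in\gamma_{\D}(t'')\subseteq\gamma_{\D}(t)$ via a descendant, and $w\notin\gamma_{\D}(t)$, the node $x_w$ is \emph{not} a descendant of $t$. The subtree $R_s$ therefore contains a node below $t$ (namely, $x_u$ or $t''$'s relevant ancestor inside $R_s$ — this needs a small argument using connectivity of $R_s$, reusing the standard "internal vertex of a path in a rooted tree is an ancestor of an endpoint" trick already used in the proof of Lemma~\ref{lem:twstruc1}) and a node not below $t$; hence $R_s$ meets both the subtree rooted at $t$ and its complement, and since $R_s$ induces a connected subtree of the rooted tree $T$, it must contain $t$ itself. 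Thus $s\in\beta_{\D}(t)$, so $N_B(s)\in\cliques(t)$, and $\{u,w\}\subseteq N_B(s)$ is an edge of that special clique; combined with $u\in\fake(t)$ established above, this is exactly the second kind. (Note that along the way this also shows $s'$ can be taken to be $s$, so there is no real case distinction on which special vertex witnesses $u\in\fake(t)$.)

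The main obstacle I anticipate is the book-keeping around the rooted-tree topology: carefully arguing that $x_w$ (a node whose bag contains $w$) cannot be a descendant of $t$, and that $R_s$ genuinely straddles $t$'s subtree, so that connectivity forces $t\in R_s$. The subtlety is that $w\notin\gamma_{{\cal D}'}(t)$ is a statement about $\D'$, and one must translate it to $w\notin\gamma_{\D}(t)$ — which is immediate since $w\in V(G)$ and $\beta_{\D}(x)\cap V(G)\subseteq\beta_{{\cal D}'}(x)$ for every $x$, so $\gamma_{\D}(t)\cap V(G)\subseteq\gamma_{{\cal D}'}(t)$. Everything else is a direct application of the tree-decomposition axioms for $\D$ together with the structural description of $\fake(t)$ and $\cliques(t)$, and mirrors arguments already carried out in the proofs of the earlier lemmas in this section.
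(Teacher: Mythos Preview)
Your approach is the same as the paper's, and the heart of the argument --- showing $s\in\beta_{\D}(t)$ by observing that the connected subtree $R_s$ contains both a descendant of $t$ (witnessing $\{u,s\}$) and a non-descendant (witnessing $\{w,s\}$), hence must pass through $t$ --- is exactly what the paper does.

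There is one slip, however. From $u\in\gamma_{\D'}(t)$ and $u\notin\org(t)$ you cannot invoke the characterization after Definition~\ref{def:specialTreeDecomp} to get $u\in\fake(t)$: that characterization describes membership in $\beta_{\D'}(t)$, not in $\gamma_{\D'}(t)$, and a vertex may well lie in $\gamma_{\D'}(t)\setminus\beta_{\D'}(t)$. The paper closes this gap in one line by first using that $\D'$ is a tree decomposition of $G$: since the edge $\{u,w\}$ has one endpoint in $\gamma_{\D'}(t)$ and the other outside, the inside endpoint $u$ must lie in the separator $\beta_{\D'}(t)$. Then $\beta_{\D'}(t)=\org(t)\uplus\fake(t)$ together with $u\notin\org(t)$ gives $u\in\fake(t)$, and from there your derivation of the witness $s'\in\beta_{\D}(t)$ with $u\in N_B(s')$ and $u\in\gamma_{\D}(t)$ is correct. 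With this one-line fix inserted, your proof matches the paper's.
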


\begin{proof}
Fix $t\in V(T)$. Since ${\cal D}'$ is a tree decomposition of $G$, for any edge $e\in E(G)$ with one endpoint in $\gamma_{{\cal D}'}(t)$ and other in $V(G)\setminus \gamma_{{\cal D}'}(t)$, the endpoint of $e$ in $\gamma_{{\cal D}'}(t)$ should belong to $\beta_{{\cal D}'}(t)$. 
Let $u$ be the endpoint of $e$ that belongs to $\beta_{{\cal D}'}(t)$, and $v$ be the other endpoint of $e$.  Notice that the set $\beta_{{\cal D}'}(t)$ is partitioned into $\org(t)$ and $\fake(t)$, so $u$  belongs to either $\org(t)$ or $\fake(t)$, and in the former case we are done. We now assume that $u\in \fake(t)$. 

Since $\{u,v\}=e\in E(G)$, there is a special vertex 
$s\in S(G)$ such that $\{u,s\},\{v,s\}\in E(B)$. If $s\in \beta_{{\cal D}}(t)$, then the edge $\{u,v\}$ belongs to the special clique 
$K=N_B(s)$ in $G$ and $K\in \cliques(t)$.  We claim that indeed $s\in \beta_{{\cal D}}(t)$. Towards this, notice that 
$u\in \fake(t)$. This implies that $u\notin \beta_{{\cal D}}(t)$, but $u$ is present in a bag $\beta_{{\cal D}}(t')$ of 
some descendant $t'$ of $t$. Moreover, since $\{u,s\}\in E(B)$, we further know that $\{u,s\}\subseteq \beta_{{\cal D}}(t_1)$ for some descendent $t_1$ of $t$.
Since $\{v,s\}\in E(B)$ and $v\notin \gamma_{{\cal D}'}(t)$, there is a bag $\beta_{{\cal D}}(t_2)$ such that 
$\{v,s\}\subseteq \beta_{{\cal D}}(t_2)$ and $t_2$ is not a descendent of $t$. Thus, since 
$s\in  \beta_{{\cal D}}(t_1)\cap  \beta_{{\cal D}}(t_2)$, by Property $(c)$ of  the tree decomposition ${\cal D}$, we have that $s\in  \beta_{{\cal D}}(t)$. 
This completes the proof of the lemma. 
\end{proof}

%We would like to mention that the above lemma holds for a \NSTDlong{$\ell$} of a map graph $G$ as well.   

%\begin{corollary}\label{cor:mainNCTD}
%Given a map graph $G$  and an integer $\ell\in\mathbb{N}$, in time $2^{\OO(\ell)}\cdot n^{\OO(1)}$, one can either correctly conclude that $G$ contains a $\ell\times \ell$ grid as a minor, or compute a \NSTD{$c\ell$} of $G$.
%\end{corollary}

%A {\em \NCPDlong{$\ell$}}, or simply an {\em \NCPD{$\ell$}}, is an \NCTD{$\ell$} where $T$ is a path. In this context, for convenience, we use the notation referring to a sequence presented in Section \ref{sec:prelim}.

%!TEX root = main_map.tex

\section{Feedback Vertex Set 
%and {\sc Connected Planar $\cal F$-Vertex Deletion} 
}\label{sec:fvs}
In this section, we give  some simple applications of the computation of an \STD{$(\ell,{\cal D})$}  in designing 
subexponential-time parameterized  algorithms on map graphs.  
We exemplify our approach by developing a subexponential-time parameterized algorithm for  \probFVS.
This approach can be used to design subexponential-time parameterized algorithms for the more general {\sc Connected Planar $\cal F$-Deletion} problem (discussed below) as well as {\sc Connected Vertex Cover} and {\sc Connected Feedback Vertex Set}. These simple applications already substantially improve upon the known algorithms for these problems on map graphs~\cite{FominLS12,FominLS17}.
We first prove the following theorem, and then show how the idea of the proof can be generalized to 
{\sc Connected Planar $\cal F$-Deletion}. 

\begin{theorem}\label{thm:fvs}
\probFVS{} on map graphs can be solved in time $2^{\OO(\sqrt{k}\log k)}\cdot n^{\OO(1)}$.
\end{theorem}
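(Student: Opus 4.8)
The plan is to combine the decomposition lemma (Lemma~\ref{lem:maindecomp}) with a win/win argument based on bidimensionality, followed by a dynamic programming routine over the \NSTD{$(5\ell,{\cal D})$} with a twist that exploits the clique structure of the bags. First I would fix a suitable constant $c$ (depending only on the fact that a $q\times q$ grid has a feedback vertex set of size $\Theta(q^2)$): concretely, if $B$ contains an $\ell\times\ell$ grid as a minor with $\ell=c\sqrt{k}$, then $G$ contains a $\lfloor\ell/2\rfloor\times\lfloor\ell/2\rfloor$ grid as a minor (since each special vertex of degree $2$ in $B$ can be suppressed, and $G$ is obtained from $B$ by such contractions together with possibly adding edges, so a planar grid minor of $B$ survives in $G$ up to a constant factor), and any graph containing a $q\times q$ grid minor has feedback vertex set number $\Omega(q^2)$; choosing $c$ large enough makes this exceed $k$, so we can safely answer \No. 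Otherwise, Lemma~\ref{lem:maindecomp} gives us in time $\OO(n^2)$ a nice tree decomposition ${\cal D}$ of $B$ of width less than $5c\sqrt{k}$ together with an \NSTD{$(5c\sqrt{k},{\cal D})$}, call it ${\cal D}'=(T,\beta_{{\cal D}'})$, of $G$.

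Next I would bound the size of the special cliques. Suppose some special clique $K\in\cliques(t)$ has $|K|\ge k+2$; then $K$ contains $K_{k+2}$ as a subgraph, which contains a large grid minor (or more elementarily, any feedback vertex set must leave at most two vertices of a clique, so $\textsf{fvs}(G)\ge |K|-2\ge k$). Hence after a preprocessing check we may assume every special clique has size $\le k+1$, so every bag $\beta_{{\cal D}'}(t)$, being the union of $\org(t)$ (of size $\OO(\sqrt{k})$) and at most $\OO(\sqrt{k})$ special cliques each of size $\OO(k)$, has size $\OO(k^{1.5})$ — too large for naive DP, but the point is that in any solution $S$ of size $\le k$, from each special clique in $\cliques(t)$ at most two vertices survive in $G-S$, and the bag contains only $\OO(\sqrt{k})$ cliques, so $|\beta_{{\cal D}'}(t)\setminus S|=\OO(\sqrt{k})$. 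The DP over $T$ then proceeds as follows: at each node $t$ we first guess, among the $\OO(\sqrt{k})$ cliques in $\cliques(t)$ together with $\org(t)$, which $\OO(\sqrt{k})$ vertices of $\beta_{{\cal D}'}(t)$ are \emph{not} in $S$ (for a clique this means guessing one of the $\OO(k^2)$ choices of at most two surviving vertices, hence $\OO(k^2)^{\OO(\sqrt{k})}=2^{\OO(\sqrt{k}\log k)}$ choices in total per node), and then we run the standard feedback-vertex-set DP on graphs of treewidth $\OO(\sqrt{k})$ on the ``reduced'' bag $\beta_{{\cal D}'}(t)\setminus S$, tracking the usual information (which of the $\OO(\sqrt{k})$ non-deleted bag vertices lie in which component of the forest built so far, which are marked as already having a forest-neighbour outside the bag, etc.), which contributes another $2^{\OO(\sqrt{k}\log k)}$ factor. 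Crucially, Lemma~\ref{lem:edgetwokinds} guarantees that the only edges crossing from $\gamma_{{\cal D}'}(t)$ to $V(G)\setminus\gamma_{{\cal D}'}(t)$ are incident to $\org(t)$ or lie inside cliques of $\cliques(t)$, so once we have decided which vertices of these cliques and of $\org(t)$ survive, the interaction of the ``processed'' part with the ``future'' is entirely captured by the reduced bag — this is what makes the DP correct despite the huge nominal bag size.

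The transitions for the four node types of the \NSTD\ are then adaptations of the classical ones: for \textbf{introduce}$(v)$ nodes we extend states exactly as in the treewidth DP; for \textbf{forget} nodes (which in ${\cal D}'$ may forget a whole set $\beta_{{\cal D}'}(t')\setminus\beta_{{\cal D}'}(t)$, arising from forgetting a special vertex in $B$) we must, before projecting out the forgotten vertices, verify that every forgotten vertex that survives in $G-S$ has already been correctly linked into the forest — this is legitimate precisely because, by Lemma~\ref{lem:edgetwokinds} and Observation~\ref{obs:fakeprop}, a vertex can only belong to a special clique of a bag along a contiguous upward path starting from its \introducefakevertex\ node, so once it leaves $\gamma_{{\cal D}'}$ it has no further neighbours; for \textbf{join} nodes we combine the two children's forest structures in the usual way, using $\fake(t_1)\cap\fake(t_2)=\emptyset$ and $\org(t)=\org(t_1)=\org(t_2)$ from Lemma~\ref{lem:twstruc3} to see that a vertex in a common special clique but ``fake'' on only one side brings no cross edges from the other side; \textbf{\introducefakevertex}\ and \textbf{redundant} nodes simply pass information through (for \introducefakevertex\ we move $v$ from $\org$ to $\fake$ and restrict the DP state to the surviving-vertex guess for the now-larger clique). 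The number of DP states per node is $2^{\OO(\sqrt{k}\log k)}$ and $|V(T)|=\OO(n)$, and there are $\OO(\sqrt{k})$ choices for $c\sqrt{k}$ rounded, so the total running time is $2^{\OO(\sqrt{k}\log k)}\cdot n^{\OO(1)}$. The main obstacle I anticipate is not any single transition but the bookkeeping needed to argue that guessing ``at most two surviving vertices per clique'' is simultaneously consistent across all bags containing (parts of) a given clique: one must check that if a vertex $v$ is guessed to survive in a bag where it is fake, this is consistent with its status in the bag where the clique is ``introduced'' — and here property $(\clubsuit)$ / Observation~\ref{obs:fakeprop}, which pins down the shape of the set of bags containing $v$ as fake to a single upward path, is exactly what is needed to thread the guesses together without combinatorial explosion.
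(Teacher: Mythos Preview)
Your proposal is correct and follows essentially the same route as the paper: a win/win between a large grid minor in $B$ (forcing \No) and an \NSTD{$(\OO(\sqrt{k}),{\cal D})$} in which each bag is a union of $\OO(\sqrt{k})$ cliques of size $\OO(k)$, followed by a DP that guesses the at-most-two survivors per clique and then runs the standard bounded-treewidth FVS routine on the $\OO(\sqrt{k})$ surviving vertices. Two minor remarks: the paper derives \No\ from the grid by extracting $k$ vertex-disjoint cycles in $G$ directly (Observation~\ref{obs:fvsLargeGrid}) rather than via a grid minor in $G$ (your contraction argument is also valid, and the halving to $\lfloor\ell/2\rfloor$ is in fact unnecessary), and your clique-size cutoff is off by one---$|K|\ge k+2$ only yields $\textsf{fvs}(G)\ge k$, not $>k$, so the threshold should be $|K|\ge k+3$, as in Lemma~\ref{lem:fvsFirstPhase}.
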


%The arguments to prove Theorem~\ref{thm:fvs} are like the one in Section~{sec:cycPack}
The starting point of the algorithm is that the existence of a large grid as a minor in the corresponding planar bipartite graph $B$ or a large clique implies that the given instance is a \No-instance.  Indeed we now observe that if we find a large grid as a minor in $B$, then we can find many vertex disjoint cycles in the map graph, and hence we can answer \No.
%\ (see also \cite{Demaine:2008mi,cygan2015parameterized}).

\begin{observation}\label{obs:fvsLargeGrid}
Let $(G,B, k)$ be an instance of  \probFVS. If $B$ contains a $3(\sqrt{k}+1)\times 3(\sqrt{k}+1)$ grid as a minor, then $(G,B,k)$ is a \No-instance. 
\end{observation}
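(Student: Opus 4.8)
The plan is to derive from a $3(\sqrt k+1)\times 3(\sqrt k+1)$ grid minor of $B$ a collection of $k+1$ pairwise vertex-disjoint cycles in $G$; since any feedback vertex set of $G$ must intersect every cycle and these cycles are disjoint, this forces every feedback vertex set of $G$ to have more than $k$ vertices, so $(G,B,k)$ is a \No-instance. First I would fix $b=\lceil\sqrt{k+1}\,\rceil$, so that $b^2\ge k+1$ while $3b\le 3(\sqrt k+1)$; restricting the branch sets of the given grid minor to a $3b\times 3b$ subgrid, $B$ has a $3b\times 3b$ grid minor, and I would cut the index set of this subgrid into $b^2$ pairwise disjoint $3\times 3$ blocks in the obvious way. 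It then suffices to extract one cycle of $G$ from each block, using only vertices of $G$ lying inside the nine branch sets of that block: branch sets of distinct blocks are disjoint, so cycles coming from distinct blocks are automatically vertex-disjoint, and $b^2\ge k+1$.

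Fix one block and let $D_{1},\dots,D_{9}$ be the nine (connected, pairwise disjoint) branch sets at its positions, together with one edge of $B$ between every pair of them that is adjacent in the $3\times 3$ grid; these data realize the $3\times 3$ grid as a minor of the induced subgraph $B':=B\!\left[\bigcup_i D_i\right]$, which is bipartite as an induced subgraph of $B$. The heart of the argument is the claim that $B'$ contains a cycle of length at least $6$. For this I would use that treewidth is minor-monotone and that the $3\times 3$ grid has treewidth $3$, so $\tw(B')\ge 3$; on the other hand, any graph all of whose cycles have length exactly $4$ has treewidth at most $2$ (each of its blocks is a single edge or a complete bipartite graph $K_{2,m}$, which is series-parallel). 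Hence $B'$ has a cycle whose length is not $4$, and since $B'$ is bipartite this length is at least $6$. This claim is the step with real content — it is not deep, but it must be phrased carefully; an alternative route is to observe that a bipartite graph with a $K_4$ minor contains a bipartite $K_4$-subdivision, and a short case analysis on its six path lengths always exhibits a cycle of length at least $6$. I expect everything else to be routine, so this is the main obstacle.

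Finally, take such a cycle $C=z_1z_2\cdots z_{2\ell}z_1$ in $B'$ with $\ell\ge 3$. Because $B$ is bipartite with sides $V(G)=W$ and $S(G)=U$, the cycle $C$ alternates between $W$ and $U$; say $z_1,z_3,\dots,z_{2\ell-1}\in W$. These $\ell$ vertices are pairwise distinct, and any two consecutive ones share the special vertex between them on $C$ as a common neighbour in $B$, hence are adjacent in $G$; therefore $z_1z_3\cdots z_{2\ell-1}z_1$ is a cycle of length $\ell\ge 3$ in $G$, all of whose vertices lie in $\bigcup_i D_i$. Performing this for each of the $b^2$ blocks and recalling that branch sets of distinct blocks are disjoint, the resulting $b^2\ge k+1$ cycles of $G$ are pairwise vertex-disjoint, which proves the observation.
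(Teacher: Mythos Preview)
Your proof is correct and follows the same overall architecture as the paper: partition the large grid minor of $B$ into pairwise disjoint $3\times 3$ blocks, extract from each block a long cycle of $B$ lying inside the corresponding branch sets, and then take every other vertex to get a cycle of $G$; disjointness of branch sets across blocks gives vertex-disjointness of the resulting cycles in $G$, and there are more than $k$ of them.

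The only substantive difference is how you obtain a long cycle of $B$ inside one block. You argue indirectly via treewidth: $B'$ has a $3\times 3$ grid minor, hence $\tw(B')\ge 3$, while a bipartite graph whose cycles all have length $4$ has treewidth at most $2$; therefore $B'$ contains a cycle of length at least $6$. This works, and your justification (every biconnected such graph is $K_{2,m}$, or equivalently there is no $K_4$ subdivision) can be made rigorous. The paper, however, bypasses this entirely with a one-line observation: the boundary of the $3\times 3$ grid is already a cycle of length $8$, and a $C_8$ minor in $B'$ (which has maximum degree $2$) is automatically a $C_8$ topological minor, i.e.\ a cycle of length at least $8$ in $B'$. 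So where you establish a cycle of length $\ge 6$ through a structural detour, the paper gets a cycle of length $\ge 8$ for free. Your route buys nothing extra here and costs a nontrivial lemma; you may want to replace that step with the direct observation.
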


\begin{proof}
From the existence of a $3\sqrt{k}\times 3\sqrt{k}$ grid as a minor in $B$, we can conclude that $B$ contains $k$ vertex-disjoint cycles 
of length at least $8$ each. For any cycle $C$ of length $\ell\geq 6$ in $B$, there is a cycle of length $\ell/2$ in $G$ whose set of vertices is a subset of $V(C)$. 
This implies that if $B$ has $k$ vertex-disjoint cycles of length at least $8$ each, then there are $k$ vertex-disjoint 
cycles in $G$.  
\end{proof}

This observation leads us to the following lemma.

\begin{lemma}
\label{lem:fvsFirstPhase}
There is an algorithm that given an instance  $(G,B,k)$ of \probFVS, runs in time $\OO(n^2)$, and 
either correctly concludes that the minimum size of a feedback vertex set of $G$ is more than $k$, or outputs a nice tree decomposition ${\cal D}$ of $B$ and  a  \NSTD{$(15(\sqrt{k}+1),{\cal D})$}  ${\cal D}'$ of $G$ such that  for each $t\in V(T)$, 
$\vert \beta_{\D'}(t)\vert \leq 15 (k+2)(\sqrt{k}+1)$ and $\beta_{\D'}(t)$ is a union of $15(\sqrt{k}+1)$ many cliques of size at most $k+2$ each.
%outputs nice tree decomposition $\TT$ of 
%$B$ of width at most $10(\sqrt{k}+1)$ and a \NSTDlong{$10(\sqrt{k}+1)$} $\TT'$ (derived from $\TT$) of $G$ 
%such that for each $t\in V(T_{\sTT'})$, $\vert \beta_{\sTT'}(t)\vert \leq 10(k+2)(\sqrt{k}+1)$ 
%and $\beta_{\sTT'}(t)$ is a union of $10(\sqrt{k}+1)$ many cliques of size at most $k+2$ each.   
\end{lemma}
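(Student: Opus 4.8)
The plan is to recognize two easy obstructions---an oversized special clique in $G$, or a large grid minor in $B$---and, failing either, to run the algorithm of Lemma~\ref{lem:maindecomp}. First I would record that a clique on $c$ vertices admits no feedback vertex set of size smaller than $c-2$: deleting at most $c-3$ vertices from it leaves a complete graph on at least three vertices, which contains a cycle. Since every special clique $N_B(s)$ with $s\in S(G)$ is a clique of $G$, it follows that if $|N_B(s)|>k+2$ for some $s\in S(G)$, then the minimum feedback vertex set of $G$ has more than $k$ vertices. The algorithm checks this by reading off $|N_B(s)|$ for every special vertex---time linear in the size of the planar graph $B$, which is $\OO(n^2)$---and reports it if it occurs. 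Hence from now on every special clique of $G$ has at most $k+2$ vertices.

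Next I would invoke Lemma~\ref{lem:maindecomp} with the integer $3(\sqrt{k}+1)$ playing the role of its parameter $\ell$. In $\OO(n^2)$ time this either certifies that $B$ contains a $3(\sqrt{k}+1)\times 3(\sqrt{k}+1)$ grid as a minor---in which case Observation~\ref{obs:fvsLargeGrid} tells us that $(G,B,k)$ is a \No-instance, and the algorithm reports that the minimum feedback vertex set exceeds $k$---or it returns a nice tree decomposition ${\cal D}$ of $B$ of width less than $15(\sqrt{k}+1)$ together with a \NSTD{$(15(\sqrt{k}+1),{\cal D})$} ${\cal D}'$ of $G$. To conclude, I would read off the stated bounds: by the remark recorded just after the definition of a nice few cliques tree decomposition in Section~\ref{sec:fewclde}, every node $t\in V(T)$ satisfies $|\org(t)|+|\cliques(t)|\le 15(\sqrt{k}+1)$, and by Definition~\ref{def:specialTreeDecomp},
\[
\beta_{{\cal D}'}(t)=\Big(\bigcup_{v\in \org(t)}\{v\}\Big)\cup\Big(\bigcup_{s\in S(G)\cap \beta_{{\cal D}}(t)} N_B(s)\cap \gamma_{{\cal D}}(t)\Big),
\]
which exhibits $\beta_{{\cal D}'}(t)$ as a union of at most $15(\sqrt{k}+1)$ sets, each of which is a clique of $G$ (a singleton, or a subset of a special clique) and hence of size at most $k+2$ by the first step (pad with empty sets if exactly $15(\sqrt{k}+1)$ cliques are desired). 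Thus $|\beta_{{\cal D}'}(t)|\le 15(k+2)(\sqrt{k}+1)$, and the total running time is $\OO(n^2)$, dominated by the call to Lemma~\ref{lem:maindecomp}.

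No step here is genuinely difficult---the real combinatorial work is already done in Section~\ref{sec:fewclde}. The only point to get right is the bookkeeping of the parameters: the integer fed to Lemma~\ref{lem:maindecomp} must be small enough that a grid of that side length is excluded by Observation~\ref{obs:fvsLargeGrid}, and simultaneously such that five times it equals the $15(\sqrt{k}+1)$ appearing in the statement, which forces the choice $3(\sqrt{k}+1)$; and one must use the clique-size threshold $k+3$ (so that the surviving bound on special-clique sizes is $k+2$, matching the statement), since a clique on only $k+2$ vertices has a feedback vertex set of size $k$ and hence does not by itself force a \No-answer.
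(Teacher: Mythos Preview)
Your proof is correct and follows the same approach as the paper: test for a special clique of size at least $k+3$, then apply Lemma~\ref{lem:maindecomp} with $\ell=3(\sqrt{k}+1)$ and invoke Observation~\ref{obs:fvsLargeGrid} in the grid case, and finally read off the bag-size and clique bounds from the width of ${\cal D}$ together with the established cap $|N_B(s)|\le k+2$. The only difference is that you spell out in more detail why a large clique forces a large feedback vertex set and how $\beta_{{\cal D}'}(t)$ decomposes as a union of cliques, which the paper leaves implicit.
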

\begin{proof}
If $\vert N_B(s)\vert  \geq k+3$ for some special vertex $s\in S(G)$, then $G$ has a clique of size $k+3$, and hence 
$(G,B,k)$ is a \No\ instance.  Thus, we now suppose that this is not the case. Now, we apply Lemma~\ref{lem:maindecomp} with $\ell=3(\sqrt{k}+1)$. If the output is 
a $3(\sqrt{k}+1) \times 3(\sqrt{k}+1)$ grid minor of $B$, then by Observation~\ref{obs:fvsLargeGrid}, 
$(G,B,k)$ is a \No-instance. Otherwise, we have a nice tree decomposition $\D$ of 
$B$ of width less than $15(\sqrt{k}+1)$ and a \NSTDlong{$(15(\sqrt{k}+1))$}  $\D'$ of $G$. 
%Otherwise we have a nice tree decomposition $\TT$ of 
%$B$ of width less than $10(\sqrt{k}+1)$ and a \NSTDlong{$10(\sqrt{k}+1)$} $\TT'$ (derived from $\TT$) of $G$. 
In this case, since  $\vert N_B(s)\vert \leq k+2$ for every $s\in S(G)$, we have that for any $t\in V(T)$, $\vert \beta_{\D'}(t)\vert \leq 15(k+2)(\sqrt{k}+1)$. The bound on the number of cliques follows from the width of $\D$. 
\end{proof}
%\begin{lemma}\label{lem:fvsFirstPhase}
%Let $(G,B,k)$ be an instance of  \probFVS{} on map graphs. Then, in time $2^{\OO(\sqrt{k}\log k)}\cdot \vert V(G)\vert^{\OO(1)}$, one can either solve $(G,B,k)$ or obtain an equivalent instance $(G,f,k)$ of \probFVS{} on \gcgraphs{} together with an \NCTD{$\OO(\sqrt{k})$} of $G$.
%\end{lemma}
%
%\begin{proof}
%First, by using Lemmata \ref{lem:unitDisk} or \ref{lem:unitSquare}, we obtain a representation $f$ of $G$. Then, by using Corollary \ref{cor:mainNCTD} with $\ell=200\cdot599^3\cdot\sqrt{k}=\OO(\sqrt{k})$, we either correctly conclude that $G$ contains a $2\sqrt{k}\times 2\sqrt{k}$ grid as a minor, or compute an \NCTD{$\OO(\sqrt{k})$} of $G$. In both cases, by Observation \ref{obs:fvsLargeGrid}, we are done.
%\end{proof}

%\todo[inline]{stopped here. F.}

Because of  Lemma~\ref{lem:fvsFirstPhase}, to prove Theorem~\ref{thm:fvs}, we can focus on 
\probFVS{} on map graphs where the input is accompanied with a \NSTDlong{$(15(\sqrt{k}+1))$}   $\D'$  of $G$ such that  for each $t\in V(T)$, 
$\vert \beta_{\D'}(t)\vert \leq 15 (k+2)(\sqrt{k}+1)$ and $\beta_{\D'}(t)$ is a union of $15(\sqrt{k}+1)$ many cliques of size at most $k+2$ each.   The proof of Theorem~\ref{thm:fvs} is by a dynamic programming (DP) algorithm using the fact that for any $t\in V(T)$, $\beta_{\D'}(t)$ is a union of $15(\sqrt{k}+1)$ many cliques of size at most $k+2$. Observe that for any $t\in V(T)$, any feedback vertex set must contain all but two vertices in each clique. Thus, for each clique we have at most $\OO(k^2)$ 
%ways in which a feedback vertex set can interact. 
choices of which vertices of the clique belong to a solution. Briefly, for any node  $t\in V(T)$, a subset $S\subseteq \beta_{\D'}(t)$ 
such that $S$ contains all but at most $2$ vertices from each clique in the bag $\beta_{\D'}(t)$, a partition ${\cal P}$ of $\beta_{\D'}(t)\setminus S$ and $k'\leq k$, we have DP table entry ${\cal A}[t,S,{\cal P},k']$ which stores a Boolean value. The table entry ${\cal A}[t,S,{\cal P},k']$ is set to $1$ if and only if there is a feedback vertex set $F$ of $G[\gamma_{\D'}(t)]$ of size $k'$ such that $F\cap \beta_{\D'}(t)=S$ and for any block $P$ of ${\cal P}$, all the vertices of $P$ belong to a connected component of $G[\gamma_{\D'}(t)]-F$. 
Notice that the cardinality of $\beta_{\D'}(t)\setminus S$ is upper bounded by $\OO(\sqrt{k})$. 
This allows us to bound the number of states by 
%$(k^2)^{\OO(\sqrt{k})}=2^{\OO(\sqrt{k} \log k)}$.
$2^{\OO(\sqrt{k} \log k)}$.
 After this observation the dynamic programming is identical to the one made for \probFVS on graphs of bounded treewidth. See the book~\cite{cygan2015parameterized} for further details on the dynamic programming algorithm for \probFVS on graphs of bounded treewidth. 
 
By following similar lines as in the case of the above algorithm for \probFVS, we can design subexponential-time parameterized  algorithms for {\sc Connected Feedback Vertex Set}  and  {\sc Connected Vertex Cover} on map graphs. 
 
 %\todo{cite other options if available.}

\begin{theorem} \label{thm:confvsvc}
%\begin{enumerate}\setlength\itemsep{-.7mm}
{\sc Connected Feedback Vertex Set}  and  {\sc Connected Vertex Cover}  on map graphs can be solved in time $2^{\OO(\sqrt{k}\log k)}\cdot n^{\OO(1)}$.
% admits an algorithm with running time .
%\end{enumerate}
\end{theorem}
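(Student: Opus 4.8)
The plan is to mimic the dynamic-programming template already sketched for \probFVS{} in Theorem~\ref{thm:fvs}, replacing the two ingredients that are specific to {\sc Feedback Vertex Set}: the \emph{``large grid / large clique $\Rightarrow$ \No-instance''} preprocessing step, and the DP table itself. For {\sc Connected Feedback Vertex Set} the preprocessing is essentially identical: if $B$ contains a $c(\sqrt{k}+1)\times c(\sqrt{k}+1)$ grid minor (for a suitable constant $c$) then, by Observation~\ref{obs:fvsLargeGrid}, $G$ already has $k+1$ vertex-disjoint cycles, so no feedback vertex set of size $\le k$ exists, connected or not; and a clique of size $k+3$ again forces a \No-answer. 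For {\sc Connected Vertex Cover} we use the complementary bound: a $c(\sqrt{k}+1)\times c(\sqrt{k}+1)$ grid minor in $B$ yields, in $G$, a matching of size $> k$ (a grid has a large induced matching, and one can translate a matching of $B$ into a matching of $G$ as in the proof of Observation~\ref{obs:fvsLargeGrid}), hence every vertex cover has size $> k$; similarly a clique of size $k+2$ forces a \No-answer. In either case, applying Lemma~\ref{lem:maindecomp} with $\ell = \Theta(\sqrt{k}+1)$ gives, in time $\OO(n^2)$, either such a grid minor (whence we answer \No) or a \NSTD{$(\OO(\sqrt{k}+1),{\cal D})$} ${\cal D}'$ of $G$ in which, since no large clique exists, every bag $\beta_{\D'}(t)$ has size $\OO(k^{1.5})$ and is the union of $\OO(\sqrt{k})$ cliques each of size $\OO(k)$.

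Next I would set up the DP over ${\cal D}'$. As in the \probFVS{} case, the crucial observation is that within any clique $K$ in a bag, a feedback vertex set omits at most two vertices of $K$ and a vertex cover omits at most one vertex of $K$; so for each of the $\OO(\sqrt{k})$ cliques in a bag there are only $k^{\OO(1)}$ choices of which vertices are \emph{not} in the solution, giving $k^{\OO(\sqrt{k})} = 2^{\OO(\sqrt{k}\log k)}$ possibilities for the part of the bag that lies outside the solution, and $\OO(\sqrt{k})$ vertices to track. For each node $t$, each candidate solution-trace $S \subseteq \beta_{\D'}(t)$ (containing all but $\le 2$, resp. $\le 1$, vertices of each bag-clique), each partition ${\cal P}$ of the $\OO(\sqrt{k})$ vertices of $\beta_{\D'}(t)\setminus S$ into connectivity classes, and each budget $k'\le k$, the table entry ${\cal A}[t,S,{\cal P},k']$ records whether there is a solution $F$ of $G[\gamma_{\D'}(t)]$ of size $k'$ with $F\cap\beta_{\D'}(t)=S$ realizing the connectivity pattern ${\cal P}$ (for {\sc Connected Vertex Cover}, $F$ is the vertex cover and ${\cal P}$ tracks connected components of $G[F\cap\gamma_{\D'}(t)]$; for {\sc Connected Feedback Vertex Set}, ${\cal P}$ tracks components of $G[\gamma_{\D'}(t)]-F$, exactly as for \probFVS). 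The number of table entries is $2^{\OO(\sqrt{k}\log k)}\cdot n^{\OO(1)}$, since partitions of an $\OO(\sqrt{k})$-element set are $2^{\OO(\sqrt{k}\log k)}$ in number.

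For the transitions I would use the node labels of the \NSTD{} (Figure~\ref{fig:nicetdG}): {\bf leaf}, {\bf introduce}$(v)$, {\bf forget} (and {\bf redundant} nodes, which just copy the child's table). The two genuinely new node types compared with an ordinary nice tree decomposition are {\introducefakevertex}$(v)$ nodes and {\bf forget}$(X)$ nodes (forgetting a whole set $X = \beta_{\D'}(t')\setminus\beta_{\D'}(t)$ at once), but both are handled by Lemma~\ref{lem:edgetwokinds}: the edges leaving $\gamma_{\D'}(t)$ that are incident to \fake$(t)$ all lie inside special cliques of $\cliques(t)$, so the DP never needs to ``guess'' future neighbours of a fake vertex beyond knowing which clique of $\cliques(t)$ it belongs to. Thus when we forget a block of fake vertices we can correctly decide that the corresponding clique's outside-edges are fully accounted for, and at a {\introducefakevertex} node we can safely merge connectivity classes through the newly-seen clique. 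Connectivity is enforced exactly as in the standard treewidth DP for {\sc Connected Vertex Cover}/{\sc Connected Feedback Vertex Set} (e.g. via the connectivity-pattern tables of \cite{cygan2015parameterized}, or cut-and-count on the $\OO(\sqrt{k})$-vertex boundary if one prefers a randomized version), checking at the root $r$ (where $\beta_{\D'}(r)=\emptyset$) that the pattern has a single class. Each entry is computed from a constant number of children in $2^{\OO(\sqrt{k}\log k)}$ time, so the overall running time is $2^{\OO(\sqrt{k}\log k)}\cdot n^{\OO(1)}$.

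The main obstacle is the correctness of the transitions at {\introducefakevertex} and block-{\bf forget} nodes, i.e.\ arguing that the information stored about fake vertices (only ``which special clique do they belong to'' plus their connectivity class) is sufficient: one must invoke Lemma~\ref{lem:edgetwokinds} to show that a fake vertex has no edges to the ``future'' except inside its special clique, and Observation~\ref{obs:fakeprop} to ensure each fake vertex is introduced exactly once and stays fake along a contiguous path, so that the clique's boundary edges are contracted into the DP at a well-defined moment. The connectivity bookkeeping for the \emph{connected} variants is the second delicate point, but it is entirely standard once the boundary is known to have size $\OO(\sqrt{k})$; the only place map-graph-specific care is needed is precisely the interplay between fake vertices and the connectivity partition, which the three lemmas of Section~\ref{sec:fewclde} are designed to control.
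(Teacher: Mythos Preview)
Your overall architecture matches the paper (which gives no proof beyond ``same lines as \probFVS''): rule out a large grid minor in $B$ and a large special clique, obtain an \NSTD{$(\OO(\sqrt{k}),{\cal D})$} via Lemma~\ref{lem:maindecomp}, and run a DP exploiting that any feasible $S\subseteq\beta_{\D'}(t)$ leaves at most two (resp.\ one) vertices per clique outside the solution, so $|\beta_{\D'}(t)\setminus S|=\OO(\sqrt{k})$.

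There is, however, a genuine gap in how you handle the \emph{connected} constraint. You define ${\cal P}$ as a partition of the $\OO(\sqrt{k})$ vertices of $\beta_{\D'}(t)\setminus S$, but for {\sc Connected Vertex Cover} you then say ${\cal P}$ ``tracks connected components of $G[F\cap\gamma_{\D'}(t)]$''---that would require a partition of $S=F\cap\beta_{\D'}(t)$, which has size up to $\OO(k^{1.5})$, not of its complement. For {\sc Connected Feedback Vertex Set} you only track components of $G-F$ (``exactly as for \probFVS''), omitting the connectivity of $F$ altogether. Your appeal to ``the standard treewidth DP on the $\OO(\sqrt{k})$-vertex boundary'' does not resolve this: the bag is not an $\OO(\sqrt{k})$-vertex boundary for the solution side.

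The missing observation is that for every clique $K$ among the $\OO(\sqrt{k})$ cliques making up $\beta_{\D'}(t)$, the set $S\cap K$ is itself a clique, hence connected in $G[F]$; consequently the connectivity pattern of $G[F\cap\gamma_{\D'}(t)]$ restricted to the bag is determined by a partition of at most $\OO(\sqrt{k})$ clique-pieces rather than of $|S|$ individual vertices. With this coarsening the number of connectivity patterns is $2^{\OO(\sqrt{k}\log k)}$ and the DP goes through; for {\sc Connected Feedback Vertex Set} you store \emph{two} partitions over $\OO(\sqrt{k})$-sized ground sets (one for the forest structure of $G-F$, one for connectivity of $F$), still within budget.
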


%We refer to the  proof sketch of Lemma~8.9 in \cite{FominLPSZ17arxiv}, where the algorithm uses the fact that each bag is a union of $\OO(\sqrt{k})$ many cliques of size $\OO(k)$ each for further details. 

Our algorithm for \probFVS can be generlized to a large class of problems, namely, the class of 
{\sc Connected Planar $\cal F$-Deletion} problems. In this class, each problem is defined by family $\cal F$ of connected graphs that contains  at least one planar graph.
Here, the input is a graph $G$ and an integer parameter $k$. The goal is to find a set $S$ of size at most $k$ such that $G - S$ does not contain any of the graphs in $\cal F$ as a minor. This definition captures problems such as {\sc Vertex Cover}, {\sc Feedback Vertex Set}, {\sc \probTWdel}, \probPWdel, \probTDepthdel, {\sc Diamond Hitting Set} and {\sc Outerplanar Vertex Deletion}. Theorem~\ref{thm:fvs} can be generalized to the following general theorem. 

%The first collection of algorithmic consequences of Lemma~\ref{thm:mainVertexVersion} are encapsulated in the following theorem. 
%\todo{label theorem and coordinate it with rest of paper}
%We prove the following results. \todo{label theorem and coordinate it with rest of paper}
%
\begin{theorem} \label{thm:fdeletion}
%\begin{enumerate}\setlength\itemsep{-.7mm}
 Every {\sc Connected Planar $\cal F$-Vertex Deletion} problem  on map graphs can be solved in time $2^{\OO(\sqrt{k}\log k)}\cdot n^{\OO(1)}$.
% admits an algorithm with running time .
%\end{enumerate}
\end{theorem}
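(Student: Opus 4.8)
The plan is to mimic the proof strategy used for \probFVS{} in Theorem~\ref{thm:fvs}, generalizing each ingredient so that it works for an arbitrary family $\cal F$ of connected graphs containing at least one planar graph. The three ingredients are: (i) a ``large grid / large clique implies \No'' observation that lets us bound both the treewidth parameter and the clique sizes by a polynomial in $k$; (ii) the decomposition \NSTD{$(\OO(\sqrt{k}),{\cal D})$} supplied by Lemma~\ref{lem:maindecomp}; and (iii) a dynamic programming routine over this decomposition whose running time is $2^{\OO(\sqrt{k}\log k)}\cdot n^{\OO(1)}$, obtained by exploiting the fact that every bag is the union of $\OO(\sqrt{k})$ cliques of size $\OO(k)$.

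First I would record the grid/clique observation. Fix a planar graph $H\in\cal F$ with $h=|V(H)|$ vertices. There is a constant $c_H$ such that any graph of treewidth at least $c_H\sqrt{k}$ contains $k$ disjoint copies of $H$ as a minor (this follows from the Excluded Grid Theorem for planar $H$ together with the fact that a $(c'_H t)\times(c'_H t)$ grid contains $t^2$ disjoint copies of the $(h\times h)$-grid, each of which has $H$ as a minor). Consequently, if $B$ contains a $(c_H\sqrt{k}+O(1))\times(c_H\sqrt{k}+O(1))$ grid as a minor, then $B$, and hence $G$ (using that $B$ is a subdivision-like ``blow-up'' witness of $G$ — one can reuse the length-halving trick from Observation~\ref{obs:fvsLargeGrid}, or more simply observe that $G$ is obtained from $B$ by a half-square operation and a large grid minor in $B$ yields a large grid-like minor in $G$), contains $k+1$ disjoint subgraphs each having $H$ as a minor, so no deletion set of size $\le k$ can destroy all $H$-minors. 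Similarly, if some special clique $N_B(s)$ has size at least $f(H)\cdot(k+1)$ for an appropriate $f(H)$, then $G$ contains $k+1$ disjoint cliques each large enough to contain $H$ as a minor (any clique on at least $h$ vertices contains $H$ as a minor), again forcing a \No-answer. Running Lemma~\ref{lem:maindecomp} with $\ell=\Theta_H(\sqrt{k})$ then either reports one of these two obstructions (in which case we output \No) or returns a nice tree decomposition of $B$ of width $\OO_H(\sqrt{k})$ together with a \NSTD{$(\OO_H(\sqrt{k}),{\cal D})$} of $G$ in which every bag is a union of $\OO_H(\sqrt{k})$ cliques, each of size $\OO_H(k)$.

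Next I would do the dynamic programming. By the above, in a \Yes-instance every solution $S$ contains, from each clique $K$ in a bag, all but at most $h-1$ vertices of $K$ (otherwise $K\setminus S$ would contain $H$ as a minor). Hence for each clique there are only $\OO_H(k^{h-1})$ choices for $S\cap K$, and for a bag that is a union of $\OO_H(\sqrt{k})$ cliques there are $(\OO_H(k^{h-1}))^{\OO_H(\sqrt{k})}=2^{\OO_H(\sqrt{k}\log k)}$ choices for $S\cap\beta_{\D'}(t)$; once $S\cap\beta_{\D'}(t)$ is fixed, the set $\beta_{\D'}(t)\setminus S$ has size $\OO_H(\sqrt{k})$, so we are effectively running the standard bounded-treewidth DP for {\sc Connected Planar $\cal F$-Deletion} on a ``virtual'' bag of size $\OO_H(\sqrt{k})$. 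For {\sc Connected Planar $\cal F$-Deletion} on bounded-treewidth graphs there is a $2^{\OO(w\log w)}\cdot n^{\OO(1)}$-time algorithm — one keeps, in addition to the partial deletion set and the usual partition recording connectivity of $G-S$ as in \probFVS, a bounded-size encoding (a ``boundaried-graph'' / representative-set summary, as in the Fomin--Lokshtanov--Misra--Saurabh framework) of which $\cal F$-minor-models could still be completed through the boundary; since $\cal F$ is fixed and the relevant minor models have bounded size, this summary is of size $2^{\OO(w\log w)}$. Plugging $w=\OO_H(\sqrt{k})$ and multiplying by the $2^{\OO_H(\sqrt{k}\log k)}$ choices for the clique-parts of the solution gives the claimed running time $2^{\OO(\sqrt{k}\log k)}\cdot n^{\OO(1)}$.

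The main obstacle, and the only place where genuine care beyond the \probFVS{} case is needed, is the bounded-treewidth DP for {\sc Connected Planar $\cal F$-Deletion} itself: unlike \probFVS{}, the family $\cal F$ can contain many graphs and the ``forbidden minor'' condition is not a simple acyclicity condition, so the DP table must track enough information to certify that no $H\in\cal F$ has a minor model in $G-S$ crossing the current separator, while simultaneously tracking connectivity of $S$ (for the ``Connected'' part). This is exactly the DP that underlies the known $2^{\OO(k)}n^{\OO(1)}$ single-exponential-in-treewidth algorithms for {\sc Planar $\cal F$-Deletion}; for our purposes we only need that its dependence on the treewidth $w$ is $2^{\OO(w\log w)}$, which is standard. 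Once this is quoted (or reproved along the lines of the \probFVS{} DP in the book~\cite{cygan2015parameterized}, augmented with the $\cal F$-minor bookkeeping), the rest is bookkeeping: verifying the grid-to-disjoint-$H$-minors counting, verifying that a large special clique yields many disjoint $H$-minors, and combining with Lemma~\ref{lem:maindecomp} exactly as in Lemma~\ref{lem:fvsFirstPhase}. I would therefore structure the write-up as: an analogue of Observation~\ref{obs:fvsLargeGrid}, an analogue of Lemma~\ref{lem:fvsFirstPhase}, and then the DP, emphasizing the two new counting arguments and citing the bounded-treewidth algorithm for the DP core.
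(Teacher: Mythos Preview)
Your proposal is correct and follows essentially the paper's approach: grid/clique obstruction, Lemma~\ref{lem:maindecomp} to obtain the decomposition, and then the $2^{\OO(w\log w)}$ bounded-treewidth DP for $\mathcal{F}$-deletion (the paper cites~\cite{baste_et_al:LIPIcs:2018:8555}) applied to the $\OO(\sqrt{k})$-sized residual bag after guessing which clique vertices lie in $S$. Two minor points: the paper uses the \emph{smallest} member of $\mathcal{F}$ (of size $d$) rather than a planar member, yielding the tighter clique threshold $k+d+1$ and the ``at most $d-1$ survivors per clique'' bound; and in the problem name ``Connected'' qualifies the graphs in $\mathcal{F}$, not the deletion set $S$, so there is no need to track connectivity of $S$ in the DP.
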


%In fact we can generalize our algorithm to {\sc Connected Planar ${\cal F}$-Deletion}, where ${\cal F}$ is a family of connected graphs with at least one planar graph. In this problem our objective is to delete at most $k$ vertices from a given map graph $G$ such that the resulting graph should not contain a graph from ${\cal F}$ as a minor. 
Similarly to \probFVS, we can prove that there is a constant $c$ (depending only on ${\cal F}$) such that if there is a $c\sqrt{k}\times c\sqrt{k}$-grid minor in $B$, then the given instance is a \No-instance. Moreover, if there is a clique of size at least $k+d+1$ in $G$, where $d$ is the size of the smallest graph in ${\cal F}$, then also the given instance is a \No-instance. These two arguments imply that there is an algorithm which given an instance  $(G,B,k)$ of {\sc Connected Planar $\cal F$-Vertex Deletion}, runs in time $\OO(n^2)$, and 
either correctly concludes that $(G,B,k)$ is a \No-instance or outputs a \NSTDlong{$\OO(\sqrt{k})$} 
$\D'$  of $G$ such that  for each $t\in V(T)$, 
$\vert \beta_{\D'}(t)\vert \leq \OO(k^2)$ and $\beta_{\D'}(t)$ is a union of $\OO(\sqrt{k})$ many cliques of size at most $k+d$ each. 
%Using these facts, along the lines of the arguments made for \probFVS, we either conclude that the given instance is a \No-instance, or obtain a \NSTDlong{$\OO(\sqrt{k})$} $\TT'$ of $G$ where each bag is a union of $\OO(\sqrt{k})$ cliques of size at most $k+d$. 
As in the case of \probFVS, any solution of {\sc Connected Planar ${\cal F}$-Deletion} contains all but at most $d-1$ vertices from any clique. Thus, for each clique of size $k'$ in $\beta_{\D'}(t)$ we have at most $\OO(\binom{k'}{k'-d-1})=\OO(k^d)$ (because $k'\leq k+d+1$) 
%ways in which a solution  can interact. 
choices of which vertices of the clique belong to a solution. 
%\todo{may be couple of more lines of state description if not lengthy}. 
This allows us to bound the number of ``states'' by $(k^d)^{\OO(\sqrt{k})}=2^{\OO(\sqrt{k} \log k)}$. After this observation the dynamic programming is identical to the one made for {\sc Connected Planar $\cal F$-Vertex Deletion} on graphs of bounded treewidth. That is, given a tree decomposition of width $w$, there is an algorithm solving {\sc Connected Planar ${\cal F}$-Vertex Deletion} in time $2^{\OO(w \log w)}n^{\OO(1)}$~\cite{baste_et_al:LIPIcs:2018:8555}. Following this algorithm with our observation results in an algorithm with time complexity $2^{\OO(\sqrt{k}\log k)}\cdot n^{\OO(1)}$ for {\sc Connected Planar $\cal F$-Vertex Deletion} problem  on map graphs. 

%Then, one can design a DP algorithm for finding largest ${\cal F}$-minor free induced subgraph of $G$ in time $2^{\OO(\sqrt{k}\log k)}\cdot n^{\OO(1)}$, along the lines of \maxforest\ and using the method of the algorithm of Baste et al.~\cite{baste_et_al:LIPIcs:2018:8555} for {\sc Connected Planar ${\cal F}$-Deletion}.  

%!TEX root = main_map.tex

\section{Longest Cycle}\label{sec:exactCyc}

%\todo[inline]{Clique is a ``subset of vertices" satisfying some property-this is the way it is defined}

In the last section we saw  simple applications of the computation of an \STD{$(\ell,{\cal D})$}
%\STDlong{$\ell$} 
on map graphs. In this section as well as Section~\ref{sec:cycPack},  we will see more involved applications of \STD{$(\ell,{\cal D})$}. Specially, in this section we prove that \probKCycle\ admits a subexponential-time parameterized algorithm on map graphs.  

\begin{theorem}
\label{thm:cycle}
\probKCycle\  on map graphs  
can be solved  in $2^{\OO(\sqrt k\log k)} \cdot n^{\OO(1)}$ time. 
%when $n$ is the number of vertices in the input graph. 
\end{theorem}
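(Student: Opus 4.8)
The plan is to reduce, via a large-grid dichotomy, to instances for which Lemma~\ref{lem:maindecomp} supplies a nice few-cliques tree decomposition of $G$ in which every bag is a union of $\OO(\sqrt k)$ cliques, and then to run a Longest Path/Cycle-style dynamic program over it. The point that makes the dynamic program subexponential is a \emph{sublinear crossing lemma}: although a bag may contain $\OO(k^{1.5})$ vertices, a long cycle can always be rerouted so that it meets each separator $\gamma_{\cal D'}(t)$ of the decomposition in only $\OO(\sqrt k)$ edges.

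First I would carry out the same preprocessing as in Section~\ref{sec:fvs}. If some special clique $N_B(s)$ has at least $k$ vertices, then $G\supseteq K_k$ contains a cycle on $k$ vertices and we return \Yes. Otherwise we invoke Lemma~\ref{lem:maindecomp} with $\ell=\Theta(\sqrt k)$ for a suitable constant: if it returns a $\Theta(\sqrt k)\times\Theta(\sqrt k)$ grid minor of $B$ then, arguing as in Observation~\ref{obs:fvsLargeGrid} (a cycle of even length $2\ell$ in $B$ induces a cycle on $\ell$ vertices in $G$), $G$ has a cycle on at least $k$ vertices and we return \Yes; otherwise we obtain a nice \NSTD{$(\Theta(\sqrt k),{\cal D})$} ${\cal D}'$ of $G$ in which, using $|N_B(s)|<k$, every bag is a union of $\OO(\sqrt k)$ cliques each of size less than $k$. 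From now on we may assume we are in this last case.

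The heart of the proof is the sublinear crossing lemma: \emph{if $G$ has a cycle on at least $k$ vertices, then it has such a cycle $C$ so that for every $t\in V(T)$ the set of edges of $C$ with one endpoint in $\gamma_{\cal D'}(t)$ and the other in $V(G)\setminus\gamma_{\cal D'}(t)$ has size $\OO(\sqrt k)$.} By Lemma~\ref{lem:edgetwokinds}, every such edge is either incident to $\org(t)$ --- and $|\org(t)|=\OO(\sqrt k)$, each vertex of degree $2$ on $C$ --- or lies inside one of the $\OO(\sqrt k)$ cliques of $\cliques(t)$; so it suffices to produce a cycle for which, inside every special clique $K$, at most $\OO(1)$ edges of $E(K)\cap C$ leave $\gamma_{\cal D'}(t)$, for \emph{every} $t$. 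Starting from an arbitrary cycle on at least $k$ vertices, one rearranges its traversal inside each clique: since $K$ is complete, the clique-internal vertices of $C$ on $V(K)$ can be permuted, and moved between the different sub-paths of $C\cap E(K)$, yielding a cycle on the same vertex set; one wants these sub-paths to be monotone in the order in which the vertices of $K$ are forgotten by ${\cal D}$, read along a postorder traversal of $T$, because then for each $t$ the vertices of $K$ lying in $\gamma_{\cal D'}(t)$ occupy a contiguous block (by laminarity of the subtrees of $T$), which bounds the number of sub-paths that straddle the cut. Carrying this out consistently for all cliques at once is where the structural properties of Section~\ref{sec:fewclde} enter: property $(\clubsuit)$ --- that the bags containing a given vertex $v$ as a fake vertex form a single path ascending from the \introducefakevertex{}$(v)$-node to the node below the forget-node of the relevant special vertex (Lemma~\ref{lem:twstruc1}, Observation~\ref{obs:fakeprop}) --- controls how rerouting inside one clique displaces the ``ports'' through which $C$ enters and leaves that clique, hence why it does not destroy the bound demanded by other cliques sharing those ports. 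I expect this simultaneous-rerouting argument to be the main obstacle; in particular one must argue that an optimally routed cycle enters and leaves each special clique only boundedly often on each side of each separator, which, since these ports are shared among cliques, cannot be arranged clique by clique in isolation. This is presumably the ``technically involved sublinear crossing lemmata'' the introduction refers to, and it is likely proved by an induction over $T$ that processes cliques in postorder.

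Granting the crossing lemma, the theorem follows from a standard dynamic program over ${\cal D}'$ processed bottom-up, in the spirit of the Longest Path/Cycle algorithms on bounded-treewidth graphs (see~\cite{cygan2015parameterized}). A state at a node $t$ records (i) an interface $I\subseteq\beta_{\cal D'}(t)$ of size $\OO(\sqrt k)$, namely the relevant vertices of $\org(t)$ together with at most $\OO(1)$ vertices in each of the $\OO(\sqrt k)$ cliques of $\cliques(t)$; (ii) a linkage on $I$ recording how the partial solution inside $\gamma_{\cal D'}(t)$ joins the vertices of $I$ into disjoint paths, possibly with one already-closed cycle component; and (iii) the number of solution vertices seen so far, capped at $k$. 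Since each clique has fewer than $k$ vertices, there are $k^{\OO(\sqrt k)}$ choices of $I$, $2^{\OO(\sqrt k\log k)}$ linkages on $\OO(\sqrt k)$ vertices and $k+1$ counter values, so $2^{\OO(\sqrt k\log k)}$ states per node; the transitions at \textbf{leaf}, \textbf{introduce}, \introducefakevertex, \textbf{forget}, \textbf{redundant} and \textbf{join} nodes are the usual ones, where the edges of a special clique $K$ are incorporated at the unique node whose bag already contains all of $K$, and \textbf{join} nodes combine compatible pairs of children states. We answer \Yes{} iff some state at some node has its linkage equal to a single closed cycle whose counter has reached $k$. One direction of correctness is immediate, since the DP only ever assembles genuine cycles of $G$; the other direction is exactly where the crossing lemma is invoked, guaranteeing that if a cycle on at least $k$ vertices exists then one can be chosen so that each of its restrictions to a $\gamma_{\cal D'}(t)$ is among the retained states. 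Adding the $\OO(n^2)$ preprocessing, the total running time is $2^{\OO(\sqrt k\log k)}\cdot n^{\OO(1)}$.
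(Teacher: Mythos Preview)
Your proposal is essentially correct and follows the same route as the paper: preprocess via Lemma~\ref{lem:maindecomp} (large clique or large grid minor in $B$ gives \Yes), obtain an \NSTD{$(\Theta(\sqrt k),{\cal D})$}, prove a sublinear crossing lemma using the postorder ordering of the \introducefakevertex{} nodes, and then run a standard Longest-Cycle DP whose state at each node has size $2^{\OO(\sqrt k\log k)}$.

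The one place where your intuition diverges from the paper is worth flagging. You write that the rerouting ``cannot be arranged clique by clique in isolation'' and that the simultaneous coordination across cliques is the main obstacle. In fact the opposite is true, and this is what makes the argument work cleanly: the paper processes the special cliques \emph{one at a time}. For a single clique $K$ it shows (Lemma~\ref{lem:cyclecrossone}) that one can replace the edges of $C$ inside $E(K)$ so that, for every node $t$, at most four edges of $E(K)\cap E(C')$ cross the cut at $t$, while leaving $E(C)\setminus E(K)$ \emph{exactly} unchanged. Because only edges of $E(K)$ are touched, the ``ports'' through which $C$ enters and leaves $K$ are not displaced at all --- the set of endpoints of the non-$K$ subpaths is preserved --- and any crossing bound already secured for an earlier clique $K_j$ survives intact (edges altered in step $i$ lie in $E(K_i)$ and are absorbed into the new $+4$ budget; edges outside $E(K_i)$ are literally the same as before). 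This is the content of Claim~\ref{claim:inductionfakeedges}. So the technical weight is entirely in the single-clique lemma, which combines your postorder idea (Claim~\ref{claim:segment}) with a small pairing lemma (Lemma~\ref{lem:stamentA}) that stitches a family of paths into a cycle so that every prefix/suffix split of the ordered endpoint sequence is crossed by at most two new edges. Your sketch of the DP and the overall running-time accounting match the paper.
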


Towards the proof of Theorem~\ref{thm:cycle}, we prove that if there is a solution (i.e., a cycle of length at least $k$), then there is one for which a ``sublinear crossing lemma'' holds. Informally, the sublinear crossing lemma asserts the existence of a solution such that at any separator (bag) of a given \NSTD{$(\ell,{\cal D})$}, the number of edges crossing the separation  is $\OO(\sqrt{k})$.  This lemma lies at the heart of the proof and is one of the main technical contributions of the paper. 

%\todo[inline]{the next para should appear in introduction.}
%Even though the lemma is identical to the one stated for unit disk graph~\cite{subexpudg}, its proof is entirely different and technically involved because of the following reason. In the case of unit disk graph, the vertices in each { clique} of a bag can have neighbours in constantly many other { cliques}. On the other hand, this statement is not true for map graphs. The proof of sublinear crossing lemma on map graphs for \probKCycle\ in this section and \probCycPacking\ in the next section are important technical contributions of the paper.  

%In this section we prove the following theorem. 
%\begin{theorem}
%\label{thm:cycle}
%\probKCycle\  on map graphs  
%can be solved  in $2^{\OO(\sqrt k\log k)} \cdot n^{\OO(1)}$ time. 
%%when $n$ is the number of vertices in the input graph. 
%\end{theorem}

Towards proving Theorem~\ref{thm:cycle}, we design an algorithm that given a map graph $G$ along with a corresponding bipartite planar graph $B$ and $k\in {\mathbb N}$,  
runs in time $2^{\OO(\sqrt k\log k)}n^{\OO(1)}$ and decides whether $G$ has a cycle of length at least $k$. 
Notice that if there 
is a special vertex $s\in S(G)$ such that $\vert N_B(s)\vert \geq k$, then $G$ has a cycle of length at least $k$, 
because  $N_B(s)$ forms a clique in $G$. 
%So now on, we assume that $\vert N_B(s)\vert <k$ for all $s\in S(G)$. 
Moreover, observe that if there is a ``large enough'' grid in $B$, then we can answer \Yes. 
%
%\begin{observation}\label{obs:cycLargeGrid}
%Let $G$ be a map graph and $B$ be a corresponding bipartite planar graph of $G$. Let $k\in {\mathbb N}$. 
%If $B$ contains a $\sqrt{2k}\times \sqrt{2k}$ grid as a minor, then $G$ has a cycle of length at least $k$.
%\end{observation}
%\begin{proof}
%If $B$ has a $\sqrt{2k}\times \sqrt{2k}$ grid minor, then $B$ has a cycle of length at least $2k$, and this 
%implies that $G$ has a cycle of length at least $k$. 
%\end{proof}
%
%The above observation and the fact that if $\vert N_B(s)\vert \geq k$ then $G$ has a cycle of length at least $k$, leads to the 
%the following lemma. 
These observations lead to the following lemma.

\begin{lemma}
\label{lem:cyclealgo1step}
There is an algorithm that given an instance  $(G,B,k)$ of \probKCycle, runs in time $\OO(n^2)$, and 
either correctly concludes that $G$ has a cycle of length at least $k$, or outputs a nice tree decomposition ${\cal D}$ 
of $B$ of width $<5\sqrt{2k}$ and a \NSTD{$(5\sqrt{2k},{\cal D})$}  $\D'$ of $G$ such that  for each node $t\in V(T)$, 
$\vert \beta_{\D'}(t)\vert \leq 5\sqrt{2} \cdot k^{1.5}$.  
%outputs nice tree decomposition $\D$ of 
%$B$ of width at most $5\sqrt{2k}$ and a \NSTDlong{$5\sqrt{2k}$} $\D'$ (derived from $\D$) of $G$ 
%such that for each $t\in V(T_{\sTT'})$, $\vert \beta_{\sTT'}(t)\vert \leq 5\sqrt{2} \cdot k^{1.5}$. 
\end{lemma}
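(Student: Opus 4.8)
The algorithm is a short preprocessing step in front of Lemma~\ref{lem:maindecomp}, based on two easy \Yes-certificates. We may assume $k\ge 3$; for $k\le 2$ the question is merely whether $G$ contains a cycle, decidable in $\OO(n)$ time. First, scan all special vertices: if some $s\in S(G)$ has $|N_B(s)|\ge k$, then the special clique $N_B(s)$ is a clique on at least $k$ vertices of $G$ and thus contains a $k$-cycle, so output \Yes. As $B$ is planar this takes $\OO(n)$ time. From now on $|N_B(s)|\le k-1$ for every $s\in S(G)$, so every special clique of $G$ has at most $k-1$ vertices.

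Next, apply Lemma~\ref{lem:maindecomp} to $(G,B)$ with $\ell=\sqrt{2k}$ (for readability I treat $\sqrt{2k}$ as an integer; otherwise replace it by $\lceil\sqrt{2k}\rceil$ throughout, affecting only constants), which runs in $\OO(n^2)$ time. \emph{Grid case.} If the output is an $\ell\times\ell$ grid minor of $B$: a $d\times d$ grid graph with $d\ge 2$ contains a cycle through at least $d^2-1$ vertices (for even $d$ a Hamiltonian snake-like cycle, for odd $d$ one covering all but one vertex), and any cycle of length $p$ in a minor of $B$ lifts to a cycle of length at least $p$ in $B$ by picking one vertex per branch set. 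Hence $B$ has a cycle of length at least $\ell^2-1\ge 2k-1$; since $B$ is bipartite every cycle has even length, so $B$ actually has a cycle $C=w_1 s_1 w_2 s_2\cdots w_m s_m w_1$ of length $2m\ge 2k$ with all $w_i\in V(G)$ and $s_i\in S(G)$ distinct. For each $i$, the vertices $w_i$ and $w_{i+1}$ (indices mod $m$) have the common neighbour $s_i$ in $B$ and are therefore adjacent in $G$; since $m\ge k\ge 3$, $w_1 w_2\cdots w_m w_1$ is a cycle of length $m\ge k$ in $G$, so we output \Yes. (This is the same cycle-shrinking as in the proof of Observation~\ref{obs:fvsLargeGrid}.)

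\emph{Decomposition case.} Otherwise Lemma~\ref{lem:maindecomp} returns a nice tree decomposition ${\cal D}$ of $B$ of width less than $5\ell=5\sqrt{2k}$ together with a \NSTD{$(5\sqrt{2k},{\cal D})$} ${\cal D}'$ of $G$, and it only remains to bound the bag sizes. Recall (the remark following Figure~\ref{fig:nicetdG}) that in a \NSTDlong{$L$} every node $t$ satisfies $|\org(t)|+|\cliques(t)|\le L$, here with $L=5\ell$; moreover $\beta_{{\cal D}'}(t)=\org(t)\uplus\fake(t)$ with $\fake(t)\subseteq\bigcup_{K\in\cliques(t)}K$. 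Each vertex of $\org(t)$ is a singleton and each $K\in\cliques(t)$ has $|K|\le k-1$, so
\[
|\beta_{{\cal D}'}(t)|\ \le\ |\org(t)|+(k-1)\,|\cliques(t)|\ \le\ k\,\bigl(|\org(t)|+|\cliques(t)|\bigr)\ \le\ 5\ell\, k\ =\ 5\sqrt2\,k^{1.5}.
\]
The total running time is $\OO(n^2)$, dominated by the call to Lemma~\ref{lem:maindecomp}.

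\textbf{Main obstacle.} There is no real obstacle here: the statement is essentially bookkeeping. The only point needing care is the grid case, where one must combine the near-Hamiltonicity of grid graphs with the evenness of cycles in the bipartite graph $B$ to convert a $\sqrt{2k}\times\sqrt{2k}$ grid minor into a genuine $k$-cycle of $G$, plus tracking the rounding of $\ell$ so that the bag bound still reads (up to lower-order terms) $5\sqrt2\,k^{1.5}$. The genuinely hard part for \probKCycle is the sublinear-crossing lemma that follows, not this reduction.
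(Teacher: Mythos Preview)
Your proof is correct and follows essentially the same approach as the paper's: check for a large special clique, invoke Lemma~\ref{lem:maindecomp} with $\ell=\sqrt{2k}$, use the grid minor to produce a long cycle in $B$ and hence in $G$, and otherwise bound the bag sizes via $|\org(t)|+|\cliques(t)|\le 5\ell$ together with $|K|<k$ for every special clique. You spell out the grid-to-cycle step (near-Hamiltonicity of the grid, lifting through branch sets, parity in the bipartite $B$, halving to $G$) more carefully than the paper, which simply asserts that a $\sqrt{2k}\times\sqrt{2k}$ grid minor in $B$ yields a cycle of length at least $2k$ in $B$ and hence at least $k$ in $G$; otherwise the two proofs are identical.
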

\begin{proof}
As mentioned before, if $\vert N_B(s)\vert \geq k$ for some special vertex $s\in S(G)$, then $G$ has a cycle of length $k$. 
Thus, we now suppose that this is not the case. 
Now, we apply Lemma~\ref{lem:maindecomp} with $\ell=\sqrt{2k}$. If the output is 
a $\sqrt{2k} \times \sqrt{2k}$ grid minor of $B$, 
%then by Observation~\ref{obs:cycLargeGrid}, $G$ has a cycle of length at least $k$. 
then $B$ has a cycle of length at least $2k$, and this implies that $G$ has a cycle of length at least $k$. 
Otherwise, we have a nice tree decomposition $\D$ of 
$B$ of width less than $5\sqrt{2k}$ and a nice \NSTD{$(5\sqrt{2k},{\cal D})$}  $\D'$ of $G$. 
 %\NSTDlong{$5\sqrt{2k}$} $\D'$ (derived from $\D$) of $G$. 
In this  case, since  $\vert N_B(s)\vert < k$ for every $s\in S(G)$, we have that for  each node $t\in V(T)$, 
$\vert \beta_{\D'}(t)\vert \leq 5\sqrt{2} \cdot k^{1.5}$.   
\end{proof}
%We would like to emphasis that in the above lemma, $\D'$ is \NSTDlong{$5\sqrt{2k}$} of $G$ derived from $\D$.  
Because of Lemma~\ref{lem:cyclealgo1step}, to prove Theorem~\ref{thm:cycle}, it is enough to prove the following 
lemma. 

\begin{restatable}{lemma}{cyclealgores}
\label{lem:cyclealgo}
%Let $(G,B,k)$  be an instance  \probKCycle\ on map graphs. Let $\D$ be a nice tree decomposition 
%of $B$ of width at most $\sqrt{2k}$ and $\D'$ be a \NSTDlong{$5\sqrt{2k}$}  (derived from $\D$) of $G$ 
%such that for each $t\in V(T_{\sTT'})$, $\vert \beta_{\sTT'}(t)\vert \leq 5\sqrt{2} \cdot k^{1.5}$. 
There is an algorithm that 
given an instance $(G,B,k)$ of \probKCycle,  and  a \NSTD{$(5\sqrt{2k},{\cal D})$}  $\D'$ of $G$ (derived from a nice tree decomposition $\D$ of $B$) such that  for each node $t\in V(T)$, 
$\vert \beta_{\D'}(t)\vert \leq 5\sqrt{2} \cdot k^{1.5}$, 
%$\D$ and  $\D'$ (output of Lemma~\ref{lem:cyclealgo1step}),
%%$(G,B,k)$, $\D$ and  $\D'$, 
runs in time $2^{\OO(\sqrt k\log k)} \cdot n^{\OO(1)}$, and outputs a longest cycle in $G$. 
\end{restatable}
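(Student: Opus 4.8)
The plan is to prove Lemma~\ref{lem:cyclealgo} in two stages: a purely combinatorial ``sublinear crossing'' stage that produces a well-behaved optimum, and a dynamic programming stage over the \NSTD{} $\D'$ that exploits the crossing bound.

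\textbf{Stage 1: the sublinear crossing lemma.} First I would show that if $G$ has a cycle of length at least $k$, then it has one, call it $C$, such that for every node $t\in V(T)$ the number of edges of $C$ with one endpoint in $\gamma_{\D'}(t)$ and one endpoint outside is $\OO(\sqrt k)$. The starting observation is Lemma~\ref{lem:edgetwokinds}: the edges crossing the separation at $t$ are either incident to $\org(t)$ (and there are only $|\org(t)|\le \OO(\sqrt k)$ such vertices, each contributing at most $2$ edges of a cycle) or belong to one of the special cliques in $\cliques(t)$ (again $|\cliques(t)|\le\OO(\sqrt k)$ cliques). So the only danger is that, within a single special clique $K=N_B(s)$, the cycle $C$ uses many edges that cross some bag. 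The fix is a local rerouting argument: if $C$ enters and leaves a special clique $K$ many times, one can short-cut the portions of $C$ lying inside $K$ without losing length (inside a clique any two vertices are adjacent, so a collection of sub-paths with endpoints in $K$ can be reconnected using at most a constant number of chords while keeping all traversed vertices of $K$, hence not decreasing $|C|$). After this surgery, each special clique contributes only $\OO(1)$ crossing edges at every bag, and summing over the $\OO(\sqrt k)$ cliques and the $\OO(\sqrt k)$ original vertices gives the $\OO(\sqrt k)$ bound. The subtlety — and this is where I expect to need the structural properties of the \NSTD{}, in particular Lemma~\ref{lem:twstruc1} and Observation~\ref{obs:fakeprop} (the $(\clubsuit)$-type statement: the bags containing a vertex as a ``fake'' vertex form a path going upward from the \introducefakevertex{} node) — is that the rerouting must be performed \emph{simultaneously and consistently} so that one fixed cycle $C$ satisfies the crossing bound at \emph{all} bags at once, not just one at a time. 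I would argue this by orienting the changes along the tree: process special cliques and argue that a short-cut done because of a deep bag does not create new crossings at shallower bags, using the fact that a fake vertex's occurrence is an upward path.

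\textbf{Stage 2: the dynamic programming.} Given such a cycle $C$, I would run a DP over $\D'$ in the spirit of the standard $2^{\OO(w\log w)}n^{\OO(1)}$ rank-based / matchings-connectivity DP for \probKCycle{} on tree decompositions of width $w$ (see \cite{cygan2015parameterized}), but with the twist that the bags are large (size up to $\OO(k^{1.5})$) yet are unions of $\OO(\sqrt k)$ cliques. For each node $t$, the partial solution restricted to $\gamma_{\D'}(t)$ is a collection of paths whose endpoints lie on the ``boundary'' — and by Lemma~\ref{lem:edgetwokinds} together with Stage~1, these endpoints are concentrated in $\org(t)$ and in a constant number of vertices per clique of $\cliques(t)$, so there are only $\OO(\sqrt k)$ of them. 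Hence the DP state consists of: which $\OO(\sqrt k)$ boundary vertices are endpoints of path-segments, a pairing (matching) of these endpoints, and the length accumulated so far; the number of such states is $2^{\OO(\sqrt k\log k)}$. To make the transitions work I would describe, for each node type of the \NSTD{} (\textbf{leaf}, \textbf{introduce}$(v)$, \introducefakevertex$(v)$, \textbf{forget}$(\cdot)$, \textbf{join}, \textbf{redundant}), how to update the table; the only nonstandard types are \introducefakevertex{} and the clique-style \textbf{forget}, where a whole special clique's worth of vertices is introduced or forgotten at once — here I would enumerate, for each clique, the $\OO(1)$ ``used'' vertices and how the cycle traverses them (again $\OO(k^{\OO(1)})$ choices per clique, hence $2^{\OO(\sqrt k\log k)}$ overall), and otherwise the clique's internal vertices are handled exactly as in bounded-treewidth \probKCycle. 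Correctness of the DP restricted to crossing-bounded solutions follows because Stage~1 guarantees an optimum lies in the search space; standard arguments (or representative-sets / cut-and-count to kill the matching blow-up if desired) keep the running time $2^{\OO(\sqrt k\log k)}n^{\OO(1)}$.

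\textbf{Main obstacle.} The genuinely hard part is Stage~1 done \emph{globally}: proving that a \emph{single} cycle simultaneously has $\OO(\sqrt k)$ crossings at every bag of the decomposition. Per-bag rerouting is easy, but one must show the surgeries do not conflict across the exponentially many bags; this is exactly the reason the paper strengthens the tree decomposition (the $(\clubsuit)$ property of Observation~\ref{obs:fakeprop} and Lemma~\ref{lem:twstruc1}), and I would expect the bulk of the technical work — an induction over $T$ with a carefully chosen invariant on how a partial cycle interacts with each special clique along the upward fake-occurrence path — to live there.
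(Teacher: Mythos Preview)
Your two-stage plan matches the paper's approach exactly, and your description of Stage~2 (the DP with states consisting of $\OO(\sqrt k)$ endpoint-pairs plus a length counter) is essentially the paper's sketch. The genuine gap is in Stage~1, which you yourself flag as the main obstacle, but where your description does not yet contain the idea that makes it go through.

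The issue is this: you say that within a clique $K$ one can ``reconnect'' the sub-paths using $\OO(1)$ chords, but \emph{which} reconnection? There are many Hamiltonian completions of the path family inside $K$, and the one that has few crossings at bag $t$ may have many crossings at bag $t'$. Your suggested fixes --- ``process special cliques and argue that a short-cut done because of a deep bag does not create new crossings at shallower bags'', or ``an induction over $T$ with a carefully chosen invariant'' --- are not how the paper proceeds, and as stated it is not clear they can be made to work: a rerouting chosen at one bag can certainly create new crossings at a sibling bag, and the upward-path property of fake occurrences does not by itself prevent this.

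The paper's mechanism is different and quite clean. For each special clique $K$ it fixes, once and for all, a \emph{linear order} $\sigma$ on $K$: take a postorder traversal of $T$, restrict to the \introducefakevertex{} nodes for vertices of $K$, and read off the corresponding vertices. The key combinatorial fact (Claim~\ref{claim:segment}, which is where Observation~\ref{obs:fakeprop} is actually used) is that for \emph{every} bag $t$, the set $\fake(t)\cap K$ is a contiguous segment of $\sigma$, disjoint from $K\setminus\gamma_{\D'}(t)$. One then proves an elementary lemma (Lemma~\ref{lem:stamentA}): given any linear order on a clique and any family of vertex-disjoint paths with endpoints in that clique, one can complete it to a single cycle using clique-edges so that at most $2$ of those edges cross any prefix/suffix cut of the order. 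Combining the two, a single rerouting inside $K$ gives at most $4$ crossing edges at \emph{every} bag simultaneously; iterating over the $\OO(\sqrt k)$ cliques in $\cliques(t)$ (which is the easy induction, Claim~\ref{claim:inductionfakeedges}) yields the global bound. So the missing ingredient in your write-up is precisely this ``postorder-induced linear order on each clique $+$ segment property $+$ low-cut Hamiltonian completion'' triple; once you have it, there is no per-bag surgery and no conflict to worry about.
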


Towards proving Lemma~\ref{lem:cyclealgo}, the main ingredient is to prove the following 
claim: if $G$ has a cycle of length $\ell$, then there is a cycle $C$ of length $\ell$, with 
the following property. 
\begin{framed}
\noindent 
For each node $t\in V(T)$, the number of edges of $E(C)$ with one endpoint in $\beta_{\D'}(t)$ and the other in 
$V(G)\setminus \gamma_{\D'}(t)$ is upper bounded by $\OO(\sqrt k)$.      
\end{framed}
%Once we prove the above property, then one can design a dynamic programming (DP) algorithm 
%over the tree decomposition $\D'$. One such algorithm when over a path decomposition is given 
%in Lemma~xx of \cite{}. Thus, assuming the above mentioned property, the proof of  Lemma~\ref{lem:cyclealgo} is similar to that  of  Lemma~xx in \cite{}. So the rest of the section is devoted to prove the above mentioned property. 
The above mentioned property is encapsulated in the following sublinear crossing lemma.  

\begin
%{lemma}
{restatable}
[Sublinear Crossing Lemma]{lemma}{subcrosslem}
\label{lem:cyclecross}
Let $(G,B,k)$ be an instance of \probKCycle. Let ${\cal D}$ be a nice tree decomposition of $B$ and ${\cal D}'$ be a  \NSTD{$(5\sqrt{2k},{\cal D})$}  of $G$. 
% such that  for each $t\in V(T_{\cal D'})$,  $\vert \beta_{\D'}(t)\vert \leq 5\sqrt{2} \cdot k^{1.5}$. 
Let $C$ be a cycle 
in $G$. 
Then there is a cycle $C'$ of the same length as $C$ such that  for any node $t\in V(T)$, the 
number of edges in $E(C')$ with one endpoint in $\beta_{\D'}(t)$ and the other in $V(G)\setminus 
\gamma_{\D'}(t)$ is at most $20\sqrt{2k}$.  
\end{restatable}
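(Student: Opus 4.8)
The plan is to start from the given cycle $C$ and repeatedly perform local "rerouting" or "shortcutting" operations on it that do not change its length but reduce the number of bad crossings, until a cycle $C'$ is obtained with the desired property. To make this precise I would first classify, for a fixed node $t\in V(T)$, the edges of any cycle that cross the separation $(\gamma_{\D'}(t), V(G)\setminus\gamma_{\D'}(t))$. By Lemma~\ref{lem:edgetwokinds}, every such crossing edge is either (i) incident with a vertex of $\org(t)$, or (ii) an edge of some special clique $K\in\cliques(t)$ (with its $\gamma_{\D'}(t)$-endpoint lying in $\fake(t)$). Since $|\org(t)|+|\cliques(t)|\le 5\sqrt{2k}$ and a cycle is subcubic... (actually a cycle uses exactly two edges at each of its vertices), the number of crossing edges of type (i) is at most $2|\org(t)|\le 10\sqrt{2k}$ automatically — no modification of $C$ is needed for those. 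So the entire difficulty is concentrated in type (ii) edges: a single special clique $K\in\cliques(t)$ could a priori contribute arbitrarily many edges of $C$ that cross the separation, because $C$ may enter and leave the "part of $K$ inside $\gamma_{\D'}(t)$" many times.

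The key structural input here is property $(\clubsuit)$ as formalized through Lemma~\ref{lem:twstruc1} and Observation~\ref{obs:fakeprop}: for a special clique $K=N_B(s)$, the set of bags in which vertices of $K$ appear as fake vertices is, for each such vertex $v$, a path running upward from the {\introducefakevertex}$(v)$ node to the node $y$ just below {\bf forget}$(s)$; all these paths share the top endpoint $y$ and are nested below it. I would use this to argue that for a fixed $t$ with $K\in\cliques(t)$, the vertices of $K$ are partitioned into those in $\fake(t)$ (hence in $\gamma_{\D'}(t)$) and the rest, and — crucially — that the "rest" of $K$ behaves coherently with respect to the separation. The rerouting argument is then: suppose $C$ uses edges $e_1,e_2,e_3,e_4$ of $E(K)$ each crossing the separation at $t$, say $C$ visits $K$-vertices $a_1,b_1,a_2,b_2,\dots$ alternating between inside and outside in the cyclic order along $C$, with $a_i$ inside and $b_i$ outside. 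Because all of $a_1,a_2,\dots$ lie in the clique $K$, they are pairwise adjacent in $G$; similarly all the $b_i$'s. One can then "uncross" the cycle: replace the two subpaths of $C$ between consecutive visits by a single pass — concretely, if $C = a_1 P_1 b_1 Q_1 a_2 P_2 b_2 Q_2 a_1$ (where $P_i$ stays inside except for the crossing and $Q_i$ stays outside except for the crossing), we can reroute using the clique edges $a_1a_2$ and $b_1b_2$ to merge passes and strictly reduce the number of crossing clique-edges while preserving the total length (this is a standard "two cycles sharing two vertices in a clique can be spliced" move, and when it would disconnect $C$ into two cycles we keep the long one — but since we want a cycle of the \emph{same} length we must instead reroute so that $C$ stays a single cycle; this is where care is needed). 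After exhaustively uncrossing, each special clique $K\in\cliques(t)$ contributes at most a constant number (in fact at most $2$, or a small constant) of crossing edges at $t$, giving total type-(ii) crossings at most $O(\sqrt{k})\cdot|\cliques(t)| $... — wait, that is too much; the point must be a \emph{constant per clique}, so the bound is $O(1)\cdot|\cliques(t)| \le O(\sqrt{2k})$. Summing the two types gives the claimed $20\sqrt{2k}$.

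The genuine obstacle — and I expect this to be the heart of the proof — is that the uncrossing operations must be performed \emph{simultaneously consistently for all nodes $t\in V(T)$}: a reroute that reduces crossings at one bag might create new crossings at another, and one needs a global potential function (e.g. $\sum_t (\text{number of type-(ii) crossing edges at }t)$, or the total length of $C$ measured against the tree structure, or a lexicographic order on the multiset of crossing counts) that strictly decreases under the chosen reroute, guaranteeing termination, while the final "stuck" cycle satisfies the per-bag bound everywhere. Establishing that such a reroute always exists when some bag is over budget, that it can be chosen to keep $C$ a single cycle of unchanged length, and that it decreases the global potential, is the technically delicate part; I would set it up by fixing a deepest offending node $t$ (deepest in $T$), performing the clique-uncrossing there, and arguing via the nestedness from Lemma~\ref{lem:twstruc1} that no shallower-or-incomparable bag gains crossings — so the potential, weighted by depth, drops. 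Once the per-bag bound is in place the arithmetic is immediate from $|\org(t)|+|\cliques(t)| \le 5\sqrt{2k}$.
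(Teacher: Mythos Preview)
You correctly identify the decomposition via Lemma~\ref{lem:edgetwokinds} into type-(i) edges (incident to $\org(t)$, bounded by $2|\org(t)|$ for free) and type-(ii) edges (inside some $K\in\cliques(t)$), and you correctly see that the whole difficulty is bounding the latter by a constant per clique, simultaneously for every $t$. But your proposed mechanism --- iterative local uncrossing at a deepest offending bag, with a depth-weighted potential --- has a real gap. The move you describe may split the cycle in two (you note this yourself and do not resolve it), and more importantly there is no argument that an uncrossing inside $K$ at node $t$ does not raise the crossing count at other nodes $t'$ with $K\in\cliques(t')$; such $t'$ range over the entire subtree where $s$ lives in $\D$, including ancestors of $t$, so ``deepest first'' does not obviously terminate. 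You flag this as ``the heart of the proof'' but leave it as a hope rather than an argument.

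The paper avoids the termination issue entirely by a different, one-shot construction. For a single special clique $K$ it deletes $E(C)\cap E(K)$ to obtain a path system $\mathcal P$ with endpoints $Z\subseteq K$, fixes a linear order $\sigma$ on $Z$ induced by a \emph{postorder traversal} of $T$ restricted to the {\introducefakevertex} nodes of vertices of $K$, and then (Lemma~\ref{lem:stamentA}) re-completes $\mathcal P$ to a cycle $C'$ using edges of $E(Z)$ so that for every prefix of $\sigma$ at most $2$ of the new edges cross the prefix/suffix cut. The point is Claim~\ref{claim:segment}: for every $t$, the set $\fake(t)\cap Z$ is a contiguous \emph{segment} of $\sigma$ disjoint from $Z\setminus\gamma_{\D'}(t)$ --- this is exactly where the structure of Lemma~\ref{lem:twstruc1}/Observation~\ref{obs:fakeprop} is used. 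Two prefix cuts bound a segment, so at most $4$ edges of $E(C')\cap E(K)$ cross at $t$, for \emph{every} $t$ at once. Because this modification only touches $E(K)$, it can be iterated over the cliques $K_1,\dots,K_\ell$ without disturbing earlier bounds (Claim~\ref{claim:inductionfakeedges}), yielding $4|\beta_\D(t)\cap S(G)|$ type-(ii) crossings; adding $2|\org(t)|$ gives $\le 4|\beta_\D(t)|\le 20\sqrt{2k}$. So the missing idea in your plan is the postorder-induced linear order on each clique together with the segment property, which replaces your unresolved potential argument by a direct global construction.
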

%{lemma}

%
%\begin{restatable}[Sublinear Crossing Lemma]{lemma}{lemcyclecross}
%\label{lem:cyclecross}
%Let $(G,B,k)$ be an instance of \probKCycle. Let ${\cal D}=(T,\beta_{\D})$ be a nice tree decomposition of $B$ and ${\cal D}'$ be a  \NSTD{$(5\sqrt{2k},{\cal D})$}  of $G$. 
%% such that  for each $t\in V(T_{\cal D'})$,  $\vert \beta_{\D'}(t)\vert \leq 5\sqrt{2} \cdot k^{1.5}$. 
%Let $C$ be a cycle 
%in $G$. 
%Then there is a cycle $C'$ of the same length as $C$ such that  for any node $t\in V(T)$, the 
%number of edges in $E(C')$ with one endpoint in $\beta_{\D'}(t)$ and the other in $V(G)\setminus 
%\gamma_{\D'}(t)$ is at most $20\sqrt{2k}$.  
%\end{restatable}

Towards proving   Lemma~\ref{lem:cyclecross}, we first prove the following lemma. 

%\begin{lemma}
\begin{restatable}{lemma}{lemsubcrosscycle}
\label{lem:cyclecrossone}
Let $(G,B,k)$ be an instance of \probKCycle. 
%Let \ntd be  a nice \STD{$(5\sqrt{2k},{\cal D})$}  of $G$,.
Let ${\cal D}$ be a nice tree decomposition of $B$ and ${\cal D}'$ be a  \NSTD{$(5\sqrt{2k},{\cal D})$}  of $G$. 
%such that  for each $t\in V(T_{\cal D'})$, 
%$\vert \beta_{\D'}(t)\vert \leq 5\sqrt{2} \cdot k^{1.5}$.  
%Let $\D$ be a nice tree decomposition of $B$ of width at most $5\sqrt{2k}$ and $\D'$ be a \NSTDlong{$5\sqrt{2k}$}  (derived from $\D$) of $G$.  
Let $C$ be a cycle 
in $G$ and $K$ be a special clique in $G$. Then, there is a cycle $C'$ of the same length as $C$ such that 
$E(C')\setminus E(K)=E(C)\setminus E(K)$ and for any node $t\in V(T)$, the 
number of edges of $E(C')\cap E(K)$ with one endpoint in $\fake(t)\cap K$ and the other in $V(G)\setminus 
\gamma_{\D'}(t)$ is at most $4$.  
\end{restatable}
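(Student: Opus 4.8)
The plan is to construct $C'$ by rerouting $C$ only inside the clique $K$. Write $E(C)\cap E(K)$ as the disjoint union of the maximal subpaths of $C$ all of whose edges lie in $E(K)$, the \emph{$K$-segments} $P_1,\dots,P_m$; each of their $2m$ endpoints, the \emph{portals}, has exactly one incident $C$-edge in $E(K)$ and one not. The part of $C$ avoiding $E(K)$ is a fixed linear forest $F_{\mathrm{out}}$ with $m$ components, and we let $M_{\mathrm{out}}$ be the perfect matching on the portals that pairs the two endpoints of each component. To rebuild a cycle it is necessary and sufficient to choose \emph{any} perfect matching $M_{\mathrm{in}}$ on the portals with $M_{\mathrm{in}}\cup M_{\mathrm{out}}$ a single cycle, realize each pair of $M_{\mathrm{in}}$ by a path inside $K$, and distribute the clique-vertices internal to the old $K$-segments among these $m$ paths, each path visiting its vertices in any order. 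Every such choice yields a cycle $C'$ with $E(C')\setminus E(K)=E(C)\setminus E(K)$ and $|C'|=|C|$, and, $K$ being a clique, every such choice is available; so the task reduces to making these choices subject to the crossing bound.

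Second, I would translate ``being counted at a node $t$'' into a statement about $T$. For $v\in K$ let $x_v$ be the unique {\bf forget}$(v)$-node of ${\cal D}$. Two features of the construction of ${\cal D}'$ are relevant: first, $\gamma_{{\cal D}'}(t)\cap V(G)=\gamma_{{\cal D}}(t)\cap V(G)$ for every $t$, which one checks directly from Definition~\ref{def:specialTreeDecomp}; and second, by Observation~\ref{obs:fakeprop}, if $v\in\fake(t)$ then $t$ is an ancestor of (or equal to) the {\bf fake introduce}$(v)$-node, which is precisely $x_v$. Together these imply that whenever an edge $\{a,b\}$ of $E(C')\cap E(K)$ is counted at $t$ --- i.e.\ $a\in\fake(t)\cap K$ and $b\notin\gamma_{{\cal D}'}(t)$ --- one of $x_a,x_b$ lies in the subtree of $T$ rooted at $t$ while the other does not. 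Hence, for every node $t$, the number of counted edges at $t$ is at most the number of edges of $E(C')\cap E(K)$ that, viewed through the map $v\mapsto x_v$, cross the subtree-cut of $T$ at $t$, and it suffices to choose $M_{\mathrm{in}}$, the distribution, and the orderings so that every subtree-cut of $T$ is crossed at most $4$ times.

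The combinatorial core --- and the step I expect to be the main obstacle --- is producing such a rerouting while being forced to realize \emph{some} valid matching $M_{\mathrm{in}}$. Were we free to ignore the portal/outside structure, we could thread all the relevant clique-vertices into a single path following the order in which their forget-nodes are visited in a fixed depth-first traversal of $T$ (equivalently, of the minimal subtree spanning those forget-nodes); since in a depth-first order the forget-nodes lying in any subtree form a contiguous block, every subtree-cut would then be crossed at most twice, and deleting edges to split the path into $m$ pieces only helps. To respect the portals I would instead build $M_{\mathrm{in}}$ together with the vertex distribution by a bottom-up sweep of $T$: when the sweep closes off a subtree, the portals still unmatched inside it are paired up in a non-crossing way --- crucially, this allows pairing two portals that lie on the \emph{same} side of a cut, thereby not forcing a new crossing --- all while maintaining the invariant that $M_{\mathrm{in}}\cup M_{\mathrm{out}}$ will be a single cycle, after which the internal clique-vertices are laid out along the resulting segments in the same depth-first order. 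The bounded number of ``boundary'' edges introduced by this reconciliation is what turns the clean bound $2$ into the stated $4$. Verifying that the single-cycle invariant and the congestion bound can be kept simultaneously at every node --- and disposing of the degenerate cases (a vertex $v\in K\cap V(C)$ for which no node witnesses it as fake, so that Lemma~\ref{lem:twstruc1}/Observation~\ref{obs:fakeprop} needs care, the case $V(C)\subseteq K$, and forget-nodes coinciding high in $T$) --- is the delicate part.
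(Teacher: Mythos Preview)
Your approach is essentially the paper's: delete $E(C)\cap E(K)$ to obtain a path system $\mathcal{P}$ with endpoint set $Z\subseteq K$, linearize $Z$ via a postorder traversal of $T$ through the forget-nodes, observe that for each $t$ the set $\fake(t)\cap Z$ lies in a contiguous segment of this order disjoint from $Z\setminus\gamma_{\D'}(t)$, and then complete $\mathcal{P}$ to a cycle inside $K$ so that every prefix-cut of the linear order is crossed at most twice (whence every segment at most four times). Your reduction of ``counted at $t$'' to a subtree-cut via $v\mapsto x_v$, together with $\gamma_{\D'}(t)\cap V(G)=\gamma_{\D}(t)\cap V(G)$, is correct and is exactly what Claim~\ref{claim:segment} encodes.

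The divergence is precisely at the step you flag as ``the delicate part.'' The paper does \emph{not} build the matching by a bottom-up sweep of $T$. Instead it isolates a purely one-dimensional auxiliary statement, Lemma~\ref{lem:stamentA}: given \emph{any} linear order $u_1,\dots,u_\ell$ on the endpoints of a path system inside a clique, one can add a set $F$ of clique-edges so that $E(\mathcal{P})\cup F$ is a single cycle through all of $V(\mathcal{P})$ and every prefix-cut $\{u_1,\dots,u_j\}\mid\{u_{j+1},\dots,u_\ell\}$ is crossed by at most two edges of $F$. Its proof is a short induction on $\ell$: one connects $u_1$ greedily to the first feasible vertex among $u_2,u_3$ (three cases according to $d_{\mathcal{P}}(u_1)$ and $d_{\mathcal{P}}(u_2)$) and recurses on the shorter sequence. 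With this lemma in hand, Lemma~\ref{lem:cyclecrossone} is immediate: fix the postorder-induced sequence on $Z$, apply Lemma~\ref{lem:stamentA}, and invoke the segment property. Your proposed tree-guided pairing could be made to work, but it entangles the single-cycle constraint with the tree structure and you do not carry it out; the paper's decoupling of the linear-order lemma from $T$ is what makes the argument clean and complete.
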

%{lemma}

Before formally proving Lemma~\ref{lem:cyclecrossone}, we give a high level overview of the proof 
and an auxiliary lemma which we use in the proof of Lemma~\ref{lem:cyclecrossone}. The proof idea is to change 
the edges of $E(K)\cap E(C)$ in $C$ (because in Lemma~\ref{lem:cyclecrossone} our objective is to bound the ``crossing edges'' from a subset of $E(K)$ for each node $t\in V(T)$) to obtain a new cycle $C'$ of the same length as $C$ that satisfies the following property: $(i)$ for any node $t\in V(T)$, the number of edges of $E(C')\cap E(K)$ with one endpoint in $\fake(t)\cap K$ and the other in $V(G)\setminus \gamma_{\D'}(t)$ is at most $4$. For the ease of presentation, assume that $K\subseteq V(C)$. Now, consider the graph ${\cal P}$ obtained from the cycle $C$ after deleting the edges in $E(K)$. Without loss of generality assume that $E(C)\cap E(K)\neq \emptyset$. Otherwise,  Lemma~\ref{lem:cyclecrossone} is true where $C'=C$. We consider  ${\cal P}$ as a collection of 
vertex-disjoint paths where the end-vertices of the paths are in $K$. Some paths in  ${\cal P}$ may be of length $0$. 
Let $Z$ be the set of end-vertices of the paths in ${\cal P}$. Clearly, $Z\subseteq K$. We will ``complete'' the collection of paths ${\cal P}$ to a cycle by adding edges from $E(Z)$ satisfying Statement $(i)$.  Any cycle $C'$ with $V(C')=V({\cal P})$ has the same length as $C$.  So, all the work that is required for us is to  complete the collection of paths ${\cal P}$ to a cycle by adding edges from $E(Z)$ satisfying Statement $(i)$. Towards that, let $\widehat{\sigma}=v_1,\ldots,v_{k'}$ be an arbitrary sequence of vertices in $Z$.  We show (in Claim~\ref{claim:lesscross}) that $(ii)$ there is a subset of edges $F\subseteq E(Z)$ such that $E({\cal P})\cup F$ forms a cycle $C'$ with vertex set $V({\cal P})$ and  for any $j\in [k']$, the number of edges in $F$ with one endpoint in $\{v_1,\ldots,v_{j}\}$ and the other in $\{v_{j+1},\ldots,v_{k'}\}$ is at most $2$. This implies that for any $1\leq i\leq j\leq k'$, the number of edges in $F$ with one endpoint in $\{v_i,\ldots,v_j\}$ and the other in $Z\setminus \{v_i,\ldots,v_j\}$ is at most $4$.   In the light of Statement $(ii)$, our aim will be to prove that $(iii)$ for any node $t\in V(T)$, there exist $1\leq i\leq j\leq k'$ such that $\fake(t)\cap Z\subseteq \{v_i,\ldots,v_j\}$ and no vertex in $Z\setminus \gamma_{\D'}(t)$ belongs to $\{v_i,\ldots,v_j\}$. Then, Statement $(i)$ will follow (because edges of $C'$ incident with vertices in $K\setminus Z$ are from $E(G)\setminus E(K)$ and will not be counted in Statement $(i)$).  In fact, we will prove that there a sequence $\sigma$ on $Z$ (derived from a postorder transversal of $T$) such that Statement $(iii)$ is true (see Claim~\ref{claim:segment}).  The proof of Statement $(ii)$ is encapsulated in the following lemma (which we will use in the proof of Lemma~\ref{lem:cyclecrossone}). 

\begin{lemma}
\label{lem:stamentA}
Let $\ell\geq 3$ be an integer. Let $u_1,\ldots,u_{\ell}$ be a sequence of vertices in a graph $H$ where $X=\{u_1,\ldots,u_{\ell}\}$ is a clique in $H$. Let 
${\cal Q}$ be a family of vertex disjoint paths in $H$ (which possibly contains paths of length $0$) such that each $v\in X$ is an end-vertex of a path in ${\cal Q}$ and  
$E({\cal Q})\cap E(X)=\emptyset$.
%, and there is a subset $E'\subseteq E(X)$ with the property that $E({\cal Q})\cup E'$ forms 
%a cycle in $H$.  
Then, there is a set $F\subseteq E(X)$ 
such that the following conditions hold. 
\begin{itemize}
\item[$(a)$] $E({\cal Q})\cup F$ forms a cycle containing all the vertices of $V({\cal Q})$, 
\item[$(b)$] For any $j\in [\ell]$, the 
number of edges in $F$  with one endpoint in $\{u_1,\ldots,u_j\}$ and the other in $\{u_{j+1},\ldots,u_{\ell}\}$ is at most $2$. 
\item[$(c)$] If the degree of $u_1$ is one in ${\cal Q}$ (i.e., $u_1$ is an end-vertex of a path in ${\cal Q}$), then the 
number of edges in $F$  with  $u_1$ as an endpoint  is exactly $1$. 
\end{itemize}
\end{lemma}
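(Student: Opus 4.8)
The statement to prove is Lemma~\ref{lem:stamentA}: given a clique $X = \{u_1,\dots,u_\ell\}$ (with an ordering fixed by the indices), and a family $\QQ$ of vertex-disjoint paths whose endpoints are exactly the vertices of $X$ (each $v \in X$ being an endpoint of a path, where a length-$0$ path is a single vertex that is "its own" pair of endpoints), we must find $F \subseteq E(X)$ so that $E(\QQ) \cup F$ is a single cycle through $V(\QQ)$, with the "sublinear crossing" bound $(b)$ along the prefix filtration, plus the normalization $(c)$ about $u_1$.

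**The approach.**

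I would first reduce to a cleaner combinatorial object. Each path $P \in \QQ$ has two endpoints in $X$ (possibly equal, when $P$ has length $0$ — but then that vertex has degree $0$ in $\QQ$ and must be "doubled" or handled as a trivial component; I'd actually want to treat a length-$0$ path as a single vertex that needs two incident $F$-edges). Contract each path in $\QQ$ to an edge (or, for length-$0$ paths, to a vertex with a self-demand of $2$); the question becomes: arrange the $\ell/?$ "path-edges" into a single cycle by adding matching edges among $X$. Equivalently, think of the endpoints $u_1,\dots,u_\ell$ laid out on a line in this order; $\QQ$ gives a perfect matching $M_\QQ$ on these points (the "path" pairing), and we must add a second set $F$ of edges on the same points so that $M_\QQ \cup F$ is a single Hamiltonian cycle on the $\ell$ points (a "2-factor that is connected"). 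The crossing condition $(b)$ says: for every prefix cut between position $j$ and $j+1$, at most $2$ edges of $F$ cross it.

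The natural construction: process the vertices left to right, maintaining a partial union of paths-and-$F$-edges, and greedily close things off. More concretely, I'd build $F$ so that it is essentially a single "zig-zag" path that visits the endpoints roughly in order $u_1, u_2, \dots$, connecting consecutive available endpoints, which keeps every prefix cut crossed by $O(1)$ edges. The cleanest route: induct on $\ell$. Remove $u_1$ and its path $P_1$ from $\QQ$; let $u_1'$ be the other endpoint of $P_1$. If $P_1$ has length $0$ (so $u_1' = u_1$), then $u_1$ needs two $F$-edges; otherwise one. In either case, identify the "lowest-indexed" other endpoints to hook onto, apply induction to the remaining instance (with $u_1'$ playing a role), and then splice $u_1$ in using one or two edges among $\{u_1\} \times \{u_2, u_{\text{small}}\}$. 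The key is that each inserted $F$-edge only crosses prefix cuts lying strictly between its two endpoints, so if I always connect near-consecutive indices, each cut sees $O(1)$ edges — and a careful accounting gives exactly the bound $2$ in $(b)$, with the edge at $u_1$ giving $(c)$.

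**The main obstacle.**

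The delicate point is getting the constant in $(b)$ down to exactly $2$ (not $3$ or $4$) simultaneously for all prefix cuts, while respecting that $E(\QQ) \cap E(X) = \emptyset$ (so $F$ cannot reuse a path edge) and that some paths have length $0$. A prefix cut between $u_j$ and $u_{j+1}$ is crossed by (i) the $M_\QQ$-edges (paths) straddling it — which we don't control — and (ii) the $F$-edges straddling it — which we do. Since $M_\QQ \cup F$ is a single cycle, at any cut the total number of crossing edges is even and at least $2$; we need the $F$-share to be $\le 2$. The honest way to nail this is to observe that $F$ itself, restricted to how it crosses cuts, can be taken to form a single path on $u_1,\dots,u_\ell$ (in a suitable order) plus possibly one extra edge — a near-Hamiltonian-path structure on the line crosses each cut at most twice. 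So the real work is: show that one can always choose the "closing" edges $F$ to form such a path-like structure consistent with the fixed $\QQ$. I expect the induction to carry this, with the length-$0$ paths being the only genuinely annoying case (they force a vertex to be an interior vertex of the $F$-path, consuming its full budget of $2$ at the adjacent cuts), and condition $(c)$ is then just the statement that $u_1$ is an endpoint of that $F$-path when its $\QQ$-path is nontrivial.
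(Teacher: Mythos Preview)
Your plan is essentially the paper's approach: induction on $\ell$, handling $u_1$ first by attaching it with one or two $F$-edges to the lowest-indexed available neighbours, then invoking the hypothesis on the shorter sequence; condition $(c)$ is exactly the inductive bookkeeping you identified, used to keep the bound at $2$ rather than $3$ across the cut at $j=1,2$.

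One mechanical difference is worth flagging. You propose to \emph{remove} $u_1$ together with its whole path $P_1$, recurse on $\QQ\setminus\{P_1\}$, and then splice $P_1$ into the resulting cycle. The paper instead \emph{adds} the edge(s) $\{u_1,u_2\}$ (and possibly $\{u_1,u_3\}$) up front, thereby merging $P_1$ into a longer path, and recurses on the resulting path family over $X\setminus\{u_1\}$ (or $X\setminus\{u_1,u_2\}$, etc.). This is cleaner: the merged family automatically satisfies the inductive hypothesis, and the crossing bound at each prefix cut follows directly from the hypothesis plus the one or two newly added short edges. Your splice-after-recursion variant would require you to locate an $F'$-edge of the recursive cycle near index $1$ to break open, and to argue that the two new edges you insert (one at $u_1$, one at $u_1'$) do not push any cut above $2$; since $u_1'$ can sit anywhere in the sequence, the edge incident to $u_1'$ could cross many cuts, and controlling that needs extra work you have not described. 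The paper sidesteps this entirely because it never detaches $u_1'$ from the instance.

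So: right induction, right invariant, right use of $(c)$; but swap ``remove $P_1$ then splice'' for ``extend $P_1$ by a short edge, then recurse on the merged paths''. With that change the case analysis (on $d_{\QQ}(u_1)\in\{0,1\}$ and $d_{\QQ}(u_2)\in\{0,1\}$) goes through exactly as in the paper.
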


\begin{proof}
We prove the lemma using induction on the length of the sequence $\ell$. 
By slightly abusing the notation, we also use ${\cal Q}$ as a subgraph of $H$ where each connected component is a path in the family ${\cal Q}$. 
Notice that for any vertex $u\in X$, $d_{{\cal Q}}(u)\in \{0,1\}$. 
Additionally,  notice that for any set $F\subseteq E(H)$ such that  $E({\cal Q})\cup F$ forms a cycle containing all the vertices of $V({\cal Q})$, if $d_{{\cal Q}}(u_1)=1$, then the number of edges in $F$  with $u_1$ as an endpoint is exactly $1$ (because the 
degree of each vertex in a cycle is $2$).  So to prove the lemma, it is enough to prove  
conditions $(a)$ and $(b)$ of the lemma.

%As mentioned before, we prove statement ${\mathscr A}$ using induction on the length of the sequence $\ell$. 
%The base case is when $\ell =2$. 
%%Consider the case when $\ell=2$. 
%Since   there is a subset $E'\subseteq E(X)=\{\{u_1,u_2\}\}$ with the property that $E({\cal Q})\cup E'$ forms a cycle in $H$,  the degree of any vertex in $X$  in ${\cal Q}$ is $1$. Then the set of edges $F=\{\{u_1,u_2\}\}$  
%is the required set which  satisfies  statement ${\mathscr A}$. 
The base case is when $\ell=3$. Towards the proof of the base case, suppose that ${\cal Q}=\{[u_1],[u_2],[u_3]\}$. Then,  the set of edges $F=\{\{u_1,u_2\},\{u_2,u_3\},\{u_3,u_1\}\}$ 
is a set as required to  satisfy  the lemma. Otherwise, ${\cal Q}=\{[xy],[z]\}$, where $\{x,y,z\}=\{u_1,u_2,u_3\}$. 
%there exist exactly two vertices in $X$ of degree 
%$1$ and exactly one vertex in $X$ of degree $0$.
% Let $u\in X$ be the vertex with degree $0$ in ${\cal Q}$ and let $\{x,y\}=X \setminus \{u\}$.  
Then, $F=\{\{x,z\},\{y,z\}\}$  is a set as required to  satisfy  the lemma.

Now, we consider the induction step. 
For this purpose, we assume the lemma for any sequence of length at most $\ell-1$, and consider a sequence of length  $\ell>3$.  The proof consist of three cases as follows, depending on the degrees of $u_1$ and $u_2$ in ${\cal Q}$.

\medskip
\noindent
{\bf Case 1: $d_{\cal Q}(u_1)=0$ and $d_{\cal Q}(u_2)=1$.} 
Let $A=\{\{u_1,u_2\}\}$. Let ${\cal Q}'=(V({\cal Q}),E({\cal Q})\cup A)$ and $X'=X\setminus \{u_2\}$. 
The subgraph ${\cal Q}'$ is a collection of vertex disjoint paths  in $H$ such that each $v\in X'$ is an end-vertex of a path in ${\cal Q}'$ and $d_{{\cal Q}'}(u_1)=1$. Thus, by induction hypothesis, there is a set $F'\subseteq E(X')$ such that $(i)$ 
$E({\cal Q}')\cup F'$ forms a cycle containing all the vertices of $V({\cal Q}')$,  $(ii)$ for any $i\in [\ell]\setminus \{1\}$, the 
number of edges in $F'$  with one endpoint in $\{u_1,\ldots,u_i\}\setminus \{u_2\}$ and the other in $\{u_{i+1},\ldots,u_{\ell}\}\setminus \{u_2\}$ is at most $2$,  and $(iii)$  the number of edges in $F'$  with $u_1$ as an endpoint   is exactly $1$.
We claim that $F=F'\cup \{\{u_1,u_2\}\} $ is the required set of edges. Since $E({\cal Q})\cup F=E({\cal Q}')\cup F'$, 
by $(i)$, $E({\cal Q})\cup F$ forms a cycle and it uses all the vertices in $V({\cal Q})$ because $\{u_1,u_2\}\in F$. 
For any $j\in [\ell]\setminus \{1\}$, the 
edges in $F$  with one endpoint in $\{u_1,\ldots,u_j\}$ and other in $\{u_{j+1},\ldots,u_{\ell}\}$ are also edges  in $F'$,  %and there is no edge in $F$ with endpoints $u_2$ and $u_r$ for some $r>2$,  
and $\{u_2,u_r\}\notin F'$ for all $r\in [\ell]$. Thus,  
by $(ii)$, condition $(b)$ of the statement holds for $j\in [\ell]\setminus \{1\}$. 
Lastly, notice that the number of edges in $F$  with $u_1$ as one endpoint  is exactly~$2$.

\medskip
\noindent
{\bf Case 2: $d_{\cal Q}(u_1)=d_{{\cal Q}}(u_2)=0$.} 
Let $A=\{\{u_1,u_2\},\{u_1,u_3\}\}$. 
Let ${\cal Q}'=(V({\cal Q}),E({\cal Q})\cup A)$.
% and $X'=X\setminus \{u_1\}$. 
%Now we have two sub-cases based on  whether $d_{{\cal Q}}(u_3)=0$ or not. 
%Notice that if $\ell=3$, then $d_{{\cal Q}}(u_3)=0$.  
%
%Suppose $d_{{\cal Q}}(u_3)=0$. 
%\smallskip
%\noindent
%{\bf Sub-case $1(a)$: $d_{{\cal Q}}(u_3)=0$.}
 %In this sub-case 
%Notice that $d_{{\cal Q}'}(u_2)=1$.  
The subgraph ${\cal Q}'$ is a collection of vertex disjoint paths  in $H$.  
Let $X'$ is the set of end-vertices of paths in ${\cal Q}'$.  Clearly, $X'\subseteq X$. 
Since the degree of $u_1$ in ${\cal Q}'$ is $2$, we have 
that $u_1\notin X'$ and hence $\vert X'\vert <\vert X\vert=\ell$. Since $d_{{\cal Q}}(u_2)=0$  
and only one edge in $A$ is incident with $u_2$, we have that $d_{{\cal Q}'}(u_2)=1$ and hence $u_2\in X'$. 
% such that each $v\in X'$ is an end-vertex of a path in ${\cal Q}'$ and $d_{{\cal Q}'}(u_2)=1$. 
More precisely, $X'=X\setminus \{u_1\}$ if $d_{{\cal Q}'}(u_3)=1$ (equivalently $d_{{\cal Q}}(u_3)=0$) 
and $X'=X\setminus \{u_1,u_3\}$ otherwise. 
%\hly{Even though the following arguments are correct irrespective of $u_3\in X'$ or not, for the ease of understanding we may assume that $X'=X\setminus \{u_1\}$}. 
Thus, by induction hypothesis, there is a set $F'\subseteq E(X')$ such that $(i)$ 
$E({\cal Q}')\cup F'$ forms a cycle containing all the vertices of $V({\cal Q}')$,  $(ii)$ for any $i\in [\ell]\setminus \{1\}$, the 
number of edges in $F'$  with one endpoint in $\{u_2,\ldots,u_i\}$ and the other in $\{u_{i+1},\ldots,u_{\ell}\}$ is at most $2$,  
and $(iii)$  the number of edges in $F'$  with $u_2$ as an endpoint   is exactly $1$.
We claim that $F=F'\cup \{\{u_1,u_2\},\{u_1,u_3\}\} $ is the required set of edges. Since $E({\cal Q})\cup F=E({\cal Q}')\cup F'$, 
by $(i)$, $E({\cal Q})\cup F$ forms a cycle and it uses all the vertices in $V({\cal Q})$ because $\{u_1,u_2\}\in F$. 
For any $j\in [\ell]\setminus \{1,2\}$, the 
edges in $F$  with one endpoint in $\{u_1,\ldots,u_j\}$ and the  other in $\{u_{j+1},\ldots,u_{\ell}\}$ are also edges  in $F'$, and hence  
by $(ii)$, condition $(b)$ of the statement holds for $j\in [\ell]\setminus \{1,2\}$. 
%By $(iii)$ and the fact that $F=F' \cup \{\{u_1,u_2\},\{u_1,u_3\}\}$, 
%
%By $(ii)$ and the fact that $F=F' \cup \{\{u_1,u_2\},\{u_1,u_3\}\}$, we have that for any $j\in [\ell]\setminus \{1,2\}$, the 
%number of edges in $F$  with one end point in $\{u_1,\ldots,u_j\}$ and other in $\{u_{j+1},\ldots,u_{\ell}\}$ is at most $2$. 
By $(iii)$ and the fact that $F=F' \cup \{\{u_1,u_2\},\{u_1,u_3\}\}$, 
% and $d_{{\cal Q}'}(u_2)=1$, 
we have that  the 
number of edges in $F$  with one endpoint in $\{u_1,u_2\}$ and the other in $\{u_{3},\ldots,u_{\ell}\}$ is exactly $2$.
Lastly, notice that the number of edges in $F$  with one endpoint $u_1$  is exactly $2$ 
(these edges are $\{u_1,u_2\}$ and $\{u_1,u_3\}$).

\medskip
\noindent
{\bf Case 3: $d_{\cal Q}(u_1)=1$.}  Let $P$ be the path in ${\cal Q}$ such that $u_1$ is its end-vertex and let $z$ be the other end-vertex of $P$. Let $x$ be the first vertex in $u_2,\ldots,u_{\ell}$ that is not equal to $z$. That is, $x=u_2$ if $z\neq u_2$ and $x=u_3$ if $z= u_2$. 
Let $A=\{\{u_1,x\}\}$ and ${\cal Q}'=(V({\cal Q}),E({\cal Q})\cup A)$. Notice that $d_{{\cal Q}'}(u_1)=2$ and $d_{{\cal Q}'}(x)\in \{1,2\}$.  
If $d_{{\cal Q}'}(x)=1$, then denote $X'=X\setminus \{u_1\}$, and otherwise denote $X'=X\setminus \{u_1,x\}$.  The subgraph ${\cal Q}'$ is a collection of vertex disjoint paths  in $H$ such that  each $v\in X'$ is an end-vertex of a path in ${\cal Q}'$.  
Thus, by induction hypothesis, there is a set $F'\subseteq E(X')$ such that $(i)$ 
$E({\cal Q}')\cup F'$ forms a cycle containing all the vertices of $V({\cal Q}')$,  $(ii)$ for any $i\in [\ell]$, the 
number of edges in $F'$  with one endpoint in $X'\cap \{u_1,\ldots,u_i\}$ and the other in $X'\cap \{u_{i+1},\ldots,u_{\ell}\}$ is at most $2$, 
and $(iii)$  if 
%$x=u_2$ and 
$u_2\in X'$, then  $d_{{\cal Q}'}(u_2)=1$ and the number of edges in $F'$  with $u_2$ as an endpoint   is exactly $1$. We claim that $F=F'\cup \{\{u_1,x\}\} $ is the required set of edges. 
Since $E({\cal Q})\cup F=E({\cal Q}')\cup F'$, 
by $(i)$, $E({\cal Q})\cup F$ forms a cycle and it uses all the vertices in $V({\cal Q})$ because $\{u_1,x\}\in F$. 
%By $(ii)$ and fact that 
Since $x\in \{u_2,u_3\}$ and $F=F'\cup \{\{u_1,x\}\}$, we have that for any $j\in [\ell]\setminus \{1,2\}$, the 
edges in $F$  with one endpoint in $\{u_1,\ldots,u_j\}$ and the other in $\{u_{j+1},\ldots,u_{\ell}\}$ are also edges in $F'$ and hence by $(ii)$,  condition $(b)$ holds for $j\in [\ell]\setminus \{1,2\}$. 
 If $x=u_2$, then the number of edge in $F$ with one endpoint in $\{u_1,u_2\}$ and the other in $\{u_3,\ldots,u_{\ell}\}$ is at most $1$ (because $d_{\cal Q}(u_1)=1$ and $\{u_1,u_2\}\in F$). 
If $x\neq u_2$, then $z=u_2$ and hence 
the number of edge in $F$ with one endpoint in $\{u_1,u_2\}$ and the other in $\{u_3,\ldots,u_{\ell}\}$ is exactly $2$.
% (because $d_{\cal Q}(u_1)=1$). 
%Since the only edge in $F$ incident with $u_2$ is $\{u_1,u_2\}$, 
% and $d_{{\cal Q}'}(u_2)=1$, 
%we have that  the 
%number of edges in $F$  with one endpoint in $\{u_1,u_2\}$ and other in $\{u_{3},\ldots,u_{\ell}\}$ is exactly $1$.
So, condition $(b)$ holds for $j=2$. 
Lastly, notice that the number of edges in $F$  with $u_1$ as an endpoint   is exactly $1$ (because $d_{\cal Q}(u_1)=1$).  

This completes the proof of of the lemma. 
\end{proof}

Next, we move to a formal proof of Lemma~\ref{lem:cyclecrossone}. For the convenience of the reader we restate the lemma. 

\lemsubcrosscycle*

\begin{proof}[Proof of Lemma~\ref{lem:cyclecrossone}]
Without loss of generality, assume that $K\subseteq V(C)$. Otherwise, we can consider the statement of 
the lemma for cycle $C$ in the graph $G'=G-(K\setminus V(C))$ and special clique $K \cap V(C)$ of $G'$.  
We also assume that $E(C)\cap E(K) \neq \emptyset$, else the correctness is trivial because we can take $C'$ as $C$. 
%So from now onwards we assume that  $E(C)\cap E(K) \neq \emptyset$. 

%Let $v_1,\ldots,v_{k'}$ be the vertices of $K$ such that each $v_i,i\in [k']$ is appeared in $\fake(t)$ for some node $t\in V(T_{\sTT'})$.  
Recall that $T=T_{\cal D}=T_{{\cal D}'}$. 
Let $\pi'$ be a postorder transversal of the  nodes in the rooted binary tree $T$, and let 
$\pi$ be the restriction of $\pi'$ where we only keep the nodes that are labeled with {\introducefakevertex}$(v)$ for some $v\in K$. Denote $\pi=t_1,\ldots,t_{k''}$ such that  each $t_i$, $i\in [k'']$, is labelled with    {\introducefakevertex}$(x_i)$ where $x_i\in K$. Notice that $\bigcup_{t\in V(T)}\fake(t)\cap K=\{x_1,\ldots,x_{k''}\}$ 
(by Observation~\ref{obs:fakeprop}). Let $\sigma_1$ be the sequence $x_1,\ldots,x_{k''}$ and $U=\{x_1,\ldots,x_{k''}\}$. 
Let $\sigma_2$ be a fixed arbitrary sequence of $K\setminus U$, i.e., all the vertices of $K$ that are never ``fakely introduced''. 
Let $\sigma$ be the sequence which is a concatenation of $\sigma_1$ and $\sigma_2$. 

Let ${\cal P}=(V(C),E(C)\setminus E(K))$. That is, ${\cal P}$ is the graph obtained by deleting edges of $E(K)$ from the cycle $C$. Notice that each connected component of ${\cal P}$ is a path (may be of length $0$) with end-vertices in $K$.
Let $Z$ be the set of end-vertices of the paths in ${\cal P}$. Notice that 
for any vertex $u\in K\setminus Z$, both edges of $C$ incident with $u$ are from $E(C)\setminus E(K)$ (see the left part of Figure~\ref{figureK}). That is, $E(C)\setminus E(K)=E(C)\setminus E(Z)=E({\cal P})$.  
Since we seek a cycle $C'$ in which $E(C')\setminus E(K)=E(C)\setminus E(K)$, 
no edge of $C'$ incident with $u$ for any vertex $u\in K\setminus Z$, is in $E(K)$. That is, all the edges of $E(C')\cap E(K)$ will belong to $E(Z)$. 
%, are counted towards the conclusion of the lemma.  
This leads to the  following simple observation.

\begin{observation}
\label{obs:Zenough}
Let $C'$ be a cycle in $G$ such that $E(C')\setminus E(K)=E(C)\setminus E(K)$
and $t\in V(T)$.  The number of edges of $E(C')\cap E(K)$ with one endpoint in 
$\fake(t)\cap K$ and the other in $V(G)\setminus \gamma_{\D'}(t)$ is equal to the  number of edges of $E(C')\cap E(Z)$ with one endpoint in $\fake(t)\cap Z$ and the other in $V(G)\setminus \gamma_{\D'}(t)$.
\end{observation}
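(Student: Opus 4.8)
The plan is to reduce the statement to the single set identity $E(C')\cap E(K)=E(C')\cap E(Z)$; once this is established, the equality of the two edge counts is a matter of bookkeeping, since $Z\subseteq K$ and every edge of $E(Z)$ has both endpoints in $Z$.

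First I would recall how $Z$ was defined: $Z$ is the set of endpoints of the paths of ${\cal P}=(V(C),E(C)\setminus E(K))$, so for a vertex $u\in K$ we have $u\in K\setminus Z$ precisely when $u$ has degree $2$ in ${\cal P}$, i.e.\ when both edges of $C$ incident with $u$ lie in $E(C)\setminus E(K)$ (using the standing assumption $K\subseteq V(C)$). Next I would take an arbitrary $e=\{u,v\}\in E(C')\cap E(K)$ and show $u,v\in Z$, arguing by contradiction. Suppose $u\in K\setminus Z$. Since $e$ is an edge of the cycle $C'$ incident with $u$, the vertex $u$ has degree exactly $2$ in $C'$. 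The two edges of $C$ at $u$ both lie in $E(C)\setminus E(K)=E(C')\setminus E(K)\subseteq E(C')$, so they are two distinct edges of $C'$ incident with $u$; being the only two edges of $C'$ at $u$, they exhaust its incidences, and neither of them lies in $E(K)$ — contradicting $e\in E(K)$. Hence $u\in Z$, and symmetrically $v\in Z$, so $e\in E(Z)$. This gives $E(C')\cap E(K)\subseteq E(C')\cap E(Z)$, and the reverse inclusion is immediate from $E(Z)\subseteq E(K)$.

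Finally I would match the two counts. Take an edge $e=\{a,b\}$ with $e\in E(C')\cap E(K)$, $a\in\fake(t)\cap K$ and $b\in V(G)\setminus\gamma_{\D'}(t)$. By the previous paragraph $e\in E(C')\cap E(Z)$, hence $a,b\in Z$, so $a\in\fake(t)\cap Z$ and the same edge is counted on the right-hand side. Conversely, an edge $e=\{a,b\}$ with $e\in E(C')\cap E(Z)$, $a\in\fake(t)\cap Z$ and $b\in V(G)\setminus\gamma_{\D'}(t)$ satisfies $e\in E(C')\cap E(K)$ (as $E(Z)\subseteq E(K)$) and $a\in\fake(t)\cap K$, so it is counted on the left-hand side. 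The two sets of edges thus coincide and the equality follows.

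I do not anticipate a genuine obstacle. The only points requiring a little care are the degree argument — that $u$ really has degree $2$ in $C'$ and that the two ${\cal P}$-edges at $u$ are distinct and genuinely present in $C'$ — and the observation that, once an edge of $E(C')\cap E(K)$ is known to have both endpoints in $Z$, the side conditions ``$\cdot\in K$'' and ``$\cdot\in Z$'' become interchangeable.
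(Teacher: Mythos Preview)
Your proof is correct and follows exactly the reasoning the paper sketches in the paragraph preceding the observation: from $E(C')\setminus E(K)=E(C)\setminus E(K)$ and the definition of $Z$, every edge of $E(C')\cap E(K)$ must have both endpoints in $Z$, so $E(C')\cap E(K)=E(C')\cap E(Z)$ and the two counts coincide. The paper treats the observation as immediate from that one sentence, whereas you spell out the degree-$2$ argument and the final bookkeeping explicitly, but the substance is the same.
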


  \begin{figure}[t]
   \centering
\begin{tikzpicture}[scale=1.4]
%\node at (2.5,0) {$X_{\ell}$};
%\draw (0,0) ellipse (2.2cm and 0.5cm);

\node[draw, circle, scale=0.7] (a1) at (-1.8,0) {{$v_1$}};
\node[draw, circle, scale=0.7] (a2) at (-1.1,0) {{$v_2$}};
\node[draw, circle, scale=0.7] (a3) at (-0.4,0) {{$v_3$}};
\node[draw, circle, scale=0.7] (a4) at (0.3,0) {{$v_4$}};
\node[draw, circle, scale=0.7] (a5) at (1,0) {{$v_5$}};
\node[draw, circle, scale=0.7] (a6) at (1.7,0) {{$v_6$}};
\node[draw, circle, scale=0.7] (a7) at (2.4,0) {{$v_7$}};

%\draw[thick,red] (a3)--(a4);
%\draw[thick,red] (a1)--(a2);
\draw [thick,green] (a2) to[out=-80,in=180] (-0.65,-0.5) to [out=0, in=-90]  (a3);
\draw [thick,green] (a1) to[out=-95,in=180] (0,-1.2) to [out=0, in=-90]  (a7);

\draw [thick,green] (a5) to[out=-60,in=180] (1.45,-0.5) to [out=0, in=-90]  (a6);

\draw [thick,green] (a5) to[out=-80,in=180] (1.65,-0.75) to [out=0, in=-120]  (a7);

\draw [thick,red] (a3) to[out=60,in=180] (0.5,0.5) to [out=0, in=120]  (a6);

\draw [thick,red] (a1) to[out=-80,in=180] (-0.5,-0.85) to [out=0, in=-90]  (a4);

\draw [thick,red] (a2) to[out=-100,in=180] (-0.5,-0.65) to [out=0, in=-100]  (a4);
% \draw [thick, red] (a4) to[out=85,in=180] (0.3,2) to [out=0,in=140] (a5);
% \draw[thick, red] (a5) to [out=10,in=180] (1.5,2.2) to [out=0, in=90] (a6);
 %\draw [thick, red](a2) to[out=95,in=0] (-1.5,2) to [out=180,in=90] (a1);

%\draw (5,0) ellipse (2.2cm and 0.5cm);
\node[draw, circle, scale=0.7] (b1) at (3.2,0) {{$v_1$}};
\node[draw, circle, scale=0.7] (b2) at (3.9,0) {{$v_2$}};
\node[draw, circle, scale=0.7] (b3) at (4.6,0) {{$v_3$}};
\node[draw, circle, scale=0.7] (b4) at (5.3,0) {{$v_4$}};
\node[draw, circle, scale=0.7,fill=yellow] (b5) at (6,0) {{$v_5$}};
\node[draw, circle, scale=0.7] (b6) at (6.7,0) {{$v_6$}};
\node[draw, circle, scale=0.7,fill=yellow] (b7) at (7.4,0) {{$v_7$}};

%\draw[thick,red] (b3)--(b4);
%\draw[thick,red] (b1)--(b2);
\draw [thick,green] (b2) to[out=-80,in=180] (4.35,-0.5) to [out=0, in=-90]  (b3);
\draw [thick,green] (b1) to[out=-80,in=180] (5,-1) to [out=0, in=-90]  (b7);

\draw [thick,green] (b5) to[out=-60,in=180] (6.45,-0.5) to [out=0, in=-90]  (b6);

\draw [thick,green] (b5) to[out=-80,in=180] (6.65,-0.75) to [out=0, in=-120]  (b7);

\draw [thick,blue] (b1) -- (b2);
\draw [thick,blue] (b3) -- (b4);
\draw [thick,blue] (b4) to[out=80,in=180] (6,0.5) to [out=0, in=90]  (b6);

%\draw [thick,red] (b4) to[out=60,in=180] (6,0.5) to [out=0, in=120]  (b6);

\end{tikzpicture}
\caption{Left part illustrates a cycle $C$ interacting with a special clique $K=\{v_1,\ldots,v_7\}$. The red curves represent edges in $E(C)\cap E(K)$ and green curves represent paths in $C$ with endpoints in $K$ and (at least one) internal vertices in $V(G)\setminus K$. Thus, ${\cal P}$ is the collection of paths that is a union of  the set of two ``green'' paths ($(v_2-v_3)$ and $(v_1-v_7-v_5-v_6)$) and $\{[v_4]\}$. 
%Illustration of ${\cal P}$ is on the right. 
%
Here $Z=\{v_1,v_2,v_3,v_4,v_6\}$. Any edge of $E(C)$ incident with $v_5$ and $v_7$ (i.e., vertices in $K\setminus Z$) are from $E(C)\setminus E(K)$. 
The right part illustrates the proof of Claim~\ref{claim:lesscross}. The edges of $E(C')\setminus E(C)$ mentioned in the proof of Claim~\ref{claim:lesscross} are colored blue.}
\label{figureK}
 \end{figure}
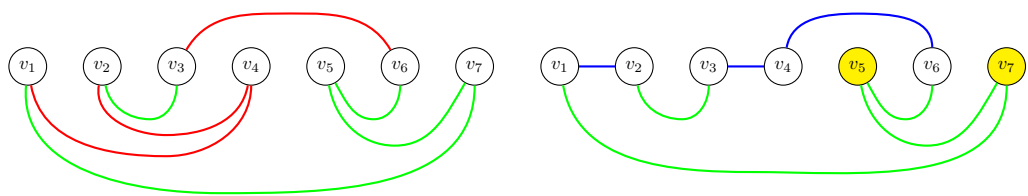

%That is, 
%to prove that  for any $t\in V(T_{\sTT'})$, the number of edges of $E(C')\cap E(K)$, with one end point in 
%$\fake(t)\cap K$ and other in $V(G)\setminus 
%\gamma_{\sTT'}(t)$ is at most $4$, ${\bf (a)}$
%it is enough to prove that the 
%number of edges of $E(C')\cap E(Z)$, with one end point in 
%$\fake(t)\cap Z$ and other in $V(G)\setminus 
%\gamma_{\sTT'}(t)$ is at most $4$.

Let $Z=\{v_1,\ldots,v_{k'}\}$ and $\sigma'=\sigma|_Z=v_1,\ldots,v_{k'}$.
%
% 
%We consider ${\cal P}$ as a collection of paths. Let $Z\subseteq U$ such that for each path $z\in Z$, $z$ is an end vertex of a path $P\in {\cal P}$ (if $P=v$ for some $v$, then $v$ is an end vertex of $P$).   
%So, we will construct a cycle $C'$ using all the paths in ${\cal P}$ and has the required property. 
%Since all the edges in ${\cal P}$ are from $E(G)\setminus E(U)$, to prove that 
% for any $t\in V(T_{\sTT'})$, the number of edges of $E(C')\cap E(U)$, with one end point in 
%$\fake(t)\cap K$ and other in $V(G)\setminus 
%\gamma_{\sTT'}(t)$ is at most $4$, 
% it is enough to prove that the 
%number of edges of $E(C')\cap E(Z)$, with one end point in 
%$\fake(t)\cap Z$ and other in $V(G)\setminus 
%\gamma_{\sTT'}(t)$ is at most $4$.
%Notice that any cycle which uses all the paths in ${\cal P}$ is of length 
%$\vert C\vert$. 
%Let $\sigma'$ be sequence of vertices on $Z$, obtained by restricting $\sigma$ to $Z$ and let $k'=\vert Z\vert$. 
%
%Let $\sigma'=v_1,\ldots,v_{k'}$.   
The main ingredients of the proof  are 
the following two claims.

\begin{claim}
\label{claim:lesscross}
There is a cycle $C'$ of the same length as $C$ such that $(i)$ $E(C')\setminus E(Z)=E(C)\setminus E(Z)$,  and $(ii)$ for any $j\in [k']$, the number of edges of $E(C')\cap E(Z)$ with one endpoint in 
$\{v_1,\ldots,v_j\}$ and the other in $\{v_{j+1},\ldots,v_{k'}\}$ is at most $2$.  
\end{claim}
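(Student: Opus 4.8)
The plan is to obtain Claim~\ref{claim:lesscross} as a direct corollary of Lemma~\ref{lem:stamentA}, applied to the clique $Z$, after disposing of one trivial boundary case. First I would record the structural facts that make Lemma~\ref{lem:stamentA} applicable: since $Z\subseteq K$ and $K$ is a (special) clique of $G$, the set $Z$ is itself a clique of $G$; since ${\cal P}=(V(C),E(C)\setminus E(K))$ is obtained from the cycle $C$ by deleting exactly the edges lying inside $K$, it is a family of pairwise vertex-disjoint paths (some possibly of length $0$) whose vertex set is $V(C)$ and whose set of end-vertices is precisely $Z$; and $E({\cal P})\cap E(Z)=\emptyset$, because $E({\cal P})=E(C)\setminus E(K)$ while $E(Z)\subseteq E(K)$. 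These are exactly the hypotheses of Lemma~\ref{lem:stamentA} with $H=G$, with the clique $X=Z$ equipped with the sequence $v_1,\ldots,v_{k'}$ (recall $\sigma'=\sigma|_Z=v_1,\ldots,v_{k'}$), and with ${\cal Q}={\cal P}$, provided $k'\geq 3$.

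For $k'\geq 3$, Lemma~\ref{lem:stamentA} yields a set $F\subseteq E(Z)$ such that $E({\cal P})\cup F$ forms a cycle $C'$ on the vertex set $V({\cal P})=V(C)$, and such that, for every $j\in[k']$, at most two edges of $F$ have one endpoint in $\{v_1,\ldots,v_j\}$ and the other in $\{v_{j+1},\ldots,v_{k'}\}$. I would then verify that this $C'$ meets the claim: it has the same length as $C$ because $|V(C')|=|V(C)|$; from $E(C')=E({\cal P})\cup F$ with $F\subseteq E(Z)$ and $E({\cal P})\cap E(Z)=\emptyset$ we get $E(C')\setminus E(Z)=E({\cal P})$, and since every edge of $E(C)$ lying inside $K$ has both endpoints in $Z$ we also get $E(C)\setminus E(Z)=E(C)\setminus E(K)=E({\cal P})$, which gives property $(i)$; and $E(C')\cap E(Z)=F$, so property $(ii)$ is precisely conclusion $(b)$ of Lemma~\ref{lem:stamentA}. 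It remains to treat $k'\leq 2$, which (since we have assumed $E(C)\cap E(K)\neq\emptyset$) occurs only when $E(C)\cap E(K)$ is a single edge $\{v_1,v_2\}$; here I would simply take $C'=C$, so that $E(C')\cap E(Z)=\{\{v_1,v_2\}\}$ and property $(ii)$ holds trivially, a single edge crossing at most one prefix cut of a two-term sequence.

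The step that will take the most care is the bookkeeping tying together $E(C)\cap E(K)$, $E(C)\cap E(Z)$, and the set of edges deleted to form ${\cal P}$, together with the verification that the hypotheses of Lemma~\ref{lem:stamentA} hold verbatim --- in particular that \emph{every} vertex of $Z$ really is an end-vertex of some path of ${\cal P}$ (including vertices that become isolated length-$0$ paths) and that no path of ${\cal P}$ uses an edge of $E(Z)$. Once these are in place, the rest is a mechanical translation of the conclusion of Lemma~\ref{lem:stamentA}, since the genuinely combinatorial work has already been front-loaded into that lemma.
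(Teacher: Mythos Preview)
Your proposal is correct and follows essentially the same approach as the paper's own proof: handle the trivial case $k'\leq 2$ by taking $C'=C$, and for $k'\geq 3$ apply Lemma~\ref{lem:stamentA} with $X=Z$, the sequence $v_1,\ldots,v_{k'}$, and ${\cal Q}={\cal P}$, then read off properties $(i)$ and $(ii)$ from the conclusions of that lemma. Your write-up is in fact slightly more explicit than the paper's in checking the identity $E(C)\setminus E(Z)=E(C)\setminus E(K)=E({\cal P})$ and in verifying the hypotheses of Lemma~\ref{lem:stamentA}.
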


\begin{proof}
%Let ${\cal P}=(V(C),E(C)\setminus E(U))$. Notice that each connected component is a path (may be of length $0$) with end vertices in $U$. 
%We consider ${\cal P}$ as a collection of paths. Let $Z\subseteq U$ such that for each path $z\in Z$, $z$ is an end vertex of a path $P\in {\cal P}$ (if $P=v$ for some $v$, then $v$ is an end vertex of $P$).   
%Notice that any cycle which uses all the paths in ${\cal P}$ is of length 
%$\vert C\vert$. So, we will make a cycle $C'$ using all the paths in ${\cal P}$ and has the required property. 
%%First notice that, since 
%%$Z$ forms a clique in $G$, $(i)$ for any subset of  edges $F\subseteq E(Z)$, if $E({\cal P})\cup E(Z)$ is acyclic, then there is a subset $F'\subseteq E(Z)\setminus F$ such that $E({\cal P})\cup F \cup F'$ forms a cycle of length $\vert C \vert$.  
%Let $\sigma'$ be sequence of vertices on $Z$, obtained by restricting $\sigma$ of $Z$ and $r=\vert Z\vert$.  
%Since all the edges in ${\cal P}$ are from $E(G)\setminus E(U)$, to prove that 
%for any $j\in [k']$, the number of edges of $E(C)\cap E(U)$, with one end point in $\{v_1,\ldots,v_j\}$ and other in $\{v_{j+1},\ldots,v_{k'}\}$ is at most $2$, it is enough to prove that 
%for any $j\in [r]$, the number of edges of $E(C)\cap E(Z)$, with one end point in 
%$\{v_1,\ldots,v_j\}$ and other in $\{v_{j+1},\ldots,v_{r}\}$ is at most $2$.  
%Towards that we prove the following 
%statement using induction on the length of the sequence. 
Clearly when $k'\leq 2$, $\vert E(Z)\vert\leq 1$ and $C'=C$ satisfies the conditions of the claim. 
%the claim is true\todo{more explanation}. 
To prove the claim for $k'\geq 3$, 
%Towards the proof of the claim 
%The claim follows from 
%we first prove 
%the following statement 
%%which we prove 
%using 
%induction on the length of the sequence.
we apply Lemma~\ref{lem:stamentA}.  
Recall that $\{v_1,\ldots,v_{k'}\}=Z\subseteq K$ and hence $Z$ forms a clique in $G$.  Additionally, recall that ${\cal P}$ is a collection of paths 
such that $Z$ is the set of end-vertices of the paths in ${\cal P}$. 
%Moreover the ${\cal P}$ can be extended to cycle $C$ using edges from $E(Z)$. 
Thus, we apply Lemma~\ref{lem:stamentA} for the sequence $v_1,\ldots,v_{k'}$ of vertices in $G$ and 
family of paths ${\cal P}$. Then, by Lemma~\ref{lem:stamentA}, there is a subset $F\subseteq E(Z)$ such that 
$(a)$ $E({\cal P})\cup F$ forms a cycle $C'$ containing all the vertices of $V({\cal Q})$ and $(b)$ for any $j\in [k']$, the number of edges in $F$ with one endpoint in $\{v_1,\ldots,v_j\}$ and the other in $\{v_{j+1},\ldots,v_{k'}\}$ is at most $2$. Since $V(C)=V({\cal P})=V(C')$, the lengths of cycles $C'$ and $C$ are same. 
Because of statement $(a)$ and $E({\cal P})=E(C)\setminus E(K)=E(C)\setminus E(Z)$, we have that $E(C')\setminus E(Z)=E(C)\setminus E(Z)$. Finally, condition $(ii)$ in the claim follows from statement $(b)$. This completes the proof of the claim.
\end{proof}

%Recall that $K$ is a special clique in $G$ and $Z=\{v_1,\ldots,v_{k'}\}\subseteq \bigcup_{t\in V(T_{\sTT'})}\fake(t)\cap K$. 

%Recall $Z\su$

Recall that for a sequence $\sigma'=u_1u_2\ldots u_{\ell}$ and any $1\leq i\leq j\leq \ell$, the sequence 
$\sigma'' =u_i\ldots u_j$  is called a segment of $\sigma'$.

\begin{claim}
\label{claim:segment}
For any node $t\in V(T)$, there is a segment $\sigma''$ of $\sigma'$ such that 
each vertex in $\fake(t)\cap Z$ appears in $\sigma''$,  and each vertex in 
%$Z\setminus \fake(t)$, 
$Z\setminus \gamma_{\D'}(t)$
does not appear in $\sigma''$.  
\end{claim}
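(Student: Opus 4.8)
The plan is to realise the required segment as the restriction of $\sigma'$ to an explicitly chosen subset $D\subseteq Z$, and to see that this restriction is indeed a segment by invoking the contiguity property of postorder traversals. For the setup, for each $i\in[k'']$ write $t_i$ for the node labelled {\introducefakevertex}$(x_i)$ in $\D'$. By the labelling rules together with Observation~\ref{obs:onlyoneforget}, $t_i$ is precisely the unique node of $T$ labelled {\bf forget}$(x_i)$ by $\D$; since a forget node of the nice tree decomposition $\D$ has a single child $c_i$ with $x_i\in\beta_{\D}(c_i)$, we have $x_i\in\gamma_{\D}(t_i)$. I will also use the trivial inclusion $\gamma_{\D}(t)\cap V(G)\subseteq\gamma_{\D'}(t)$ for every node $t$, which holds since $\beta_{\D'}(t')\supseteq\beta_{\D}(t')\cap V(G)$ for every node $t'$.

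Fix $t\in V(T)$ and set $D=\{x_i : x_i\in Z\text{ and }t_i\text{ lies in the subtree of }T\text{ rooted at }t\}$, and take $\sigma''=\sigma'|_D$. First I would verify that $\sigma''$ is a segment of $\sigma'$. Since $\pi'$ is a postorder traversal of the binary tree $T$, the nodes of the subtree rooted at $t$ occupy a contiguous block of $\pi'$; deleting from $\pi'$ all entries that are not fake-introduce nodes of $K$-vertices yields $\pi=t_1,\ldots,t_{k''}$, and a contiguous block of $\pi'$ stays contiguous in $\pi$, so $\{i : t_i\text{ lies in the subtree rooted at }t\}$ is an interval of consecutive indices. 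Hence $\{x_i : t_i\text{ lies in the subtree rooted at }t\}$ is a contiguous segment of $\sigma_1$, and since $\sigma'=\sigma|_Z=(\sigma_1|_Z)(\sigma_2|_Z)$ with every element of $D$ lying in the prefix $\sigma_1|_Z$, the set $D$ occurs as a segment of $\sigma'$; that is, $\sigma''$ is a segment of $\sigma'$. (If $D=\emptyset$ then $\fake(t)\cap Z=\emptyset$ by the next paragraph and the empty segment suffices; this degenerate case is not needed in the application, where the claim is used only when $\fake(t)\cap Z\neq\emptyset$.)

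It remains to check the two required properties. For the first, let $v\in\fake(t)\cap Z$; since $Z\subseteq K$ we have $v\in\fake(t)\cap K\subseteq U$ (recall $U=\bigcup_{t'\in V(T)}\fake(t')\cap K$), so $v=x_i$ for some $i$, and by parts $(i)$ and $(ii)$ of Observation~\ref{obs:fakeprop} the node $t$ is an ancestor of $t_i$ or equal to $t_i$; thus $t_i$ lies in the subtree rooted at $t$, so $x_i\in D$ and $v$ appears in $\sigma''$. For the second, suppose $v$ appears in $\sigma''$, i.e. $v=x_i\in D$ with $t_i$ in the subtree rooted at $t$; then $\gamma_{\D}(t_i)\subseteq\gamma_{\D}(t)$, so $x_i\in\gamma_{\D}(t)$, and as $x_i\in V(G)$ this gives $x_i\in\gamma_{\D}(t)\cap V(G)\subseteq\gamma_{\D'}(t)$, whence $v\notin Z\setminus\gamma_{\D'}(t)$. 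This proves the claim.

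The step I expect to demand the most care is the contiguity observation in the second paragraph — that the fake-introduce nodes of $K$-vertices lying inside the subtree rooted at $t$ form an interval of $\sigma_1$. This is exactly where the specific choice of $\sigma_1$ as an order induced by a postorder traversal (rather than an arbitrary ordering of $U$) is used, in combination with Observation~\ref{obs:fakeprop}, which ties every vertex of $\fake(t)$ to a fake-introduce node sitting inside the subtree rooted at $t$; without both ingredients $\fake(t)\cap Z$ need not fit inside any short segment of $\sigma'$, and it is precisely this segment structure that feeds into the sublinear crossing bound via Claim~\ref{claim:lesscross}. All remaining verifications are direct consequences of the labelling rules and of Property~$(c)$ of tree decompositions.
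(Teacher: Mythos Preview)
Your proof is correct and follows essentially the same approach as the paper: define the candidate segment via the set of $K$-vertices whose fake-introduce node lies in the subtree rooted at $t$, use the postorder property to get contiguity in $\sigma_1$ (hence in $\sigma'$), invoke Observation~\ref{obs:fakeprop} for the inclusion $\fake(t)\cap Z\subseteq D$, and use $x_i\in\gamma_{\D}(t_i)\subseteq\gamma_{\D}(t)$ for the exclusion of $Z\setminus\gamma_{\D'}(t)$. Your presentation is in fact slightly more explicit than the paper's on the last point (the paper just asserts ``clearly, no vertex in $Z\setminus\gamma_{\D'}(t)$ is in $V_t$''), and you avoid a small notational slip the paper makes when it speaks of ``the minimal segment of $\pi$ that contains all the nodes in $V_t$'' even though $V_t$ consists of vertices of $K$, not nodes of $T$.
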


\begin{proof}
%First we define some notations. 
Fix a node $t\in V(T)$. 
Recall that $\sigma=\sigma_1\sigma_2$  and $\sigma'=\sigma|_Z$.
% is the restriction of concatenation of $\sigma_1$ and $\sigma_2$ to $Z$. 
Here, the set of vertices present in $\sigma_1$  is  $U=(\bigcup_{t\in V(T)}\fake(t)\cap K) \supseteq (\bigcup_{t\in V(T)}\fake(t)\cap Z)$  (because $Z\subseteq K$), and 
no vertex in $\sigma_2$ is from $U$. 
This implies that all the vertices of $\fake(t)\cap Z$ are in the sequence $\sigma_1$.  
That is, the sequence $\sigma''$ we seek is also a sequence of $\sigma_1|_Z$ and this is the reason we defined 
$\sigma$ to be $\sigma_1\sigma_2$.  
%This implies that   $\sigma'=\sigma|_{Z}=\sigma_1|_{Z}$.
Thus, to prove the claim it is enough to prove that 
%for any $t\in V(T_{\D})$, 
there is a segment $\sigma_1'$ of $\sigma_1|_Z$ such that 
each vertex in $\fake(t)\cap Z$ appears in $\sigma_1'$ 
and each vertex in 
%$Z\setminus \fake(t)$, 
$Z\setminus \gamma_{\D'}(t)$ 
does not appear in $\sigma_1'$.

Recall that $\sigma_1=x_1\ldots x_{k''}$ is obtained from the sequence $\pi=t_1,\ldots,t_{k''}$. In turn, recall that $\pi$ is the restriction of 
the postorder transversal $\pi'$  of $T$, where for each $i\in [k'']$, $t_i$ is labelled with {\introducefakevertex}$(x_i)$ for $x_i\in K$. 
%For any $t\in V(T_{\D'})$, 
Let $W_t$ be the nodes of  the subtree of $T$ rooted at $t$,  
and $$V_t=\{v\in K\colon \mbox{there is  $t'\in W_t$ such that $t'$ is labelled with {\introducefakevertex}}(v)\}.$$ 

%For each $t\in V(T_{\D'})$, 
The vertices in $W_t$ appear consecutively in $\pi$. Thus, we can let $\pi_t$ be the minimal segment of $\pi$ that  
contains all the nodes in $V_t$.  
%For any $t\in V(T)$, $\pi_t$ be the restriction of $\pi'_t$ on $W_t\cap \{t_1,\ldots,t_k'\}$. 
%For any $t\in V(T_{\D'})$, 
Let $i,j\in [k'']$ be such that $\pi_t=t_i,\ldots,t_j$. Now, we define $\sigma_t$ be the segment $x_i,\ldots,x_j$ of $\sigma_1$.   
Now we prove the claim. By conditions $(i)$ and $(ii)$ in Observation~\ref{obs:fakeprop},  $\fake(t)\cap Z\subseteq V_t$.  Clearly, no vertex in $Z\setminus \gamma_{\D'}(t)$  is in $V_t$. 
%and $(Z\setminus \fake(t))\cap V_t=\emptyset$. 
This implies that each vertex in $\fake(t)\cap Z$ appears in 
$\sigma_t$ and no vertex from $Z\setminus \gamma_{\D'}(t)$  appears in $\sigma_t$. In turn, this implies that $\sigma_t|_Z$ 
is the required segment $\sigma''$ of $\sigma'=\sigma|_Z$. 
%=\sigma_1|_Z$. 
% (Notice that any vertex of $Z\setminus \fake(t)$ which does not belongs to $V_t$ is not part of $\sigma_1$; it will be in $\sigma_2$). Since $\sigma'_1$ is a restriction 
%of $\sigma_1$ on $Z$, the claim follows. 
\end{proof}

Now, having the above two claims, we are ready to prove the lemma. 
By Claim~\ref{claim:lesscross}, we have that there is a cycle $C'$ such that 
$(i)$ $E(C')\setminus E(Z)=E(C)\setminus E(Z)$, and $(ii)$ for any $j\in [k']$, the number of edges of $E(C)\cap E(Z)$ with one endpoint in $\{v_1,\ldots,v_j\}$ and other in $\{v_{j+1},\ldots,v_{k'}\}$ is at most $2$. By Claim~\ref{claim:segment}, we know that for any $t\in V(T)$, 
there is a segment $\sigma''$ of $\sigma'$ such that 
all vertices in $\fake(t)\cap Z$ appear in a segment $\sigma''$ and no vertex from 
%$Z\setminus \fake(t)$ 
$Z\setminus \gamma_{\D'}(t)$ appears in $\sigma''$. That is, there exist $i,j\in [k']$ such that 
$\fake(t)\cap Z\subseteq \{v_i,\ldots,v_j\}$ and $(Z\setminus \gamma_{\D'}(t))\cap \{v_i,\ldots,v_j\}=\emptyset$. 
 Therefore, by $(ii)$, 
 %for any $t\in V(T_{\D'})$, 
 the number of edges of $E(C')\cap E(Z)$ with one endpoint in $\fake(t)\cap Z$ and the other in $V(G)\setminus \gamma_{\D'}(t)$ is at most $4$.  Then, by Observation~\ref{obs:Zenough}, the proof of the lemma is complete. 
\end{proof}

Now we are ready to prove Lemma~\ref{lem:cyclecross}.  We restate the lemma below. 

\subcrosslem*

%\lemcyclecross*

\begin{proof}[Proof of Lemma~\ref{lem:cyclecross} ]
Let $S(G)=\{s_1,\ldots,s_{\ell}\}$, and let $K_i=N_B(s_i)$ for all $i\in [\ell]$.  For any $i\in [\ell]$, 
let $Z_i=\bigcup_{j\in [i]} K_j$ and $F_i=\bigcup_{j\in [i]} E(K_j)$. 
We remind that  $T=T_{\D}=T_{\D'}$. 
Towards the proof of the lemma we first 
prove the following claim using induction on $i$. 
\begin{claim}
\label{claim:inductionfakeedges}
Let $S$ be a cycle 
in $G$. Then, for any $i\in [\ell]$, there is a cycle $S_i$ of the same length as $ S$ such that  
$E(S_i)\setminus F_i= E(S)\setminus F_i$, and for any $t\in V(T)$, the 
number of edges in $E(S_i)\cap F_i$ with one endpoint in $\fake(t)\cap Z_i$ and the other in $V(G)\setminus 
\gamma_{\D'}(t)$ is at most $4r$ where $r= \vert \beta_{\D}(t)\cap \{s_1,\ldots,s_i\}\vert$. 
%(i.e, $\fake(t)$ contain vertices from at most $r$ special cliques in $\{K_1,\ldots,K_i\}$). 
\end{claim}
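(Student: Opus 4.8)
The claim is a statement to be proved by induction on $i \in [\ell]$, building up the collection of special cliques one at a time. The base case $i=1$ is exactly Lemma~\ref{lem:cyclecrossone}: apply it to the cycle $S$ and the single special clique $K_1 = Z_1$ to obtain $S_1$ with $E(S_1) \setminus E(K_1) = E(S) \setminus E(K_1)$, and such that for every node $t$, the number of edges of $E(S_1) \cap E(K_1)$ with one endpoint in $\fake(t) \cap K_1$ and the other outside $\gamma_{\D'}(t)$ is at most $4$. Since $F_1 = E(K_1)$ and $Z_1 = K_1$, and since $r = |\beta_{\D}(t) \cap \{s_1\}| \in \{0,1\}$, the bound $4 \le 4r$ holds when $s_1 \in \beta_{\D}(t)$; when $s_1 \notin \beta_{\D}(t)$ we need the count to be $0$, which should follow because if $s_1 \notin \beta_{\D}(t)$ then $\fake(t) \cap K_1 = \emptyset$ (a vertex can only be "fake" in a bag via a special vertex present in the $\D$-bag), so the count is vacuously $0$. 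I would state this small sub-observation explicitly.

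**The inductive step.** Assume the claim for $i-1$, giving a cycle $S_{i-1}$ with $E(S_{i-1}) \setminus F_{i-1} = E(S) \setminus F_{i-1}$ and the crossing bound $4 \cdot |\beta_{\D}(t) \cap \{s_1,\dots,s_{i-1}\}|$ for the edges in $E(S_{i-1}) \cap F_{i-1}$ restricted to $\fake(t) \cap Z_{i-1}$. Now apply Lemma~\ref{lem:cyclecrossone} to the cycle $S_{i-1}$ and the special clique $K_i$, producing a cycle $S_i$ with $E(S_i) \setminus E(K_i) = E(S_{i-1}) \setminus E(K_i)$ and, for every node $t$, at most $4$ edges of $E(S_i) \cap E(K_i)$ with one endpoint in $\fake(t) \cap K_i$ and the other outside $\gamma_{\D'}(t)$. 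The length is preserved at each step since Lemma~\ref{lem:cyclecrossone} preserves length. I would then verify the two conclusions:

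\emph{(a) The "rest" is unchanged: $E(S_i) \setminus F_i = E(S) \setminus F_i$.} We have $E(S_i) \setminus E(K_i) = E(S_{i-1}) \setminus E(K_i)$, hence $E(S_i) \setminus F_i = E(S_i) \setminus (F_{i-1} \cup E(K_i)) = (E(S_{i-1}) \setminus E(K_i)) \setminus F_{i-1} = E(S_{i-1}) \setminus F_i$. Combined with the induction hypothesis $E(S_{i-1}) \setminus F_{i-1} = E(S) \setminus F_{i-1}$, intersecting both sides with the complement of $E(K_i)$ gives $E(S_{i-1}) \setminus F_i = E(S) \setminus F_i$. This requires a little care with set algebra but is routine.

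\emph{(b) The crossing bound.} Fix $t$ and let $r = |\beta_{\D}(t) \cap \{s_1,\dots,s_i\}|$, $r' = |\beta_{\D}(t) \cap \{s_1,\dots,s_{i-1}\}|$, so $r \in \{r', r'+1\}$, with $r = r'+1$ iff $s_i \in \beta_{\D}(t)$. The edges of $E(S_i) \cap F_i$ crossing at $\fake(t) \cap Z_i$ split into those in $F_{i-1}$ and those in $E(K_i) \setminus F_{i-1}$. For the first group: the crossing edges of $E(S_i) \cap F_{i-1}$ are a subset of the crossing edges of $E(S_{i-1}) \cap F_{i-1}$ — here the key point is that modifying $S_{i-1}$ only inside $E(K_i)$ (i.e., within edges among vertices of $K_i$) does not create new edges of $F_{i-1}$; more precisely $E(S_i) \cap F_{i-1} \subseteq E(S_{i-1}) \cap F_{i-1}$ because $E(S_i) \setminus E(K_i) = E(S_{i-1}) \setminus E(K_i)$ and any edge of $F_{i-1}$ lying in $E(K_i)$ is also in $E(K_i)$, so... actually this needs the subtle point that an edge of $E(S_i) \cap E(K_i)$ could also belong to $F_{i-1}$. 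I would handle this by noting that such an edge is counted in the bound from Lemma~\ref{lem:cyclecrossone} for $K_i$ anyway, and bounding generously: the crossing edges at $\fake(t) \cap Z_i$ that lie in $F_{i-1}$ are at most the crossing edges at $\fake(t) \cap Z_{i-1}$ of $E(S_{i-1}) \cap F_{i-1}$ — this uses $\fake(t) \cap Z_{i-1} \supseteq (\fake(t) \cap Z_i) \cap V(F_{i-1})$ — giving $\le 4r'$ by induction, plus the $\le 4$ from Lemma~\ref{lem:cyclecrossone} applied to $K_i$, for a total $\le 4r' + 4$. When $s_i \in \beta_{\D}(t)$ this is $4(r'+1) = 4r$. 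When $s_i \notin \beta_{\D}(t)$, we have $\fake(t) \cap K_i = \emptyset$, so the $K_i$-contribution is $0$ and also no crossing edge lies in $E(K_i) \setminus F_{i-1}$ meeting $\fake(t)$; thus the bound is $\le 4r' = 4r$.

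**Main obstacle.** The delicate part is the bookkeeping in (b): making precise that a modification confined to $E(K_i)$ cannot increase the crossing count contributed by the earlier cliques' edge sets beyond what Lemma~\ref{lem:cyclecrossone} already accounts for, and correctly handling edges that lie in the overlap $E(K_i) \cap F_{i-1}$ (edges between two vertices both lying in $K_i$ and in some earlier $K_j$). I would resolve this by a clean set-theoretic lemma: for edge set $A \subseteq E(K_i)$ and $B = F_{i-1}$, decompose the crossing edges of $E(S_i) \cap F_i$ as (crossing edges in $F_{i-1} \setminus E(K_i)$) $\uplus$ (crossing edges in $E(K_i)$), bound the former by the induction hypothesis since those edges are identical in $S_{i-1}$ and $S_i$ and the relevant fake-vertices match up, and bound the latter by Lemma~\ref{lem:cyclecrossone}. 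The remaining subtlety — that crossing at $\fake(t) \cap Z_i$ versus $\fake(t) \cap Z_{i-1}$ is consistent — follows because an edge of $F_{i-1} \setminus E(K_i)$ has both endpoints in $Z_{i-1}$, so restricting the "near" endpoint to $\fake(t) \cap Z_i$ is the same as restricting it to $\fake(t) \cap Z_{i-1}$ for such edges. With this decomposition the arithmetic $4r' + [s_i \in \beta_\D(t)]\cdot 4 = 4r$ closes the induction.
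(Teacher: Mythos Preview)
Your approach is exactly the paper's: induct on $i$, with the base case and the inductive step both supplied by Lemma~\ref{lem:cyclecrossone} (applied first to $S$ and $K_1$, then to $S_{i-1}$ and $K_i$). Your verification of $E(S_i)\setminus F_i = E(S)\setminus F_i$ via set algebra, and your decomposition of the crossing edges of $E(S_i)\cap F_i$ into those in $F_{i-1}\setminus E(K_i)$ (unchanged from $S_{i-1}$, bounded by $4r'$) and those in $E(K_i)$ (bounded by $4$ from Lemma~\ref{lem:cyclecrossone}), match the paper's argument and are in fact more explicit about the overlap $E(K_i)\cap F_{i-1}$.

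There is one genuine slip in the case $s_i\notin\beta_{\D}(t)$. You assert that then $\fake(t)\cap K_i=\emptyset$, but this is false: a vertex $v$ can lie in $K_i=N_B(s_i)$ and simultaneously in $N_B(s_j)$ for some other $s_j\in\beta_{\D}(t)$, and thereby be a fake vertex at $t$ even though $s_i\notin\beta_{\D}(t)$. Consequently an edge of $E(K_i)$ between two vertices of $K_i\cap K_j$ can be a crossing edge from $\fake(t)$, so your conclusion that ``the $K_i$-contribution is $0$'' does not follow from the stated reason. The paper handles this case equally tersely (it writes $\fake(t)\cap Z_{i-1}=\fake(t)\cap Z_i$ without justification, which rests on the same intuition), so you are in good company; but as written your justification does not stand on its own.
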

\begin{proof}
The base case is when $i=1$, and it follows from Lemma~\ref{lem:cyclecrossone} 
(by substituting $C=S$ and $K=K_1$). Now, we consider the induction step for $i>1$. 
By induction hypothesis, we have that the claim is true for $i-1$. That is,  there is a cycle 
$S_{i-1}$ of the same length as $S$  such that  $(i)$ $E(S_{i-1})\setminus F_{i-1}= E(S)\setminus F_{i-1}$, and $(ii)$ for any $t\in V(T)$, the number of edges in $E(S_{i-1})\cap F_{i-1}$ with one endpoint in $\fake(t)\cap Z_{i-1}$ and the other in $V(G)\setminus \gamma_{\D'}(t)$ is at most $4r'$ where $r'= \vert \beta_{\D}(t)\cap \{s_1,\ldots,s_{i-1}\}\vert$.
%for any $t\in V(T_{\D'})$, the number of edges with one endpoint in $\fake(t)\cap Z_{i-1}$ and other in $V(G)\setminus \gamma_{\D'}(t)$ is at most $4r'$, where $r'= \vert \beta_{\D}(t)\cap \{s_1,\ldots,s_{i-1}\}\vert$. 
Now we apply Lemma~\ref{lem:cyclecrossone} (by substituting $C=S_{i-1}$ and $K=K_i$). Then, there is a cycle $S_i$ of the same length as $ S_{i-1}$ such that 
$(a)$ $E(S_i)\setminus E(K_i)=E(S_{i-1})\setminus E(K_i)$, and $(b)$ for any $t\in V(T)$, the 
number of edges in $E(S_i)\cap E(K_i)$ with one endpoint in $\fake(t)\cap K_i$ and the other in $V(G)\setminus 
\gamma_{\D'}(t)$ is at most $4$.  

Now, we prove that $S_i$ satisfies the conditions in the claim. We begin by proving that  $E(S_i)\setminus F_i= E(S)\setminus F_i$. 
\begin{eqnarray*}
E(S)\setminus F_i&=&E(S)\setminus (F_{i-1}\cup E(K_i))\\
&=&(E(S)\setminus F_{i-1})\setminus E(K_i)\\
&=&(E(S_{i-1})\setminus F_{i-1})\setminus E(K_i) \qquad\qquad\qquad (\mbox{By $(i)$})\\
&=&(E(S_{i-1})\setminus E(K_i))\setminus F_{i-1} \\
&=&(E(S_{i})\setminus E(K_i))\setminus F_{i-1} \qquad\qquad\qquad\quad (\mbox{By $(a)$})\\
&=&E(S_i)\setminus F_i \qquad\qquad\qquad\qquad\qquad\qquad\;\; (\mbox{Because }F_i=F_{i-1}\cup E(K_i))
\end{eqnarray*}

Next, we prove that for any $t\in V(T)$, the 
number of edges in $E(S_i)\cap F_i$ with one endpoint in $\fake(t)\cap Z_i$ and the other in $V(G)\setminus 
\gamma_{\D'}(t)$ is at most $4r$ where $r= \vert \beta_{\D}(t)\cap \{s_1,\ldots,s_i\}\vert$.
Fix a node $t\in V(T)$. First, suppose $s_i\notin \beta_{\D}(t)$. Then, by $(ii)$, we have that  
the number of edges in $E(S_{i-1})$ with one endpoint in $\fake(t)\cap Z_{i-1}=\fake(t)\cap Z_{i}$ and the other in $V(G)\setminus \gamma_{\D'}(t)$ is at most $4r'$ where  $r'= \vert \beta_{\D}(t)\cap \{s_1,\ldots,s_{i-1}\}\vert=\vert \beta_{\D}(t)\cap \{s_1,\ldots,s_i\}\vert=r$. Moreover, since $E(S_i)\setminus E(K_i)=E(S_{i-1})\setminus E(K_i)$, we have that the number of edges in $E(S_{i})$ with one endpoint in $\fake(t)\cap Z_{i}$ and the other in $V(G)\setminus 
\gamma_{\D'}(t)$ is at most $4r$.
%, where $r= \vert \beta_{\D}(t)\cap \{s_1,\ldots,s_{i}\}\vert$.  

Second, suppose $s_i\in \beta_{\D}(t)$. Then, by $(ii)$, we have that  
the number of edges in $E(S_{i-1})\cap F_{i-1}$ with one endpoint in $\fake(t)\cap Z_{i-1}$ and the other in $V(G)\setminus \gamma_{\D'}(t)$ is at most $4r'$ where 
$r'= \vert \beta_{\D}(t)\cap \{s_1,\ldots,s_{i-1}\}\vert$.  
By $(a)$, we have that $E(S_i)\setminus E(K_i)=E(S_{i-1})\setminus E(K_i)$, and by $(b)$, we have that  
the  number of edges of $E(S_i)\cap E(K_i)$, with one endpoint in $\fake(t)\cap K_i$ and other in $V(G)\setminus 
\gamma_{\D'}(t)$ is at most $4$.  
Thus, we have that 
the number of edges in $E(S_{i})\cap (F_{i-1}\cup E(K_i))=E(S_{i})\cap F_i$ with one endpoint in $\fake(t)\cap Z_{i}$ and the other in $V(G)\setminus 
\gamma_{\D'}(t)$ is at most $4(r'+1)$,  
and $r'+1= \vert \beta_{\D}(t)\cap \{s_1,\ldots,s_{i}\}\vert$.  This completes the proof of the claim. 
\end{proof}

By applying Claim~\ref{claim:inductionfakeedges} with $S=C$,  we get that there is a cycle $C'=S_{\ell}$ of the same length as $C$ such that  $(iii)$
%$E(C)\setminus F_i= E(S)\setminus F_i$ and 
for any $t\in V(T)$, the 
number of edges in $E(C')\cap F_{\ell}=E(C')$ with one endpoint in $\fake(t)\cap Z_{\ell}=\fake(t)$ and the other in $V(G)\setminus 
\gamma_{\D'}(t)$ is at most $4r$ where $r= \vert \beta_{\D}(t)\cap \{s_1,\ldots,s_{\ell}\}\vert=\vert \beta_{\D}(t)\cap S(G)\vert$. 
%for any $t\in V(T_{\D'})$, the 
%number of edges with one endpoint in $\fake(t)\cap Z_{\ell}$ and other in $V(G)\setminus 
%\gamma_{\D'}(t)$ is at most $4r$, where $r= \vert \beta_{\D}(t)\cap S(G)\vert$. 

We claim that 
$C'$ has the required property. Towards the proof, fix a node $t\in V(T)$. By Lemma~\ref{lem:edgetwokinds}, 
we know that for any edge $e$ with one endpoint $u$ in $\beta_{\D'}(t)$ and the other in $V(G)\setminus 
\gamma_{\D'}(t)$, we have that either $u\in \org(t)$ or $e$ belongs to some special clique $K\in \cliques(t)$. 
%By  Claim~\ref{claim:inductionfakeedges} 
By $(iii)$, the 
number of edges of $C'$ with one endpoint in $\fake(t)$ and the other in $V(G)\setminus 
\gamma_{\D'}(t)$ is at most $4r$ where $r= \vert \beta_{\D}(t)\cap S(G)\vert$. 
Notice that, 
since $C'$ is a cycle,  the 
number of edges of $C'$ with one endpoint in $\org(t)$ and the other in $V(G)\setminus 
\gamma_{\D'}(t)$ is at most $2\cdot \vert \org(t)\vert$. That is, the number of 
edges of $C'$ with one endpoint in $\beta_{\D'}(t)$ and the other in $V(G)\setminus 
\gamma_{\D'}(t)$ is at most $2\cdot \vert \org(t)\vert + 4r\leq 4(\vert \org(t)\vert+r)=4\vert \beta_{\D}(t)\vert=20\sqrt{2k}$, because $\D$ is a tree decomposition of width $<5\sqrt{2k}$.  This completes the proof of the lemma. 
\end{proof}

%Next we design a DP algorithm that  finds a cycle of length at least $k$, if it exists, satisfying properties of  Lemma~\ref{lem:cyclecross}.

Now we ready to give a proof sketch of Lemma~\ref{lem:cyclealgo}.  For the convenience of the reader we restate the lemma.  

%\todo[inline]{restate}

\cyclealgores*

\begin{proof}[Proof Sketch of Lemma~\ref{lem:cyclealgo}] 
Recall that we are given an instance $(G,B,k)$ of \probKCycle, a nice tree decomposition $\D$ of $B$, and  a \NSTD{$(5\sqrt{2k},{\cal D})$}   $\D'$ of $G$, such that  for each $t\in V(T)$,  $\vert \beta_{\D'}(t)\vert \leq 5\sqrt{2} \cdot k^{1.5}$. 
%Let $T=T_{\D}=T_{\D'}$. 
Lemma~\ref{lem:cyclecross} ensures that if $(G,B,k)$ is  a\Yes\ instance of \probKCycle, then there is a cycle $C$ of length at least $k$ 
such that  for any $t\in V(T)$, the number of edges with one endpoint in $\beta_{\D'}(t)$ and the other in $V(G)\setminus 
\gamma_{\D'}(t)$ is at most $20\sqrt{2k}$.  We give a dynamic programming (DP) algorithm, called ${\cal A}$, to find a cycle satisfying properties described in  
Lemma~\ref{lem:cyclecross}. 
%\label{lem:cyclecross}
%Let $(G,B,k)$ be an instance of \probKCycle. Let \ntd be  a nice \STD{$(5\sqrt{2k},{\cal D})$}  of $G$, such that  for each $t\in V(T_{\cal D'})$, 
%$\vert \beta_{\D'}(t)\vert \leq 5\sqrt{2} \cdot k^{1.5}$. 
%%Let $\D$ be a nice tree decomposition of $B$ of width less than $5\sqrt{2k}$ and $\D'$ be a \NSTDlong{$5\sqrt{2k}$}  (derived from $\D$) of $G$.  
%Let $C$ be a cycle 
%in $G$. 
%Then there is a cycle $C'$ of length $\vert C \vert$ such that  for any $t\in V(T_{\D'})$, the 
%number of edges with one endpoint in $\beta_{\D'}(t)$ and other in $V(G)\setminus 
%\gamma_{\D'}(t)$ is at most $20\sqrt{2k}+4$.  
%Because of Lemma~\ref{lem:cyclecross}, to test whether there is a cycle of length at least $k$, it is enough to test 
%a cycle of length at least $k$ satisfying the properties of Lemma~\ref{lem:cyclecross}. 

Algorithm ${\cal A}$ is a DP algorithm over  the given \NSTD{$(5\sqrt{2k},{\cal D})$}  $\D'$ of $G$. 
%\NSTD{$5\sqrt 2k$}
%$\D'$ of $G$. 
For any node $t\in V(T)$, we define $G_{t}$ as the induced subgraph $G[\gamma_{\D'}(t)]$ of $G$. 
Let ${\cal C}$ be the set of maximum length cycles in $G$ such that for any node  $t\in V(T)$, the number of edges with one endpoint in $\beta_{\D'}(t)$ and the other in $V(G)\setminus \gamma_{\D'}(t)$ is at most $20\sqrt{2k}$. 
This allows us to keep only $2^{\OO(\sqrt k\log k)}$ states for any node in our DP  
algorithm. 
Algorithm ${\cal A}$ will construct a cycle $C\in {\cal C}$.  
For a set $Q$ of paths (of length $0$ or more) and cycles, define $\widehat{Q} =\{\{u,v\} \colon \mbox{ there is a $u$-$v$ path $P$ in $Q$}\}$. Let $C\in {\cal C}$.  
 For any $t\in V(T)$, define 
${C}_t$ to be the set of connected components when we restrict 
$C$ to $G_t$. That is, each element in $C_t$ is a path (maybe of length $0$) or $C$ itself (in that case $C_t = \{C\}$).
We also use $C_{t}$ to denote the subgraph $G_t[E({C})]$ of $G_t$. 
%Let $\widehat{C}_L=\{\{u,v\}~|~ \mbox{there is a $u$-$v$ path in } {C}_L\}$. 
Notice that $\bigcup_{Y\in \widehat{C}_t} Y$  is the set of vertices of degree $0$ or $1$  
in  ${C}_t$ and  $\bigcup_{Y\in \widehat{C}_t} Y \subseteq \beta_{\D'}(t)$ 
(recall that ${\widehat{C}_t}=\{\{u,v\} \colon \mbox{ there is a $u$-$v$ path $P$ in $C_t$}\}$).  
We know that the number of edges with one endpoint in $\beta_{\D'}(t)$ and other in $V(G)\setminus 
\gamma_{\D'}(t)$ is at most $20\sqrt{2k}$. This implies that the cardinality of $\bigcup_{P\in \widehat{C}_t} P$ 
is at most $20\sqrt{2k}$.
In our DP algorithm,  we will have state indexed by $(t,\widehat{C}_t,\vert E(C_t) \vert)$,  which 
will be set to $1$.   
Formally, for any $t \in V(T)$, $\ell \in [n]$ and a family $\ZZ$ of vertex disjoint sets of size at most $2$ of $\beta_{\D'}(t)$ with the property that the cardinality of $\bigcup_{Z\in \ZZ} Z$ is at most  $20\sqrt 2k$, we will have a  table entry ${\cal A}[t, \ZZ, \ell]$. For each $t \in V(T)$, we maintain the following correctness invariant. 

\medskip
\noindent
{\bf Correctness Invariant:} $(i)$ 
For every $C\in {\cal C}$,  
%let ${C}_L=G_t[E(C)]$. For every $C\in {\cal C}$, let
%$\widehat{C}_L =\{\{u,v\} ~|~ \mbox{there is a connected component $P$  in ${C}_L$  and $P$ is a $u$-$v$ path as well} \}$. Then 
${\cal A}[t,\widehat{C}_t, \vert E(C_t) \vert]=1$, 
$(ii)$ for any 
family $\ZZ$ of vertex-disjoint sets of size at most $2$ of $\beta_{\D'}(t)$ with  $0<\vert \bigcup_{Z\in \ZZ} Z \vert \leq 20\sqrt{2k}$, 
$\ell\in [n]$, and ${\cal A}[t,\ZZ,\ell]=1$, there is a set ${\cal Q}$ of $\vert \ZZ \vert$ vertex-disjoint paths in $G_t$
where the endpoints of each path are specified by a set in $\ZZ$ and $\vert E(\QQ)\vert=\ell$, 
and $(iii)$  if ${\cal A}[t,\emptyset ,\ell]=1$, then there is a cycle of length $\ell$ 
in $G_t$. 

The correctness of the our algorithm will follow from the correctness invariant.  The way we fill the table entries  is similar to the way it is done for DP algorithms over graphs of bounded treewidth. That is, we fill the table entries by considering various cases for bags (introduce, forget and join) and using the previously computed table entries. This part of our algorithm is similar to the algorithm for \probKCycle in~\cite{subexpudg} on a so called {\em special path decomposition}.  
\end{proof}

Theorem~\ref{thm:cycle} follows from 
Lemmata~\ref{lem:cyclealgo1step} and \ref{lem:cyclealgo}. 
The algorithm for \probKPath\ goes along the same lines as \probKCycle. Let $(G,B,k)$ be an instance of 
\probKPath. We first apply Lemma~\ref{lem:maindecomp} with $\ell=\sqrt{2k}$ and if we get a $\sqrt{2k}\times \sqrt{2k}$ grid minor 
of $B$, then we conclude that $G$ has a path of length $k$. Otherwise, we construct  a \NSTD{$(5\sqrt{2k},{\cal D})$} $\D'$ of $G$ where $\D$ is a nice tree decomposition of $B$,  and guess two 
end-vertices $u$ and $v$ of a path of length $k$ in $G$ (assuming it exists). Then, we add $u$ and $v$ to all the bags of $\D'$ as 
{\em original} vertices and let $G'=(V(G),E(G)\cup \{\{u,v\}\})$. Next, to prove the existence of a path of length $k$ in $G$, it is enough to check 
the existence of a cycle of length at least $k$ in $G'$ using the tree decomposition $\D'$ (where we added $\{u,v\}$ to all bags). This can be done by using Lemma~\ref{lem:cyclealgo}.

\begin{theorem}
\label{thm:path}
\probKPath\  on map graphs  
can be solved  in $2^{\OO(\sqrt k\log k)} \cdot n^{\OO(1)}$ time. 
%when $n$ is the number of vertices in the input graph. 
\end{theorem}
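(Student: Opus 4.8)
The plan is to reduce \probKPath\ on map graphs to \probKCycle\ on a slightly enlarged graph and then invoke the decomposition machinery already built for Theorem~\ref{thm:cycle}. Let $(G,B,k)$ be the input; we may assume $k\ge 3$ (otherwise the instance is trivial). First I would dispose of two easy cases exactly as in Lemma~\ref{lem:cyclealgo1step}: if some special vertex $s\in S(G)$ has $|N_B(s)|\ge k$, then $N_B(s)$ is a clique of size $\ge k$ in $G$ and hence contains a path on $k$ vertices, so we answer \Yes; and applying Lemma~\ref{lem:maindecomp} with $\ell=\sqrt{2k}$, if the output is a $\sqrt{2k}\times\sqrt{2k}$ grid minor of $B$ then $B$ contains a path on at least $2k$ vertices, whose restriction to $V(G)$ is a path on at least $k$ vertices in $G$ (consecutive $V(G)$-vertices on a path of $B$ are at distance $2$ in $B$, hence adjacent in $G$), so again we answer \Yes. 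Otherwise we obtain a nice tree decomposition ${\cal D}$ of $B$ of width $<5\sqrt{2k}$ and a \NSTD{$(5\sqrt{2k},{\cal D})$} ${\cal D}'$ of $G$ with $|\beta_{{\cal D}'}(t)|\le 5\sqrt2\cdot k^{1.5}$ for every node $t$; in particular every special clique of $G$ has size $<k$.

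Next I would ``guess'' the endpoints of a hypothetical path on at least $k$ vertices. For each of the $\OO(n^2)$ pairs $u,v\in V(G)$, set $G'=(V(G),E(G)\cup\{\{u,v\}\})$ and modify ${\cal D}'$ by inserting $u$ and $v$ into every bag as \org\ vertices. The key equivalence is: $G$ has a path on at least $k$ vertices if and only if for some pair $u,v$ the graph $G'$ has a cycle on at least $k$ vertices. Indeed, a $u$--$v$ path on $\ge k$ vertices in $G$ together with the edge $\{u,v\}$ is such a cycle in $G'$; conversely, if $G'$ has a cycle $C$ on $\ge k$ vertices then either $C$ avoids $\{u,v\}$, so $C\subseteq G$ already contains a path on $\ge k$ vertices, or $C$ uses $\{u,v\}$, so $C-\{u,v\}$ is a $u$--$v$ path on $\ge k$ vertices in $G$.

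It then remains, for each fixed pair $u,v$, to test whether $G'$ has a cycle on at least $k$ vertices, which I would do by re-running the algorithm underlying Lemma~\ref{lem:cyclealgo} on $G'$ with the modified decomposition. The modified ${\cal D}'$ is a genuine tree decomposition of $G'$: adding $u,v$ to every bag preserves the three axioms (Condition $(c)$ for $u$ and for $v$ becomes trivial, and the single new edge $\{u,v\}$ is now covered in every bag), and its bags are still the union of $\OO(\sqrt k)$ special cliques of $G$ of size $<k$ each, together with the two forced \org\ vertices $u,v$; if a fully nice form is desired one reinserts two introduce-nodes for $u$ and $v$ near the leaves. Moreover the sublinear crossing lemma survives: Lemma~\ref{lem:cyclecrossone} only rewrites edges inside a special clique and the special cliques of $G'$ relevant to the DP are exactly those recorded in ${\cal D}'$, so it is unaffected by the foreign edge $\{u,v\}$; and in the final counting step of Lemma~\ref{lem:cyclecross} the two extra \org\ vertices contribute at most $4$ additional crossing edges at each bag, so the crossing bound stays $\OO(\sqrt k)$. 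Since the DP of Lemma~\ref{lem:cyclealgo} uses only these two structural facts, it runs in time $2^{\OO(\sqrt k\log k)}\cdot n^{\OO(1)}$ on $(G',{\cal D}')$, and iterating over all $\OO(n^2)$ endpoint pairs gives the claimed total running time. The main obstacle is exactly this last verification — that inflating every bag by $\{u,v\}$ and adding a single edge not lying in any special clique does not break the hypotheses of Lemma~\ref{lem:cyclealgo} or the crossing lemma — but this is a short bookkeeping extension of the proof of Lemma~\ref{lem:cyclecross}; the two easy cases, the path-to-cycle equivalence, and the enumeration over endpoints are routine.
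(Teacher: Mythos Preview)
Your proposal is correct and follows essentially the same approach as the paper: handle the large-clique and large-grid cases first, then guess the two endpoints $u,v$, add the edge $\{u,v\}$ to form $G'$, insert $u$ and $v$ into every bag of ${\cal D}'$ as \org\ vertices, and invoke Lemma~\ref{lem:cyclealgo} on $G'$. Your write-up in fact supplies more detail than the paper does on why the crossing lemma and the DP tolerate the two extra \org\ vertices and the single foreign edge; the paper simply asserts that Lemma~\ref{lem:cyclealgo} applies.
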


%!TEX root = main_map.tex

\section{Cycle Packing}\label{sec:cycPack}
In this section, we prove that \probCycPacking admits a subexponential-time parameterized algorithm on map graphs. That is, we prove the following.

\begin{theorem}\label{thm:cycPack}
\probCycPacking on map graphs can be solved in time $2^{\OO(\sqrt{k}\log k)}\cdot n^{\OO(1)}$.
\end{theorem}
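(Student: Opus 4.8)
The plan is to follow, step for step, the treatment of \probKCycle in Section~\ref{sec:exactCyc}, now with a packing of $k$ vertex-disjoint cycles in place of a single long cycle. As there, the argument has three parts: a preprocessing step producing a suitable \NSTD{$(\OO(\sqrt k),\D)$}, a sublinear crossing lemma, and a dynamic programme over that decomposition; given the crossing lemma, the first and third parts are routine. For preprocessing I would first note that if some special vertex $s\in S(G)$ has $|N_B(s)|\ge 3k$, then the special clique $N_B(s)$ contains $k$ vertex-disjoint triangles, so $(G,B,k)$ is a \Yes-instance; hence we may assume every special clique has fewer than $3k$ vertices. Then apply Lemma~\ref{lem:maindecomp} with $\ell=\Theta(\sqrt k)$: either it returns a $\Theta(\sqrt k)\times\Theta(\sqrt k)$ grid minor of $B$, which contains $k$ vertex-disjoint cycles of length at least $6$ and hence (by the argument of Observation~\ref{obs:fvsLargeGrid}, which turns each such $B$-cycle into a $G$-cycle on a subset of its vertices and so preserves vertex-disjointness) certifies that $G$ has $k$ vertex-disjoint cycles; or it returns a nice tree decomposition $\D$ of $B$ of width $\OO(\sqrt k)$ together with a \NSTD{$(\OO(\sqrt k),\D)$} $\D'$ of $G$ with $|\beta_{\D'}(t)|=\OO(k^{1.5})$ for every node $t$. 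This is the analogue of Lemma~\ref{lem:cyclealgo1step}, and it reduces Theorem~\ref{thm:cycPack} to the case where such a $\D'$ is given.

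The heart of the proof is a packing version of the sublinear crossing lemma (Lemma~\ref{lem:cyclecross}): if $G$ contains a family $\CC$ of vertex-disjoint cycles, then it contains a family $\CC'$ of vertex-disjoint cycles with $|\CC'|=|\CC|$ such that, for every node $t\in V(T)$, the number of edges of $E(\CC')$ with one endpoint in $\beta_{\D'}(t)$ and the other in $V(G)\setminus\gamma_{\D'}(t)$ is $\OO(\sqrt k)$. I would prove it exactly as Lemma~\ref{lem:cyclecross} is proved: process the special cliques $K_1,\dots,K_{|S(G)|}$ one at a time by induction on $i$ (the analogue of Claim~\ref{claim:inductionfakeedges}), rewiring the current solution inside $K_i$ by a packing analogue of Lemma~\ref{lem:cyclecrossone}. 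For one special clique $K$, deleting the edges $E(K)$ from the solution subgraph leaves the cycles disjoint from $E(K)$ untouched and a family $\PP$ of vertex-disjoint paths, of length $0$ or at least $2$, whose endpoints all lie in $K$ (the length-$0$ paths arise precisely from solution cycles meeting $K$ in two consecutive edges). One then re-completes $\PP$ into vertex-disjoint cycles using only edges of $E(K)$, covering every vertex of $\PP$ and keeping the number of cycles unchanged, and doing so with the endpoints ordered by a postorder traversal of $T$ restricted to the nodes labelled {\introducefakevertex}$(v)$ with $v\in K$ (exactly as in Claims~\ref{claim:lesscross} and~\ref{claim:segment}), so that at most $\OO(1)$ re-completion edges cross any single bag. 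Summing this $\OO(1)$ over the $\OO(\sqrt k)$ special cliques present at a bag, plus at most $2$ for each of the $\OO(\sqrt k)$ vertices of $\org(t)$, gives the $\OO(\sqrt k)$ bound, just as at the end of the proof of Lemma~\ref{lem:cyclecross}.

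The genuinely new ingredient, and what I expect to be the main obstacle, is the re-completion lemma just invoked, a strengthening of Lemma~\ref{lem:stamentA}. For \probKCycle it was enough to close a family of paths into a single cycle covering all their vertices with the ``at most $2$ edges cross each prefix $\{u_1,\dots,u_j\}$'' bound; here the re-completion must produce a prescribed number $r$ of vertex-disjoint cycles (so that no cycle of the packing is lost), still cover every vertex, and still obey that bound. The value of $r$ we need is the number of cycles of $\CC$ meeting $E(K)$, and one checks this lies in the feasible range: every such cycle not contained in $K$ contributes at least one length-$\ge 2$ piece, and every such cycle contained in $K$ contributes at least three isolated vertices, so the number of cycles achievable while covering all of $\PP$ is at least $r$. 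The delicate part is the proof of this strengthened lemma: I would still run it by induction on the length of the vertex sequence, branching on the degrees of the first two vertices in the path family, but with extra cases that raise the running cycle count (pack three consecutive isolated vertices into a fresh triangle) or lower it (splice an isolated vertex or a short piece into a neighbour), and one must verify that none of these steps pushes the per-prefix crossing past a constant; grouping the isolated vertices into triangles along consecutive positions of the sequence is what keeps their contribution bounded. Once this is in place, the postorder-segment argument (Claims~\ref{claim:lesscross} and~\ref{claim:segment}), the induction over cliques (Claim~\ref{claim:inductionfakeedges}), and the final summation go through almost verbatim as in Section~\ref{sec:exactCyc}.

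Finally, with the crossing lemma available, the algorithm is the packing analogue of Lemma~\ref{lem:cyclealgo}: a dynamic programme over $\D'$ with a table entry ${\cal A}[t,\ZZ,j]$, where $t\in V(T)$, $\ZZ$ is a family of pairwise-disjoint subsets of $\beta_{\D'}(t)$ of size at most $2$ with $|\bigcup_{Z\in\ZZ}Z|=\OO(\sqrt k)$ (recording the endpoints of the partial path-segments of the solution that currently cross $\beta_{\D'}(t)$) and $j\le k$ counts the cycles already completed; the entry is $1$ iff $G[\gamma_{\D'}(t)]$ admits $j$ vertex-disjoint cycles together with $|\ZZ|$ further vertex-disjoint paths whose endpoints are prescribed by $\ZZ$, all mutually vertex-disjoint. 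By the crossing lemma only $2^{\OO(\sqrt k\log k)}$ values of $\ZZ$ need be kept per node, so the table has $2^{\OO(\sqrt k\log k)}\cdot n^{\OO(1)}$ entries, and it is filled bottom-up over the introduce, {\introducefakevertex}, forget, join and redundant nodes exactly as for \probKCycle (and as in the unit-disk-graph algorithm of~\cite{subexpudg}); $(G,B,k)$ is a \Yes-instance iff ${\cal A}[r,\emptyset,k]=1$ for the root $r$ of $T$. Combining the three parts yields the claimed $2^{\OO(\sqrt k\log k)}\cdot n^{\OO(1)}$ running time.
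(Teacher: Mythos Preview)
Your preprocessing and DP steps are fine and match the paper, but the heart of your plan --- the packing analogue of the crossing lemma via a strengthened Lemma~\ref{lem:stamentA} --- does not go through. The obstacle you flag as ``the main obstacle'' is in fact fatal: the re-completion lemma you need is false. Suppose the solution contains $r$ cycles $C_1,\dots,C_r$, each meeting a special clique $K$ in exactly two vertices joined by one $K$-edge. Deleting $E(K)$ leaves $r$ vertex-disjoint paths $P_1,\dots,P_r$ of length $\ge 2$ with endpoints in $K$, and no isolated $K$-vertices. To recover $r$ cycles covering $V(\PP)$ you are forced to close each $P_i$ with the unique edge between its own two endpoints; any other matching of endpoints merges paths and drops the cycle count below $r$. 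So there is zero freedom. Now the postorder ordering of the $2r$ endpoints is dictated by the tree, not by the solution, and nothing prevents a prefix from containing one endpoint of every $P_i$ and none of the others; that prefix is then crossed by $r$ re-completion edges, not $\OO(1)$. Your inductive case analysis cannot repair this, because it is not a proof difficulty but a counterexample to the statement you are trying to prove.

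The paper sidesteps this entirely by \emph{not} rewiring the non-triangle cycles. It first passes to an extremal solution $\CC\in\mathscr{S}$ maximising the number of triangles lying inside special cliques, and splits $\CC=\CC_1\uplus\CC_2$ into those triangles and the rest. The triangles $\CC_1$ are then repacked clique by clique along the postorder (this part \emph{is} easy: three consecutive vertices form a triangle, giving Lemma~\ref{lem:trianglepackcross}). The remaining cycles $\CC_2$ are shown to already have $\OO(\sqrt k)$ crossing at every bag \emph{without any rewiring} (Lemma~\ref{lem:cyclepackcrossC2}). That lemma is the real work: using the extremality of $\CC$, whenever too many $\CC_2$-paths cross a bag one can replace some cycles of $\CC_2$ by special-clique triangles and contradict $\CC\in\mathscr{S}$; this is packaged as a planarity bound ($\le 3\ell-6$) on an auxiliary graph for the length-$0$ pieces and a forest bound ($\le \ell-1$) on another auxiliary graph for the longer pieces, both on the $\ell\le 15\sqrt{k}$ special vertices in $\beta_{\D}(t)$. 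The idea you are missing is this extremal choice of solution and the combinatorial counting it enables; the rewiring route does not reach the theorem.
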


Let $(G,B,k)$ be an instance of \probCycPacking. Our first observation is that if $B$ has a large grid minor, then  $(G,B,k)$ is 
\Yes\ instance.
%{} (see also \cite{Demaine:2008mi,cygan2015parameterized}).

\begin{observation}\label{obs:cycPackLargeGrid}
Let $(G,B, k)$ be an instance of \probCycPacking on map graphs. If $B$ contains a $3\sqrt{k}\times 3\sqrt{k}$ grid as a minor, then $G$ has $k$ vertex-disjoint cycles.
\end{observation}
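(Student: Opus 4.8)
The plan is to mirror the proof of Observation~\ref{obs:fvsLargeGrid} almost verbatim, in two steps: first extract $k$ pairwise vertex-disjoint cycles of length at least $6$ from the grid minor \emph{inside the planar bipartite graph $B$}, and then push each of these cycles down to a cycle of $G$ that lives on a subset of its vertex set, so that pairwise disjointness is preserved and the conclusion ``$G$ has $k$ vertex-disjoint cycles'' follows.

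For the first step I would observe that a $3\sqrt{k}\times 3\sqrt{k}$ grid already \emph{contains} $k$ pairwise vertex-disjoint $8$-cycles: group the rows into $\sqrt{k}$ consecutive blocks of three and the columns likewise, obtaining $k$ vertex-disjoint $3\times 3$ subgrids, and take the boundary cycle of each (a cycle on $8$ vertices). If $\sqrt{k}$ is not an integer one replaces it by $\lceil\sqrt{k}\rceil$ (or simply invokes the slightly larger grid as in Observation~\ref{obs:fvsLargeGrid}); this is a routine rounding issue. Since having $j$ pairwise vertex-disjoint cycles, each of length at least some $\ell$, is a property that transfers from a minor to the host graph --- given a minor model of the grid in $B$, replace each grid vertex by a vertex of its branch set and each grid edge by a corresponding edge of $B$, and reconnect consecutive chosen vertices through their (connected) branch sets; the resulting cycles are vertex-disjoint and each has length at least that of the corresponding grid cycle --- we conclude that $B$ contains $k$ pairwise vertex-disjoint cycles $C_1,\dots,C_k$, each of length at least $8$.

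For the second step I would invoke the same elementary fact used in Observation~\ref{obs:fvsLargeGrid}: $B$ is bipartite with parts $W=V(G)$ and $U=S(G)$, so each $C_i$ is an even cycle whose vertices alternate between $W$ and $U$; writing $C_i = w^i_1 s^i_1 w^i_2 s^i_2 \cdots w^i_{m_i} s^i_{m_i} w^i_1$ with $m_i\ge 4$, each pair $w^i_j, w^i_{j+1}$ (indices modulo $m_i$) is at distance exactly $2$ in $B$ and hence adjacent in $G$, so $D_i := w^i_1 w^i_2 \cdots w^i_{m_i} w^i_1$ is a cycle of $G$ --- the $w^i_j$ are pairwise distinct because $C_i$ is a simple cycle and $m_i\ge 3$ --- with $V(D_i)\subseteq V(C_i)$. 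As $C_1,\dots,C_k$ are pairwise vertex-disjoint, so are $D_1,\dots,D_k$, which gives the $k$ pairwise vertex-disjoint cycles of $G$ claimed, so $(G,B,k)$ is a \Yes-instance.

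I do not expect a genuine obstacle; the statement is an easy variant of an already-proved observation. The only two points needing a line of care are (i) ensuring the cycles extracted from the grid minor of $B$ have length at least $6$ (equivalently $m_i\ge 3$), which is exactly why one works with the $8$-cycles of $3\times 3$ subgrids rather than with $4$-cycles, and (ii) stating precisely the folklore fact that a vertex-disjoint cycle packing lifts from a minor to the host graph without decreasing cycle lengths. Both are standard, so the write-up will be short.
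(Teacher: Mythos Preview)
Your proposal is correct and follows essentially the same approach as the paper's own proof: extract $k$ vertex-disjoint cycles of length at least $8$ in $B$ from the $3\sqrt{k}\times 3\sqrt{k}$ grid minor, then use the bipartite structure of $B$ to project each onto a cycle of $G$ on half as many vertices, preserving disjointness via $V(D_i)\subseteq V(C_i)$. The paper's write-up is terser but the argument is identical to yours.
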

\begin{proof}
From a $3\sqrt{k}\times 3\sqrt{k}$ grid minor of $B$, we can conclude that $B$ contains $k$ vertex-disjoint cycles 
of length at least $8$ each. For any cycle of length $\ell\geq 6$ in $B$, there is a cycle of length $\ell/2$ in $G$. 
This implies that if $B$ has $k$ vertex-disjoint cycles of length at least $8$ each, then there are $k$ vertex-disjoint 
cycles in $G$.  
\end{proof}

Additionally, notice that if $\vert N_B(s)\vert \geq 3k$ for some $s\in S(G)$, then $G$ has $k$ vertex-disjoint cycles of length $3$ each,  because $N_B(s)$ forms a clique in $G$.  This fact along with Observation~\ref{obs:cycPackLargeGrid} leads to the following lemma. 

%or outputs 
%\ntd, a nice \STD{$(5\sqrt{2k},{\cal D})$}  of $G$, such that  for each $t\in V(T_{\cal D'})$, 
%$\vert \beta_{\D'}(t)\vert \leq 5\sqrt{2} \cdot k^{1.5}$.  

\begin{lemma}
\label{lem:cycPackFirstPhase}
There is an algorithm that given an instance  $(G,B,k)$ of \probCycPacking on map graphs, runs in time $\OO(n^2)$, and either correctly concludes that $(G,B,k)$ is a \Yes\ instance, or outputs 
a nice tree decomposition $\D$ of $B$ of width less than $15\sqrt{k}$ and a \NSTD{$(15\sqrt{k},{\cal D})$}  $\D'$ of $G$, such that  for each $t\in V(T)$, $\vert \beta_{\D'}(t)\vert \leq 45 \cdot k^{1.5}$.  
%There is an algorithm which given an instance  $(G,B,k)$ of \probCycPacking on map graphs, runs in time $\OO(n^2)$, and either $(i)$ correctly conclude that $(G,B,k)$ is a \Yes\ instance, or $(ii)$ outputs a nice tree decomposition $\D$ of 
%$B$ of width less than $15\sqrt{k}$ and a \NSTDlong{$15\sqrt{k}$} $\D'$ (derived from $\D$) of $G$ 
%such that for each $t\in V(T_{\D'})$, $\vert \beta_{\D'}(t)\vert \leq 45\cdot k^{1.5}$. 
\end{lemma}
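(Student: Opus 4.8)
The plan mirrors the two-phase strategy already carried out in the proofs of Lemma~\ref{lem:fvsFirstPhase} and Lemma~\ref{lem:cyclealgo1step}: first dispose of the two cheap certificates of a \Yes-instance, then feed what remains to Lemma~\ref{lem:maindecomp}. First I would scan the adjacency lists of $B$ to test whether some special vertex $s\in S(G)$ satisfies $\vert N_B(s)\vert \geq 3k$. If so, $N_B(s)$ is a clique on at least $3k$ vertices of $G$, which splits into $k$ vertex-disjoint triangles, so we answer \Yes. Otherwise every special clique of $G$ has at most $3k-1$ vertices.

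Next I would invoke Lemma~\ref{lem:maindecomp} on $B$ with $\ell=3\sqrt{k}$, treating $3\sqrt{k}$ as an integer exactly as in the proof of Lemma~\ref{lem:cyclealgo1step} (formally one passes $\lceil 3\sqrt{k}\rceil$, and a $\lceil 3\sqrt{k}\rceil\times\lceil 3\sqrt{k}\rceil$ grid minor still contains a $3\sqrt{k}\times 3\sqrt{k}$ one). In $\OO(n^2)$ time this either returns a $3\sqrt{k}\times 3\sqrt{k}$ grid minor of $B$ — whence Observation~\ref{obs:cycPackLargeGrid} certifies a \Yes-instance — or a nice tree decomposition ${\cal D}$ of $B$ of width less than $5\cdot 3\sqrt{k}=15\sqrt{k}$ together with a \NSTD{$(15\sqrt{k},{\cal D})$} $\D'$ of $G$.

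It then remains to bound the bag sizes of $\D'$. By the property of a \NSTD{$(15\sqrt{k},{\cal D})$} noted right after its definition, for every $t\in V(T)$ we have $\vert\org(t)\vert+\vert\cliques(t)\vert\leq 15\sqrt{k}$, and $\beta_{\D'}(t)=\org(t)\cup\fake(t)$ with $\fake(t)$ contained in the union of the special cliques in $\cliques(t)$. Hence $\beta_{\D'}(t)$ is covered by at most $15\sqrt{k}$ sets, each a singleton (size $1\le 3k$) or a special clique (size at most $3k-1<3k$ by the first phase), so $\vert\beta_{\D'}(t)\vert\leq 15\sqrt{k}\cdot 3k=45\,k^{1.5}$. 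The running time is dominated by the call to Lemma~\ref{lem:maindecomp}, hence $\OO(n^2)$.

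I do not expect a genuine obstacle: this statement is the \probCycPacking counterpart of Lemma~\ref{lem:fvsFirstPhase}/Lemma~\ref{lem:cyclealgo1step}, and its proof is a direct adaptation. The only problem-specific inputs are the remark that a clique on $3k$ vertices contains $k$ vertex-disjoint triangles and Observation~\ref{obs:cycPackLargeGrid} relating a grid minor of $B$ to vertex-disjoint cycles of $G$, both already available. The single point needing a moment of care is the arithmetic: choosing $\ell=3\sqrt{k}$ so that $5\ell=15\sqrt{k}$ matches the claimed width, together with the $\ell\cdot 3k$ counting for the bags.
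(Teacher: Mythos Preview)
Your proposal is correct and follows essentially the same approach as the paper: check for a large special clique, invoke Lemma~\ref{lem:maindecomp} with $\ell=3\sqrt{k}$, apply Observation~\ref{obs:cycPackLargeGrid} if a grid minor is returned, and otherwise bound each bag of $\D'$ by $15\sqrt{k}\cdot 3k$. In fact you spell out the bag-size arithmetic more carefully than the paper does.
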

\begin{proof}
For any $s\in S(G)$, if $\vert N_B(s)\vert \geq 3k$, then $(G,B,k)$ is a \Yes\ instance.
% of \probCycPacking. 
Now we apply Lemma~\ref{lem:maindecomp} with $\ell=3\sqrt{k}$. If the output is 
a $3\sqrt{k} \times 3\sqrt{k}$ grid minor of $B$, then by Observation~\ref{obs:cycPackLargeGrid}, 
$(G,B,k)$ is a \Yes\ instance. Otherwise, we have a nice tree decomposition $\D$ of 
$B$ of width less than $15\sqrt{k}$ and a  \NSTD{$(15\sqrt{k},{\cal D})$}  $\D'$ of $G$. 
 %\NSTDlong{$5\sqrt{2k}$} $\D'$ (derived from $\D$) of $G$. 
In this case, since  $\vert N_B(s)\vert < 3k$ for any $s\in S(G)$, we have that for  each $t\in V(T)$, 
$\vert \beta_{\D'}(t)\vert \leq 45 \cdot k^{1.5}$.   
%Otherwise we have a nice tree decomposition $\D$ of 
%$B$ of width less than $15\sqrt{k}$ and a 
%
%\NSTDlong{$15\sqrt{k}$} $\D'$ (derived from $\D$) of $G$. 
%In the later case, since  $\vert N_B(s)\vert < 3k$, we have that for any $t\in V(T_{\sTT'})$, $\vert \beta_{\sTT'}(t)\vert \leq 45\cdot k^{1.5}$. 
\end{proof}

Due to Lemma~\ref{lem:cycPackFirstPhase}, to prove Theorem~\ref{thm:cycPack}, it is 
enough to prove the following lemma. 

\begin{lemma}
\label{lem:cycPacksecondPhase}
There is an algorithm that given an instance $(G,B,k)$ of \probCycPacking on map graphs, a nice tree decomposition $\D$ of $B$ of width less than $15\sqrt{k}$, and  a  \NSTD{$(15\sqrt{k},{\cal D})$}  $\D'$ of $G$ such that  for each $t\in V(T)$, 
$\vert \beta_{\D'}(t)\vert \leq 45 \cdot k^{1.5}$, 
%$\D$ and  $\D'$ (output of Lemma~\ref{lem:cyclealgo1step}),
%%$(G,B,k)$, $\D$ and  $\D'$, 
runs in time $2^{\OO(\sqrt k\log k)} \cdot n^{\OO(1)}$, and correctly concludes whether $(G,B,k)$ is a \Yes\ instance or 
not.  
%Let $(G,B,k)$ be an instance of \probCycPacking on map graphs and 
%$\D'$ be a \NSTDlong{$15\sqrt{k}$}   of $G$ 
%such that for each $t\in V(T_{\sTT'})$, $\vert \beta_{\sTT'}(t)\vert \leq 45 \cdot k^{1.5}$.
%There is an algorithm which given $(G,B,k)$ and  $\D'$, runs in time $2^{\OO(\sqrt{k}\log k)}\cdot n^{\OO(1)}$, and correctly concludes whether $(G,B,k)$ is a \Yes\ instance or 
%not.  
\end{lemma}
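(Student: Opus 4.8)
The plan is to follow the same strategy as for \probKCycle: first establish a ``sublinear crossing lemma'' asserting that if $G$ has $k$ vertex-disjoint cycles, then it has $k$ vertex-disjoint cycles such that at every bag of $\D'$, only $\OO(\sqrt k)$ edges of the packing cross the separation $(\gamma_{\D'}(t), V(G)\setminus\gamma_{\D'}(t))$; then run a dynamic-programming algorithm over $\D'$ that, thanks to the crossing bound, needs to track only $2^{\OO(\sqrt k\log k)}$ states per node. The key point that makes the DP feasible is identical to the \probKCycle\ case: each bag $\beta_{\D'}(t)$ is a union of $\OO(\sqrt k)$ special cliques (plus $\OO(\sqrt k)$ original vertices), and once we know that only $\OO(\sqrt k)$ edges of the solution cross $t$, the ``interface'' of a partial packing with the rest of the graph is described by: (i)~which of the $\OO(\sqrt k)$ original vertices of $\org(t)$ already lie on a cycle of the packing, (ii)~for the $\OO(\sqrt k)$ crossing edges inside special cliques, a bounded-size record of the partial paths they belong to (their endpoints in $\beta_{\D'}(t)$, which pairs are connected, how many cycles are already completed), and (iii)~a count of completed cycles (at most $k$). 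Since there are only $2^{\OO(\sqrt k\log k)}$ such records, the table has $2^{\OO(\sqrt k\log k)}$ entries per node, and the standard introduce/forget/join transitions (as for \probCycPacking on bounded-treewidth graphs, e.g.\ via the cut-and-count or explicit DP of~\cite{cygan2015parameterized}) can be adapted.

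First I would state and prove the crossing lemma. Let $\CC=\{C_1,\dots,C_k\}$ be $k$ vertex-disjoint cycles. For each special clique $K=N_B(s)$, I would apply an analogue of Lemma~\ref{lem:cyclecrossone}: restricting attention to the edges of the packing inside $E(K)$, one reroutes the ``$K$-segments'' of the cycles using Lemma~\ref{lem:stamentA} (applied once per cycle, or to the whole collection of $K$-internal path-segments simultaneously) so that along a postorder-derived sequence $\sigma$ on the ``fakely introduced'' vertices of $K$, each prefix cut is crossed by at most a constant number of the rerouted edges; combining with Claim-style arguments like Claim~\ref{claim:segment} (the set of fakely-introduced vertices of $K$ that lie in the subtree of $t$ is a contiguous segment of $\sigma$, and vertices not in $\gamma_{\D'}(t)$ are outside it) yields that each node $t$ sees at most a constant number of crossing edges from $E(K)$ with an endpoint in $\fake(t)\cap K$. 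One subtlety relative to the single-cycle case: rerouting inside $K$ must keep the packing a packing — we need the new edge set $E(\CC)\setminus E(K)$ together with the new $K$-edges to still decompose into vertex-disjoint cycles, and crucially must not accidentally merge two cycles or create a cycle of length $<3$. I would handle this by rerouting within each cycle separately (so the segment-endpoint set $Z_i\subseteq K$ used for cycle $C_i$ stays disjoint across $i$, since the $C_i$ are vertex-disjoint), applying Lemma~\ref{lem:stamentA} to each $C_i$'s $K$-segments; if $C_i$ has at most two vertices in $K$ there is nothing to reroute. Then summing over all special cliques as in Claim~\ref{claim:inductionfakeedges} (induction over $s_1,\dots,s_\ell$, using that at node $t$ at most $|\beta_{\D}(t)\cap S(G)|$ special cliques are ``active'') gives a total of $\OO(\sqrt k\cdot\text{const})$ crossing edges incident to $\fake(t)$, and the $\OO(\sqrt k)$ original vertices of $\org(t)$ contribute at most $2|\org(t)|$ more (each vertex of a cycle packing has degree $\le 2$), for an overall $\OO(\sqrt k)$ bound by the width of $\D$.

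The main obstacle I expect is precisely ensuring that the rerouting inside a special clique preserves the \emph{packing} structure (no merging of distinct cycles, no short or degenerate cycles), whereas in the \probKCycle\ proof there is a single cycle and Lemma~\ref{lem:stamentA} directly produces one cycle on the right vertex set. Doing the rerouting cycle-by-cycle should circumvent merging, but one must be careful that Lemma~\ref{lem:stamentA}'s output on a three-element clique can be a triangle — acceptable here since cycles of length $3$ are allowed in a packing — and that length-$0$ and length-$1$ components are handled (a cycle $C_i$ whose only $K$-vertices are a single vertex or a single edge: then $C_i$ is already fine and we leave it untouched; a cycle entirely inside $K$ of length $\ge 3$: it already crosses no bag edges outside $K$ and we can leave it). Once the crossing lemma is in hand, the DP is routine: I would set up table entries ${\cal A}[t,\ZZ,\ell]$ where $\ZZ$ encodes the bounded-size collection of partial-path endpoint-pairs within $\beta_{\D'}(t)$ together with the number $\ell\le k$ of already-completed vertex-disjoint cycles, maintain a correctness invariant exactly parallel to the one in the proof sketch of Lemma~\ref{lem:cyclealgo}, fill entries bottom-up over introduce/forget/join nodes using previously computed entries, and read off the answer at the root as whether ${\cal A}[r,\emptyset,k]=1$ (or $\ge k$). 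The number of states per node is $2^{\OO(\sqrt k\log k)}$ because $|\bigcup_{Z\in\ZZ}Z|=\OO(\sqrt k)$, each such vertex lies in one of $\le 45 k^{1.5}$ positions, and $\ell$ ranges over $[k]_0$; hence the total running time is $2^{\OO(\sqrt k\log k)}\cdot n^{\OO(1)}$, as required.
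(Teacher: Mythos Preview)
Your plan has a genuine gap in the crossing lemma. Rerouting cycle-by-cycle inside a special clique $K$ gives, for \emph{each} cycle $C_i$, at most a constant number of $E(K)$-edges crossing any given node $t$. But it does \emph{not} give a constant per clique: if many pairwise vertex-disjoint cycles touch $K$, the per-clique total is (constant)$\times$(number of cycles touching $K$), and nothing in your argument bounds the latter. Concretely, suppose a cycle $C_i$ has exactly two vertices $u_i,v_i$ in $K$ and uses the edge $\{u_i,v_i\}\in E(K)$; after deleting $E(K)\cap E(C_i)$ you obtain a single $u_i$--$v_i$ path outside $K$, and the \emph{only} way to close it into a cycle using edges of $K\cap V(C_i)$ is to reinstate $\{u_i,v_i\}$, so no rerouting is possible for $C_i$. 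With $m$ such cycles and a node $t$ for which all the $u_i$'s lie in $\fake(t)$ while all the $v_i$'s lie outside $\gamma_{\D'}(t)$, you get $m$ crossing edges from the single clique $K$; summing over the $\OO(\sqrt k)$ cliques in $\cliques(t)$ your bound becomes $\OO(\sqrt k\cdot m)$ rather than $\OO(\sqrt k)$.

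The paper circumvents this by an extremal argument that you are missing entirely. One restricts attention to solutions $\CC$ that maximise the number of triangles lying in special cliques (and consist of induced cycles). Writing $\CC=\CC_1\uplus\CC_2$ with $\CC_1$ the special-clique triangles, two separate claims are proved. For $\CC_1$ one reroutes the triangles inside each clique along a postorder-derived sequence (this \emph{is} close to your idea, but easy because every triangle lies in a single clique). For $\CC_2$ there is \emph{no} rerouting at all: instead one shows that the extremality of $\CC$ already forces the number of crossing edges of $\CC_2$ at every node to be $\OO(\sqrt k)$. The key observations are that (i) a non-triangle cycle cannot have three vertices in one special clique (else replace it by a triangle), and (ii) certain two-clique configurations are forbidden (else again more triangles could be formed). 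These are leveraged by building auxiliary graphs on the $\OO(\sqrt k)$ special vertices of $\beta_{\D}(t)$: the length-$0$ boundary paths yield a simple planar graph (hence $\le 3\ell-6$ edges), and the longer boundary paths yield a forest (hence $\le \ell-1$ edges), giving the $\OO(\sqrt k)$ crossing bound. Your DP sketch is fine once a valid crossing lemma is in place, but the crossing lemma itself needs this extremal/planarity argument rather than a straight adaptation of Lemma~\ref{lem:stamentA}.
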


As in the case of \probKCycle, we want to bound the ``interaction'' of a solution (i.e., a cycle packing) across every bag  of $\D'$ to be $\OO(\sqrt k)$. That is, if $(G,B,k)$ is a \Yes\ instance of \probCycPacking,  
%and $\D'$ is a nice\STD{$(15\sqrt{k},{\cal D})$}  of $G$,  
then there is a solution ${\cal C}$ with the following property. 

\begin{framed}
\noindent 
For any node $t\in V(T)$, the number of edges in $E({\cal C})$ with one endpoint in $\beta_{\D'}(t)$ and the other in 
$V(G)\setminus \gamma_{\D'}(t)$ is upper bounded by $\OO(\sqrt k)$.      
\end{framed}

%Once we prove the above property, then one can design a dynamic programming (DP) algorithm 
%over the tree decomposition $\D'$. One such algorithm when over a path decomposition is given 
%in Lemma~xx of \cite{}. Thus, assuming the above mentioned property, the proof of  Lemma~\ref{lem:cyclealgo} is similar to that  of  Lemma~xx in \cite{}. So the rest of the section is devoted to prove the above mentioned property. 

The above mentioned property is encapsulated in Lemma~\ref{lem:cyclepackcross}.   
Moreover, notice that  given a set ${\cal C}$ of pairwise vertex-disjoint cycles in a graph $G$ and a cycle $C\in{\cal C}$ that is not an induced cycle in $G$, by replacing $C$ in ${\cal C}$ by an induced cycle in $G[V(C)]$, we obtain another set of pairwise vertex-disjoint cycles. So from now onwards we assume that our objective is to look for $k$ vertex-disjoint induced cycles in $G$.

\begin{lemma}[Sublinear Crossing Lemma]
\label{lem:cyclepackcross}
Let $(G,B,k)$ be a \Yes\ instance of \probCycPacking on map graphs, and let  $\D'$ be a  \NSTD{$(15\sqrt{k},{\cal D})$}  of $G$ where $\D$ is a nice tree decomposition of $B$. 
%, such that  for each $t\in V(T_{\cal D'})$, 
%$\vert \beta_{\D'}(t)\vert \leq 45 \cdot k^{1.5}$. 
%Let $C$ be a cycle 
%in $G$. 
%Then there is a cycle $C'$ of length $\vert C \vert$ such that  for any $t\in V(T_{\D'})$, the 
%number of edges with one endpoint in $\beta_{\D'}(t)$ and other in $V(G)\setminus 
%\gamma_{\D'}(t)$ is at most $20\sqrt{2k}+4$.  
%
%Let $(G,B,k)$ be a \Yes\ instance of \probCycPacking on map graphs. Let $\D$ be a nice tree decomposition of $B$ of width less than $15\sqrt{k}$ and $\D'$ be a \NSTDlong{$15\sqrt{k}$}  (derived from $\D$) of $G$.  
Then, there is a solution $\cal C$ such that each cycle in $\cal C$ is an induced cycle in $G$ and for any $t\in V(T)$, the 
number of edges with one endpoint in $\beta_{\D'}(t)$ and the other in $V(G)\setminus 
\gamma_{\D'}(t)$ is at most $360\sqrt{k}$.  
\end{lemma}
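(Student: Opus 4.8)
The plan is to mirror the structure of the proof of the corresponding lemma for \probKCycle\ (Lemma~\ref{lem:cyclecross}), but applied simultaneously to a whole collection of cycles rather than to a single cycle. First I would fix a solution $\mathcal{C}_0 = \{C_1,\dots,C_k\}$ consisting of $k$ pairwise vertex-disjoint induced cycles, which exists since $(G,B,k)$ is a \Yes-instance (and by the remark preceding the lemma we may assume the cycles are induced). The key observation that lets us reuse the single-cycle machinery is that the operations used in Lemma~\ref{lem:cyclecrossone} to reroute a cycle through a special clique $K$ only modify edges of $E(K)\cap E(C)$ and preserve the vertex set of the cycle; hence applying such a rerouting to one cycle in the packing does not destroy vertex-disjointness with the others, and does not change which vertices are used. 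So I would first prove an analogue of Lemma~\ref{lem:cyclecrossone} for a whole packing: given a packing $\mathcal{C}$ and a special clique $K=N_B(s)$, there is a packing $\mathcal{C}'$ with $V(\mathcal{C}')=V(\mathcal{C})$ (hence $|\mathcal{C}'|=|\mathcal{C}|$, all induced since same vertex sets), $E(\mathcal{C}')\setminus E(K)=E(\mathcal{C})\setminus E(K)$, and such that for every node $t\in V(T)$ the number of edges of $E(\mathcal{C}')\cap E(K)$ with one endpoint in $\fake(t)\cap K$ and the other in $V(G)\setminus\gamma_{\D'}(t)$ is at most a constant.

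The core point for this step is that $K$, being a clique, meets each cycle $C_i$ of the packing in a set of vertices which splits $C_i$ into a family of vertex-disjoint paths whose endpoints lie in $K$. Taking the union over $i$ of all these path-families and of the lengths-$0$ paths corresponding to isolated vertices of $K$, together with the parts of cycles not meeting $K$, I obtain one big family $\mathcal{P}$ of vertex-disjoint paths with endpoint set $Z\subseteq K$, and the edges $E(\mathcal{C})\cap E(K)$ are exactly a way of closing these paths up into cycles. Crucially, a vertex $v\in Z$ has degree $1$ in $\mathcal{P}$ iff exactly one of its two incident solution-edges lies in $E(K)$, and degree $0$ iff both do (it is then the entire intersection of some cycle with a path system — here I should be a little careful about the case where a whole short cycle lies inside $K$, but an induced cycle has length at least $3$ and cannot lie in a clique, so this does not arise). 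Now instead of invoking Lemma~\ref{lem:stamentA} once, I would need a mild generalization that, given a clique $X$, a vertex sequence $u_1,\dots,u_\ell$ on $X$, and a family $\mathcal{Q}$ of vertex-disjoint paths with all endpoints in $X$ (and no edges inside $X$), produces $F\subseteq E(X)$ such that $E(\mathcal{Q})\cup F$ is a disjoint union of cycles covering $V(\mathcal{Q})$ and, for every prefix $\{u_1,\dots,u_j\}$, at most a constant number (two) of edges of $F$ cross. The same induction as in Lemma~\ref{lem:stamentA} should go through essentially verbatim, since closing up the paths into several cycles rather than one cycle does not change the local rerouting argument. Then Claim~\ref{claim:segment} applies unchanged: using a postorder traversal of $T$ restricted to the {\introducefakevertex}-nodes for vertices of $K$, for each $t$ the set $\fake(t)\cap Z$ sits inside a contiguous segment of the sequence $\sigma'$ avoiding $Z\setminus\gamma_{\D'}(t)$, giving the bound of $4$ crossing edges from $E(K)$ at node $t$.

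With the packing-analogue of Lemma~\ref{lem:cyclecrossone} in hand, I would then run the exact induction of Claim~\ref{claim:inductionfakeedges}: enumerating the special vertices $s_1,\dots,s_m$ of $S(G)$ with $K_i=N_B(s_i)$, and successively rerouting through $K_1,K_2,\dots$, I obtain a packing $\mathcal{C}^\star$ with $V(\mathcal{C}^\star)=V(\mathcal{C}_0)$ (so it is still a valid solution of $k$ induced cycles) such that for every node $t$, the number of $\mathcal{C}^\star$-edges with one endpoint in $\fake(t)$ and the other in $V(G)\setminus\gamma_{\D'}(t)$ is at most $4\cdot|\beta_{\D}(t)\cap S(G)|$. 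Finally, by Lemma~\ref{lem:edgetwokinds}, every edge crossing the separation at $t$ either is incident to $\org(t)$ or lies in a clique of $\cliques(t)$; the first kind contributes at most $2|\org(t)|$ edges (each vertex of $\org(t)$ has degree at most $2$ in the packing — here using that the cycles are vertex-disjoint so a vertex has degree at most $2$ overall in $E(\mathcal{C}^\star)$), and the second kind is bounded by $4|\beta_{\D}(t)\cap S(G)|$ from the induction. Summing, the crossing is at most $2|\org(t)| + 4|\beta_{\D}(t)\cap S(G)| \le 4|\beta_{\D}(t)| < 4\cdot 15\sqrt{k} = 60\sqrt{k}$, which is comfortably below the claimed $360\sqrt{k}$ (the slack presumably absorbs a less tight bookkeeping, or a later modification of the packing when the DP is set up).

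The main obstacle I anticipate is the careful generalization of Lemma~\ref{lem:stamentA} to close a path family into a \emph{disjoint union of several} cycles while keeping the prefix-crossing bound of $2$: one must make sure that when the induction merges or creates components it never forces three $F$-edges across some prefix boundary, and that the degree-$0$/degree-$1$ bookkeeping for $u_1$ (condition $(c)$) still drives the argument. I expect this to work because the rerouting in each of the three cases of Lemma~\ref{lem:stamentA} is entirely local to $u_1,u_2,u_3$ and is agnostic to whether the global object is one cycle or many; nonetheless verifying it cleanly (and handling the degenerate cases where a path of $\mathcal{Q}$ has length $0$, or where removing $K$ from a cycle leaves nothing) is the technically delicate part. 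A secondary point to get right is the claim that rerouting a single cycle's intersection with $K$ cannot collide with another cycle of the packing: this is immediate because the new edges all lie inside $K$ and the new cycle occupies exactly the old vertex set, which was already disjoint from the other cycles — but it should be stated explicitly.
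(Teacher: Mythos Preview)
Your approach has a genuine gap: the proposed ``mild generalization'' of Lemma~\ref{lem:stamentA} to close a family of paths into a \emph{disjoint union} of cycles while keeping the prefix-crossing bound at~$2$ does not hold, and in fact cannot hold if you also want to preserve the number of cycles in the packing. Concretely, take a special clique $K$ with the postorder sequence $v_1,\dots,v_{2r}$ and suppose the packing has $r$ induced cycles $C_1,\dots,C_r$ where $C_i$ meets $K$ in exactly the two vertices $v_i,v_{r+i}$ (so $C_i$ has a single edge $\{v_i,v_{r+i}\}$ in $E(K)$). Deleting $E(K)$ leaves $r$ vertex-disjoint paths $P_i$ with endpoints $v_i,v_{r+i}$. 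Any set $F\subseteq E(K)$ closing these paths into a $2$-regular graph is a perfect matching on $\{v_1,\dots,v_{2r}\}$, and the number of resulting cycles equals the number of cycles in the union of the two matchings $\{\{v_i,v_{r+i}\}:i\in[r]\}$ and $F$. This is $r$ only if $F=\{\{v_i,v_{r+i}\}:i\in[r]\}$, and then all $r$ edges of $F$ cross the prefix $\{v_1,\dots,v_r\}$. Any $F$ with constant prefix-crossing necessarily merges the paths into fewer than $r$ cycles, so the rerouted packing has fewer than $k$ cycles. Note also that the inference ``$V(\mathcal{C}')=V(\mathcal{C})$, hence $|\mathcal{C}'|=|\mathcal{C}|$ and all cycles are induced'' is incorrect on both counts: equal vertex sets say nothing about the number of connected components, and changing edges inside a clique destroys inducedness. (In fact, for an \emph{induced} non-triangle cycle $C$, $|V(C)\cap K|\le 2$ and $|E(C)\cap E(K)|\le 1$, so there is no room to reroute at all; the only interesting case is precisely the triangles in $K$ that you dismissed.)

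The paper therefore takes a different route. It fixes a solution $\mathcal{C}\in\mathscr{S}$ that \emph{maximizes} the number of triangles lying inside special cliques, and splits $\mathcal{C}=\mathcal{C}_1\uplus\mathcal{C}_2$ into these special-clique triangles $\mathcal{C}_1$ and the rest $\mathcal{C}_2$. For $\mathcal{C}_1$ a rerouting argument very close to your plan (and to the \probKCycle\ proof) does work, because triangles within a single clique can be freely permuted on the same vertex set; this yields a $60\sqrt{k}$ crossing bound (Lemma~\ref{lem:trianglepackcross}). For $\mathcal{C}_2$ no rerouting is performed at all: instead, the extremality of $\mathcal{C}$ is used to show that $\mathcal{C}_2$ already has few crossings (Lemma~\ref{lem:cyclepackcrossC2}). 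The key structural lemmas (Lemma~\ref{lem:cyclepackingcontradiction} and the forest/simple-planar arguments on auxiliary graphs over $S(G)\cap\beta_{\D}(t)$) say that if two non-triangle cycles both had clique-edges crossing at $t$ in a ``parallel'' way, one could replace them by special-clique triangles and strictly increase the triangle count, contradicting $\mathcal{C}\in\mathscr{S}$. This gives the $300\sqrt{k}$ bound for $\mathcal{C}_2$, and the sum $60\sqrt{k}+300\sqrt{k}=360\sqrt{k}$ explains the constant you noticed as slack.
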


%First, notice that  given a set ${\cal C}$ of pairwise vertex-disjoint cycles in a graph $G$ and a cycle $C\in{\cal C}$ that is not an induced cycle in $G$, by replacing $C$ in ${\cal C}$ by an induced cycle in $G[V(C)]$, we obtain another set of pairwise vertex-disjoint cycles. Thus, we have the following.

%\begin{observation}\label{obs:findInducedCycs}
%If $(G,B,k)$ is a \Yes-instance of \probCycPacking on map graphs, then $G$ contains a set  of $k$ pairwise vertex-disjoint induced cycles.
%\end{observation}

Towards the proof of Lemma~\ref{lem:cyclepackcross}, 
recall that for any $s\in S(G)$, $N_B(s)$ forms a clique in $G$ and we call it a special clique of $G$. 
If a solution $\cal C$ of $(G,B,k)$ contains a cycle with at least three vertices from $N_B(s)$, then it should be a triangle because we seek induced cycles. Towards 
finding a solution for $(G,B,k)$, we consider a solution that maximizes the number of  triangles it selects 
from the special cliques of $G$.  Let ${\mathscr S}$ be the set of solutions of $(G,B,k)$  with maximum number of  triangles from the special cliques of $G$ and which consist of induced cycles in $G$. Before proceeding to the proof of Lemma~\ref{lem:cyclepackcross}, 
we prove an analogous result for packing triangles from special cliques. 
%that indeed similar results hold for packing triangles from special cliques. 

\begin{lemma}
\label{lem:trianglepackcross}
%Let $G$ be a map graph and $B$ be a corresponding bipartite planar graph. Let  \ntd be  a nice \STD{$(15\sqrt{k},{\cal D})$}  of $G$. 
Let $(G,B,k)$ be an instance of \probCycPacking on map graphs, and let  $\D'$ be a  \NSTD{$(15\sqrt{k},{\cal D})$}  of $G$  where $\D$ is a nice tree decomposition of $B$. 
%$\D$ be a nice tree decomposition of $B$ of width less than $15\sqrt{k}$ and $\D'$ be a \NSTDlong{$15\sqrt{k}$}  (derived from $\D$) of $G$.  
Let ${\cal C}$ be a set of vertex-disjoint triangles from special cliques of $G$. Then, there is a set ${\cal C}'$ of vertex-disjoint triangles from special cliques of $G$ such that $\vert \cal C\vert=\vert {\cal C}'\vert$, $V({\cal C})=V({\cal C}')$, and 
for any $t\in V(T)$, the number of edges of $E({\cal C}')$ with one endpoint in $\beta_{\D'}(t)$ and the other in $V(G)\setminus 
\gamma_{\D'}(t)$ is at most $60\sqrt{k}$.  
\end{lemma}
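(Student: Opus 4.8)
The plan is to mimic, almost verbatim, the proof of the Sublinear Crossing Lemma for cycles (Lemma~\ref{lem:cyclecross}), replacing the single long cycle by the edge-disjoint union $E(\mathcal C)$ of the triangles, and replacing the path-rearrangement step (Lemma~\ref{lem:stamentA}) by a trivial block rearrangement. Since by hypothesis every triangle in $\mathcal C$ is contained in some special clique of $G$, we may fix for each $C_a\in\mathcal C$ an index $\kappa(C_a)=i$ with $V(C_a)\subseteq K_i:=N_B(s_i)$ and $i$ smallest possible; this partitions $\mathcal C$ into pairwise vertex-disjoint groups $\mathcal C_{K_1},\dots,\mathcal C_{K_\ell}$, where $\mathcal C_{K_i}$ consists of the triangles assigned to $K_i$.

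First I would prove the one-clique version, the analogue of Lemma~\ref{lem:cyclecrossone}: for a set $\mathcal C$ of vertex-disjoint triangles from special cliques and a fixed special clique $K=K_i$, there is $\mathcal C'$ with $|\mathcal C'|=|\mathcal C|$, $V(\mathcal C')=V(\mathcal C)$, agreeing with $\mathcal C$ on every triangle not assigned to $K$ and using the same vertex set $Z:=V(\mathcal C_{K})\subseteq K$ for those assigned to $K$, such that for every node $t\in V(T)$ the number of edges of $E(\mathcal C')\cap E(K)$ with one endpoint in $\fake(t)\cap K$ and the other in $V(G)\setminus\gamma_{\D'}(t)$ is at most $4$. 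To prove this, order $Z=\{v_1,\dots,v_{k'}\}$, with $k'$ a multiple of $3$, exactly as in the proof of Lemma~\ref{lem:cyclecrossone}: concatenate the sequence of vertices of $K$ in the order in which they are fakely introduced along a postorder traversal of $T$ with an arbitrary sequence of the remaining vertices of $K$, then restrict to $Z$. The verbatim analogue of Claim~\ref{claim:segment} yields, for each $t$, a segment $[v_a,\dots,v_b]$ of this sequence containing all of $\fake(t)\cap Z$ and none of $Z\setminus\gamma_{\D'}(t)$. Now partition $Z$ into the consecutive triples $\{v_1,v_2,v_3\},\{v_4,v_5,v_6\},\dots$ and let $\mathcal C'$ be the resulting triangle collection together with the untouched triangles of $\mathcal C$. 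A crossing edge as above has one endpoint $v_p\in\{v_a,\dots,v_b\}$ and one endpoint $v_q\in Z\setminus\gamma_{\D'}(t)$, hence $q<a$ or $q>b$; so $v_p$ and $v_q$ lie in a triple straddling the cut at position $a$ or the cut at position $b$. At most one triple straddles each of these two cuts, and each such triple contributes at most two such edges, for a total of at most $4$.

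Next I would run the same induction over special cliques as in Claim~\ref{claim:inductionfakeedges}: starting from $\mathcal C$, apply the one-clique lemma successively to $K_1,K_2,\dots,K_\ell$, each step reshuffling only the (vertex-disjoint from all other groups) triangles of the current group, so the fixes do not interfere and both $V(\mathcal C)$ and $|\mathcal C|$ are preserved. The invariant maintained is that the number of edges of $E(\mathcal C^{(i)})$ with one endpoint in $\fake(t)\cap(K_1\cup\dots\cup K_i)$ and the other in $V(G)\setminus\gamma_{\D'}(t)$ is at most $4\,|\beta_{\D}(t)\cap\{s_1,\dots,s_i\}|$, proved by the same two-case analysis ($s_i\in\beta_{\D}(t)$ or not). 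Taking $i=\ell$ and using that $\fake(t)\subseteq K_1\cup\dots\cup K_\ell$, we get that at most $4\,|\beta_{\D}(t)\cap S(G)|$ crossing edges of $\mathcal C':=\mathcal C^{(\ell)}$ are incident with $\fake(t)$. By Lemma~\ref{lem:edgetwokinds}, every other edge of $\mathcal C'$ crossing the bag $\beta_{\D'}(t)$ is incident with $\org(t)$, and since each vertex of $\org(t)$ lies in at most one triangle this accounts for at most $2|\org(t)|$ further edges. Hence the number of crossing edges at $t$ is at most $2|\org(t)|+4|\beta_{\D}(t)\cap S(G)|\le 4|\beta_{\D}(t)|\le 60\sqrt k$, since $\D$ has width less than $15\sqrt k$; and each cycle of $\mathcal C'$ is still a triangle inside a special clique. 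This proves the lemma.

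The step I expect to be the main obstacle is the inductive composition over special cliques: one must verify that applying the one-clique fix to $K_i$ does not destroy the crossing bound already obtained for $K_1,\dots,K_{i-1}$, which is delicate because $E(K_i)$ can share edges with the earlier cliques. As in Claim~\ref{claim:inductionfakeedges}, the resolution is to phrase the invariant cumulatively over $K_1\cup\dots\cup K_i$ and to use Lemma~\ref{lem:edgetwokinds} to argue that when $s_i\notin\beta_{\D}(t)$ the modified edges of $E(K_i)$ cannot be crossing edges of $t$ incident with $\fake(t)$ at all; the bookkeeping around the assignment $\kappa$ --- ensuring the vertex set $V(\mathcal C_{K_i})$ of the triangles assigned to $K_i$ stays fixed throughout the process --- is the other point requiring care.
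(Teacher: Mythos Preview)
Your proposal is correct and follows the same core idea as the paper: order the vertices of each group of triangles by the postorder sequence of \introducefakevertex{} nodes, repack them into consecutive triples, and use the segment argument (your analogue of Claim~\ref{claim:segment}) to bound the number of crossing edges per clique by~$4$; then combine across cliques and add the $2\lvert\org(t)\rvert$ term to reach $4\lvert\beta_{\D}(t)\rvert\le 60\sqrt{k}$.

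The one place where you diverge from the paper is the combination step, and here you make things harder than they need to be. You set up an induction over $K_1,\dots,K_\ell$ patterned on Claim~\ref{claim:inductionfakeedges}, and you flag its bookkeeping as the ``main obstacle''. In the paper there is no induction at all: since the groups $\mathcal{C}_{K_1},\dots,\mathcal{C}_{K_\ell}$ are pairwise \emph{vertex}-disjoint (your assignment $\kappa$ guarantees this), the one-clique fix for $K_i$ only rearranges edges among vertices of $V(\mathcal{C}_{K_i})$ and cannot touch any edge of any $\mathcal{C}_{K_j}'$ with $j\neq i$. So one simply applies the one-clique claim independently to every group and takes the union $\mathcal{C}'=\bigcup_i\mathcal{C}_{K_i}'$. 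Your concern that ``$E(K_i)$ can share edges with the earlier cliques'' is a red herring here: what matters is that the \emph{vertex sets} of the groups are disjoint, not that the cliques themselves are. This is precisely why the triangle case is easier than the long-cycle case, where a single cycle can use edges from many special cliques and an honest inductive composition (Claim~\ref{claim:inductionfakeedges}) is needed.
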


\begin{proof}
%Let $T=T_{\D}=T_{\D'}$. 
Let $S(G)=\{s_1,\ldots,s_{\ell}\}$, and let $K_i$ be the special clique $N_B(s_i)$ in $G$ for any $i\in [\ell]$. 
Let ${\cal C}=\biguplus_{i\in [\ell]} {\cal C}_i$ be a set of vertex-disjoint triangles such that ${\cal C}_i$ is a set of triangles in $K_i$ for any $i\in [\ell]$, and ${\cal C}_i\cap {\cal C}_j=\emptyset$ for any distinct $i,j\in [\ell]$. First, we prove the following claim.

%Since $\vert \beta_{\D}(t)\cap S(G)\vert \leq 15\sqrt{k}$ for any $t\in V(T)$, to prove the lemma, it is enough to prove that for any $i\in [\ell]$, 
\begin{claim}
\label{claim:triangleinaclique}
Let $i\in [\ell]$. Then, there is a set ${\cal C}_i'$ of  vertex-disjoint triangles from the special clique $K_i$ of $G$ such that $\vert {\cal C}_i\vert=\vert {\cal C}_i'\vert$, $V({\cal C}_i)=V({\cal C}_i')$, and for any node $t\in V(T)$, the number of edges of $E({\cal C}_i')$ with one endpoint in $\fake(t)\cap K_i$ and the other in $V(G)\setminus \gamma_{\D'}(t)$ is at most~$4$. 
\end{claim} 
\begin{proof}
Let $\pi'$ be a postorder transversal of the  nodes in the binary tree $T$. Let 
$\pi$ be the restriction of $\pi'$ where we keep only the nodes  labelled with {\introducefakevertex}$(v)$ for some $v\in K_i$. Recall that, for any vertex $v\in V(G)$, there is at most one node in $T$ which is labelled with 
{\introducefakevertex}$(v)$ by $\D'$ (see Observation~\ref{obs:fakeprop}). 
Accordingly, denote $\pi=t_1,\ldots,t_{r}$ where  each $t_j$, $j\in [r]$, is a node labelled with  {\introducefakevertex}$(v_j)$ for some $v_j\in K_i$. Let $\sigma_1$ be the sequence $v_1,\ldots,v_{r}$, and $U=\{v_1,\ldots,v_{r}\}$. Let $\sigma_2$ be a fixed arbitrary sequence on $K_i\setminus U$. Let $\sigma$ be the sequence $(\sigma_1\sigma_2)|_{V({\cal C}_i)}$. That is, $\sigma$ is the sequence obtained by restricting $\sigma_1\sigma_2$ (the concatenation of $\sigma_1$ and $\sigma_2$) to $V({\cal C}_i)$.  Since ${\cal C}_i$ is a set of vertex-disjoint triangles from $K_i$, we have that the length of $\sigma$ is a multiple $3$. Thus, we can denote $\sigma=z_1z_2\ldots z_{3q}$ for some positive integer $q$. Notice that $z_j=v_j$ for any $j\in [r]$, and $\vert {\cal C}_i\vert=q$.  Now, we define the ``required'' set of triangles to be ${\cal C}_i'=\{[z_{3c-2}z_{3c-1}z_{3c}z_{3c-2}]~:~ c\in [q] \}$.   Clearly, ${\cal C}_i'$ is a set of vertex-disjoint triangles, $\vert {\cal C}_i'\vert=\vert {\cal C}_i\vert$ and $V({\cal C}_i)=V({\cal C}_i')$. The proof of the following statement easily follows from the definition ${\cal C}_i'$. 
\begin{itemize}
\item[$(a)$] For any $j\in [3q]$, the number of edges of $E({\cal C}_i')$   with one endpoint in $\{z_1,\ldots,z_j\}$ and the other in $\{z_{j+1},\ldots,z_{3q}\}$ is at most $2$. 
\end{itemize}
The proof of the following statement is similar in arguments to that of Claim~\ref{claim:segment}. 
\begin{itemize}
\item[$(b)$] For any  $t\in V(T)$, there is a segment $\sigma'$ of $\sigma$ such that 
each vertex in $\fake(t)\cap V({\cal C}_i')$ appears in $\sigma'$, and each vertex in $V({\cal C}_i')\setminus \gamma_{\D'}(t)$ does not appear in $\sigma'$.  
\end{itemize}

Now, we are ready to complete the proof of the claim.
Towards this, let us fix a node $t\in V(T)$. 
%Now using statements $(a)$ and $(b)$ we prove the claim. 
%By $()$, we have that there is a cycle $C'$ such that 
%$(i)$ $E(C')\setminus E(Z)=E(C)\setminus E(Z)$ and $(ii)$ for any $j\in [k']$, the number of edges of $E(C)\cap E(Z)$, with one endpoint in $\{v_1,\ldots,v_j\}$ and other in $\{v_{j+1},\ldots,v_{k'}\}$ is at most $2$. 
%By Claim~\ref{claim:segment}, we know that for any $t\in V(T)$, 
%there is a segment $\sigma''$ of $\sigma'$ such that 
%all vertices in $\fake(t)\cap Z$ appear in a segment $\sigma''$ and no vertex from 
%%$Z\setminus \fake(t)$ 
%$Z\setminus \gamma_{\D'}(t)$ appear in $\sigma''$. That is, 
By statement $(b)$, 
there exist $j_1,j_2\in [3q]$ such that 
$\fake(t)\cap V({\cal C}_i')\subseteq \{z_{j_1},\ldots,z_{j_2}\}$ and $(V({\cal C}_i')\setminus \gamma_{\D'}(t))\cap \{z_{j_1},\ldots,z_{j_2}\}=\emptyset$. 
 Therefore, by statement $(a)$, we conclude that 
 %for any $t\in V(T_{\D'})$, 
 the number of edges of $E({\cal C}_i')$ with one endpoint in $\fake(t)\cap K_i$ and other in $V(G)\setminus \gamma_{\D'}(t)$ is at most $4$.  
\end{proof}

%
%For any $i\in [\ell]$, let $\sigma_i=v_1(i)v_2(i)\ldots v_{r_i}(i)$ be the restriction of $\sigma$ to $V({\cal C}_i)=\{v_1(i),\ldots, v_{r_i}(i)\}$.  
%Now we prove the following simple claim. 
%\begin{claim}
%\label{claim:triangle}
%For any $i\in [\ell]$, there is a set ${\cal C}'_i$ of  triangles from special clique $K_i$ of $G$ such that $\vert {\cal C}_i\vert=\vert {\cal C}_i'\vert$, $V({\cal C}_i)=V({\cal C}'_i)$, and for any $j\in [r_i]$, the number of edges with one endpoint in $\{v_1(i),\ldots,v_j(i)\}$ and other in $\{v_{j+1}(i),\ldots,v_{r_i}(i)\}$ is at most $2$. 
%\end{claim}
%\begin{proof}
%We know that ${\cal C}_i$ is a set of triangles in $K_i$ and $V({\cal C}_i)=\{v_1(i),\ldots, v_{r_i}(i)\}$. That is,  
%$r_i$ is  a multiple of $3$. Now the set ${\cal C}_i'=\{[v_{3c+1}(i)v_{3c+2}(i)v_{3c+3}(i)v_{3c+1}(i)]~:~ c\in [r_i/3-1]_0 \}$ 
%of triangles satisfy the conditions of the claim.  
%\end{proof}
%
%The proof of the next claim is similar  to that of Claim~\ref{claim:segment}. 
%
%\begin{claim}
%\label{claim:segmentCP}
%For any $i\in [\ell]$ and $t\in V(T_{\D'})$, 
%there is a segment $\sigma'_i$ of $\sigma_i$ such that 
%each vertex in $\fake(t)\cap K_i$ appear in $\sigma'_i$ 
%and each vertex in $V({\cal C}_i')\setminus \gamma_{\D'}(t)$ will not appear in $\sigma'_i$.  
%\end{claim}

To prove the lemma, we apply Claim~\ref{claim:triangleinaclique} for all $i\in [\ell]$ to obtain ${\cal C}_i'$ from ${\cal C}_i$,  and then let ${\cal C}'=\bigcup_{i\in[\ell]} {\cal C}_i'$. Clearly, by Claim~\ref{claim:triangleinaclique}, ${\cal C}'$ is a set of vertex-disjoint triangles, $\vert {\cal C}\vert=\vert {\cal C}'\vert$, and $V({\cal C})=V({\cal C}')$. Now, fix any $t\in V(T)$. By Claim~\ref{claim:triangleinaclique}, the number of edges of $E({\cal C}')$ with one endpoint in $\fake(t)\cap \beta_{\D'}(t)$ and the other in $V(G)\setminus \gamma_{\D'}(t)$ is at most $4r_1$  where $r_1=\vert \cliques(t)\vert$. Since the degree of each vertex in a graph consisting of only vertex-disjoint cycles is $2$, the number of edges  of $E({\cal C}')$ with one endpoint in $\org(t)\cap \beta_{\D'}(t)$ and the other in $V(G)\setminus \gamma_{\D'}(t)$ is at most $2r_2$ where $r_2=\vert \org(t)\vert$. Therefore, the number of edges  of $E({\cal C}')$ with one endpoint in $\beta_{\D'}(t)$ and the other in $V(G)\setminus \gamma_{\D'}(t)$ is at most $4r_1+2r_2\leq 4(r_1+r_2)=4\vert \beta_{\D}(t)\vert \leq 60\sqrt{k}$. 
%This completes the proof of the lemma.
\end{proof}

Recall that ${\mathscr S}$ is the set of solutions of $(G,B,k)$  with maximum number of  triangles from special cliques of $G$ which consists only of induced cycles. Next, we state another lemma needed for the proof of Lemma~\ref{lem:cyclepackcross}. The proof of this lemma will be the focus of most of the rest of this section.  

%We need the following lemma as well to prove 

\begin
{restatable}
{lemma}{cyclepackcrossCtwo}
\label{lem:cyclepackcrossC2}
%Let $(G,B,k)$ be a \Yes\ instance of \probCycPacking on map graphs and let  \ntd be  a nice \STD{$(15\sqrt{k},{\cal D})$}  of $G$.
Let $(G,B,k)$ be an instance of \probCycPacking on map graphs, and let  $\D'$ be a  \NSTD{$(15\sqrt{k},{\cal D})$}  of $G$ where $\D$ is a nice tree decomposition of $B$. 
%Let $\D$ be a nice tree decomposition of $B$ of width less than $15\sqrt{k}$ and $\D'$ be a \NSTDlong{$15\sqrt{k}$}  (derived from $\D$) of $G$.  
Let ${\cal C}\in {\mathscr S}$. Let ${\cal C}_2\subseteq {\cal C}$ be such that $C\in {\cal C}_2$ if and only if $C$ is not a triangle in a special clique of $G$.   Then, for any node $t\in V(T)$, the 
number of edges of ${\cal C}_2$ with one endpoint in $\beta_{\D'}(t)$ and the other in $V(G)\setminus 
\gamma_{\D'}(t)$ is at most $300\sqrt{k}$.  
\end{restatable}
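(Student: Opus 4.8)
The plan is to imitate the structure of the proof of the Sublinear Crossing Lemma for cycles (Lemma~\ref{lem:cyclecross}), but now the "moving" part is the collection of non-special-clique-triangle cycles $\CC_2$ rather than a single cycle. The key point is that $\CC\in\mathscr S$ maximizes the number of triangles drawn from special cliques, so for each special clique $K=N_B(s)$, the cycles in $\CC_2$ use at most $2$ vertices of $K$, \emph{or} the intersection with $K$ is "useless" in a way that contradicts maximality. I would first record this structural fact: for every special clique $K$, and every cycle $C\in\CC_2$, we have $|V(C)\cap K|\le 2$; indeed if some $C\in\CC_2$ met $K$ in three or more vertices, then $G[V(C)\cap K]$ contains a triangle $\Delta$ that is vertex-disjoint from all other cycles of $\CC$, and replacing $C$ by $\Delta$ (and noting $C$ itself is not a triangle in a special clique, so this strictly increases the special-triangle count while keeping $|\CC|$ — or does not decrease it, after also keeping whatever induced cycle survives on $V(C)\setminus V(\Delta)$ if nonempty) contradicts $\CC\in\mathscr S$. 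One must be slightly careful that removing $\Delta$ from $C$ may leave a path, not a cycle, but since we only need to \emph{not decrease} $|\CC|$ while \emph{increasing} the triangle count, and $\mathscr S$ is defined by "maximum number of triangles among solutions of size $k$", it suffices to exhibit $k$ disjoint cycles with more special triangles — we can keep the remaining $k-1$ cycles untouched and just use $\Delta$ as the $k$-th. This gives the contradiction.

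**The core swap.** Having fixed that each $C\in\CC_2$ contributes at most $2$ vertices to each special clique $K_i=N_B(s_i)$, I would now run essentially the same edge-rerouting argument as in Lemma~\ref{lem:cyclecrossone} and Claim~\ref{claim:inductionfakeedges}, but applied to the edge set $E(\CC_2)\cap E(K_i)$ \emph{summed over all cycles simultaneously}. Fix $K=K_i$. Let $\PP$ be the graph obtained from $\CC_2$ by deleting all edges in $E(K)$; each component of $\PP$ is a path (possibly of length $0$) whose endpoints that lie in $K$ are exactly the "stubs" we must reconnect, and every vertex of $K$ used by $\CC_2$ appears at most twice. The difference from the cycle case is that we must reconnect these stubs into the \emph{same number} of closed curves, i.e., we may not merge two different cycles of $\CC_2$ into one. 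To handle this, I would process the reconnection cycle-by-cycle: for each $C\in\CC_2$ separately, its stubs in $K$ form a set $Z_C\subseteq K$ of size at most $2$ (at most $2$ vertices, each appearing at most once as a stub since $|V(C)\cap K|\le 2$ and $C$ is an induced cycle — if $|V(C)\cap K|=2$ with both an edge of $C$, that edge stays, contributing one edge of $E(K)$; if it is a non-edge of $C$ then $C$ meets $K$ in two non-adjacent vertices and uses no edge of $E(K)$). So in fact each cycle of $\CC_2$ contributes \emph{at most one} edge to $E(\CC_2)\cap E(K)$, and that edge is between two vertices of $K$ that both belong to that one cycle. This is much simpler than the general "complete a path system to a cycle" of Lemma~\ref{lem:stamentA}: there is essentially no freedom, and we only need to bound how the edge $\{a_C,b_C\}$ (when it exists) crosses a node $t$.

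**Bounding the crossing at a node.** Now fix a node $t\in V(T)$ and a special clique $K=K_i$ with $s_i\in\beta_\D(t)$. The edges of $E(\CC_2)\cap E(K)$ incident to a vertex of $\fake(t)\cap K$ and going to $V(G)\setminus\gamma_{\D'}(t)$: each such edge $\{a_C,b_C\}$ has one endpoint in $\fake(t)$ and one outside $\gamma_{\D'}(t)$. I would argue that for a \emph{fixed} cycle $C$ there is at most one such edge (it has only one), and then — mimicking the postorder-segment argument of Claim~\ref{claim:segment} — I would show that across \emph{all} $C\in\CC_2$ there are only $O(1)$ such edges per special clique per node; but since we are not rearranging within a cycle, the right statement is simpler: for this node $t$ and clique $K$, the number of cycles $C\in\CC_2$ that have an edge of $E(K)$ crossing at $t$ via a fake vertex is at most... here I would invoke that each such crossing edge forces $a_C\in\fake(t)\cap K$ and $b_C\notin\gamma_{\D'}(t)$, but actually we might just bound this by $2$ directly per clique without the postorder trick, or re-derive a bound of the form $4r_1$ where $r_1=|\cliques(t)|$, exactly as in the last paragraph of the proof of Lemma~\ref{lem:trianglepackcross}. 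Then the $\org(t)$-incident edges of $E(\CC_2)$ number at most $2|\org(t)|=2r_2$ since $\CC_2$ is a disjoint union of cycles, and by Lemma~\ref{lem:edgetwokinds} every crossing edge of $\CC_2$ at $t$ is of one of these two kinds; hence the total is at most $4r_1+2r_2\le 4|\beta_\D(t)|<4\cdot 15\sqrt k=60\sqrt k$. But the statement asks for $300\sqrt k$, which is $5\cdot 60\sqrt k$, suggesting the authors allow a sloppier per-clique constant (e.g.\ they reroute using the general Lemma~\ref{lem:stamentA} machinery giving $4$ per clique but with a factor from handling both stubs, or they do an induction as in Claim~\ref{claim:inductionfakeedges} that loses a factor). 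I would therefore present the argument with the safe constant: show $\le 4$ crossing edges of $E(K_i)$ per clique per node by the same inductive rerouting used for cycles (Claim~\ref{claim:inductionfakeedges}, now one clique at a time, $\CC_2$ in place of a single cycle, with the "at most one edge per cycle" observation making the induction step trivial), conclude $4|\cliques(t)|+2|\org(t)|\le 4|\beta_\D(t)|\le 60\sqrt k \le 300\sqrt k$.

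**Main obstacle.** The delicate point — and the step I expect to be the real obstacle — is the maximality argument showing $|V(C)\cap K|\le 2$ for every $C\in\CC_2$, together with making sure the rerouting genuinely preserves the number of cycles \emph{and} the property $\CC'\in\mathscr S$ (i.e.\ that after rerouting we have not destroyed induced-ness or accidentally merged/split cycles or changed which cycles are special-clique triangles). Once rerouting is confined to operate independently on each cycle and each cycle contributes at most one $E(K_i)$-edge, the counting is routine; but verifying that no cycle of $\CC_2$ "secretly" uses three vertices of some special clique (which would break the clean "one edge per cycle" picture and require the full Lemma~\ref{lem:stamentA} apparatus) is exactly where the hypothesis $\CC\in\mathscr S$ must be used carefully. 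I would isolate this as a preliminary claim and prove it first; the rest then parallels the cycle case almost verbatim, with the bound $300\sqrt k$ comfortably accommodating any constant slack in the per-clique estimates.
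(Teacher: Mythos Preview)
Your structural observation that every induced cycle $C\in\CC_2$ satisfies $|V(C)\cap K|\le 2$ for every special clique $K$ is correct (and in fact follows immediately from $C$ being induced together with the definition of $\CC_2$, without invoking maximality). But the argument stalls right after that: from ``each cycle contributes at most one edge to $E(\CC_2)\cap E(K)$'' you cannot conclude that only $O(1)$ such edges cross at a given node $t$ through a given clique $K$. There can be many cycles in $\CC_2$, each contributing one edge of $E(K)$ with one endpoint in $\fake(t)\cap K$ and the other outside $\gamma_{\D'}(t)$; the only obvious bound is $|\fake(t)\cap K|$, which is not a constant. The rerouting mechanism from the \probKCycle\ proof is unavailable precisely because, as you yourself note, ``there is essentially no freedom'': with at most one $E(K)$-edge per cycle there is nothing to reorder along a postorder sequence, so Claim~\ref{claim:segment} has no analogue here. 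Hence the step ``show $\le 4$ crossing edges of $E(K_i)$ per clique per node'' is unjustified, and the chain leading to $4|\beta_\D(t)|$ breaks. You have also misidentified the obstacle: the $|V(C)\cap K|\le 2$ claim is the easy part, not the delicate one.

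The paper's proof takes an entirely different route. It does not reroute $\CC_2$ and does not aim for a per-clique constant. Instead it fixes $t$, restricts $\CC_2$ to $G_t$ to obtain a collection $\PP$ of paths with endpoints in $\beta_{\D'}(t)$, and bounds $|\PP|$ directly by splitting $\PP=\OO\uplus\mathcal F_0\uplus\mathcal F_1$ (paths with an endpoint in $\org(t)$; length-$0$ paths with endpoint in $\fake(t)$; longer paths with both endpoints in $\fake(t)$). The bound $|\OO|\le 15\sqrt k$ is immediate. For $\mathcal F_0$ one builds a planar bipartite subgraph of $B$ on $\{s_1,\dots,s_\ell\}\cup\mathcal F_0$ in which each $v\in\mathcal F_0$ has degree~$2$, and contracts one edge at each such $v$ to obtain a planar graph on $\ell$ vertices; maximality of $\CC$ (via the two-clique swap of Lemma~\ref{lem:cyclepackingcontradiction}) forces this contracted graph to be \emph{simple}, so it has at most $3\ell-6$ edges, giving $|\mathcal F_0|\le 45\sqrt k$. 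For $\mathcal F_1$ one builds a graph on $\{s_1,\dots,s_\ell\}$ with one edge per path in $\mathcal F_1$; maximality (together with a system-of-distinct-representatives argument, Lemma~\ref{lem:sysrep}) forces this graph to be a \emph{forest}, giving $|\mathcal F_1|<\ell\le 15\sqrt k$. Summing, $|\PP|\le 75\sqrt k$, whence at most $300\sqrt k$ crossing edges. The genuine use of $\CC\in\mathscr S$ is in these two auxiliary-graph arguments (simplicity and acyclicity), not in the per-cycle clique bound.
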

%{lemma}

Lemma~\ref{lem:cyclepackcross} follows immediately from Lemmata~\ref{lem:trianglepackcross} and 
\ref{lem:cyclepackcrossC2}. Our proof of  Lemma~\ref{lem:cyclepackcrossC2} requires the arguments of the following lemma (see Figure~\ref{figureK}).

\begin{lemma}
\label{lem:cyclepackingcontradiction}
Let $(G,B,k)$ be an instance of \probCycPacking on map graphs. 
Let ${\cal C}\in {\mathscr S}$, and denote ${\cal C}={\cal C}_1\uplus {\cal C}_2$ where $C\in {\cal C}_1$  if and only if  $C$ is a triangle in a special clique.  Let $K_1$ and $K_2$ be two special cliques. Then,  
%the following conditions hold. 
%\begin{itemize}
%\item[$(i)$] 
there does not exist two vertices $u,v\in V({\cal C}_2)\cap K_1\cap K_2$ and four edges $e_1,e_2,e_3,e_4$
such that $(a)$ $e_1,e_2 \in E({\cal C}_2)\cap E(K_1)$,  $(b)$ $e_3,e_4 \in E({\cal C}_2)\cap E(K_2)$, $(c)$ $u$ is incident with $e_1$ and $e_3$, and $(d)$ $v$ is incident with $e_2$ and $e_4$.  
%\item[$(ii)$] There does not exists a $P_3$ as a subgraph of ${\cal C}_2$, which is fully contained in $K_1$. 
%\end{itemize}
\end{lemma}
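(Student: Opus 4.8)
The plan is to argue by contradiction: suppose such vertices $u,v$ and edges $e_1,e_2,e_3,e_4$ exist, and exhibit a new solution ${\cal C}^{\star}$ that is vertex-disjoint, has the same cardinality as ${\cal C}$, consists of induced cycles, but has \emph{strictly more} triangles from special cliques than ${\cal C}$ — contradicting the choice of ${\cal C}\in{\mathscr S}$. The vertices $u,v$ lie on cycles in ${\cal C}_2$ (i.e.\ cycles that are \emph{not} triangles in special cliques). The edges $e_1,e_2$ lie in $E(K_1)$ and $e_3,e_4$ lie in $E(K_2)$; since $u$ is incident with both $e_1\in E(K_1)$ and $e_3\in E(K_2)$, and $v$ with $e_2\in E(K_1)$ and $e_4\in E(K_2)$, I would let $a$ be the other endpoint of $e_1$ and $b$ the other endpoint of $e_2$, so $a,b,u,v\in K_1$; similarly let $c$ be the other endpoint of $e_3$ and $d$ of $e_4$, so $c,d,u,v\in K_2$. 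The key structural observation is that because $u,v\in K_1\cap K_2$, the pairs $\{u,v\}$, and also $\{u,a\},\{u,b\},\dots$ within each $K_i$ are all edges of $G$ (cliques), so we have a lot of freedom to reroute.

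First I would handle the case distinction based on whether $u$ and $v$ lie on the same cycle of ${\cal C}_2$ or on two different cycles, and whether $e_1=e_2$ or $e_3=e_4$ (they could share an endpoint other than $u,v$ — but condition (c),(d) forces $e_1\neq e_2$ and $e_3\neq e_4$ as long as they are genuine edges, which I should check: $e_1$ is incident to $u$, $e_2$ to $v$, and if $e_1=e_2$ then $e_1=\{u,v\}$, which is still possible, so this subcase needs separate care). In the generic case where $u,v$ lie on the same cycle $C\in{\cal C}_2$: removing $e_1$ and $e_2$ from $C$ (both in $E(K_1)$) splits $C$ into two paths $P_1,P_2$; I would then reconnect using the edge $\{u,v\}\in E(K_1)\subseteq E(G)$ together with... actually the cleaner move is: delete $u$ and $v$ from ${\cal C}_2$ entirely, forming triangle $\{u,v,w\}$ for a suitable common neighbour, and patch the remaining path fragments back into cycles using clique edges of $K_1$ and $K_2$. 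The point of having \emph{four} edges across \emph{two} cliques is exactly to provide enough rerouting edges to repair both ``broken ends'' while freeing up $u,v$ to form a new special-clique triangle — e.g.\ $\{u,v,c\}$ is a triangle in $K_2$ if $c\in K_2$, or we use $a\in K_1$. So the new solution replaces a chunk of ${\cal C}_2$ involving $u,v$ by: (i) the triangle $\{u,v,x\}$ in a special clique, plus (ii) re-patched cycles on the rest of the vertices, preserving total count.

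The main obstacle — and where I expect most of the work — is verifying that the re-patched cycles are (a) vertex-disjoint from each other and from all of ${\cal C}_1$, (b) genuinely cycles (length $\geq 3$, no degenerate two-vertex ``cycles''), and (c) can be taken to be \emph{induced} (using the standard observation already quoted in the paper that any cycle can be shrunk to an induced one on a subset of its vertices, without affecting vertex-disjointness). A subtle point: when I form the triangle $\{u,v,x\}$, I must ensure $x$ is not already used by another cycle; the natural candidates $a,b,c,d$ might themselves be $u$ or $v$ or lie on cycles I'm modifying, so I may need to choose $x$ as the special vertex's clique neighbourhood rather than a fixed endpoint — i.e.\ use that $N_B(s_1)\setminus\{u,v\}$ or $N_B(s_2)\setminus\{u,v\}$ must be nonempty (if $K_i=\{u,v\}$ then $e_1,e_2$ couldn't both be distinct edges of $K_i$, contradiction, so each $K_i$ has a third vertex, though that vertex may be occupied). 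The real care is a careful enumeration of how $e_1,e_2,e_3,e_4$ sit relative to the cycles of ${\cal C}_2$ (same cycle / different cycles; whether the four edges touch two, three, or four distinct cycles) and exhibiting, in each case, the rerouting. I expect the ``different cycles'' cases to actually be \emph{easier} (merging two cycles and splitting off a triangle is clean), and the hardest to be when all four edges lie on a single cycle $C$ that wraps around through $K_1$ and $K_2$ in an interleaved order $u\dots v\dots$ along $C$, where I must be careful that deleting $\{e_1,e_2,e_3,e_4\}$ breaks $C$ into the right number of arcs to reassemble. Once the combinatorial reassembly is pinned down, counting triangles and lengths is routine.
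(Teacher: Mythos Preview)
Your overall strategy—contradiction by exhibiting a solution with strictly more special-clique triangles—is exactly right, and matches the paper. But you are overcomplicating the construction and, as a result, never actually close the argument: all of the difficulties you flag (finding an available third vertex $x$ for a triangle $\{u,v,x\}$, re-patching the leftover arcs, the interleaved single-cycle case) are self-inflicted.

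The paper's move is much cleaner and sidesteps every one of your worries. In your notation, $a,b\in K_1$ and $c,d\in K_2$; the paper first shows these four ``other endpoints'' are pairwise distinct (each collision, e.g.\ $a=b$, already forces a subpath $uav$ inside a single cycle $C\in{\cal C}_2$, and then $[uavu]$ is a special-clique triangle replacing $C$—done). Once $a,b,c,d$ are distinct, \emph{do not} try to form one triangle and patch: instead form \emph{two} triangles simultaneously, $[uabu]$ in $K_1$ and $[vcdv]$ in $K_2$, and throw away the one or two cycles of ${\cal C}_2$ that contained $u$ and $v$. All six vertices $u,v,a,b,c,d$ lie on the removed cycle(s), so disjointness from the rest of ${\cal C}$ is automatic; triangles are trivially induced; and the special-triangle count strictly increases. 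No rerouting, no external vertex $x$, no case analysis on how the four edges interleave along a cycle.

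So the gap in your plan is not a wrong idea but a missed simplification: you fixate on freeing both $u$ and $v$ for a single triangle, which forces you to repair the remaining arcs, whereas splitting $u$ and $v$ into \emph{separate} triangles (each using the two clique-neighbours already sitting on the deleted cycles) makes the whole thing a two-line replacement.
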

\begin{proof}
%First we prove condition $(i)$. 
For the sake of contradiction, we assume that there exist two vertices $u,v\in V({\cal C}_2)\cap K_1\cap K_2$ and four edges $e_1,e_2,e_3,e_4$
such that $(a)$ $e_1,e_2 \in E({\cal C}_2)\cap E(K_1)$, $(b)$ $e_3,e_4 \in E({\cal C}_2)\cap E(K_2)$, $(c)$ $u$ is incident with $e_1$ and $e_3$, and $(d)$ $v$ is incident with $e_2$ and $e_4$. Let $e_1=\{u,w_1\}$, $e_2=\{v,w_2\}$, $e_3=\{u,z_1\}$ and 
$e_4=\{v,z_2\}$. We claim that all the four vertices $w_1$, $w_2$, $z_1$ and $z_2$ are distinct. 
We only prove that $w_1\notin \{w_2,z_1,z_2\}$. (All other cases are symmetric). 
Targeting a contradiction, suppose $w_1\in \{w_2,z_1,z_2\}$. 
If $w_1=w_2$, then all the four edges 
$e_1,e_2,e_3$ and $e_4$ are part of a single cycle $C\in {\cal C}_2$. 
Then, ${\cal C}'=({\cal C}\setminus \{C\})\cup \{[uvw_1u]\}$ is a solution to  $(G,B,k)$ and ${\cal C}'$ contains 
strictly more triangles from special cliques than ${\cal C}$. This a contradiction to the assumption that ${\cal C}\in {\mathscr S}$.  
The proof of the statement $w_1\neq z_2$ is the same in arguments to that of the statement $w_1\neq w_2$. 
Since $G$ is a simple graph and $e_1$ and $e_3$ are distinct edges, we have that $w_1\neq z_1$. 

So, now we have that $w_1$, $w_2$, $z_1$ and $z_2$ are distinct vertices. That is,  
$w_1uz_1$ is a subpath of a cycle $C_1$ in ${\cal C}_2$ and $w_2vz_2$ is a subpath of a cycle 
$C_2$ in ${\cal C}_2$.  
This implies that 
${\cal C}'=({\cal C}\setminus \{C_1,C_2\})\cup \{[uw_1w_2u], [vz_1z_2v]\}$ is a solution to $(G,B,k)$ and ${\cal C}'$ contains strictly more triangles from special cliques than  ${\cal C}$. This a contradiction to the assumption that ${\cal C}\in {\mathscr S}$ (see Figure~\ref{figureK} for an illustration). 
%
%Next, we prove condition $(ii)$. Suppose there is a $P_3=uvw$ which is a subgraph of ${\cal C}_2$ and $P_2$ is fully 
%contained in $K_1$. Let $C$ be the cycle in ${\cal C}_2$ such that $P_2$ is a subpath of $C$. 
%Then, ${\cal C}'=({\cal C}\setminus \{C\})\cup \{uvwu\}$ is a solution of  $(G,B,k)$ and ${\cal C}'$ contain 
%strictly more triangles from special cliques than ${\cal C}$. This a contradiction to the fact that ${\cal C}\in {\mathscr S}$.  
%This completes the proof of the lemma. 
\end{proof}

  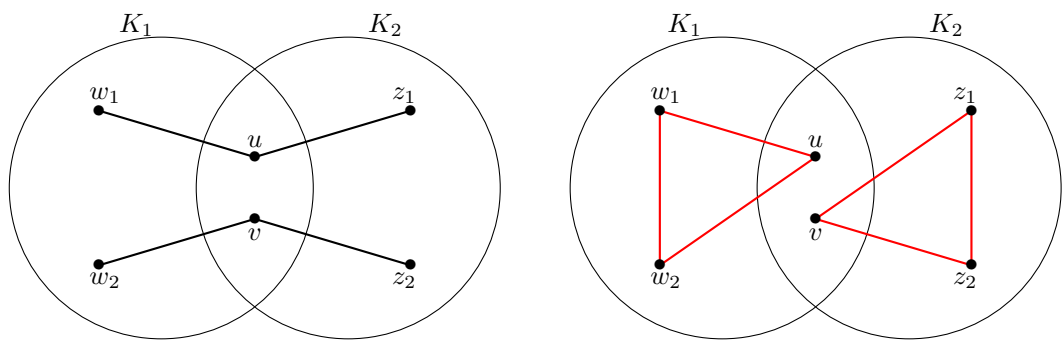
\begin{figure}[t]
   \centering
\begin{tikzpicture}[scale=0.82]

\node[draw, circle, minimum size=4cm] (a1) at (0,0) {};
\node[draw, circle, minimum size=4cm] (a1) at (3,0) {};

\node[scale=1] (u) at (1.5,0.5) {$\bullet$};
\node[scale=1] (u) at (1.5,0.75) {$u$};

\node[] (v) at (1.5,-0.5) {$\bullet$};
\node[] (v) at (1.5,-0.75) {$v$};

\node[] (x) at (-1,-1.25) {$\bullet$};
\node[] (y) at (4,-1.25) {$\bullet$};
\node[] (x) at (-0.9,-1.5) {$w_2$};
\node[] (y) at (3.9,-1.5) {$z_2$};

\node[] (w) at (-0.4,2.63) {$K_1$};
\node[] (w) at (3.6,2.63) {$K_2$};

\node[] (w) at (-0.9,1.5) {$w_1$};
\node[] (w) at (-1,1.25) {$\bullet$};
\node[] (x) at (4,1.25) {$\bullet$};
\node[] (x) at (3.9,1.5) {$z_1$};

\draw[thick] (-1,1.25)--(1.5,0.5)--(4,1.25);
\draw[thick] (-1,-1.25)--(1.5,-0.5)--(4,-1.25);

%\draw [thick,red] (a2) to[out=-100,in=180] (-0.5,-0.65) to [out=0, in=-100]  (a4);

\node[draw, circle, minimum size=4cm] (a1) at (9,0) {};
\node[draw, circle, minimum size=4cm] (a1) at (12,0) {};

\draw[thick,red] (10.5,0.5)-- (8,1.25) --(8,-1.25) -- (10.5,0.5);
\draw[thick,red] (10.5,-0.5)--(13,1.25) --(13,-1.25) --(10.5,-0.5);

\node[scale=1] (u) at (10.5,0.5) {$\bullet$};
\node[] (v) at (10.5,-0.5) {$\bullet$};
\node[scale=1] (u) at (10.5,0.75) {$u$};
\node[] (v) at (10.5,-0.75) {$v$};

\node[] (x) at (8,-1.25) {$\bullet$};
\node[] (y) at (13,-1.25) {$\bullet$};
\node[] (w) at (8,1.25) {$\bullet$};
\node[] (x) at (13,1.25) {$\bullet$};

\node[] (x) at (8.1,-1.5) {$w_2$};
\node[] (y) at (12.9,-1.5) {$z_2$};
\node[] (w) at (8.1,1.5) {$w_1$};
\node[] (x) at (12.9,1.5) {$z_1$};

\node[] (w) at (8.4,2.63) {$K_1$};
\node[] (w) at (12.6,2.63) {$K_2$};

\end{tikzpicture}
\caption{Illustration for the proof of   Lemma~\ref{lem:cyclepackingcontradiction}. Relevant subpaths of $C_1$ and $C_2$ are  given in the left figure.}
\label{figureK}
 \end{figure}

We also require the following lemma in the proof of Lemma~\ref{lem:cyclepackcrossC2}. To state the lemma we need the following definition. We say that a collection of sets $\{A_1,\ldots,A_q\}$ has a system of distinct representatives if there exist distinct vertices  $a_1,a_2,\ldots,a_q$ such that $a_i\in A_i$ for all $i\in [q]$.

\begin{lemma}
\label{lem:sysrep}
Let $\{A_1,\ldots, A_{q}\}$ be a collection of  sets of size at least one such that all but at most one set have size $2$, and each element appears in at most two sets. Then, $\{A_1,\ldots,A_{q}\}$ has a system of distinct representatives.
\end{lemma}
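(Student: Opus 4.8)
The plan is to reduce the existence of a system of distinct representatives to a Hall-type counting argument. Consider the bipartite incidence structure with one side the index set $[q]$ and the other side the ground set $U=\bigcup_{i\in[q]}A_i$, where $i$ is joined to every element of $A_i$; a system of distinct representatives is exactly a matching saturating $[q]$. So by Hall's marriage theorem it suffices to verify Hall's condition: $\bigl|\bigcup_{i\in I}A_i\bigr|\ge |I|$ for every $I\subseteq[q]$.

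Fix such an $I$ and write $U_I=\bigcup_{i\in I}A_i$. The key step is to double count the incidences, i.e.\ the pairs $(a,i)$ with $i\in I$ and $a\in A_i$. Counting by $i$: since at most one of the sets $A_i$ has size other than $2$, and even that one has size at least $1$, we get $\sum_{i\in I}|A_i|\ge 2|I|-1$. Counting by $a$: each element of $U$ lies in at most two of $A_1,\dots,A_q$, hence in at most two of the sets $\{A_i:i\in I\}$, so the number of incidences is at most $2|U_I|$. Combining the two bounds gives $2|U_I|\ge 2|I|-1$, hence $|U_I|\ge |I|-\tfrac12$, and since $|U_I|$ is an integer this forces $|U_I|\ge|I|$. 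Thus Hall's condition holds and the desired system of distinct representatives exists.

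I do not expect a genuine obstacle here: once the double counting is set up the argument closes immediately, the only subtlety being the observation that the $-1$ in $\sum_{i\in I}|A_i|\ge 2|I|-1$ is swallowed by the integrality of $|U_I|$ (together with the trivial degenerate cases $q=0$ or $I=\emptyset$). As an alternative one could argue by induction on $q$: if some $A_i$ has size one, assign it its unique element, delete that element from all other sets, note that this creates at most one new size-one set (the deleted element belonged to at most one further set) and that the maximum-multiplicity-two condition is preserved, and recurse; and if every set has size two, regard the sets as the edges of a multigraph of maximum degree two, which is a disjoint union of paths and cycles, and orient each component consistently to read off an injective choice of an endpoint per edge. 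But the Hall argument above is shorter, and that is what I would present.
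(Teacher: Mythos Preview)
Your Hall-theorem argument is correct: the double count gives $2|U_I|\ge\sum_{i\in I}|A_i|\ge 2|I|-1$, and integrality of $|U_I|$ yields Hall's condition. This is a genuinely different route from the paper. The paper argues by induction on $q$: it picks the unique size-one set $A$ if it exists (otherwise any set), assigns it an element $z_A$, observes that $z_A$ lies in at most one other set $B$ (necessarily of size two), replaces $B$ by $B\setminus\{z_A\}$, and recurses on the remaining $q-1$ sets. This is precisely the inductive alternative you sketch at the end of your proposal. Your Hall approach is shorter and more conceptual, and it makes transparent why the hypotheses are exactly what is needed (the size lower bound and the multiplicity upper bound feed directly into the two sides of the double count); the paper's induction is self-contained, avoiding the appeal to Hall's theorem, and is explicitly constructive, which can be convenient if one later wants to compute the representatives.
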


\begin{proof} 
For the base case, $q=1$, the statement holds trivially. Consider the induction step $q>1$. There exists at most one set $A\in \{A_1,\ldots,A_{q}\}$ such that $\vert A \vert=1$. If all sets have size $2$, then choose $A$ to be an arbitrary set.   Select an element $z_A$ from $A$ as its representative. By the definition of  $\{A_1,\ldots,A_{q}\}$, there is at most one set $B$ in $\{A_1,\ldots,A_{q}\}$ such that $B\neq A$ and $z_A\in B$. Moreover $\vert B\vert=2$. Then, by induction hypothesis we have that $(\{A_1,\ldots,A_{q}\}\setminus \{A,B\})\cup \{B'\}$, where $B'=B\setminus \{z_A\}$, has a system of distinct representatives. This system of representatives along with $z_A\in A$ forms a system of distinct representatives for $\{A_1,\ldots,A_{q}\}$. 
\end{proof}

Now, we are ready to give a proof for Lemma~\ref{lem:cyclepackcrossC2}. For the convenience of the reader we restate the lemma. 

\cyclepackcrossCtwo*

\begin{proof}[Proof of Lemma~\ref{lem:cyclepackcrossC2}]
%Let $T=T_{\D}=T_{\D'}$. 
Fix a node $t\in V(T)$. 
Let $S(G)\cap \beta_{\D}(t)=\{s_1,\ldots,s_{\ell}\}$, and for each $i\in [\ell]$, let $K_i$ be the special 
clique $N_B(s_i)$ in $G$. Then, $\cliques(t)=\{K_1,\ldots,K_{\ell}\}$.
Let $G_t=G[\gamma_{\D'}(t)]$. Let ${\cal P}$ be the restriction of ${\cal C}_2$  to $G_t$. That is, ${\cal P}=(V(G_t)\cap V({\cal C}_2),E(G_t)\cap E({\cal C}_2))$. Notice that ${\cal P}$ is  a collection of pairwise vertex-disjoint cycles and paths (some paths could be of length $0$) such that the end-vertices of each path belong to $\beta_{\D'}(t)$. Observe that cycles in ${\cal P}$ are fully contained inside $G_t$ and hence the edges of these cycles do not contribute to the number of crossing edges we want to bound. Hence, we assume without loss of generality that ${\cal P}$ only contains pairwise vertex-disjoint paths  
 such that the end-vertices of each path belong to $\beta_{\D'}(t)$. 
 Let ${\cal O}\subseteq {\cal P}$ be such that $P\in {\cal O}$ if and only if  at least one end-vertex of $P$ is in $\org(t)$. Let ${\cal F}={\cal P}\setminus {\cal O}$.  Since $\vert \org(t)\vert \leq 15\sqrt{k}$, $\vert {\cal O}\vert \leq 15\sqrt{k}$.  By the definition of ${\cal F}$, for any path $P\in {\cal F}$, both the end-vertices of $P$ belong $\fake(t)$. We further classify ${\cal F}={\cal F}_0\uplus {\cal F}_1$ where 
${\cal F}_0$ contains the paths of length $0$ in ${\cal F}$, and ${\cal F}_1$ contains the paths of length 
at least $1$ in ${\cal F}$. Notice that ${\cal P}={\cal O}\uplus {\cal F}_0\uplus {\cal F}_1$. Moreover, notice that the set  $X$ of vertices in $\beta_{\D'}(t)$ that are endpoints of edges in $E({\cal C}_2)$ whose other endpoints in $V(G)\setminus \gamma_{\D'}(t)$, is the set of end-vertices of the paths in ${\cal P}$. If $\vert {\cal P}\vert \leq 75\sqrt{k}$, then$\vert X \vert \leq  150\sqrt{k}$, and hence the number of edges of ${\cal C}_2$ with one endpoint in $\beta_{\D'}(t)$ and the other in $V(G)\setminus \gamma_{\D'}(t)$ is at most $2\vert X\vert \leq 300\sqrt{k}$ (because ${\cal C}_2$ is a set of vertex-disjoint cycles). 
%we are done. 
Thus, to complete the proof of the lemma, it is suffice to  prove that indeed $\vert {\cal P}\vert \leq 75\sqrt{k}$. 
%So, now we have 
%that $\vert {\cal P}\vert > 15 \sqrt{3k}+1$. 
Recall that $\vert {\cal O}\vert \leq 15\sqrt{k}$.
% and $\vert {\cal P}\vert >
%15 \sqrt{3k}+1$, we have that either 
Thus, to prove that $\vert {\cal P}\vert \leq 75\sqrt{k}$, it is enough to prove that 
$\vert {\cal F}_0\vert \leq 45\sqrt{k}$ and  $\vert {\cal F}_1 \vert \leq  15\sqrt{k}$.

Before formally proving upper bounds on the cardinalities of ${\cal F}_0$ and ${\cal F}_1$, we give a high level overview of the proof. 
%The proof mainly dependent on  the assumption that ${\cal C}\in {\mathscr S}$.
 Towards bounding  ${\cal F}_0$, we construct a planar bipartite subgraph $H$ of $B$ with bipartition ${\cal F}_0\uplus \{s_1,\ldots,s_{\ell}\}$ and the following property: for any $v\in {\cal F}_0$, $d_H(v)=2$. Therefore, $\vert {\cal F}_0\vert=\frac{\vert E(H)\vert}{2}$.   To upper bound $\vert {\cal F}_0\vert$, we construct  a minor $H'$ of $H$ on $\ell$ vertices and $\frac{\vert E(H)\vert}{2}$ edges and prove that $(a)$ $H'$ is a graph without self-loops and parallel edges. 
Since $H'$ is a minor of a planar graph $H$, $H'$ is also a planar graph. Thus, $H'$ is a planar graph without self-loops and parallel edges, and $\vert V(H')\vert =\ell$. This implies that the number of edges in $H'$ is at most $3\ell-6$ (because the number of edges in a simple planar graph on $N$ vertices is at most $3N-6$). This, in turn,  will imply that $\vert {\cal F}_0\vert \leq 3\ell-6\leq 45\sqrt{k}$. 
Towards upper bounding  $\vert {\cal F}_1 \vert $ by $15\sqrt{k}$, we construct a graph $H_1$ on the vertex set  $\{s_1,\ldots,s_{\ell}\}$. 
To construct the edge set of $H_1$, we add one edge for each path in ${\cal F}_1$. Thus, $\vert {\cal F}_1\vert = \vert E(H_1)\vert$. To upper bound $\vert E(H_1)\vert$, we prove that $(b)$ $H_1$ is a forest. This implies that $\vert {\cal F}_1\vert = \vert E(H_1)\vert\leq \ell-1<15\sqrt{k}$. The proofs of both the Statements $(a)$ and $(b)$ use the assumption that ${\cal C}\in {\mathscr S}$. 

%\paragraph*{Case 1 : $\vert {\cal F}_0\vert > 5\sqrt{3k}$. } 
Now, we move towards the formal proof of $\vert {\cal F}_0\vert \leq 45\sqrt{k}$. 
Notice that ${\cal F}_0$ is a set of paths of length $0$ 
%. That is, ${\cal F}_0$ is an independent set 
 and each $v\in {\cal F}_0$ belongs to $\fake(t)$.   Any edge  in $E({\cal C}_2)$ incident with a vertex $v\in {\cal F}_0$ belongs to $\bigcup_{i\in [\ell]} E(K_i)$ (because of Lemma~\ref{lem:edgetwokinds}).  For any $v\in {\cal F}_0$, let $e_v=\{v,x_v\}$ and $e_v'=\{v,y_v\}$ be the edges of $E({\cal C}_2)$ incident with $v$.  

%Also let $e_v=\{v,x_v\}$ and $e'_{v}=\{v,y_v\}$. 

\begin{claim}
\label{claim:noloop}
For any $v\in {\cal F}_0$, there does not exist 
$i\in [\ell]$ such that both $e_v,e_v'\in E(K_i)$.  Moreover, $x_v\neq y_v$ and $x_v,y_v\notin V(G_t)$. 
\end{claim}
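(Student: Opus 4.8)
The three assertions — that no single special clique $K_i$ contains both $e_v$ and $e_v'$, that $x_v \neq y_v$, and that $x_v, y_v \notin V(G_t)$ — should all be proved by contradiction, exploiting the maximality of the number of triangles in ${\cal C} \in {\mathscr S}$ together with the structure of ${\cal F}_0$. Recall that $v \in {\cal F}_0$ means $v$ is a length-$0$ path in the restriction ${\cal P}$ of ${\cal C}_2$ to $G_t$, so $v \in \fake(t)$ and $v$ is the unique vertex of $G_t$ on its cycle $C \in {\cal C}_2$; both edges $e_v = \{v, x_v\}$ and $e_v' = \{v, y_v\}$ of $C$ incident to $v$ leave $\gamma_{\D'}(t)$, i.e.\ $x_v, y_v \in V(G) \setminus \gamma_{\D'}(t)$. (This last point already gives the third assertion $x_v, y_v \notin V(G_t)$ essentially for free, once we recall $V(G_t) = \gamma_{\D'}(t)$; I would state this explicitly first, as a warm-up.)

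\emph{First assertion.} Suppose toward contradiction that $e_v, e_v' \in E(K_i)$ for some $i \in [\ell]$. Then $v, x_v, y_v \in K_i$ are pairwise adjacent in $G$ since $K_i$ is a clique, so $[v\, x_v\, y_v\, v]$ is a triangle in the special clique $K_i$. The cycle $C \in {\cal C}_2$ passes through the subpath $x_v\, v\, y_v$. Here I would split on whether $C$ itself is exactly this triangle — but it cannot be, because $C \in {\cal C}_2$ means $C$ is \emph{not} a triangle in a special clique; hence $C$ has length at least $4$ and $x_v, y_v$ are distinct from each other and have other neighbours on $C$. Now replace $C$ in ${\cal C}$ by the triangle $[v\, x_v\, y_v\, v]$: the resulting collection ${\cal C}' = ({\cal C} \setminus \{C\}) \cup \{[v\, x_v\, y_v\, v]\}$ is still a set of $k$ vertex-disjoint cycles (we only shrank $C$), all induced (the triangle is induced, being a subgraph of the clique $K_i$ on three vertices), and it contains strictly more triangles from special cliques than ${\cal C}$. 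This contradicts ${\cal C} \in {\mathscr S}$.

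\emph{Second assertion.} Suppose $x_v = y_v =: w$. Then $e_v, e_v'$ are two distinct edges of the cycle $C$ both joining $v$ to $w$; since $G$ is simple this forces $C$ to be the $2$-cycle on $\{v, w\}$, which is not a cycle in a simple graph — contradiction. (Alternatively, if one insists $C$ is genuinely a cycle of length $\geq 3$, then $v$ having both its $C$-neighbours equal to $w$ is impossible; I expect a one-line argument suffices here, mirroring the $w_1 \neq z_1$ step in the proof of Lemma~\ref{lem:cyclepackingcontradiction}.)

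\emph{Anticipated main obstacle.} The delicate point is the first assertion, and specifically making sure the replacement argument is clean: one must verify that $[v\, x_v\, y_v\, v]$ is vertex-disjoint from all other cycles in ${\cal C} \setminus \{C\}$ (immediate, since $v, x_v, y_v \in V(C)$ and the cycles of ${\cal C}$ are pairwise vertex-disjoint), that it is a genuine triangle (i.e.\ $v, x_v, y_v$ are three \emph{distinct} vertices — this needs $x_v \neq y_v$, so logically the second assertion should be established before, or independently of, the first), and that it is induced and lies in a special clique. I would therefore reorganize slightly: prove $x_v, y_v \notin V(G_t)$ and $x_v \neq y_v$ first (both are short), then deduce the no-common-clique statement via the triangle-swap contradiction with ${\mathscr S}$. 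The rest is bookkeeping about which edges of ${\cal C}_2$ can be incident to a $\fake(t)$ vertex, which is exactly what Lemma~\ref{lem:edgetwokinds} supplies.
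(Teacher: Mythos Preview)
Your proposal is correct and follows essentially the same approach as the paper: the paper likewise dispatches $x_v\neq y_v$ by simplicity of $G$, $x_v,y_v\notin V(G_t)$ by the definition of ${\cal F}_0$, and the main no-common-clique assertion by the identical triangle-swap contradiction with ${\cal C}\in{\mathscr S}$. Your additional bookkeeping (checking $C$ is not already the triangle, verifying vertex-disjointness of the swapped triangle, and reordering so that $x_v\neq y_v$ is established before it is used) is more explicit than the paper's presentation but does not diverge from it in substance.
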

\begin{proof}
Let $v\in {\cal F}_0$. Since $G$ is a simple graph, we have that $x_v\neq y_v$. 
The condition $x_v,y_v\notin V(G_t)$ follows from the definition of ${\cal F}_0$. 
Towards a contradiction, 
suppose there exists  $i\in [\ell]$ such that $e_v,e_v'\in E(K_i)$. That is, $v,x_v,y_v\in K_i$. 
Let $C\in {\cal C}_2$ be such that $v,x_v,y_v\in V(C)$.   Since $v,x_v,y_v\in K_i$, $C'=[vx_vy_vv]$ forms a triangle in the special clique $K_i$. Then, ${\cal C}'=({\cal C}\setminus \{C\})\cup \{C'\}$ is a cycle packing such that $\vert {\cal C}'\vert=\vert {\cal C}\vert$ and 
 ${\cal C}'$ contain 
strictly more triangles from special cliques than ${\cal C}$. This is a contradiction to the fact that 
${\cal C}\in {\mathscr S}$ (see Figure~\ref{figureoneclique} for an illustration). 
%The condition $x_v\neq y_v$ follows from the fact that $G$ is a simple graph and 
\end{proof}

  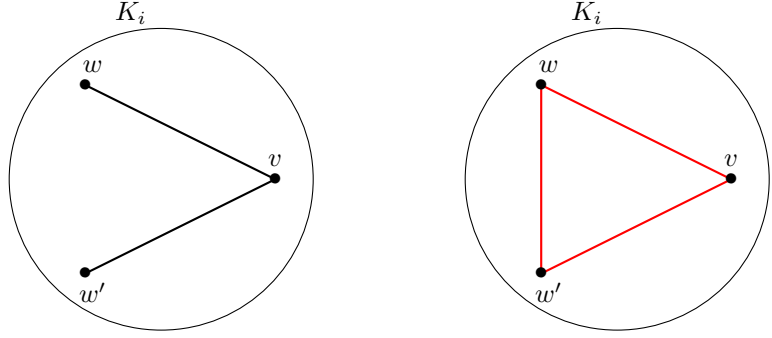
\begin{figure}[t]
   \centering
\begin{tikzpicture}[scale=1]

\node[draw, circle, minimum size=4cm] (a1) at (0,0) {};

\node[scale=1] (u) at (1.5,0) {$\bullet$};
\node[scale=1] (u) at (1.5,0.25) {$v$};

\node[] (x) at (-1,-1.25) {$\bullet$};
\node[] (x) at (-0.9,-1.5) {$w'$};

\node[] (w) at (-0.4,2.2) {$K_i$};

\node[] (w) at (-0.9,1.5) {$w$};
\node[] (w) at (-1,1.25) {$\bullet$};

\draw[thick] (-1,1.25)--(1.5,0)-- (-1,-1.25);

\node[draw, circle, minimum size=4cm] (a1) at (6,0) {};
\draw[thick,red] (5,1.25)--(7.5,0)-- (5,-1.25)--(5,1.25);

\node[scale=1] (u) at (7.5,0) {$\bullet$};
\node[scale=1] (u) at (7.5,0.25) {$v$};

\node[] (x) at (5,-1.25) {$\bullet$};
\node[] (x) at (5.1,-1.5) {$w'$};

\node[] (w) at (5.6,2.2) {$K_i$};

\node[] (w) at (5.1,1.5) {$w$};
\node[] (w) at (5,1.25) {$\bullet$};

%\node[draw, circle, minimum size=4cm] (a1) at (9,0) {};
%\node[draw, circle, minimum size=4cm] (a1) at (12,0) {};
%
%\node[scale=1] (u) at (10.5,0.5) {$\bullet$};
%\node[] (v) at (10.5,-0.5) {$\bullet$};
%\node[scale=1] (u) at (10.5,0.75) {$u$};
%\node[] (v) at (10.5,-0.75) {$v$};
%
%\node[] (x) at (8,-1.25) {$\bullet$};
%\node[] (y) at (13,-1.25) {$\bullet$};
%\node[] (w) at (8,1.25) {$\bullet$};
%\node[] (x) at (13,1.25) {$\bullet$};
%
%\node[] (x) at (8.1,-1.5) {$w_2$};
%\node[] (y) at (12.9,-1.5) {$z_2$};
%\node[] (w) at (8.1,1.5) {$w_1$};
%\node[] (x) at (12.9,1.5) {$z_1$};
%
%\draw[thick,red] (10.5,0.5)-- (8,1.25) --(8,-1.25) -- (10.5,0.5);
%\draw[thick,red] (10.5,-0.5)--(13,1.25) --(13,-1.25) --(10.5,-0.5);
%
%\node[] (w) at (8.4,2.2) {$K_1$};
%\node[] (w) at (12.6,2.2) {$K_2$};

\end{tikzpicture}
\caption{Illustration of Proof of Claim~\ref{claim:noloop}. $C'=[wvw'w]$ is a triangle in the special clique $K_i$.}
\label{figureoneclique}
 \end{figure}
 
%\todo[inline]{$\uplus$ in prelims}
%\todo[inline]{figure $H$}

We proceed to define a subgraph $H$ of $B$ on the vertex set $\{s_1,\ldots,s_{\ell}\}\cup {\cal F}_0$, and whose edge set will be defined immediately. Towards this, note that 
by Lemma~\ref{lem:edgetwokinds}, we have that for any $v\in {\cal F}_0$,  $e_v\in E(K_i)$ and $e_v'\in E(K_j)$ for some special cliques $K_i$ and $K_j$ in $\cliques(t)$. Moreover, by Claim~\ref{claim:noloop}, $e_v'\notin E(K_i)$ and $e_v\notin E(K_j)$.  For any $v\in {\cal F}_0$, we choose $i,j\in [\ell]$ such that $e_v\in E(K_i)$ and $e_v'\in E(K_j)$ (notice that $K_i\neq K_j$). Then, we add the edges $\{v,s_i\}$ and $\{v,s_j\}$ to $H$. Notice that since $e_v\in E(K_i)$ and $e_v'\in E(K_j)$, we have that $\{v,s_i\}$ and $\{v,s_j\}$ are edges in $B$. Therefore, $H$ is a subgraph of $B$. That is, $H$ is a planar bipartite graph with bipartition $\{s_1,\ldots,s_{\ell}\}\uplus {\cal F}_0$, and the degree of each $v\in {\cal F}_0$ is exactly $2$. 

Next, we construct a minor of $H$ by arbitrarily choosing, for each $v\in {\cal F}_0$, exactly one edge incident to $v\in {\cal F}_0$ and contracting it into its other endpoint (so we refer to the resulting vertex using the identity of the other endpoint). Let the resulting graph be $H_1$. Notice that 
$H_1$ is a planar graph on the vertex set $\{s_1,\ldots,s_{\ell}\}$. As each vertex in $v\in {\cal F}_0$ has degree exactly $2$ in $H$, and we contracted exactly one edge incident with $v$ to obtain $H_1$, we have  that $\vert {\cal F}_0\vert$ is equal to the number of edges in $H_1$. Since $H_1$ is a planar graph with vertex set $\{s_1,\ldots,s_{\ell}\}$, if there are no self-loops and parallel edges in $H_1$, then the number of edges in $H_1$ (and hence $\vert {\cal F}_0\vert$) is at most $3\ell-6$ (because the number of edges in a simple planar graph on $N$ vertices is at most $3N-6$).  Next, we prove that indeed $H_1$ is a graph without self-loops and parallel edges.

\begin{claim}
$H_1$ is a graph without parallel edges and self-loops. 
\end{claim}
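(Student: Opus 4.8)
The plan is to verify the two conditions separately, and for the hard one reduce directly to Lemma~\ref{lem:cyclepackingcontradiction}. First, for the absence of self-loops: recall that every edge of $H_1$ is obtained by taking some $v\in {\cal F}_0$, contracting one of its two edges in $H$ into its other endpoint, and letting the surviving edge become an edge between the two $H$-neighbours of $v$; and the two $H$-neighbours of $v$ are the special vertices $s_i,s_j$ chosen so that $e_v\in E(K_i)$ and $e_v'\in E(K_j)$. So it suffices to observe $s_i\neq s_j$. This is immediate from Claim~\ref{claim:noloop}: if $s_i=s_j$ then $K_i=N_B(s_i)=N_B(s_j)=K_j$ and both $e_v,e_v'$ would lie in $E(K_i)$, contradicting the claim. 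Hence every edge of $H_1$ joins two distinct vertices of $\{s_1,\ldots,s_\ell\}$.

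For the absence of parallel edges — the main point — I would argue by contradiction. Suppose two \emph{distinct} vertices $a,b\in {\cal F}_0$ give rise to edges of $H_1$ with the same pair of endpoints $\{s_i,s_j\}$. Since $H$ is bipartite with one side $\{s_1,\ldots,s_\ell\}$, this forces the $H$-neighbours of $a$ to be exactly $s_i$ and $s_j$, and likewise for $b$. Unwinding the construction of $H$, this means: one of the two ${\cal C}_2$-edges incident to $a$ lies in $E(K_i)$ and the other in $E(K_j)$; relabel them $e_1\in E(K_i)$, $e_3\in E(K_j)$. Similarly, one of the two ${\cal C}_2$-edges incident to $b$ lies in $E(K_i)$ and the other in $E(K_j)$ (this sorting is legitimate precisely because, by Claim~\ref{claim:noloop}, no single special clique contains both edges at $b$); relabel them $e_2\in E(K_i)$, $e_4\in E(K_j)$. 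Since $E(K_i)=\{\{x,y\}:x,y\in K_i\}$, both endpoints of each of $e_1,\ldots,e_4$ lie in the corresponding special clique, so $a,b\in K_i\cap K_j$; also $a,b\in V({\cal C}_2)$ because ${\cal F}_0\subseteq V({\cal P})\subseteq V({\cal C}_2)$. Now the data $K_1:=K_i$, $K_2:=K_j$, $u:=a$, $v:=b$, together with $e_1,e_2\in E({\cal C}_2)\cap E(K_1)$, $e_3,e_4\in E({\cal C}_2)\cap E(K_2)$, with $a$ incident to $e_1,e_3$ and $b$ incident to $e_2,e_4$, satisfy exactly conditions $(a)$--$(d)$ of Lemma~\ref{lem:cyclepackingcontradiction}, contradicting that lemma. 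Hence no two vertices of ${\cal F}_0$ produce the same edge in $H_1$.

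The only bookkeeping I would spell out is that all of $e_v,e_v'$ (and hence $e_1,e_2,e_3,e_4$) are genuinely distinct: $e_v\neq e_v'$ holds because $x_v\neq y_v$ by Claim~\ref{claim:noloop}, and since every ${\cal F}_0$-vertex has the far endpoints of its two ${\cal C}_2$-edges outside $\gamma_{\D'}(t)$ (by definition of ${\cal F}_0$) while $a,b\in\beta_{\D'}(t)\subseteq\gamma_{\D'}(t)$, no edge incident to $a$ can coincide with one incident to $b$. I do not expect any real difficulty inside this claim: the substantive obstruction has already been isolated in Lemma~\ref{lem:cyclepackingcontradiction}, whose proof uses the extremal choice ${\cal C}\in{\mathscr S}$ (the forbidden configuration would otherwise let one reroute $a$ and $b$ into two triangles inside $K_i$ and $K_j$, strictly increasing the number of special-clique triangles). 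The remaining care is purely in correctly reading off, from the definitions of $H$ and $H_1$ together with Lemma~\ref{lem:edgetwokinds}, which ${\cal C}_2$-edges correspond to which special cliques.
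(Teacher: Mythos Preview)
Your proposal is correct and follows essentially the same approach as the paper: self-loops are excluded via Claim~\ref{claim:noloop}, and parallel edges are excluded by unwinding the construction of $H$ and $H_1$ to produce the forbidden configuration of Lemma~\ref{lem:cyclepackingcontradiction}. Your additional bookkeeping (distinctness of the four edges, explicit verification that $a,b\in K_i\cap K_j$) is more careful than the paper's own write-up but not strictly needed, since Lemma~\ref{lem:cyclepackingcontradiction} internally handles the distinctness of the ``far'' endpoints.
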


\begin{proof}
Targeting a contradiction, suppose $H_1$ has a self-loop. Then, $H$ has two vertices $v\in {\cal F}_0$ and $s_i\in \{s_1,\ldots,s_{\ell}\}$ such that there are two edges between $v$ and $s_i$ in $H$. That is, both $e_v=\{v,x_v\}$ and $e_v'=\{v,y_v\}$ are edges in $E(K_i)$. However, this is not possible because of Claim~\ref{claim:noloop}. 

%. That is, $x_v v y_v$ is a $P_3$ in ${\cal C}_2$ and hence $[x_v v y_v x_v]$ form a triangle in $K_i$. This will contradict the assumption that  ${\cal C}\in {\mathscr S}$. 

Now, we prove that $H_1$ does not contain two parallel  edges. For the sake of contradiction, assume that there exist $s_i$ and $s_j$ in $V(H_1)$ such that there are two parallel edges between $s_i$ and $s_j$ in $H_1$. This implies that there exist two distinct vertices $v,u\in {\cal F}_0$ such that $e_u=\{u,x_u\}$ and $e_v=\{v,x_v\}$ are edges in $K_i$ and $e_u'=\{u,y_u\}$, and $e_v'=\{v,y_v\}$ are edges in $K_j$. However, this  contradicts Lemma~\ref{lem:cyclepackingcontradiction}.  This completes the proof of the claim. 
\end{proof}

Thus, we have proved that $\vert {\cal F}_0\vert \leq 3\ell-6 \leq 3\cdot 15\sqrt{k}-6\leq 45\sqrt{k}$ (because $\ell\leq \vert \beta_{\D}(t)\vert$).

Next, we  prove that $\vert {\cal F}_1\vert \leq 15\sqrt{k}$. 
Let $\ell'=\vert {\cal F}_1\vert$ and  ${\cal F}_1=\{P_1,\ldots,P_{\ell'}\}$.
Notice that ${\cal F}_1$ is a set of paths of length at least $1$
and each end-vertex of any path in ${\cal F}_1$
belongs to $\fake(t)$.  
For any $i\in [\ell']$, let $u_i$ and $v_i$ be the end-vertices of $P_i$ 
(since $P_i$ is a path of length at least $1$, we have that $u_i\neq v_i$).
Moreover, for any $i\in [\ell']$, let $\{u_i,w_i\}$ and $\{v_i,z_i\}$ be the edges of $E({\cal C}_2)\setminus E({\cal F}_1)$ incident with $u_i$ and $v_i$, respectively.  
%The proof of the following claim is similar in arguments to that of 
%Claim~\ref{claim:noloop}. 
%
%\begin{claim}
%\label{claim:noloop2}
%For any $i\in [\ell']$, there does not exist  a special clique $K$  such that $\{u_i,w_i\},\{v_i,z_i\} \in E(K)$. 
%\end{claim}

Now, we create an auxiliary graph $H'$ on the vertex set $\{s_1,\ldots,s_\ell\}$. 
By Lemma~\ref{lem:edgetwokinds}, we have that for 
any $i\in [\ell']$,  $u_i,w_i\in K$ and $v_i,z_i\in K'$ for some special cliques $K$ and $K'$ in $\cliques(t)$.   
Moreover, it is easy to see that $(i)$ there is no special clique $K''$ such that $K''\cap \{u_i,v_i,w_i,z_i\}\geq 3$. Otherwise,  
we replace the cycle containing $\{u_i,v_i,w_i,z_i\}$ in ${\cal C}_2$ with a triangle in   $K''\cap \{u_i,v_i,w_i,z_i\}$ and thereby  contradict the assumption that  ${\cal C}\in {\mathscr S}$. 
For any $i\in [\ell']$, we choose $j,j'\in [\ell]$ such that $u_i,w_i\in K_j$ and 
$v_i,z_i\in K_{j'}$ (notice that $j\neq j'$). Then, we add an edge between $s_j$ 
and $s_{j'}$ in $H'$, and we denote this edge with $g_i$. 
%By Claim~\ref{claim:noloop}, 
Clearly, $H'$ is a graph without self-loops.  
%Since $\vert E(H)\vert = \vert {\cal F}_1\vert> 5\sqrt{3k}$, 
 %and $\vert V(H)\vert =\ell \leq 5\sqrt{3k}+1$, there is a cycle of length two or more in $H$. 
% The proof of the following claim is similar in arguments that of Claim~\ref{claim:H0forest}. 

 \begin{claim}
\label{claim:Hforest}
$H'$ is a forest. 
\end{claim}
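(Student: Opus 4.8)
The plan is to show that every cycle in $H'$ would produce a way to rearrange some of the cycles of ${\cal C}_2$ into strictly more triangles inside special cliques, contradicting ${\cal C}\in{\mathscr S}$. Suppose toward a contradiction that $H'$ contains a cycle, and take a shortest one, say on the special vertices $s_{j_1},s_{j_2},\ldots,s_{j_p},s_{j_1}$ (with $p\geq 2$, since $H'$ has no self-loops and we allow $p=2$ only if $H'$ had parallel edges — we should first note, exactly as in the proof of the ${\cal F}_0$ bound and using Lemma~\ref{lem:cyclepackingcontradiction}, that $H'$ has no parallel edges either, so $p\geq 3$). Each edge $g_i$ of this cycle comes from a distinct path $P_i\in{\cal F}_1$ together with its two ``exit'' edges $\{u_i,w_i\}$ and $\{v_i,z_i\}$, where $\{u_i,w_i\}\subseteq K_{j}$ and $\{v_i,z_i\}\subseteq K_{j'}$ for the two endpoints $s_j,s_{j'}$ of $g_i$. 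So a cycle in $H'$ gives a cyclic sequence of paths $P_{i_1},\ldots,P_{i_p}$ in ${\cal F}_1$ and a cyclic sequence of special cliques $K_{j_1},\ldots,K_{j_p}$ such that consecutive paths ``meet'' at a common special clique: for each $a$, the endpoint of $P_{i_a}$ lying in $K_{j_a}$ and the endpoint of $P_{i_{a+1}}$ lying in $K_{j_a}$, together with their respective exit-vertices $w,z$, all lie in $K_{j_a}$.

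The key step is then a rerouting argument. Each $P_{i_a}$ is a subpath of some cycle $C_a\in{\cal C}_2$, and the exit edges attach $P_{i_a}$ to the rest of $C_a$. Walking around the $H'$-cycle, I would ``short-circuit'' each $P_{i_a}$: at the special clique $K_{j_a}$ shared by $P_{i_a}$ and $P_{i_{a+1}}$, the four vertices involved (the two path-endpoints in $K_{j_a}$ plus their two exit-neighbours) all lie in the clique $K_{j_a}$, so by Lemma~\ref{lem:cyclepackingcontradiction} (applied to the two special cliques meeting at that endpoint) one cannot have the bad configuration — but conversely the existence of the $H'$-cycle is precisely what lets us build a triangle there. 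Concretely, at $K_{j_a}$ we can form a triangle on three of those four clique-vertices, splice the paths together using clique edges, and reassemble the cycles $C_{i_1},\ldots,C_{i_p}$ (there are at most $p$ of them, counted with multiplicity; if a cycle $C$ hosts several of the $P_{i_a}$ this only helps) into a new collection covering the same vertex set, with the same number of cycles, but with at least one more triangle inside a special clique. This contradicts ${\cal C}\in{\mathscr S}$. Since $H'$ has no self-loops and no parallel edges and no cycles, it is a forest, hence $\vert{\cal F}_1\vert=\vert E(H')\vert\leq\ell-1<15\sqrt{k}$.

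The main obstacle is the bookkeeping of the rerouting when the paths $P_{i_a}$ are not all on distinct cycles of ${\cal C}_2$, and making sure that after splicing we still obtain a valid set of \emph{pairwise vertex-disjoint} cycles with exactly $\vert{\cal C}\vert$ elements and the same vertex set — i.e. that no cycle gets destroyed and merged with another in a way that drops the count. I would handle this by taking a \emph{minimal} counterexample ($H'$-cycle of shortest length) and arguing that along such a minimal cycle the incidences are clean enough (no repeated special cliques, each $g_i$ from a distinct path), and by doing the splice one special clique at a time, each step replacing a bounded local piece of ${\cal C}_2$ while preserving disjointness and cardinality, mirroring the local exchange already carried out in Lemma~\ref{lem:cyclepackingcontradiction}. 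A secondary subtlety is the parallel-edge case $p=2$: two paths $P_{i_1},P_{i_2}$ both joining $s_j$ to $s_{j'}$ give exactly the forbidden configuration of Lemma~\ref{lem:cyclepackingcontradiction} (two vertices in $K_j\cap K_{j'}$ with two edges in $E(K_j)$ and two in $E(K_{j'})$), so that case is dispatched directly by that lemma, and the genuinely new work is only for $p\geq 3$.
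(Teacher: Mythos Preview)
Your overall strategy is right: assume $H'$ has a cycle and derive a cycle packing with strictly more special-clique triangles, contradicting ${\cal C}\in{\mathscr S}$. However, two concrete steps in your plan do not go through as written.

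First, the parallel-edge case ($p=2$) is \emph{not} dispatched by Lemma~\ref{lem:cyclepackingcontradiction}. That lemma needs two vertices that lie in $K_j\cap K_{j'}$, each incident with one edge of $E({\cal C}_2)\cap E(K_j)$ and one of $E({\cal C}_2)\cap E(K_{j'})$. In the ${\cal F}_0$ bound this worked because every $v\in{\cal F}_0$ has both its ${\cal C}_2$-edges as exit edges into two different special cliques, forcing $v\in K_i\cap K_j$. For a path $P\in{\cal F}_1$, by contrast, the endpoint $u$ has one exit edge $\{u,w\}\in E(K_j)$ but its other ${\cal C}_2$-edge is the first edge of $P$ inside $G_t$, not an edge of $K_{j'}$; so nothing places $u$ in $K_{j'}$. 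Two parallel edges of $H'$ therefore do not by themselves give the configuration of Lemma~\ref{lem:cyclepackingcontradiction}.

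Second, and more importantly, the splicing step for $p\ge 3$ is missing the key mechanism that guarantees the new triangles are pairwise vertex-disjoint. At each clique $K_{i_j}$ on the $H'$-cycle you have the two path-endpoints $u_j,v_{j-1}\in K_{i_j}$ together with their exit neighbours $w_j,z_{j-1}\in K_{i_j}$, and you want a triangle on three of these four. But the exit neighbours $w_j,z_{j-1}$ may coincide with exit neighbours arising at other cliques along the cycle (a single $w$ can serve as $w_j$ for one index and as $z_{j'-1}$ for another), so naively picking a triangle at each clique can reuse a vertex. The paper resolves this by observing that each exit vertex appears in at most two of the sets $S_j=\{w_j,z_{(j-1)\bmod r}\}$ and invoking a system of distinct representatives (Lemma~\ref{lem:sysrep}) to select one exit vertex $a_j\in S_j$ per clique; the triangles $[u_j\,a_j\,v_{(j-1)\bmod r}\,u_j]$ are then automatically vertex-disjoint. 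This is exactly the bookkeeping you flagged as the ``main obstacle,'' and your plan of doing local splices one clique at a time while preserving the full vertex set does not supply a substitute for it.

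Finally, note that the paper's argument does not try to preserve the vertex set or the exact cycle count: it simply discards the at most $r$ cycles of ${\cal C}_2$ touching $\{u_1,\ldots,u_r\}$ and replaces them by exactly $r$ special-clique triangles. Since $r$ triangles replace at most $r$ non-triangle cycles, both $\vert{\cal C}\vert\ge k$ and the triangle count strictly increase, which is all that is needed. This also handles $p=2$ uniformly, with no separate parallel-edge argument.
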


 \begin{proof}
 For the sake of contradiction, assume (without loss of generality) that there is a cycle $L=[s_{i_1}g_{1}s_{i_2}g_{2}\ldots s_{i_r}g_{r}s_{i_1}]$  in $H'$. Recall that $K_{i_j}=N_B(s_{i_j})$ is a special clique in $\cliques(t)$ for any $j\in [r]$.   
 From the definition of $L$,  
 %and $g_{j}$ for any $j\in [r]$, 
 we have that   $(a)$ for any $j\in [r]$, $u_j,w_j\in K_{i_j}$, $(b)$ for any $j\in [r-1]$, $v_j, z_j\in K_{i_{j+1}}$, 
 and $(c)$ $v_r,z_r\in K_{i_1}$. By the definition of ${\cal F}_1$, we have that $u_1,\ldots,u_r,v_1,\ldots,v_r$ are distinct vertices. See Figure~\ref{figure:forestHprime} for an illustration.   
Let ${\cal C}'_2\subseteq {\cal C}_2$ be such that 
 $C\in {\cal C}'_2$ if and only if $V(C)\cap \{u_1,\ldots,u_r\}\neq \emptyset$.  
 Notice that each cycle in ${\cal C}'_2$ contains at least one vertex in $\{u_1,\ldots,u_r\}$. 
  Therefore, $\vert {\cal C}'_2\vert \leq r$. 
 Moreover, any cycle in ${\cal C}_2$ that has non-empty intersection with $\{u_i,w_i,v_i,z_i ~\vert~i\in [r]\}$ also belongs to ${\cal C}_2'$.  
 
 We prove the claim using a proof by contradiction. 
 Towards that, we will prove that there are $r$ triangles in the special cliques $K_{i_1},\ldots, K_{i_r}$ using vertices from $\{u_i,v_i,w_i,z_i~\vert~ i\in [r]\}$;  then, by replacing ${\cal C}_2'$ with these triangles, we will reach in a contradiction to the assumption that ${\cal C}\in {\mathscr S}$.  
 Let $Z=\{w_j,z_j~\vert~ j\in [r]\}$. Let $S_j=\{w_j,z_{j-1}\mod r ~\vert~ j\in [r]\}$ for all $j\in [r]$ (here, $0\mod r$ is defined to be $r$). 
% We say that $\{S_1,\ldots,S_r\}$ has a system of distinct representatives if there exist distinct vertices 
% $a_1,a_2,\ldots,a_r$ such that $a_i\in S_i$ for all $i\in [r]$. 
%\todo[inline]{stopped here}
Now, we prove that $\{S_1,\ldots,S_r\}$ has a system of distinct representatives using Lemma~\ref{lem:sysrep}. 
%
% \begin{observation}
% \label{obs:mapDrep}
% $\{S_1,\ldots,S_r\}$ has a system of distinct representatives.
% \end{observation}
% 
 %\begin{proof}
Towards that, we first prove that $\vert S_j\vert =2$ for any $j\in [r]$. Suppose not. Then, $\vert S_j\vert =1$ for some $j\in [r]$. This implies that $w_j=z_{s}$, where $s=(j-1)\mod r$.  Then, $[v_sw_ju_j]$ is a subpath of a cycle $C$ in ${\cal C}_2$. Moreover, $v_s,w_j,u_j\in K_{i_j}$. Therefore, by replacing $C\in {\cal C}$ with the triangle  $[v_sw_ju_jv_s]$, we contradict the assumption that ${\cal C}\in {\mathscr S}$.  
 Notice that since the degree of each vertex in a set of vertex-disjoint cycles is exactly $2$, we have that each vertex 
 $z\in Z$ appears in at most two sets in $\{S_1,\ldots,S_r\}$. Thus, by Lemma~\ref{lem:sysrep}, $\{S_1,\ldots,S_r\}$ has a system of distinct representatives. 
 
 %Next we prove the following statement using induction on $q$. 
 
%  \begin{framed}\noindent 
%[{\bf Statement ${\mathscr B}$}]
%Let $\{A_1,\ldots, A_{q}\}$ be a collection of  sets of size at least one such that all but at most one set, have size $2$, and each element appears in at most two sets. Then $\{A_1,\ldots,A_{q}\}$ has a system of distinct representatives.
%\end{framed} 
%
% 
%For the base case, $q=1$, the statement holds trivially. Consider the induction step $q>1$. There exists at most one set $A\in \{A_1,\ldots,A_{q}\}$ such that $\vert A \vert=1$. If all sets have size $2$, then choose $A$ to be an arbitrary set.   Select an element $z_A$ from $A$ as its representative. By the definition of  $\{A_1,\ldots,A_{q}\}$, there is at most one set $B$ in $\{A_1,\ldots,A_{q}\}$ such that $B\neq A$ and $z_A\in B$. Moreover $\vert B\vert=2$. Then, by induction hypothesis we have that $(\{A_1,\ldots,A_{q}\}\setminus \{A,B\})\cup \{B'\}$, where $B'=B\setminus \{z_A\}$, has a system of distinct representatives. This system of representatives along with $z_A\in A$ forms a system of distinct representatives for $\{A_1,\ldots,A_{q}\}$. 
% Now, Observation~\ref{obs:mapDrep} follows from Statement ${\mathscr B}$. 
% \end{proof}

    \begin{figure}[t]
   \centering
\begin{tikzpicture}[scale=1]

\node[draw, circle, minimum size=2cm] (a1) at (0,0) {};

\node[scale=1] (u1) at (0.4,-0.5) {$\bullet$};
\node[scale=1] (u) at (0.7,-0.45) {$u_1$};
\node[scale=1] (v4) at (-0.6,-0.5) {$\bullet$};
\node[scale=1] (u) at (-0.3,-0.5) {$v_4$};

\node[scale=1] (w1) at (0.4,0.5) {$\bullet$};
\node[scale=1] (w) at (0.4,0.7) {$w_1$};
\node[scale=1] (z4) at (-0.6,0.5) {$\bullet$};
\node[scale=1] (z) at (-0.4,0.7) {$z_4$};

\draw (0.4,-0.5)--(0.4,0.5);
\draw (-0.6,-0.5)--(-0.6,0.5);

%\node[] (x) at (-1,-1.25) {$\bullet$};
%\node[] (x) at (-0.9,-1.5) {$w'$};
%
%
%\node[] (w) at (-0.4,2.2) {$K_i$};
%
%\node[] (w) at (-0.9,1.5) {$w$};
%\node[] (w) at (-1,1.25) {$\bullet$};
%
%\draw[thick] (-1,1.25)--(1.5,0)-- (-1,-1.25);

\node[draw, circle, minimum size=2cm] (a1) at (3,0) {};

\node[scale=1] (u2) at (3.4,-0.5) {$\bullet$};
\node[scale=1] (u) at (3.7,-0.45) {$u_2$};
\node[scale=1] (v1) at (2.6,-0.5) {$\bullet$};
\node[scale=1] (v) at (2.9,-0.5) {$v_1$};

\node[scale=1] (w2) at (3.4,0.5) {$\bullet$};
\node[scale=1] (w) at (3.4,0.7) {$w_2$};
\node[scale=1] (z1) at (2.6,0.5) {$\bullet$};
\node[scale=1] (z) at (2.6,0.7) {$z_1$};

\draw (3.4,-0.5)--(3.4,0.5);
\draw (2.6,-0.5)--(2.6,0.5);

\node[draw, circle, minimum size=2cm] (a1) at (6,0) {};

\node[scale=1] (u3) at (6.4,-0.5) {$\bullet$};
\node[scale=1] (u) at (6.7,-0.45) {$u_3$};
\node[scale=1] (v2) at (5.6,-0.5) {$\bullet$};
\node[scale=1] (v) at (5.85,-0.5) {$v_2$};

\node[scale=1] (w3) at (6.4,0.5) {$\bullet$};
\node[scale=1] (w) at (6.4,0.7) {$w_3$};
\node[scale=1] (z2) at (5.6,0.5) {$\bullet$};
\node[scale=1] (z) at (5.6,0.7) {$z_2$};

\draw (6.4,-0.5)--(6.4,0.5);
\draw (5.6,-0.5)--(5.6,0.5);

\node[draw, circle, minimum size=2cm] (a1) at (9,0) {};

\node[scale=1] (u4) at (9.4,-0.5) {$\bullet$};
\node[scale=1] (u) at (9.7,-0.5) {$u_4$};
\node[scale=1] (v3) at (8.6,-0.5) {$\bullet$};
\node[scale=1] (v) at (8.85,-0.45) {$v_3$};

\node[scale=1] (w4) at (9.4,0.5) {$\bullet$};
\node[scale=1] (w) at (9.4,0.7) {$w_4$};
\node[scale=1] (z3) at (8.6,0.5) {$\bullet$};
\node[scale=1] (z) at (8.6,0.7) {$z_3$};

\draw (9.4,-0.5)--(9.4,0.5);
\draw (8.6,-0.5)--(8.6,0.5);

\draw[thick,red] (1,0)--(2,0);

\draw[thick,red] (4,0)--(5,0);

\draw[thick,red] (7,0)--(8,0);

\draw[thick,red] (0,1) to[out=20,in=160] (9,1);

\draw[thick,green] (0.4,-0.5) to[out=-40,in=220] (2.6,-0.5);

\draw[thick,green] (3.4,-0.5) to[out=-40,in=220] (5.6,-0.5);

\draw[thick,green] (6.4,-0.5) to[out=-40,in=220] (8.6,-0.5);

\draw[thick,green] (-0.6,-0.5) to[out=-30,in=210] (9.4,-0.5);

\node[] (u) at (-1.1,0.7) {$K_{i_1}$};
\node[] (u) at (1.9,0.7) {$K_{i_2}$};
\node[] (u) at (4.9,0.7) {$K_{i_3}$};
\node[] (u) at (7.9,0.7) {$K_{i_4}$};

%\node[] (x) at (5,-1.25) {$\bullet$};
%\node[] (x) at (5.1,-1.5) {$w'$};
%
%
%\node[] (w) at (5.6,2.2) {$K_i$};
%
%\node[] (w) at (5.1,1.5) {$w$};
%\node[] (w) at (5,1.25) {$\bullet$};
\end{tikzpicture}
\caption{The red curves are edges in the cycle $L$. The green curves are some paths in ${\cal F}_1$ that give rise to the edges (drawn as red curves) in $L$. The vertices $u_1,\ldots,u_4,v_1,\ldots,v_4$ are distinct. 
$Z=\{w_1,\ldots,w_4,z_1,\ldots,z_4\}$, $S_1=\{w_1,z_4\}, S_2=\{w_2,z_1\},S_3=\{w_3,z_2\}$ and $S_4=\{w_4,z_3\}$. 
Even though the circles representing special cliques are disjoint in the illustration, they may not be disjoint in general.}
\label{figure:forestHprime}
 \end{figure}
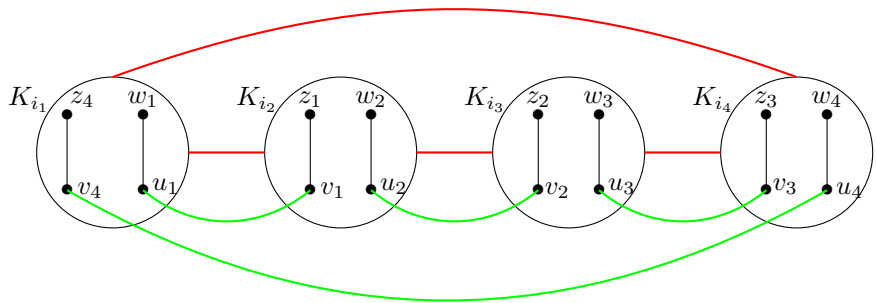

 Now, we construct a set of $r$ triangles  as follows. For each $S_j$, let $a_j$ be its representative.  
 Notice that $a_j\in K_{i_j}$. 
% For each $w\in \{w_1,\ldots,w_r\}$, let $i(w)$ be the least index $j$ such that $w=w_j$. Let $I=\{i(w)\colon w\in \{w_1,\ldots,w_r\}\}$.   Let  ${\cal D}=\{[u_jw_jv_{(j-1)\mod r}u_j] ~:~j\in I\}$ (Here $0 \mod r$ is defined to be $r$). 
Let  $\widehat{{\cal C}}=\{[u_ja_jv_{(j-1)\mod r}u_j] ~:~j\in [r]\}$. 
 Notice that $\widehat{\cal C}$ is a set of vertex-disjoint trangles in special cliques of $G$. 
 %in special cliques $K_{i_1},\ldots,K_{i_r}$. 
 %
 %There will be at most $r$ cycles 
 %in ${\cal C}_2$ which together contain $\{v_i,u_i,w_i,z_i~:~i\in [r]\}$ and 
 Therefore, by replacing cycles in ${\cal C}_2'$ with cycles in $\widehat{\cal C}$, 
 %replacing these cycles with the cycles in ${\cal D}$, 
 we get a solution of $(G,B,k)$ with more triangles from special cliques than that of ${\cal C}$. This a contradiction to the assumption that ${\cal C}\in {\mathscr S}$. This completes the proof of the claim. 
\end{proof}

Since $H'$ is a forest (by Claim~\ref{claim:Hforest}), we have that $\vert E(H')\vert <\vert V(H')\vert $. 
Notice that for each path in ${\cal F}_1$, we added exactly one edge to $H'$ and hence $\vert E(H')\vert = \vert {\cal F}_1\vert$.  
%\vert> 5\sqrt{3k}$, 
 %and $\vert V(H')\vert =\ell \leq 15\sqrt{k}$, by Claim~\ref{claim:Hforest}, 
 Therefore, 
 we have that  $\vert {\cal F}_1\vert=\vert E(H')\vert < \vert V(H')\vert \leq  \ell  \leq 15\sqrt{k}$. 
This completes the proof of the lemma. 
\end{proof}

Using Lemma~\ref{lem:cyclepackcross}, we can design an algorithm for \probCycPacking and prove Lemma~\ref{lem:cycPacksecondPhase}. This algorithm is a   dynamic programming algorithm over the \NSTD{$(15\sqrt{k},{\cal D})$}  $\D'$ of $G$, such that  for each $t\in V(T)$, 
$\vert \beta_{\D'}(t)\vert \leq 45 \cdot k^{1.5}$.  The number of states at any node $t\in V(T_{\cal D'})$ is upper bounded by $2^{\OO(\sqrt{k}\log k)}\cdot n^{\OO(1)}$, and thus resulting in an algorithm with running time  $2^{\OO(\sqrt{k}\log k)}\cdot n^{\OO(1)}$. 
%where the number of states in the DP table entry 
%is $2^{\OO(\sqrt{k}\log k)}\cdot n^{\OO(1)}$. 
For further details, we refer to the proof sketch of Lemma~9.3 in \cite{subexpudg}
for a similar algorithm on a path decomposition of the input graph with similar properties.

\section{Conclusion}\label{sec:conclusion}
In this paper, we gave  subexponential algorithms of running time $2^{\cO({\sqrt{k}\log{k}})} \cdot n^{\cO(1)}$ for \probKCycle, \probKPath, \probFVS and  \probCycPacking\ 
  on map graphs. The following are some open questions along the direction of our work.  
  \begin{itemize}
  \item 
  Is there a parameterized subexponential time  algorithm  for \probKexactCycle\ on map graphs? Here, we want to test whether the input graph has a cycle of length exactly $k$.
   \item Can we get a better running time (by shaving off the $\log{k}$ factor in the exponent) for 
  \probKCycle, \probKPath, \probFVS and  \probCycPacking\ on map graphs?
\item  It is noteworthy to remark that a simple disjoint union trick \cite{BodlaenderDFH09,fomin_lokshtanov_saurabh_zehavi_2019}  implies that \probKCycle\ and \probKPath  do not admit a polynomial kernel on map graphs. Can we get a Turing polynomial kernel for these problems on map graphs? That is, 
can we 
give a polynomial time algorithm that given an instance of \probKCycle\ or \probKPath on map graphs, produces polynomially many reduced instances of size polynomial in $k$ such that the input instance is a \Yes\ instance if and only if one of the reduced instances is? 
\item Can we get  a general characterization of parameterized problems admitting subexponential algorithms on map graphs like the bidimensionality theory for planar graphs?
  \end{itemize}

\bibliography{book_kernels_fvf,refs}

%\newpage
%\section{Appendix: Problem Definitions}
%\label{sec:probdefn}
%\todo[inline]{@FAHAD: Please formally define the problems here: Longest Path, Cycle, Exact Cycle, Subgraph Isomorphism, Cycle Packing, Connected Vertex Cover}
%\defparproblem{}
%{An unit disk graph  $G$ with the representation} {$k$}{Is there subset $W\subseteq V(G)$ such that $G[W]$ is a forest and 
%$\vert W \vert \geq \vert V(G)\vert -k$}

\end{document}